\def\@rmrk#1#2{\refstepcounter
    {#1}\@ifnextchar[{\@yrmrk{#1}{#2}}{\@xrmrk{#1}{#2}}}
\makeatletter\@addtoreset{equation}{section}\makeatother
 \newfont{\bfit}{cmbxti10 scaled 1200}
\renewcommand{\d}{{\rm d}}
 \newcommand{\e}{{\rm e} }
 \newcommand{\eps}{\varepsilon}
 \newcommand{\R}{\mathbb{R}}
 \newcommand{\N}{\mathbb{N}}
 \newcommand{\Z}{\mathbb{Z}}
 \newcommand{\E}{\mathbb{E}}
 \renewcommand{\P}{\mathbb{P}}
 \def\1{{\mathchoice {1\mskip-4mu\mathrm l} 
{1\mskip-4mu\mathrm l}
{1\mskip-4.5mu\mathrm l} {1\mskip-5mu\mathrm l}}}
 \newcommand{\Bcal}{{\mathcal B}}
 \newcommand{\Mcal}{{\mathcal M}}
\newcommand{\heap}[2]{\genfrac{}{}{0pt}{}{#1}{#2}}
\newcommand{\ssup}[1] {{\scriptscriptstyle{({#1}})}}
\def\blfootnote{\xdef\@thefnmark{}\@footnotetext}
\newcommand{\bea}{\begin{eqnarray}}
\newcommand{\eea}{\end{eqnarray}}
\newcommand{\<}{\langle}
\renewcommand{\>}{\rangle}
\newcommand{\wt}{\widetilde}
\newcommand{\wh}{\widehat}
\def\eps{{\varepsilon}}
\def\bP{{\boldsymbol{P}}}
\def\cF{{\mathcal F}}
\def\cX{{\mathcal X}}
\def\Z{{\mathbb Z}}
\def\good{\textbf{\texttt{Good}}}
\def\verygood{\textbf{\texttt{VG}}}
\def\std{\footnotesize\textbf{\texttt{std}}}
\def\sstd{\footnotesize\textbf{\texttt{sstd}}}
\def\loong{\footnotesize\textbf{\texttt{long}}}
\def\R{{\mathbb R}}
\def\Z{{\mathbb Z}}
\def\de{{\rm d}}
\def\<{\langle}
\def\>{\rangle}
\def\cL{{\mathcal L}}
\def\P{\mathbb{P}}
\def\cE{{\mathcal E}}
\def\b0{{\boldsymbol{0}}}
\def\oq{\overline{q}}
\def\cI{{\mathcal I}}
\def\cJ{{\mathcal J}}
\def\oC{\overline{C}}
\renewcommand{\b}{\mathbf{b}}
\def\lt{\left}
\def\rt{\right}
\def\eps{\varepsilon}
\def\bbE{{\mathbb{E}}}
\def\bbP{{\mathbb{P}}}
\def\bP{\boldsymbol{P}}
\def\bbR{{\mathbb{R}}}
\def\bbZ{{\mathbb{Z}}}
\def\cF{{\mathcal{F}}}
\def\cP{{\mathcal{P}}}
\def\bP{\mathbf{P}}
\newcommand{\tmtextit}[1]{\text{{\itshape{#1}}}}
\renewcommand{\subsection}{\secdef \subsct\sbsect}
\newcommand{\subsct}[2][default]{\refstepcounter{subsection}
\vspace{0.15cm}
{\flushleft\bf \arabic{section}.\arabic{subsection}~\bf #1  }
\nopagebreak\nopagebreak}
\newcommand{\sbsect}[1]{\vspace{0.1cm}\noindent
{\bf #1}\vspace{0.1cm}}
\newtheorem{theorem}{Theorem}[section]
\newtheorem{lemma}[theorem]{Lemma}
\newtheorem{cor}[theorem]{Corollary}
\newtheorem{prop}[theorem]{Proposition}
\newtheorem{definition}[theorem]{Definition}
\newtheoremstyle{thm}{1.5ex}{1.5ex}{\itshape\rmfamily}{}
{\bfseries\rmfamily}{}{2ex}{}
\newtheoremstyle{rem}{1.3ex}{1.3ex}{\rmfamily}{}
{\itshape\rmfamily}{}{1.5ex}{}
\theoremstyle{rem}
\newtheorem{remark}{{\slshape\sffamily Remark}}[]
\def\thebibliography#1{\section*{References}
  \list%
  {\arabic{enumi}.}
    {\settowidth\labelwidth{[#1]}\leftmargin\labelwidth
    \advance\leftmargin\labelsep
    \parsep0pt\itemsep0pt
    \usecounter{enumi}}
    \def\newblock{\hskip .11em plus .33em minus .07em}
    \sloppy                   
    \sfcode`\.=1000\relax}
\begin{document}

\begin{frontmatter}

\title{Effective mass of the Fr\"ohlich Polaron and the Landau-Pekar-Spohn conjecture}



\author[A]{\fnms{Rodrigo}~\snm{Bazaes}\ead[label=e1]{rbazaes@uni-muenster.de}},
\author[A]{\fnms{Chiranjib}~\snm{Mukherjee}\ead[label=e2]{chiranjib.mukherjee@uni-muenster.de}},
\author[C]{\fnms{Mark}~\snm{Sellke}\ead[label=e3]{msellke@fas.harvard.edu}}
\and
\author[D]{\fnms{S. R. S. }~\snm{Varadhan}\ead[label=e4]{varadhan@cims.nyu.edu}}

\address[A]{
University of M\"unster\printead[presep={,\ }]{e1,e2}}

\address[C]{
Harvard University\printead[presep={,\ }]{e3}}

\address[D]{
Courant Institute of Mathematical Sciences\printead[presep={,\ }]{e4}}

\begin{keyword}[class=MSC]
\kwd[Primary ]{60J65}
\kwd{60F10}
\kwd{81S40}
\kwd{60G55}
\end{keyword}

\begin{keyword}
\kwd{Fr\"ohlich polaron}
\kwd{Effective Mass}
\kwd{Landau-Pekar theory}
\kwd{Spohn's conjecture}
\kwd{Strong coupling}
\kwd{Pekar variational formula}
\kwd{Pekar process}
\kwd{Point processes}
\kwd{Large deviations}
\end{keyword}

\renewcommand*{\thefootnote}{\arabic{footnote}}

\begin{abstract}
 We prove that there is a constant $\overline C\in (0,\infty)$ such that the effective mass $m(\alpha)$ of the Fr\"ohlich Polaron satisfies $m(\alpha) \geq \overline C \alpha^4$, which is sharp according to a long-standing prediction of Landau-Pekar \cite{LP49} from 1948 and of Spohn \cite{S87} from 1987. 
 
 The method of proof, which demonstrates how the sharp quartic divergence rate of $m(\alpha)$ appears in a natural way, is based on analyzing the Gaussian representation of the Polaron measure and that of the associated tilted Poisson point process developed in \cite{MV18a}. Additionally, our technique here leads to accompanying results including, 1) an explicit identification of local interval process from \cite{MV18a} in the strong coupling limit  in terms of functionals of the {\it Pekar process} \cite{MV18b}, 2) strict monotonicity of the effective mass $m(\alpha)$ for all $\alpha>0$ and 3) the quartic divergence of $m(\alpha)$ for a generalized class of Polaron type interactions.
\end{abstract}

\end{frontmatter}



\section{Introduction and main result}

The  {\it{Polaron problem}}  in quantum statistical mechanics is inspired by studying the slow movement 
of a charged particle, e.g., an electron, in a crystal whose lattice sites are polarized by this slow motion. The electron then drags around it a cloud of polarized lattice 
points which influences and determines the {\it{effective behavior}} of the electron. 
A key quantity is given by the bottom of the spectrum 
$E_\alpha(P)= \inf \,\mathrm{spec}(H_P)$
of the (fiber) Hamiltonian of the Fr\"ohlich Polaron. 
It is known that $E_\alpha(\cdot)$ is rotationally symmetric and is analytic when $P\approx 0$. Then the central objects of interest  are the {\it ground state energy}  
\begin{equation}\label{def-g-alpha}
g(\alpha)=- \min_P E_\alpha(P)
\end{equation}
 as well as the {\it effective mass} $m(\alpha)$ of the Fr\"ohlich Polaron, defined as the inverse of the curvature:
 \begin{equation}\label{def-m-alpha}
 m(\alpha)= \bigg[\frac{\partial^2}{\partial P^2}E_\alpha(P)\big|_{P=0}\bigg]^{-1}.
 \end{equation} 
See \cite{S87,DS19}. Physically relevant questions concern the {\it strong-coupling behavior} of $g(\alpha)$ and $m(\alpha)$. Indeed, the ground state energy in this regime was studied by Pekar \cite{P49}, who also conjectured that the limit 
\begin{equation}\label{Pekar-conj}
\begin{aligned}
&g_0:= \lim_{\alpha\to\infty}\frac{g(\alpha)}{\alpha^2} \qquad\mbox{exists, and }\\
&g_0= \sup_{\heap{\psi\in H^1(\R^3)}{|\psi|_2=1}} \bigg[\iint_{\R^3\times \R^3} \frac{\psi^2(x) \psi^2(y)}{|x-y|} \d x \d y - \frac 12 \|\nabla \psi\|_2^2\bigg].
\end{aligned}
\end{equation}
By a well-known result of E. Lieb \cite{L76}, the above variational formula $g_0$ admits a rotationally symmetric, smooth and centered maximizer $\psi_0\in H^1(\R^3)$ with $\|\psi_0\|_2=1$ which is unique except for spatial translations. 
One can also obtain a probabilistic representation for $g(\alpha)$. Indeed, Feynman's path integral formulation \cite{F72} leads to $g(\alpha)= \lim_{T\to\infty} \frac 1 T \log\langle \Psi | \e^{-TH}|\Psi\rangle$ with $\Psi$ being chosen such that its spectral resolution contains the ground state energy or low energy spectrum of $H$, but is otherwise arbitrary. Then the Feynman-Kac formula for the semigroup $\e^{-TH}$ implies that the last expression can be rewritten further as 
 \begin{equation}\label{g}
 g(\alpha)=\lim_{T\to\infty} \frac 1 T\log \E_0\bigg[\exp\bigg\{\alpha \int_0^T\int_0^T \d s \d t\,\, \frac{\e^{-{|t-s|}}}{|\omega(t)-\omega(s)|}\bigg\}\bigg],
 \end{equation}
 with $\E_0$ denoting expectation w.r.t. the law of a three-dimensional Brownian path starting at $0$. Starting with this expression and using large deviation theory from \cite{DV83-4}, Pekar's conjecture \eqref{Pekar-conj} was proved in \cite{DV83}. Later, a different proof was given by E. Lieb and L. Thomas \cite{LT97} using a functional analytic approach which also provided quantitative error bounds. 
 
 \medskip

As for the effective mass defined in \eqref{def-m-alpha}, according to a long-standing conjecture by Landau-Pekar \cite{LP49} and by H. Spohn \cite{S87}, $m(\alpha)$ should diverge like $\alpha^4$ with a pre-factor given by the centered solution $\psi_0$ 
of the Pekar variational problem \eqref{Pekar-conj}  
 in the strong coupling limit $\alpha\to\infty$. With this background, the main result of this article is to show the following theorem:
 \begin{theorem}\label{thm}
 There exists a constant $\overline C\in (0,\infty)$ such that for all $\alpha\geq 1$, 
 \begin{equation}\label{thm-eq}
 \frac{m(\alpha)}{\alpha^4} \geq \overline C.
  \end{equation}
 \end{theorem}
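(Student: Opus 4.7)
The plan is to combine Spohn's probabilistic identity for the effective mass with the Gaussian / tilted Poisson point process representation of the Polaron measure from \cite{MV18a}. Let $\hat\P_{T,\alpha}$ denote the Polaron path measure, i.e.\ three-dimensional Wiener measure on $[0,T]$ started at $0$ reweighted by the exponential inside \eqref{g} and normalized to a probability measure. A standard computation (differentiating twice in $P$ of the tilted partition function at $P=0$) yields the identity
\[
\frac{1}{m(\alpha)} \;=\; \lim_{T\to\infty}\frac{1}{3T}\,\E_{\hat\P_{T,\alpha}}\bigl[|\omega(T)-\omega(0)|^2\bigr].
\]
Hence \eqref{thm-eq} is equivalent to the uniform-in-$T$ upper bound
\[
\E_{\hat\P_{T,\alpha}}\bigl[|\omega(T)-\omega(0)|^2\bigr] \;\le\; C\,\alpha^{-4}\,T,
\]
which is the quantity I would try to estimate. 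For $\alpha$ bounded the claim is immediate from continuity of $m(\alpha)$ together with $m(0)=1$, so the whole point is the regime $\alpha\to\infty$.

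For the second step, I apply the Gaussian representation of $\hat\P_{T,\alpha}$. Using the Fourier identity $|x|^{-1} = (2\pi^2)^{-1}\int_{\R^3}e^{ik\cdot x}|k|^{-2}\,dk$ to rewrite the Coulomb interaction in \eqref{g} and expanding the resulting exponential as its power series, one obtains the mixture representation
\[
\hat\P_{T,\alpha}(d\omega) \;=\; \int \mathsf Q_{T,\alpha,\xi}(d\omega)\,\hat\Pi_{T,\alpha}(d\xi),
\]
where $\hat\Pi_{T,\alpha}$ is a tilted Poisson point process on configurations $\xi=\{(s_i,t_i,k_i)\}_{i\le N}$ of intensity proportional to $\alpha\,e^{-|t-s|}|k|^{-2}$, and $\mathsf Q_{T,\alpha,\xi}$ is the law of a Brownian motion subjected to the quadratic Gaussian tilt determined by the factors $e^{ik_i\cdot(\omega(t_i)-\omega(s_i))}$. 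This is precisely the representation developed in \cite{MV18a}. The conditional variance decomposition then gives
\[
\E_{\hat\P_{T,\alpha}}\bigl[|\omega(T)-\omega(0)|^2\bigr] = \E_{\hat\Pi_{T,\alpha}}\!\bigl[\tr\,\mathrm{Cov}_{\mathsf Q_{T,\alpha,\xi}}\bigl(\omega(T)-\omega(0)\bigr)\bigr] + \Var_{\hat\Pi_{T,\alpha}}\!\bigl(\E_{\mathsf Q_{T,\alpha,\xi}}[\omega(T)-\omega(0)]\bigr),
\]
and both terms must be shown to be $O(T\alpha^{-4})$.

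The third and crucial step is the identification of the local structure of $\hat\Pi_{T,\alpha}$ in the strong-coupling regime. The Pekar rescaling $\omega(t)\mapsto\alpha^{-1}\omega(\alpha^{2}t)$ suggests that on the rescaled time scale $\tau=t/\alpha^{2}$ the Polaron path concentrates near a randomly translated copy of the centered Pekar maximizer $\psi_0$ from \eqref{Pekar-conj}. Correspondingly, the image of $\xi$ under this rescaling — the \emph{local interval process} — should converge as $\alpha\to\infty$ to an explicit limiting point process whose intensity is a functional of $\psi_0^{2}$. From this limit one reads off that the Gaussian quadratic form defining $\mathsf Q_{T,\alpha,\xi}$ has a spectral gap of order $\alpha^{2}$ per unit $\tau$-time. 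Undoing the time change produces the upper bound $\tr\,\mathrm{Cov}_{\mathsf Q_{T,\alpha,\xi}}\bigl(\omega(T)-\omega(0)\bigr)\lesssim T\alpha^{-4}$ for typical $\xi$, and a rotational-symmetry argument dispatches the conditional-mean contribution.

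The main obstacle will be the quantitative control of the local interval process. I will need not only the qualitative $\alpha\to\infty$ limit of $\hat\Pi_{T,\alpha}$ but a lower bound on the Gaussian spectral gap that holds with probability sufficiently close to one under $\hat\Pi_{T,\alpha}$ uniformly in $T$. Atypical configurations $\xi$ under which the quadratic form has near-flat directions — corresponding to non-Pekar modes of concentration of the path — must be suppressed at a rate that does not spoil the $\alpha^{-4}$ scaling. I expect this will require combining the Donsker-Varadhan large deviations input used in \cite{DV83} to prove \eqref{Pekar-conj} with the essential uniqueness of $\psi_0$ from \cite{L76}, together with a careful cluster analysis of $\hat\Pi_{T,\alpha}$ along the lines of \cite{MV18a}.
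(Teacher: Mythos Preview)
Your high-level strategy --- reduce to bounding $\E_{\hat\P_{T,\alpha}}[|\omega(T)-\omega(0)|^2]$ via the Gaussian mixture representation from \cite{MV18a}, then control the conditional variance in terms of the point process --- matches the paper's. However, your Step~3 is where essentially all of the work lies, and the plan you sketch there does not line up with what actually succeeds.

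First, two technical corrections. The representation in \cite{MV18a} is not the Fourier one you wrote but the Gaussian mixture $|x|^{-1}=\sqrt{2/\pi}\int_0^\infty e^{-u^2|x|^2/2}\,du$; the resulting conditional laws $\mathbf P_{\hat\xi,\hat u}$ are \emph{centered} Gaussians, so your ``conditional-mean'' term vanishes identically and no rotational-symmetry argument is needed. Second, the paper deliberately works on time scales of order~$1$, \emph{not} the Pekar scale $t\sim\alpha^{-2}$: the relevant information is that on unit time intervals the tilted point process has intensity of order $\alpha^2$ and typical $u_i$ of order $\alpha$. Passing to the Pekar-rescaled limit process, as you propose, would give only qualitative information and lose the quantitative control needed uniformly in $T$.

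The substantive gap is the ``spectral gap of order $\alpha^2$'' claim. The conditional variance is governed by the variational formula
\[
\sigma^2_{\alpha,T}(\hat\xi,\hat u)=3\sup_{f\in H_T}\Bigl[2\,\tfrac{f(T)-f(-T)}{\sqrt{2T}}-\int_{-T}^T\dot f^2 - \sum_i u_i^2(f(t_i)-f(s_i))^2\Bigr],
\]
and an $\alpha^{-4}$ bound follows once one shows $(f(T)-f(-T))/\sqrt{2T}\le \alpha^{-2}\sqrt{Q_T(f)}$ for the quadratic form $Q_T$. A global spectral-gap statement for $Q_T$ is both too strong (it fails on atypical $\hat\xi$) and not what is needed: one only needs to control the single linear functional $f\mapsto f(T)-f(-T)$. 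The paper does this by an explicit combinatorial construction: it builds $\delta\alpha^2$ \emph{disjoint} chains of ``standard'' intervals (length $O(1)$, $u_i\asymp\alpha$) spanning $[0,T_0]$, and shows that the total Lebesgue measure of the uncovered ``gaps'' has bounded second moment. This requires (i) a quantified tightness estimate $\hat\P_\alpha[\alpha|\omega(t)-\omega(s)|\gg 1\text{ on an }\eps\text{-fraction of pairs}]\le e^{-c\alpha^2}$, obtained by FKG comparison to a truncated potential, and (ii) a Hardy--Littlewood maximal-function argument to define ``very good'' points from which interval chains can be grown with subexponential waiting times. None of this is captured by ``identify the limiting local interval process and read off a spectral gap''; the Donsker--Varadhan LDP input you mention gives the \emph{scale} but not the uniform-in-$T$ probabilistic control.
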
 
 As a consequence of the method developed to prove Theorem \ref{thm}, we will also obtain auxiliary results including
 \begin{itemize}
     \item an explicit description of the the point process method for Polaron measure from \cite{MV18a} in terms of the coupling parameter $\alpha\to\infty$ and the {\it Pekar process} \cite{MV18b} (Corollary \ref{cor:consequences-number-intervals}), 
     
     \item {\it strict} monotonicity of the effective mass $m(\alpha)$ for all $\alpha>0$ (Corollary \ref{cor:m-eff-monotone}) and 
     
     \item Theorem \ref{thm} for a generalized class of Polaron interactions (Remark \ref{rem:general-p} \& Appendix \ref{subsec:universality}). 
     \end{itemize}
     In addition to the proof method, these results will be discussed in Section \ref{sec-proof-outline}. Before that, let us provide some background. 

 \subsection{Background: Polaron path measure.}
 In 1987, H. Spohn \cite{S87} established a link between the effective mass $m(\alpha)$ and 
 the actual {\it path behavior} under the {\it Polaron measure}. Indeed, the exponential weight on the right hand side in \eqref{g} defines a tilted measure on the path space of the Brownian motion, or rather, on the space of increments of Brownian paths. 
More precisely, let $\P=\P_T$ be the law of the Brownian increments $\{\omega(t)-\omega(s)\}_{-T\leq s < t \leq T}$ for three dimensional Brownian motion. Then the {\it Polaron measure} is defined as the transformed measure 
\begin{equation}\label{def-polaronmeas}
\widehat\P_{\alpha,T}(\d\omega)=\frac 1 {Z_{\alpha,T}} \exp\bigg(\frac\alpha2\int_{-T}^T\int_{-T}^T\frac{\e^{-|t-s|}}{|\omega(t)-\omega(s)|} \d t\d s\bigg) \P(\d\omega),
\end{equation}
where 
$$
Z_{\alpha,T}=\E^\P\bigg[\exp\bigg(\frac\alpha2\int_{-T}^T\int_{-T}^T\frac{\e^{-|t-s|}}{|\omega(t)-\omega(s)|} \d t\d s\bigg)\bigg] 
$$
is the total mass of the exponential weight, or the {\it partition function}. 

\smallskip

It was conjectured by Spohn in \cite{S87} that for any fixed coupling $\alpha>0$ and as $T\to\infty$,
 the distribution of the diffusively rescaled Brownian path under the Polaron measure should be asymptotically Gaussian with zero mean and variance $\sigma^2(\alpha)>0$.
The following results were shown in \cite{MV18a,MV21}: 
\begin{enumerate}
\item [(1)] For any $\alpha>0$, 
the infinite-volume Polaron measure 
$$
\widehat\P_\alpha= \lim_{T\to\infty} \widehat\P_{\alpha,T}
$$
 exists, and it is an explicit mixture of Gaussian measures, see Sec. \ref{sec-duality} for details. 
 \item [(2)] 
 For any $\alpha>0$, the distribution of the rescaled Brownian increments $\frac{\omega(T)-\omega(-T)}{\sqrt{2T}}$ under $\widehat\P_{\alpha,T}$ and $\widehat\P_\alpha$ satisfies a central limit theorem, i.e., for any $\alpha>0$, 
\begin{equation}\label{eq-CLT}
\begin{aligned}
\lim_{T\to\infty} \widehat\P_{\alpha,T}\bigg[ \frac{\omega(T)-\omega(-T)}{\sqrt{2T}} \in \cdot\bigg]&= \lim_{T\to\infty} \widehat\P_{\alpha}\bigg[ \frac{\omega(T)-\omega(-T)}{\sqrt{2T}} \in \cdot\bigg] \\
&= \mathbf N(0,\sigma^2(\alpha)\mathbf I_{3\times 3}),\quad\mbox{where}
\end{aligned}
\end{equation} 
\begin{equation}\label{CLTvariance}
\begin{aligned}
\sigma^2(\alpha)&= \lim_{T\to\infty} \frac 1 {2T} \E^{\widehat\P_{\alpha,T}}\big[\big|\omega(T)- \omega(-T)|^2\big] \\
&=\lim_{T\to\infty} \frac 1 {2T} \E^{\widehat\P_{\alpha}}\big[\big|\omega(T)- \omega(-T)|^2\big]\in (0,1). 
\end{aligned}
\end{equation}
See Sec. \ref{sec-duality}. 
\end{enumerate}
In the above display, $\mathbf N(0,\sigma^2(\alpha)\mathbf I_{3\times 3})$ denotes the law of a three-dimensional Gaussian vector with mean zero and covariance matrix $\sigma^2(\alpha) \mathbf I_{3\times 3}$. We refer to \cite{BP21} for an extension of these results using the method from \cite{MV18a} to other polaron-type interactions, and to \cite{M17} for a different method for treating translation-invariant interactions that are either slowly decaying in time and bounded in space, or short-range in time and singular in space. 

Returning to \eqref{eq-CLT}-\eqref{CLTvariance}, we note that the strict bound $\sigma^2(\alpha)< 1$ in \eqref{CLTvariance} for any coupling $\alpha>0$ reflects the attractive nature of the interaction defined in \eqref{def-polaronmeas}.
Assuming the validity of the above CLT \eqref{eq-CLT}, already in \cite{S87} Spohn proved a simple relation between the effective mass $m(\alpha)$ and the CLT variance 
$\sigma^2(\alpha)$:
\begin{equation}\label{eq-mass-variance}
m(\alpha)^{-1} =\sigma^2(\alpha) \qquad\mbox{for any }\alpha>0,
\end{equation}
see also Dybalski-Spohn \cite{DS19} for a recent proof of the above relation using \eqref{eq-CLT}. In \cite{S87}, Spohn also conjectured that the
the strong coupling behavior of the infinite-volume limit $\lim_{\alpha\to\infty}\lim_{T\to\infty} \, \widehat\P_{\alpha,T}=\lim_{\alpha\to\infty}\P_\alpha$, suitably rescaled, 
should converge to the so-called {\it Pekar process}, which is a diffusion process with generator 
$$
\frac 12 \Delta+ \frac{\nabla \psi}{\psi}\cdot\nabla, 
$$
where $\psi$ is any solution of the variational problem \eqref{g}. This conjecture was proved in \cite{MV18b} showing that, the distribution of the rescaled process 
 $\big(\alpha \big(\omega(\frac{t}{\alpha^2}) - \omega(\frac{s}{\alpha^2})\big)\big)_{s\in A, t\in B}$ under $\widehat\P_\alpha$ 
converges as $\alpha\to\infty$ to a unique limit which is the stationary version of the {\it increments} of the Pekar process.\footnote{Here $A$ and $B$ are any intervals of fixed length. We also note that, by Brownian scaling, the convergence of the abovementioned rescaled process under $\widehat\P_\alpha$ is equivalent to the convergence of the distribution of $(\omega_t - \omega_s)_{s\in A, t\in B}$ under $\widehat\P^{\mathrm{Kac}}_\alpha$ as $\alpha\to\infty$ to the same limit. Here $\widehat\P^{\mathrm{Kac}}_\alpha= \lim_{T\to\infty} \widehat\P^{\mathrm{Kac}}_{\alpha,T}$, with the latter defined as in \eqref{def-polaronmeas}, but with interaction strength $\alpha^{-2}\e^{-\alpha^{-2}|t-s|}$ instead of $\alpha \e^{-|t-s|}$ and 
$\varepsilon:= \alpha^{-2}$ is referred to as the {\it Kac parameter}.} We note that the Pekar process was also earlier constructed 
in \cite{MV14,KM15,BKM15} as the infinite-volume limit of the {\it mean-field Polaron}  -- convergence of the latter towards the Pekar process was also conjectured by Spohn in \cite{S87}.

   Based on the path behavior of $\widehat\P_\alpha$,  in \cite{S87}  the decay rate of the CLT diffusion constant $\sigma^2(\alpha) \sim \alpha^{-4}$ as $\alpha\to\infty$ was also derived heuristically -- note that, given the relation \eqref{eq-mass-variance}, 
this decay rate would be equivalent to the divergence rate $m(\alpha)\sim \alpha^4$, conjectured by Landau and Pekar \cite{LP49}. 
Using a functional analytic route from \cite{LT97}, it was shown in \cite{LS20} that $\lim_{\alpha\to\infty} m(\alpha)=\infty$.  
By means of probabilistic techniques from \cite{MV18a,MV21}, it has been recently shown in \cite{BP22} that $\sigma^2(\alpha) \leq c \alpha^{-2/5}$ for some $c<\infty$. Very recently, 
using the probabilistic representation of the Polaron measure \eqref{def-polaronmeas} but invoking Gaussian correlation inequalities orthogonally to the current method, 
it has been shown in \cite{S22} that $m(\alpha) \geq C \alpha^{4}/(\log \alpha)^6$. 
For the corresponding upper bound $m(\alpha) \leq C^* \alpha^4$, we refer to the very recent article 
\cite{BS22} that used a functional analytic route, combined with a concavity result from \cite{Pol22} which used the probabilistic route and point process method from \cite{MV18a}. The method currently developed for obtaining Theorem \ref{thm} for the Fr\"ohlich Polaron is quite different from the ones found in the literature for showing previous bounds on $m(\alpha)$. We will outline this approach below and explain along the lines how the $\alpha^4$ divergence rate of $m(\alpha)$ appears in a natural way.

\subsection{An outline of the proof and constituent results.}\label{sec-proof-outline}
 The starting point is the method developed in \cite{MV18a}, where by writing the Coulomb potential $\frac 1 {|x|}=\sqrt{\frac 2\pi} \int_0^\infty \e^{-\frac{u^2 x^2}2} \d u$ and expanding the exponential weight in \eqref{def-polaronmeas} in a power series
for any $\alpha>0$ and $T>0$, the Polaron measure 
\begin{equation}\label{Gaussian0}
\widehat\P_{\alpha,T}(\d\omega)= \int \mathbf P_{\hat\xi,\hat u}(\d\omega) \widehat\Theta_{\alpha,T}(\d\hat\xi\d\hat u)
\end{equation} was represented as a mixture of centered Gaussian measures 
$\mathbf P_{\hat\xi,\hat u}$ with variance 
\begin{equation}\label{var0}
\mathrm{Var}^{\mathbf{P}_{\hat{\xi},\hat{u}}}\Big[\frac{\omega(T)-\omega(-T)}{\sqrt{2T}}\Big]=3\sup_{f\in H_T}\bigg[2~\frac{f(T)-f(-T)}{\sqrt{2T}}-\int_{-T}^T f'(t)^2\d t-\sum_{i=1}^{n_T(\hat\xi)} u_i^2|f(t_i)-f(s_i)|^2\bigg]. 
\end{equation} 
Here, $H_T$ denotes all absolutely continuous functions on $[-T,T]$ with square integrable derivatives (see \cite[Eq. (3.3)-(3.4)]{MV18a} and Section \ref{sec-duality} for a detailed review). In \eqref{Gaussian0}, $\widehat\Theta_{\alpha,T}(\d\hat\xi\d\hat u)$ represents the law of a tilted Poisson point process taking values on the space of
$(\hat\xi,\hat u)$, with $\hat\xi=\{[s_1,t_1],\dots,[s_n,t_n]\}_{n\geq 0}$ denoting a collection for (possibly overlapping) intervals contained in $[-T,T]$ and $\hat u=(u_1,\dots, u_n)\in (0,\infty)^n$ denoting a string of positive numbers, with 
each $u_i$ being linked to the interval $[s_i,t_i]$. For any fixed $\alpha>0$ and as $T\to\infty$, 
the limit $\widehat\Theta_\alpha=\lim_{T\to\infty} \widehat\Theta_{\alpha,T}$ exists, can be identified explicitly and is stationary. Consequently, the infinite-volume limit $\widehat\P_\alpha=\lim_{T\to\infty} \widehat\P_{\alpha,T}$ 
 also admits a Gaussian representation 
 \begin{equation}\label{Gaussian1}
 \widehat\P_\alpha(\cdot)=\int \mathbf P_{\hat\xi,\hat u}(\cdot) \widehat\Theta_{\alpha}(\d\hat\xi\d\hat u)
 \end{equation} analogous to \eqref{Gaussian0}, and  
 for any $\alpha>0$, the distributions of the rescaled increments $\frac{\omega(T)-\omega(-T)}{\sqrt{2T}}$, both under $\widehat\P_{\alpha,T}$ and under $\widehat\P_\alpha$, converge for any $\alpha>0$ and as $T\to\infty$ 
 to a $3d$ centered Gaussian law $N(0,\sigma^2(\alpha))$ with variance given by the $L^1(\widehat\Theta_\alpha)$ and $\widehat\Theta_\alpha$-a.s. limit
  \begin{equation}\label{variance1}
 \sigma^2(\alpha) =\lim_{T\to\infty} 3\sup_{f\in H_T}\bigg[2~\frac{f(T)-f(-T)}{\sqrt{2T}}-\int_{-T}^T f'(t)^2\d t-\sum_{i=1}^{n_T(\hat\xi)} u_i^2|f(t_i)-f(s_i)|^2\bigg]. 
 \end{equation}
 We refer to Section \ref{sec-duality} for a more detailed review of these arguments from \cite{MV18a}. To show Theorem \ref{thm}, we will show that as $\alpha\to\infty$ the averaged supremum in \eqref{variance1} under $\wh\Theta_\alpha$ is bounded above by a constant times $\alpha^{-4}$. We will outline the proof of this bound now, along with a discussion of the auxiliary results which we will prove on the way. This task will be split now into four main steps.

{\it Step 1 (Duality):} In the identity \eqref{Gaussian1}, given any realization of the point process $(\hat\xi,\hat u)$ sampled according to $\wh\Theta_\alpha$, $\mathbf P_{\hat\xi,\hat u}$ is a centered Gaussian measure. The first step is to develop this {\it duality} between $\widehat\Theta_{\alpha}$ (resp. $\widehat\Theta_{\alpha,T}$) and $\widehat\P_{\alpha}$ 
 (resp. $\widehat\P_{\alpha,T}$) further. This was done in the follow up work \cite{MV21}, where a simple but a very useful identity was introduced (see \cite[Eq. (1.11), p.1647]{MV21})  -- namely, for any $\alpha>0$ and any interval $[-A,A]$, conditional on the Brownian increments $\{\omega(t)- \omega(s)\}_{-A\leq s < t \leq A}$ sampled according to $\wh\P_{\alpha}$, the ``quenched" law of the point process
$\{(s_i,t_i,u_i): -A \leq s_i < t_i \leq A\}$ under $\wh\Theta_{\alpha}$, denoted by $\wh\Theta_{\alpha,\omega}$, is a stationary ergodic Poisson point process with random intensity 
\begin{equation}
\Lambda(\alpha,\omega, \, \d s \d t\d u):= \alpha \sqrt{\frac 2\pi} \e^{-(t-s)} \e^{-\frac{u^2 |\omega(t) - \omega(s)|^2}{2}} \,\, \1_{-A \leq s < t \leq A} \, \d s \d t\, \d u.
\end{equation}
Consequently, the quenched distribution of {\it any} function 
 $$
 \mathrm f(\hat\xi,\hat u)= \sum_i \mathrm f(s_i,t_i,u_i)\qquad s_i, t_i \in [-A,A] 
 $$
 under $\widehat\Theta_{\alpha,\omega}$ is itself {\it Poisson distributed} with a {\it random} intensity determined by $\Lambda(\alpha,\omega)$. For instance,  
 the quenched distribution of the point process $\{(s_i,t_i,u_i): -A \leq s_i < t_i \leq A, u \geq 0\}$ is Poissonian with random intensity (using again the identity $1/|x|=\sqrt{2/\pi}\int_0^\infty \e^{-u^2|x|^2} \d u$)
 \begin{equation}\label{eq shoot up 0}
 \alpha  \frac {\e^{-(t-s)}} {|\omega(t)- \omega(s)|} \1_{-A \leq s < t \leq A} \d s \d t= \alpha^2   \frac {\e^{-(t-s)}} {\alpha |\omega(t)- \omega(s)|} \1_{-A \leq s < t \leq A} \d s \d t\,.
 \end{equation}
Likewise, for any $C, C^\prime>0$, the above conditional law of $\{(s_i,t_i): -A \leq s_i < t_i \leq A, C\alpha \leq u_i \leq C^\prime \alpha\}$ is Poissonian with random intensity 
\begin{equation}\label{eq shoot up 1}
\alpha^2 \sqrt{\frac 2\pi} \e^{-(t-s)} \bigg(\int_{C}^{C^\prime} \d u\,\, \e^{-\frac{u^2 \alpha^2 |\omega(t) - \omega(s)|^2}{2}}\bigg) \,\, \1_{-A \leq s < t \leq A} \, \d s \d t\,.
\end{equation}
That is, the quenched intensity of the point process $\{(s_i,t_i,u_i): u_i\geq 0\}$ under $\wh\Theta_{\alpha}$ shoots up to  
\begin{equation}\label{eq shoot up}
\alpha^2 \int_{-A\leq s < t \leq A} \d s \d t \e^{-(t-s)} V(\alpha|\omega(t)- \omega(s)|)
    \end{equation} 
for a suitable function $V$ defined on the {\it rescaled} increments $(\alpha |\omega(t)- \omega(s))_{-A \leq s < t \leq A}$ sampled according to $\wh\P_\alpha$. 
The same statement holds for the distribution of the point process $\{(s_i,t_i,u_i): C\alpha \leq u_i\leq C^\prime\alpha\}$ under $\wh\Theta_{\alpha}$  
with {\it rescaled} $u_i \in [C\alpha,C^\prime \alpha]$. We refer to Theorem \ref{thm duality} in Section \ref{sec-duality-2} for details.

 {\it Step 2 (Random intensities in strong coupling and Pekar process):} The next task is to determine the behavior of intensities $\Lambda(\alpha,\cdot)$ of the above form under $\widehat\P_\alpha$ in the {\it strong coupling limit }
 $\alpha\to\infty$. Indeed, in Theorem \ref{thm-strong-coupling} we show that for a large class of functions $V$ (including continuous bounded functions, and $V(x)=\frac 1{|x|}$, and $V(|x|)=|x|$ etc.), 
 \begin{equation}\label{strong-coupling}
 \begin{aligned}
 &\lim_{\alpha\to\infty}\E^{\widehat\P_\alpha}\bigg[\iint_{-A\leq s < t \leq A} \d s \d t \e^{-(t-s)} V(\alpha|\omega(t)- \omega(s)|)\bigg] \\
 &= \bigg(\int_{-A \leq s < t \leq A} \d s \d t \e^{-(t-s)}\bigg) \bigg(\int_{\R^3\times \R^3} V(|x-y|) \psi_0^2(x) \psi_0^2(y)\bigg) \in (0,\infty),
 \end{aligned}
 \end{equation}
 where $\psi_0$ denotes the centered solution of the Pekar variational formula $g_0$ (recall \eqref{Pekar-conj}) -- we refer to Theorem \ref{thm-strong-coupling} for details. Consequently, we have the convergence of the (averaged) distributions 
 $$
 \widehat\P_\alpha\big[\alpha|\omega(t) - \omega(s)|\in \cdot\big] \Longrightarrow  
  (\psi_0^2\otimes \psi_0^2)\big[|x-y| \in \cdot\big] \qquad \alpha\to\infty, \quad s, t \in [-A,A],
  $$
 see Corollary \ref{cor-tightness} for a precise statement.\footnote{We remark that, as mentioned previously, in \cite{MV18b}, 
 the distribution of the rescaled process 
 $(\alpha|\omega(\frac{t}{\alpha^2}) - \omega(\frac{s}{\alpha^2})|)_{s\in A, t\in B}$ under $\widehat\P_\alpha$ was shown to converge to the stationary version of the
 increments of the Pekar process. That is, the distribution of the processes on time scales of order $\frac 1 {\alpha^2}$ was considered. Currently, we are considering the distributions 
 of the rescaled increments  $\alpha|\omega(t) - \omega(s)|$ under $\widehat\P_\alpha$ with $s,t\in [-A,A]$, i.e.\ on time scales of order one.}
 In particular, the integral on the LHS in \eqref{eq shoot up} remain uniformly bounded away from zero under $\widehat\P_\alpha$, and as an upshot, we get that the Poisson intensity in Step 1 remains for $\alpha$ large, 
 on average under $\widehat\P_\alpha$, of order $C \alpha^2$ with an explicit constant $C \in (0,\infty)$ depending on the Pekar solution $\psi_0$. 
 
 The above weak convergence, in particular, yields tightness of the rescaled increments $\alpha|\omega(t)- \omega(s)|$ under $\wh\P_\alpha$ (see also Corollary \ref{cor-tightness}). For technical reasons (to be explained in Step 5 below), we will also strengthen this tightness to a quantified version -- namely, in Proposition \ref{cor:polaron-many-intervals} we will show that the tightness of $\alpha|\omega(t)- \omega(s)|$ under $\wh\P_\alpha$ fails with probability at most $\e^{-\tilde C(\eps)\alpha^2}\leq \e^{-\alpha}$ for $\alpha$ large.

 {\it Step 3: (Functionals of $(\hat\xi,\hat u)$ under $\widehat\Theta_\alpha$ and Pekar process):} We now apply the above duality to particular choices of $\mathrm f(\hat\xi,\hat u)$ and
 combine Step 1 and Step 2 above. Concretely, in Corollary \ref{cor:consequences-number-intervals} we will show that, with high probability and on average under $\widehat\Theta_\alpha$ and as $T\to\infty$, followed by $\alpha\to\infty$ the following statements hold:
 \begin{enumerate}
     \item Under $\widehat\Theta_\alpha$, the total number of intervals in $[-T,T]$ grows like $2\alpha^2 T \|\nabla \psi_0\|_2^2$, i.e.,  \begin{equation}
 \begin{aligned}
&\frac{n_T(\hat\xi)}{2\alpha^2 T} \simeq 2g_0= \int_{\R^3}|\nabla \psi_0(x)|^2 \d x= \int_{\R^3\times\R^3} \frac{\psi_0^2(x)\psi_0^2(y)\d x \d y}{|x-y|},\footnotemark \qquad\mbox{where}\\
&\qquad n_T(\hat\xi):=\#\big\{i: [s_i,t_i]\subset [-T,T]\big\},
\end{aligned}
\end{equation}
\item Under $\widehat\Theta_\alpha$, the lengths of these intervals remain exponentially distributed, i.e., 
\begin{equation}
 \begin{aligned}
& \frac{n_T^{\ssup a}(\hat\xi)}{2\alpha^2 T} \simeq 2g_0(1-\e^{-a}), \quad\mbox{where}\\
&\qquad\qquad n_T^{\ssup a}(\hat\xi):= \#\big\{i: [s_i,t_i]\subset [-T,T], t_i - s_i \leq a\big\},
\qquad a>0, 
\end{aligned}
\end{equation}
\item Under $\widehat\Theta_\alpha$, the size of $u_i$s attached with each interval $[s_i,t_i]$ grow like $\alpha$, i.e.,
\begin{equation}
 \begin{aligned}
& \frac {n_T^{\ssup{A,B}}(\hat\xi)}{2\alpha^2T} \simeq  \sqrt{\frac 2 \pi} \int_a^b \d z \iint_{\R^3\times\R^3} \d x \d y \psi_0^2(x)\psi_0^2(y)   \e^{-\frac{z^2|x-y|^2}2}, \quad\mbox{where}\\
&\qquad\qquad\qquad\qquad n_T^{\ssup{A,B}}(\hat\xi):= \#\big\{i: [s_i,t_i]\subset [-T,T], A\alpha \leq u_i \leq B\alpha\big\}, \qquad A, B>0.
\end{aligned}
 \end{equation}
 \end{enumerate}
 \footnotetext{Using a simple scaling argument, in Lemma \ref{lemma-Pekar-energy} it will be shown that $2g_0= \int_{\R^3}|\nabla \psi_0(x)|^2 \d x= \int_{\R^3\times\R^3} \frac{\psi_0^2(x)\psi_0^2(y)\d x \d y}{|x-y|}$.}
We refer to Corollary \ref{cor:consequences-number-intervals} for a precise version of the above statements. The first statement underlines that, under $\widehat\Theta_\alpha$, the total number of intervals in $[-T,T]$ grows like $2T\alpha^2 \int_{\R^3}|\nabla \psi_0(x)|^2 \d x $ as $T\to\infty$, followed by $\alpha\to\infty$. That is, the tilting 
 in $\widehat\Theta_\alpha$ increases the density of intervals from $\alpha$ to $\alpha^2$ when $\alpha$ becomes large. The second statement shows that the tilting in $\widehat\Theta_\alpha$ {\it does not change the distribution of lengths of intervals} -- the sizes of all intervals in $[-T,T]$ 
 remain exponential with parameter $1$. The third statement underlines that 
under $\widehat\Theta_\alpha$, the average size of $u$ is of order $\alpha$. 
In fact, again using the duality between $\widehat\Theta_\alpha$ and $\widehat\P_\alpha$, which is stationary and ergodic, and invoking the resulting ergodic theorem under $\widehat\Theta_\alpha$, we can strengthen the above 
facts to almost sure statements under $\widehat\Theta_\alpha$ using the corresponding framework of Palm measures, see Corollary \ref{lemma-xi-prime}. As we will see below, these facts will be reflected in the $\alpha^4$ divergence of the effective mass in the variational formula \eqref{variance1}. 

In Section \ref{subsec FKG} and Section \ref{subsec FKG comparison pot}
we consider an FKG inequality on point processes for a model similar to Polaron and apply the aforementioned duality between the point process and the corresponding path measure also in this context.  Incidentally, this FKG inequality also yields {\it strict} monotonicity of the effective mass $m(\alpha)$ (see Corollary \ref{cor:m-eff-monotone}) and sub-additivity of the variance (see Corollary \ref{cor:subadditivity}). Along these lines, in Remark \ref{rem:general-p} and in Section \ref{subsec:universality}, we will provide a generalization of these ideas.


{\it Step 4 (Estimating $\sigma^2(\alpha)$):} With the above recipe, we turn to the variational formula \eqref{variance1} to prove that $\sigma^2(\alpha)\leq \frac 1{\overline C\alpha^4}$. A simple but important first step, done in Lemma \ref{lemma-square-root}, is to show that the required bound follows if we can prove the following relation between linear and the quadratic parts of the supremum in \eqref{var0}:
\begin{equation}\label{lin quad}
2~\frac{f(T)-f(-T)}{\sqrt{2T}} \leq \frac 1{\sqrt{\overline C}\alpha^2} \sqrt{\int_{-T}^T f'(t)^2\d t+\sum_{i=1}^{n_T(\hat\xi)} u_i^2|f(t_i)-f(s_i)|^2}.
\end{equation}
In Section \ref{sec sufficient}, we provide a sufficient condition for the validity of the above bound, the idea for which is the following: given the increments $\omega$ (on some time interval $[0,T_0]$ with $T_0=T_0(\alpha)\gg \alpha$) we have roughly $2\alpha^2 T_0 \int_{\R^3} |\nabla \psi_0(x)|^2 \d x$ many intervals, length of which are bounded in $\alpha$ and the corresponding $u$'s are of size approximately $\alpha$ (as mentioned in Step 3 above). We call these intervals {\it standard} and restrict our attention to these intervals (for some fixed $\omega$ on some suitable set). In Proposition \ref{prop-second-mom} we will show that there is a set $E_\alpha^c$ of large probability $\wh\P_\alpha[E_\alpha^c] \geq 1- \e^{-\alpha}$ such that for any $\omega\in E_\alpha^c$ it is possible to choose $\delta \alpha^2$ {\it steps} (or {\it paths}) of standard intervals, with some $\delta>0$, such that the following hold:
\begin{enumerate}
    \item For any $L=1,\dots, \delta \alpha^2$, step $L$ consists of $T_0$-many {\it disjoint} intervals $\eta_L:=\big\{[s_1^{\ssup L}, t_1^{\ssup L}]$, \dots, $[s_{T_0}^{\ssup L}, t_{T_0}^{\ssup L}]\big\}$, whose lengths are approximately one and $u^{\ssup L}_k \approx \alpha$ for all $k=1,\dots, T_0$, 
    \item In each step $L=1,\dots, \delta \alpha^2$ we consider a {\it new} set of standard intervals as above (i.e., $\eta_\ell\cap \eta_{\ell^\prime}=\emptyset$) and 
    \item For any $x\in [0,T_0]$, if 
    $$
    U_{\delta\alpha^2}(x):= \sum_{L=1}^{\delta\alpha^2} \sum_{k=1}^{T_0} \1_{[t_k^{\ssup L}, s_{k+1}^{\ssup L}]}(x)
    $$
    is the number of times $x$ is caught in a ``vacant period" of step $L$, then (for $\omega\in E_\alpha^c$), under the quenched measure $\wh\Theta_{\alpha,\omega}$, we have 
    \[
    \sup_\alpha \frac 1 {T_0} \E^{\wh\Theta_{\alpha,\omega}}\Big[\int_0^{T_0} U_{\delta \alpha^2}(x)^2 \d x \Big] \leq \overline A< \infty.
    \]
\end{enumerate}
A key argument in Corollary \ref{cor-second-mom} using the above three conditions verifies \eqref{lin quad} on the time interval $[0,T_0]$, which, combined with the aforementioned sub-additivity of the variance implies the bound $\sigma^2(\alpha)\leq \frac 1{\overline C\alpha^4}$ and therefore Theorem \ref{thm}.

{\it Step 5:} The final step is to prove Proposition \ref{prop-second-mom}, which is shown in two further steps. As remarked in the last paragraph of Step 2, in Proposition \ref{cor:polaron-many-intervals}, we will show that, for any $\eps>0$, the $\wh\P_\alpha$-probability that $\eps$-fraction of the pairs $(s,t)$ (in some fixed interval $[0,3]$) where $\alpha|\omega(t)- \omega(s)|$ exceeds a large constant (depending on $\eps$) is at most $\e^{-\tilde C(\eps)\alpha^2}\leq \e^{-\alpha}$ for $\alpha$ large.  To do this, we compare via FKG inequality to the point process of a similar model where the only permitted intervals have $u\leq O(\alpha)$ and use duality between $\wh\Theta_\alpha$ and $\wh\P_\alpha$ again to obtain the desired bound. In particular, this implies that the local interval density (on $[0,3]$) is $ C\alpha^2$ in a quantified sense. 

Next, in Section \ref{sec long paths}, we condition on the Polaron path so the interval process is conditionally Poissonian with random intensity. Given the path, we say $x\in [0,3]$ is ``good'' if it has both intensity at least $(1-\eps)\alpha^2$ to open intervals (with $u\approx \alpha$), and the same intensity to close intervals. Using the previous paragraph, we show the complementary ``bad'' points have density at most $\eps$ with probability $1-\e^{-c\alpha}$. Using boundedness of the Hardy--Littlewood maximal operator, this implies that most points $x$ are ``very good'', in the sense that only a small fraction of $[x-r,x+r]$ is bad, \textit{uniformly in the scale} $r$. The details for this part can be found in Section \ref{sec-good-verygood}.

Finally, in Sections \ref{section-long-paths}-\ref{sec proof prop second moment} we use the above idea to construct the steps of intervals (i)-(iii) from Step 4. The idea is that waiting times to open an interval have uniformly exponential tails when started from a closing-end-point which is ``very good''. Gaps created by previous steps of interval matching are simply added to the ``bad'' set over time. This shrinks the ``very good'' set, but the two sets remain respectively small and large, which suffices to prove Proposition \ref{prop-second-mom}.

\noindent{\bf Organization of the rest of the article:}  In Section \ref{sec-strong-coupling}, we will provide the results outlined in Step 2 above. There, the necessary 
properties of the Pekar variational problem $g_0$ will be deduced in Section \ref{sec-Pekar} and Theorem \ref{thm-strong-coupling} will be proved in Section \ref{subsec-proof-thm-strong-coupling} (a technical part of its proof concerning unbounded potential $V$ has been deferred to Appendix \ref{sec unbounded V}).\footnote{Strictly speaking, Section \ref{sec-strong-coupling} is not used in the remaining sections, but as described above, it conceptually plays an important part in the arguments of the following sections, apart from identifying the correct scale and the local picture of the point process in the strong coupling limit in terms of the Pekar process.} Section \ref{sec duality main} contains the results mentioned in Step 1 and Step 3 above. Section \ref{sec est variance} contains the results mentioned in Step 4, and Sections \ref{sec high density}-\ref{sec long paths} contain the arguments outlined in Step 5. In Appendix \ref{subsec:universality}, we will provide the generalization mentioned towards the end of Step 3, and in Appendix \ref{sec-pointprocess}, we will provide some background on Palm measures and Poisson processes with random intensities and their ergodic properties. 

\section{Strong coupling limits and the Pekar process}\label{sec-strong-coupling} 
The goal of this section is to prove the following two results: 

\begin{theorem}\label{thm-strong-coupling}
Let $V:[0,\infty)\to [0,\infty)$
be any continuous and bounded function. Then for any $\theta>0$, we have 
\begin{equation}\label{main}
\lim_{\alpha\to\infty}\E^{\widehat\P_\alpha}\bigg[   \int_0^\infty\theta \e^{-\theta t}V(\alpha|\omega(t)-\omega(0)|) \d t \bigg]=   \iint\psi_{0}^2(x)\psi_{0}^2(y)V(|x-y|)\d x\d y.      
\end{equation}
Moreover, for any  symmetric, continuous and integrable function $g:(0,\infty)^2\to [0,\infty)$, 
\begin{equation}\label{main2}
\begin{aligned}
&\lim_{\alpha\to\infty}\E^{\widehat\P_{\alpha}}\bigg[ \int_0^\infty\int_0^\infty g(s,t) V(\alpha|\omega(t)-\omega(s)|)\d s\d t\bigg] \\
&=\bigg[ \int_0^\infty \int_0^\infty g(s,t)\d s\d t\bigg]\bigg[  \iint_{\R^3\times\R^3} V(|x-y|)  \psi_{0}^2(x)\psi_{0}^2(y)\d x\d y \bigg].
\end{aligned}
\end{equation}
Moreover, both \eqref{main}-\eqref{main2} hold for $V(|x|)=\frac 1 {|x|}$ in $\R^3$ and also for any continuous function $V:[0,\infty)\to [0,\infty)$ with $|V(x)| \leq C(1+|x|)$ for some $C<\infty$. 
\end{theorem}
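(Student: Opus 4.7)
The plan is to reduce both assertions to the single-time marginal statement
\begin{equation*}
\lim_{\alpha\to\infty}\E^{\widehat{\P}_\alpha}\bigl[V(\alpha|\omega(t)-\omega(0)|)\bigr]=\int\!\!\int \psi_0^2(x)\psi_0^2(y)V(|x-y|)\,\d x\,\d y
\end{equation*}
for every fixed $t>0$ and every bounded continuous $V$, and then to leverage the convergence of the rescaled process $(\alpha\omega(\cdot/\alpha^2))$ under $\widehat{\P}_\alpha$ to the stationary Pekar diffusion established in \cite{MV18b}. The key observation is that $\alpha\omega(t)$ is the rescaled process evaluated at Pekar time $\tau=\alpha^2 t\to\infty$; since the Pekar diffusion with generator $\tfrac12\Delta+(\nabla\psi_0/\psi_0)\cdot\nabla$ is reversible with invariant density $\psi_0^2$ and mixes at $O(1)$ rate in $\tau$, the pair $(\alpha\omega(0),\alpha\omega(t))$ converges jointly in distribution (after fixing the translation freedom via the explicit identification of $\widehat{\P}_\alpha$ from \cite{MV18a,MV21}) to two \emph{independent} $\psi_0^2$-distributed vectors, which yields the displayed identity.

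Once the single-time marginal statement is secured, the integrated identities \eqref{main} and \eqref{main2} follow by Fubini combined with dominated convergence: the short-time window $t\in[0,\alpha^{-2+\varepsilon}]$, where the Pekar diffusion has not yet mixed, receives vanishing weight from $\theta\e^{-\theta t}\,\d t$ (respectively $g(s,t)\,\d s\,\d t$) as $\alpha\to\infty$, while on its complement the pointwise limit applies uniformly on compact subsets of $(0,\infty)^2$.

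The extension from bounded continuous $V$ to the singular potential $V(|x|)=1/|x|$ and to potentials of linear growth requires uniform integrability of the family $\{V(\alpha|\omega(t)-\omega(0)|)\}_\alpha$ under $\widehat{\P}_\alpha$. For linearly growing $V$ I would bound $\E^{\widehat{\P}_\alpha}[\alpha^2|\omega(t)-\omega(0)|^2]$ uniformly in $\alpha$ by exploiting the Gaussian mixture representation $\widehat{\P}_\alpha=\int \mathbf{P}_{\hat\xi,\hat u}\,\d\widehat{\Theta}_\alpha$ and the variational formula \eqref{var0}, together with the Pekar equilibrium moment $\int|x|^2\psi_0^2(x)\,\d x<\infty$. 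For $V(|x|)=1/|x|$ the limiting law of $\alpha(\omega(t)-\omega(0))$ has a bounded continuous density (since $\psi_0\in H^1(\R^3)$ is smooth), so the singularity at the origin is locally integrable in $\R^3$; the remaining ingredient is a small-ball lower bound on $\alpha|\omega(t)-\omega(0)|$, extracted from the explicit Gaussian structure of each $\mathbf{P}_{\hat\xi,\hat u}$.

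The principal obstacle is precisely this small-ball bound in the singular case: it cannot be obtained from weak convergence alone and requires quantitative control on the Gaussian densities $\mathbf{P}_{\hat\xi,\hat u}$ that is uniform over the typical configurations of the tilted point process $\widehat{\Theta}_\alpha$. This is exactly what the paper defers to Appendix \ref{sec unbounded V}.
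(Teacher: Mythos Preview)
Your approach via the Pekar-process limit has a genuine interchange-of-limits gap. The result of \cite{MV18b} gives weak convergence of the rescaled increments $\alpha(\omega(\tau/\alpha^2)-\omega(\sigma/\alpha^2))$ to Pekar increments only for $\sigma,\tau$ ranging over a \emph{fixed compact} window. To access $\alpha(\omega(t)-\omega(0))$ for $t$ of order one you must evaluate the rescaled process at Pekar time $\tau=\alpha^2 t\to\infty$, which escapes every such window. The heuristic ``the Pekar diffusion mixes at $O(1)$ rate in $\tau$'' pertains to the \emph{limit} process; to transfer it you would need the convergence to the Pekar process to be uniform over time horizons growing with $\alpha$, or equivalently a mixing estimate that is uniform along the approximating sequence. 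Neither is supplied by \cite{MV18b}; the paper itself stresses that the present statement concerns time scales of order one, whereas \cite{MV18b} handles times of order $\alpha^{-2}$.

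The paper circumvents this by a free-energy perturbation argument rather than process-level convergence. It introduces $g_\eta(\alpha,\theta)$ by adding $\eta\alpha^2\int\!\!\int\theta\e^{-\theta(t-s)}V(\alpha|\omega(t)-\omega(s)|)\,\d s\,\d t$ to the Polaron exponent, then computes $\lim_{\alpha\to\infty}g_\eta(\alpha,\theta)/\alpha^2=g_\eta$ directly via Brownian scaling and the Donsker--Varadhan variational formula --- so the $\alpha\to\infty$ limit is taken at the level of free energies, where it is already established, not at the level of process laws. Differentiating both sides at $\eta=0$ (the interchange of $\lim_\alpha$ and $\partial_\eta$ being justified by convexity of $\eta\mapsto g_\eta(\alpha,\theta)$ together with differentiability of the limit, Lemma~\ref{lemma-appendix-Pekar}) yields \eqref{main}. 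The passage to \eqref{main2} is by Stone--Weierstrass density of the exponentials $\{\theta\e^{-\theta t}\}_{\theta>0}$ in $L^1$, not via a single-time marginal. The unbounded-$V$ extension (Appendix~\ref{sec unbounded V}) is likewise a free-energy argument: one truncates $V$, applies H\"older to separate the bounded and remainder parts, and shows the remainder contributes negligibly to the free energy --- no small-ball Gaussian estimate enters.
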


\begin{cor}\label{cor-tightness}
Fix any $-\infty < a_i < b_i < \infty$ for $i=1,2$. For any $s \in [a_1,b_1]$ and $t \in [a_2,b_2]$, let $\mu_{\alpha}(s,t,\cdot)= \widehat\P_\alpha[\alpha|\omega(t)-\omega(s)| \in \cdot]$ be the distribution of $\alpha|\omega(t)-\omega(s)|$ under $\widehat\P_\alpha$, while $\widehat\mu_\alpha(\cdot)$ denotes its average
$$
\begin{aligned}
&\widehat\mu_\alpha(B)=\frac{1}{Z} \int_{a_1}^{b_1}\int_{a_2}^{b_2} \e^{-|s-t|} \mu_{\alpha}(s,t,B) \d s\d t 
\qquad \forall B\subset [0,\infty); \\
&Z= \int_{a_1}^{b_1}\int_{a_2}^{b_2} \d s \d t\,\, \e^{-|t-s|}.
\end{aligned}
$$
If $\widehat\mu(\psi_0,\cdot)$ denotes the distribution of $|x-y|$ under $\psi^2_{0}(x)\otimes \psi^2_{0}(y) \d x\d y$ on $[0,\infty)$, 
then $\widehat\mu_\alpha(\cdot)$ converges weakly to $\widehat\mu(\psi_0,\cdot)$ as $\alpha\to\infty$.  In particular, $\wh\mu_\alpha$ is uniformly tight, meaning 
\[
\lim_{M\to\infty}\limsup_{\alpha\to\infty} \wh\mu_\alpha((M,\infty))=0.
\]
\end{cor}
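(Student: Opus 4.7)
The corollary will follow as a direct consequence of Theorem~\ref{thm-strong-coupling} applied to carefully chosen test functions. First, for the weak convergence $\widehat\mu_\alpha\Rightarrow\widehat\mu(\psi_0,\cdot)$, I would fix an arbitrary bounded continuous $V:[0,\infty)\to[0,\infty)$ and take
\begin{equation*}
g(s,t)\;:=\;\frac{1}{Z}\,\e^{-|t-s|}\,\1_{[a_1,b_1]}(s)\,\1_{[a_2,b_2]}(t)
\end{equation*}
as the weight in \eqref{main2}; after a time translation in $(s,t)$ (harmless thanks to the time-stationarity of $\widehat\P_\alpha$, built into its definition \eqref{def-polaronmeas} as a law on Brownian increments), this $g$ falls into the scope of Theorem~\ref{thm-strong-coupling}. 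The defining average of $\widehat\mu_\alpha$ rewrites immediately as
\begin{equation*}
\int_0^\infty V(x)\,\widehat\mu_\alpha(\d x) \;=\; \E^{\widehat\P_\alpha}\bigg[\int\!\!\int g(s,t)\,V(\alpha|\omega(t)-\omega(s)|)\,\d s\,\d t\bigg],
\end{equation*}
and \eqref{main2} forces the right-hand side to converge to $\int\!\int V(|x-y|)\psi_{0}^2(x)\psi_{0}^2(y)\,\d x\,\d y = \int V\,\d\widehat\mu(\psi_0,\cdot)$ as $\alpha\to\infty$. Since $\widehat\mu_\alpha$ and $\widehat\mu(\psi_0,\cdot)$ are both probability measures on $[0,\infty)$, convergence of all bounded continuous integrals characterises weak convergence, establishing the first claim.

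For the uniform tightness statement, my plan is to test against the linearly growing potential $V(x)=x$, which belongs to the class $|V(x)|\le C(1+|x|)$ handled by the last sentence of Theorem~\ref{thm-strong-coupling}. With the same weight $g$ as above, this yields
\begin{equation*}
\int_0^\infty x\,\widehat\mu_\alpha(\d x) \;\xrightarrow[\alpha\to\infty]{}\; \int\!\!\int |x-y|\,\psi_{0}^2(x)\,\psi_{0}^2(y)\,\d x\,\d y \;<\;\infty,
\end{equation*}
the finiteness being immediate from the smoothness and spatial decay of the centered Pekar minimiser $\psi_0\in H^1(\R^3)$ furnished by Lieb~\cite{L76}. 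Hence the first moments of $\widehat\mu_\alpha$ are uniformly bounded for $\alpha$ large, and Markov's inequality yields
\begin{equation*}
\widehat\mu_\alpha\big((M,\infty)\big)\;\le\;\frac{1}{M}\int_0^\infty x\,\widehat\mu_\alpha(\d x),
\end{equation*}
so that $\limsup_{\alpha\to\infty}\widehat\mu_\alpha((M,\infty))=O(1/M)\to 0$ as $M\to\infty$, as required.

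Granted Theorem~\ref{thm-strong-coupling}, there is no substantive obstacle in the corollary: the whole content lies in recognising $\widehat\mu_\alpha$ as the pushforward, under the map $(s,t)\mapsto\alpha|\omega(t)-\omega(s)|$, of the probability density $Z^{-1}\e^{-|t-s|}$ on $[a_1,b_1]\times[a_2,b_2]$, and then invoking the theorem together with Markov's inequality. The genuine difficulty is concealed inside Theorem~\ref{thm-strong-coupling} itself, which rests on the strong-coupling analysis of $\widehat\P_\alpha$ and the properties of the Pekar variational problem developed in the preceding sections.
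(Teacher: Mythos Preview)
Your proof is correct and, for the weak convergence, follows essentially the same route as the paper: unfold the definition of $\widehat\mu_\alpha$ and apply Theorem~\ref{thm-strong-coupling} with the weight $g(s,t)=Z^{-1}\e^{-|t-s|}\1_{[a_1,b_1]\times[a_2,b_2]}$. Your remark about the time translation needed to match the $(0,\infty)^2$ domain in \eqref{main2} is more careful than the paper, which applies the theorem without comment; stationarity of $\widehat\P_\alpha$ is, however, established in \eqref{ergodic-P-alpha} rather than built into \eqref{def-polaronmeas}.

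For tightness, the paper simply writes ``In particular'', since weak convergence of probability measures on $[0,\infty)$ to a probability measure already implies uniform tightness (Portmanteau on the closed sets $[M,\infty)$ gives $\limsup_\alpha\widehat\mu_\alpha([M,\infty))\le\widehat\mu(\psi_0,[M,\infty))\to 0$). Your first-moment argument via $V(x)=x$ and Markov is valid and even yields the quantitative rate $O(1/M)$, but is more than is strictly needed.
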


The rest of the section is devoted to the proofs of the above two results.

\subsection{Properties of the Pekar variational problem.}\label{sec-Pekar}
For the proof of Theorem \ref{thm-strong-coupling}, we will need some properties of the Pekar variational problem, which we will deduce in the next four lemmas. Recall that the supremum in 
\begin{equation}\label{g0}
g_0=\sup_{\heap{\psi\in H^1(\R^3)}{|\psi|_2=1}}\bigg[\iint_{\R^3\times \R^3} \frac{\psi^2(x)\psi^2(y)\d x \d y}{|x-y|}-\frac{1}{2}\int_{\R^3} |\nabla \psi(x)|^2 \d x \bigg]
\end{equation}
is attained at some $\psi_{0}$ which unique modulo spatial translations and can be chosen to be centered at $0$ and is a radially symmetric function \cite{L76}.
Moreover, we have 
\begin{lemma}\label{lemma-Pekar-energy}
Let $\psi_0$ be the centered radially symmetric maximizer of \eqref{g0}. Then 
\begin{equation}\label{lemma-Pekar1}
\begin{aligned}
&\iint_{\R^3\times \R^3} \frac{\psi_{0}^2( x)\psi_{0}^2( y)}{|x-y|} \d x\d y= \int_{\R^3} |\nabla \psi_0(x)|^2 \d x, \qquad\mbox{and \,\, therefore}\\
&g_0=\frac 12 \int_{\R^3} |\nabla \psi_0(x)|^2 \d x= \frac 12 \iint_{\R^3\times \R^3} \frac{\psi_{0}^2( x)\psi_{0}^2( y)}{|x-y|} \d x\d y. 
\end{aligned}
\end{equation}
\end{lemma}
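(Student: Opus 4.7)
The plan is to use a straightforward scaling (virial) argument, exploiting the fact that the Pekar functional is not scale-invariant and that $\psi_0$ is an exact maximizer. Consider the one-parameter family
\[
\psi_\lambda(x) := \lambda^{3/2}\,\psi_0(\lambda x), \qquad \lambda>0,
\]
which preserves the $L^2$ normalization $|\psi_\lambda|_2 = |\psi_0|_2 = 1$, so each $\psi_\lambda$ is admissible in the variational problem \eqref{g0}.

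First I would compute how each of the two terms in the Pekar functional scales. By the change of variables $u=\lambda x$, $v=\lambda y$, the Coulomb term transforms as
\[
\int\!\!\int \frac{\psi_\lambda^2(x)\psi_\lambda^2(y)}{|x-y|}\,\d x\,\d y \;=\; \lambda\,\int\!\!\int \frac{\psi_0^2(u)\psi_0^2(v)}{|u-v|}\,\d u\,\d v,
\]
while the Dirichlet energy becomes
\[
\int |\nabla \psi_\lambda(x)|^2\,\d x \;=\; \lambda^2 \int |\nabla \psi_0(u)|^2\,\d u.
\]
Writing $A := \int\!\!\int \psi_0^2(x)\psi_0^2(y)/|x-y|\,\d x\,\d y$ and $K := \int |\nabla \psi_0|^2\,\d x$, the functional along the family reduces to the explicit one-variable expression
\[
F(\psi_\lambda) \;=\; \lambda A \;-\; \tfrac{\lambda^2}{2}\,K.
\]

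Next, since $\psi_0$ is the maximizer of $F$ over the unit sphere in $H^1(\R^3)$, the scalar function $\lambda\mapsto F(\psi_\lambda)$ must attain its maximum at $\lambda=1$. Differentiating in $\lambda$ and evaluating at $\lambda=1$ forces $A - K = 0$, which is precisely the first identity in \eqref{lemma-Pekar1}. Substituting $A=K$ back into $g_0 = F(\psi_0) = A - K/2$ gives $g_0 = A/2 = K/2$, the second claim.

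I do not anticipate a genuine obstacle: the argument is a classical Pohozaev/virial identity, and all integrals are finite and smooth in $\lambda$ because $\psi_0\in H^1(\R^3)$ is known from Lieb's theorem \cite{L76} to be smooth, radial and rapidly decaying, which justifies differentiating $F(\psi_\lambda)$ in $\lambda$ under the integral sign. The only point worth noting is the need to verify admissibility of the scaled family (preservation of the $L^2$ constraint), which is immediate from the choice of exponent $3/2$ in the definition of $\psi_\lambda$.
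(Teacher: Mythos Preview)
Your proof is correct and follows essentially the same approach as the paper's: both use the one-parameter scaling family $\psi_\lambda(x)=\lambda^{3/2}\psi_0(\lambda x)$, compute that the Pekar functional along this family reduces to $\lambda A - \tfrac{\lambda^2}{2}K$, and read off $A=K$ from the criticality at $\lambda=1$. Your write-up adds a bit more justification (admissibility, differentiability under the integral), but the argument is the same virial identity.
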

\begin{proof} 
Consider the family 
$$\psi^{\ssup\lambda}(x):=\lambda^{\frac{3}{2}}\psi_{0}(\lambda x)
$$ 
Then by rescaling 
\begin{align*}
\lambda^6&\iint_{\R^3\times \R^3} \frac{\psi_{0}^2(\lambda x)\psi_{0}^2(\lambda y)}{|x-y|} \d x\d y-\frac{\lambda^5}{2}\int_{\R^3} |\nabla \psi_{0}(\lambda x)|^2 \d x\\
=\lambda &\iint_{\R^3\times \R^3} \frac{\psi_{0}^2( x)\psi_{0}^2(y)}{|x-y|} \d x\d y-\frac{\lambda^2}{2}\int_{\R^3} |\nabla \psi_0 (x)|^2 \d x
\end{align*}
has a maximum at $\lambda=1$, providing 
$$
\iint_{\R^3\times \R^3} \frac{\psi_{0}^2( x)\psi_{0}^2(y)}{|x-y|} \d x\d y=\int_{\R^3} |\nabla \psi_{0}( x)|^2 \d x.
$$
It follows that
\[
\iint_{\R^3\times \R^3} \frac{\psi_{0}^2(x)\psi_{0}^2(y)}{|x-y|} \d x\d y=2g_0, \quad\mbox{and}\quad
\int_{\R^3} |\nabla \psi_{0}( x) |^2 \d x=2 g_0.
\qedhere
\]
\end{proof}

 \begin{lemma}\label{lemma-appendix-Pekar}
 Let $V:[0,\infty)\to \R$ be a non-negative continuous function and 
\begin{equation}\label{tilde_g_eps}
 g_\eta
= \sup_{\| \psi\|_2=1}\bigg[\int_{\R^3}\int_{\R^3} \d x \d y \psi^2(x)\psi^2 (y) \bigg(\frac 1 {|x-y|}+ \eta V(|x-y|)\bigg)-\frac{1}{2} |\nabla \psi |^2\bigg].
\end{equation}
 Then 
 \begin{equation}\label{eq-lemma-appendix-Pekar}
 \lim_{\eta\to 0} \frac{ g_\eta- g_0}\eta=\iint_{\R^3\times \R^3}\psi_0^2(x) \psi_0^2(y) V(|x-y|) \d x \d y.
 \end{equation}
 \end{lemma}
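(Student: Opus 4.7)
The plan is to treat Lemma~\ref{lemma-appendix-Pekar} as an envelope/Hellmann--Feynman statement: since $g_\eta$ is a supremum of affine functions of $\eta$, it is convex in $\eta$, and its one-sided derivatives at $\eta=0$ ought to equal $\int\!\int \psi_0^2\psi_0^2 V(|x-y|)\,dx\,dy$ provided the maximizer $\psi_0$ of $g_0$ is unique modulo translations. The lower matching bound is immediate: inserting $\psi_0$ as a trial state in \eqref{tilde_g_eps} gives
\begin{equation*}
g_\eta \ \ge\ g_0 + \eta \int\int_{\R^3\times\R^3} \psi_0^2(x)\psi_0^2(y) V(|x-y|)\,dx\,dy,
\end{equation*}
which yields the correct $\liminf$ as $\eta\downarrow 0$ and the correct $\limsup$ as $\eta\uparrow 0$.

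For the upper matching bound, take $\eta>0$ (the case $\eta<0$ is symmetric), and let $\psi_\eta$ be an $o(\eta)$-maximizer of $g_\eta$. Using $\psi_\eta$ as a trial state in the formula for $g_0$ yields
\begin{equation*}
g_0\ \ge\ g_\eta - \eta \int\int_{\R^3\times\R^3} \psi_\eta^2(x)\psi_\eta^2(y) V(|x-y|)\,dx\,dy - o(\eta),
\end{equation*}
so $(g_\eta-g_0)/\eta \le \int\!\int \psi_\eta^2\psi_\eta^2\, V + o(1)$. Combined with the lower bound this also shows $|g_\eta-g_0|=O(\eta)$, so $\psi_\eta$ is itself an $O(\eta)$-maximizer of $g_0$. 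The proof therefore reduces to establishing
\begin{equation*}
\lim_{\eta\to 0}\int\!\int \psi_\eta^2(x)\psi_\eta^2(y) V(|x-y|)\,dx\,dy \ =\ \int\!\int \psi_0^2(x)\psi_0^2(y) V(|x-y|)\,dx\,dy.
\end{equation*}

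To establish this convergence I would invoke Lieb's compactness/uniqueness theorem for the Pekar functional \cite{L76}: any $H^1$-bounded approximate maximizing sequence for $g_0$ admits, after suitable spatial translations $a_\eta \in \R^3$, a subsequence that converges strongly in $H^1(\R^3)$ to $\psi_0$. Since $V(|x-y|)$ is translation-invariant, we may relabel $\psi_\eta \leftarrow \psi_\eta(\cdot - a_\eta)$ without loss of generality. Strong $H^1$-convergence together with the Sobolev embedding $H^1 \hookrightarrow L^p$ ($p \in [2,6]$) then gives $\psi_\eta^2 \to \psi_0^2$ in $L^p$ for $p\in[1,3]$, which directly yields the desired convergence whenever $V$ is bounded and continuous.

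The main obstacle is handling the unbounded $V$ also covered by the statement---particularly $V(r)=1/r$ and $V$ of linear growth, both needed in Theorem~\ref{thm-strong-coupling}. Near $|x-y|=0$, the Hardy--Littlewood--Sobolev inequality controls $\int\!\int \psi_\eta^2\psi_\eta^2/|x-y|$ uniformly by a constant times $\|\psi_\eta\|_{L^{12/5}}^4$, which is bounded via the Sobolev embedding applied to the $H^1$-bounded sequence $(\psi_\eta)$. For the large-$|x-y|$ tails, one exploits the uniform exponential decay of Pekar-type maximizers, which follows from the Euler--Lagrange equation $(-\tfrac12\Delta + W_\eta)\psi_\eta = \mu_\eta\psi_\eta$ with an effective potential $W_\eta$ that is bounded and vanishes at infinity (cf.\ \cite{L76}). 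Together these estimates furnish the uniform integrability needed to upgrade strong $H^1$-convergence to convergence of the weighted double integral, completing the proof.
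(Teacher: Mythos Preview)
Your approach is correct and is the same envelope-theorem argument the paper uses, just executed concretely rather than via an abstract lemma. The paper reduces Lemma~\ref{lemma-appendix-Pekar} to a general statement (Lemma~\ref{lemma-appendix}): if $F$ has a strict minimizer $x_0$ with gap $c(\delta)=\inf_{U_\delta(x_0)^c}[F-a_0]>0$ and $G\ge 0$ is continuous with $G(x_0)<\infty$, then $\eta\mapsto\inf[F+\eta G]$ has derivative $G(x_0)$ at $\eta=0$. Your two trial-state inequalities are precisely the upper and lower bounds in that proof, and your compactness step (near-maximizers of $g_0$ converge to $\psi_0$ after translation, hence $G(\psi_\eta)\to G(\psi_0)$) is exactly what is needed to verify the paper's gap hypothesis $c(\delta)>0$ together with the continuity of $G$; both routes ultimately rest on Lieb's uniqueness result \cite{L76}. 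One remark: the paper only invokes this lemma for bounded continuous $V$ (see \eqref{eq1} in the proof of Theorem~\ref{thm-strong-coupling}) and handles the unbounded cases by a separate truncation argument in Appendix~\ref{sec unbounded V}, so your final paragraph on $1/r$ and linear growth, while plausible, is not needed here.
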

 This result will follow from Lemma \ref{lemma-appendix} below.

 \begin{lemma}
 \label{lemma-appendix}
Let $F(\cdot)$ be a non-negative real-valued function on an arbitrary metric space such that $a_0:=\inf_y F(y)$ is attained at $x_0$. 
For $U_{\delta}(x_0)$ the open $\delta$-ball around $x_0$, suppose that for any $\delta>0$, 
\[
c(\delta):=\inf_{y\notin U_{\delta}(x_0)}[F(y)-a_0]>0.
\]
Let $G$ be a continuous, nonnegative function on the same space such that $G(x_0)<\infty$. If
\[
 a_\eta: =\inf_y[F(y)+\eta G(y)],
\]
  then 
\[
\lim_{\eta\to 0} a_\eta= a_0,\qquad\mbox{and}\qquad  \lim_{\eta\to 0}{a_\eta-a_0\over \eta}= G(x_0). 
\]
 \end{lemma}
 
 \begin{proof}
 Since $G\geq 0$, it holds that $a_{\eta}\geq a_0$ for each $\eta\geq 0$. Thus, \begin{equation*}
  a_0\leq a_{\eta}\leq F(x_0)+\eta G(x_0)=a_0+\eta G(x_0).
 \end{equation*}
 Letting $\eta\to 0$ and using that $G(x_0)<\infty$ leads to the first assertion. To prove the second one, the previous display implies that \begin{equation*}
  \limsup_{\eta\to 0}\frac{a_{\eta}-a_0}{\eta}\leq G(x_0).
 \end{equation*}
 To prove the converse inequality, let $\delta>0$, and note that \begin{equation*}
  \frac{a_{\eta}-a_0}{\eta}=\min\bigg\{\frac{\inf_{y\in U_{\delta}(x_0)}(F(y)-a_0+\eta G(y))}{\eta},
    \frac{\inf_{y\in U_{\delta}(x_0)^c}(F(y)-a_0+\eta G(y))}{\eta}\bigg\}.
 \end{equation*}
 Since $G\geq 0$, it holds that 
 \[
 \frac{\inf_{y\in U_{\delta}(x_0)^c}(F(y)-a_0+\eta G(y))}{\eta}\geq \frac{c(\delta)}{\eta},
 \]
 while 
 \[
 \frac{\inf_{y\in U_{\delta}(x_0)}(F(y)-a_0+\eta G(y))}{\eta}\geq \inf_{y\in U_{\delta}(x_0)}G(y).
 \]
 Since $c(\delta)>0$ for any $\delta>0$, we conclude that 
 \[
 \liminf_{\eta\to 0}\frac{a_{\eta}-a_0}{\eta}\geq \inf_{y\in U_{\delta}(x_0)}G(y).
 \]
 Letting $\delta\to 0$ and using the continuity of $G$, we conclude that 
 \[
 \lim_{\eta\to 0}{a_\eta-a_0\over \eta}= G(x_0).
 \qedhere
 \]
\end{proof}

\begin{lemma}\label{lemma-convexity}
If $\psi_0$ denotes the centered Pekar solution and $V$ is a function such that $\int_{\R^3} V(|x-y|) \psi_0^2(y) \d y$ is not identically zero, then the function $\eta\mapsto  g_\eta$ defined in \eqref{tilde_g_eps}
is strictly convex at $\eta=0$. 
\end{lemma}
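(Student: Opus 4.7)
The plan is to exploit that $g_\eta$ is automatically convex --- being the supremum of the affine-in-$\eta$ functionals $F_\eta(\psi) := F_0(\psi) + \eta G(\psi)$ with $G(\psi) := \int\int \psi^2(x)\psi^2(y) V(|x-y|)\,dx\,dy$ --- and to upgrade this to strict convexity at $\eta=0$ by testing against a one-parameter perturbation of the Pekar maximizer $\psi_0$. By Lemma \ref{lemma-appendix-Pekar}, $g_\eta$ is differentiable at $0$ with derivative $c_0 := G(\psi_0)$, so strict convexity at $\eta=0$ is equivalent to showing the strict tangent-line inequality $g_\eta > g_0 + \eta c_0$ for all small $\eta \neq 0$.

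To produce this strict inequality, I would pick $\phi \in H^1(\R^3)$ with $\langle \phi, \psi_0\rangle_{L^2} = 0$, set $\psi_s := (\psi_0 + s\phi)/|\psi_0+s\phi|_2$, and Taylor expand to obtain
\begin{equation*}
F_\eta(\psi_s) = g_0 + \eta c_0 + s\eta\, L(\phi) - \tfrac{1}{2}\, s^2 Q(\phi) + O\bigl(s^3 + s^2|\eta|\bigr),
\end{equation*}
where $L(\phi) := 4\int \phi(x)\psi_0(x) h(x)\,dx$ with $h(x) := \int V(|x-y|)\psi_0^2(y)\,dy$, and $Q(\phi) \geq 0$ denotes the negative of the constrained second variation of $F_0$ at $\psi_0$ applied to $\phi$. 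Because $\psi_0$ is a maximizer of $F_0$ on the unit sphere, $Q \geq 0$; moreover, by the classical Pekar spectral theory, the null space of $Q$ is exactly the three-dimensional space of translations $\{\partial_i\psi_0\}_{i=1}^3$, with $Q$ coercive on its $L^2$-orthogonal complement. Optimizing the display above at $s = \eta\, L(\phi)/Q(\phi)$, provided $L(\phi)\neq 0$ and $\phi\notin \ker Q$, gives
\begin{equation*}
g_\eta \;\geq\; F_\eta(\psi_s) \;\geq\; g_0 + \eta c_0 + \frac{\eta^2 L(\phi)^2}{2\, Q(\phi)} + O(\eta^3),
\end{equation*}
which is exactly the desired strict lower bound.

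It remains to exhibit such a $\phi$, and this is where the hypothesis enters. Since $\psi_0 > 0$ pointwise, $\phi \mapsto L(\phi) = 4\langle \phi, \psi_0 h\rangle_{L^2}$ vanishes on all of $\{\psi_0\}^\perp$ exactly when $\psi_0 h$ is proportional to $\psi_0$, i.e., when $h$ is a constant function on $\R^3$. By radial symmetry of $\psi_0$, both $h$ and $\psi_0 h$ are radial, which as a byproduct yields $L(\partial_i\psi_0) = 0$ by odd symmetry --- so $L$ automatically annihilates $\ker Q$, and any $\phi$ with $L(\phi)\neq 0$ can be projected onto $(\ker Q)^\perp$ without changing $L$ while ensuring $Q(\phi)>0$. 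The main obstacle I expect is to rule out that $h$ is a nonzero constant: under the assumption that $h$ is not identically zero, one appeals to the decay of $h$ at infinity --- $h(x)\to 0$ as $|x|\to\infty$ by dominated convergence whenever $V$ is bounded (using $\psi_0^2 \in L^1$), and the analogous decay holds for the Coulombic $V(r) = 1/r$ by standard estimates using the exponential decay of $\psi_0$ --- so $h$ must in fact be nonconstant, producing the required $\phi$ and completing the proof.
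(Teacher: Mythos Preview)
Your argument is sound in outline but takes a different and heavier route than the paper's, and your invocation of the Hessian non-degeneracy $\ker Q = \mathrm{span}\{\partial_i\psi_0\}$ is an external spectral fact about the Pekar problem that is not needed. In fact your own framework avoids it: once you have $\phi \perp \psi_0$ with $L(\phi) \neq 0$, simply take $s = c\eta$ for small $c$ of the appropriate sign; then $s\eta L(\phi) - \tfrac{1}{2}s^2 Q(\phi) = \eta^2\bigl(cL(\phi) - \tfrac{1}{2}c^2 Q(\phi)\bigr) > 0$ for small enough $c$, regardless of whether $Q(\phi)$ is positive or zero, and the $O(\eta^3)$ remainder is harmless. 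The paper's route is more direct still: if $g_\eta$ fails to be strictly convex at $0$, then being convex with supporting line $\eta \mapsto F_\eta(\psi_0) = g_0 + \eta G(\psi_0)$ it must be affine nearby, so $\psi_0$ remains a maximizer of $F_\eta$ for small $\eta \neq 0$; subtracting the Euler--Lagrange equations at $\eta$ and at $0$ gives $\langle \psi_0 h, \varphi\rangle = 0$ for all $\varphi \perp \psi_0$, whence $h$ is constant --- no second-order analysis at all.

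There is, however, a genuine gap in your endgame, one shared by the paper. Your claim that $h(x) \to 0$ as $|x| \to \infty$ for bounded $V$ requires $V$ itself to vanish at infinity; for $V \equiv 1$ one has $h \equiv 1$ and $g_\eta = g_0 + \eta$ is affine, so the lemma as stated (with hypothesis merely $h \not\equiv 0$) is actually false. Both your argument and the paper's really establish the sharper statement that $g_\eta$ is strictly convex at $0$ whenever $h$ is \emph{non-constant}; the paper simply jumps from ``$h$ constant'' to ``$h \equiv 0$'' without comment.
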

\begin{proof}
When $\eta=0$, there is a unique (up to spatial translation) maximizer of $g_0$ which is the Pekar function $\psi_{0}(x)$. Let 
$$
F(\psi)= \iint_{\R^3\times\R^3} \frac{\psi^2(x)\psi^2(y)}{|x-y|}\d x\d y+\eta \iint_{\R^3\times\R^3} \psi^2(x)\psi^2(y) V(|x-y|) \d x\d y-\frac{1}{2}\int|\nabla\psi(x)|^2 \d x
$$
such that $g_\eta= \sup_{\|\psi\|_2=1}F(\psi)$. Then for $\eta\not=0$, the Euler-Lagrange equation is obtained by setting 
\[
\frac{\d}{\d\delta} F(\psi_\eta+ \delta \varphi)\bigg|_{\delta=0}=0, \qquad\varphi\in \mathcal C^\infty_c(\R^3),
\]
leading to 
$$
\begin{aligned}
&2 \iint \frac{\psi_\eta^2(x)\psi_\eta(y)\varphi(y)}{|x-y|} \d x\d y+2\eta \iint \psi_\eta^2(x)\psi_\eta(y)\varphi(y) V(|x-y|) \d x\d y \\
&\qquad\qquad -\int \langle\nabla\psi_\eta(x),\nabla\varphi(x)\rangle  \d x=0
\end{aligned}
$$
provided $\varphi\perp\psi_\eta$. Note that $\eta \mapsto g_\eta$ is convex. Now if $\eta\mapsto g_\eta$ is not strictly convex at $\eta=0$, then $\psi_\eta=\psi_{0}$ is a solution of the optimization problem $g_\eta= \sup_{\|\psi\|_2=1}F(\psi)$.\footnote{Indeed, let $a(\cdot)$ and $b(\cdot)$ be functions on a metric space $X$ and let $F(\eta)=\sup_{x\in X}[a(x) + \eta b(x)]$. Then $F(\cdot)$ is convex and if it is not strictly convex at $\eta=0$, then for some $c,d \in \R$, $F(\eta)= c + \eta d$ for $\eta$ sufficiently close to zero. Assume further that $\sup_{x\in X} a(x)$ is attained at a unique $x_0\in X$. Then $c=a(x_0)$ and $d=b(x_0)$ making $F(\eta)=a(x_0) + \eta b(x_0)$ for $\eta$ sufficiently close to zero. Hence, the supremum defining $F(\eta)$ is attained at $x_0$ and $a^\prime(x_0)+ \eta b^\prime(x_0)=0$.} But
\[
2 \iint \frac{\psi_{0}^2(x)\psi_{0}(y)\varphi(y)}{|x-y|} \d x\d y-\int \langle \nabla\psi_{0}(x),\nabla\varphi(x) \rangle \d x=0,
\]
which forces $\iint \psi_{0}^2(x)\psi_{0}(y)\varphi(y) V(|x-y|) \d x\d y=0$ whenever $\varphi\perp\psi_{0}$, leading to
$\int_{\R^3} \psi_{0}^2(y) V(|x-y|) \d y \equiv 0$, 
which is a contradiction. 
\end{proof}

\subsection{\bf Proof of Theorem \ref{thm-strong-coupling}.}\label{subsec-proof-thm-strong-coupling} 
Before proving Theorem \ref{thm-strong-coupling}, let us note down some properties of the variational problem for any fixed $\alpha>0$, 
\begin{align}
g(\alpha):=\lim_{T\to\infty}\frac 1 {2T}\log Z_{\alpha,T}&= \lim_{T\to\infty}\frac{1}{2T}\log \E^\P\bigg[\exp\bigg(\alpha\iint_{-T\le s\le t\le T}\frac{\e^{-|t-s|}}{|\omega(t)-\omega(s)|} \d t\d s\bigg)\bigg]\nonumber\\
&= \sup_{\mathbb Q}\bigg[   \E^{\mathbb Q}\bigg( \alpha\int_0^\infty \frac{\e^{-t}}{|\omega(t)-\omega(0)|} \d t \bigg) -H(\mathbb Q|\P) \bigg] \label{g-alpha},
\end{align} 
where the supremum is taken over all processes $\mathbb Q$ with stationary increments on $\R^3$ and $H(\mathbb Q|\P)$ is the specific relative entropy of $\mathbb Q$ w.r.t. the law $\P$ of the increments of three-dimensional Brownian paths. The above statement follows from a strong LDP for the empirical process of 3d-Brownian increments (\cite[Lemma 5.3]{MV18b}) and Varadhan's lemma. 
Next, as shown in \cite[Lemma 4.6]{MV18b}, for any fixed $\alpha>0$, the supremum in \eqref{g-alpha} is actually attained over the class of processes with stationary increments.\footnote{\eqref{g-alpha} was originally deduced in \cite{DV83} from a weak LDP for the empirical process for $3d$ Brownian {\it paths}, where the resulting supremum was taken over stationary processes $\mathbb Q$. However, in this case, the supremum may not be attained over this class, in contrast to processes over stationary {\it increments}, see \cite[Sec. 1.4, p. 2123]{MV18b}.} Furthermore, the infinite-volume limit $\widehat\P_\alpha=\lim_{T\to\infty}\widehat\P_{\alpha,T}$, which exists for any fixed $\alpha>0$ in total variation on finite intervals (\cite[Theorem 5.1]{MV18a}, see the text under \eqref{varT}) belongs to the set of maximizer(s) of the variational problem \eqref{g-alpha} (see \cite[Theorem 5.2]{MV18b}). 
Finally, for any $\alpha>0$, 
\begin{equation}\label{ergodic-P-alpha}
\widehat\P_\alpha=\lim_{T\to\infty}\widehat\P_{\alpha,T} \qquad\mbox{is also stationary and ergodic,}
\end{equation}
see \eqref{eq-polaron-mixture-rep} and the explanation that follows.

Let us now start with the proof of Theorem \ref{thm-strong-coupling}. We will prove \eqref{main} first assuming that $V(|\cdot|)$ is continuous and bounded on $[0,\infty)$. The remaining assertions will be subsequently deduced from this. 
As in \eqref{g-alpha}, for any $\alpha>0$, $\theta>0$ and $\eta>0$, 
\begin{align}
&g_\eta(\alpha,\theta):=\lim_{T\to\infty}\frac{1}{2T}\log \E^\P\bigg[\exp\bigg(\alpha\iint_{-T\le s\le t\le T}\frac{\e^{-|t-s|}}{|\omega(t)-\omega(s)|} \d t\d s \nonumber\\
&\qquad\qquad\qquad\qquad +\eta \alpha^2\iint_{-T\le s\le t\le T} \theta \e^{-\theta|t-s|} V(|\alpha(\omega(t)-\omega(s))|) \d t\d s\bigg)\bigg]\nonumber
\\
&= \sup_{\mathbb Q}\bigg[  \E^{\mathbb Q}\bigg(  \alpha\int_0^\infty \frac{\e^{-t}}{|\omega(t)-\omega(0)|} \d t+\eta \alpha^2\int_0^\infty \theta \e^{-\theta t} V(|\alpha(\omega(t)-\omega(0)|)) \d t \bigg)  -H(\mathbb Q|\P)\bigg].\label{g-eps-alpha}
\end{align}
The supremum defining $g_\eta(\alpha,\theta)$ is also taken over processes with stationary increments in $\R^3$. We will now handle, for any fixed $\eta>0$ and $\theta>0$, the rescaled asymptotic behavior of 
$g_\eta(\alpha,\theta)/\alpha^2$ as $\alpha\to\infty$:
\begin{align}
&g_\eta:=\lim_{\alpha\to\infty}\frac{1}{\alpha^2}g_\eta(\alpha,\theta)\nonumber\\
&=\lim_{\alpha\to\infty}\sup_{\mathbb Q}\bigg[ \E^{\mathbb Q}\bigg[\int_0^\infty\frac{\e^{-t}}{\alpha|\omega(t)-\omega(0)|}\d t+\eta \int_0^\infty \theta \e^{-\theta t}V(\alpha|\omega(t)-\omega(0)|) \d t   \bigg]  -\frac{1}{\alpha^2}H(\mathbb Q|\P)  \bigg] \nonumber\\
&=\lim_{\alpha\to\infty}\sup_{\mathbb Q}\bigg[ \E^{\mathbb Q}\bigg[ \int_0^\infty\frac{\frac 1 {\alpha^2}\e^{-\frac{t}{\alpha^2}}}{|\omega(t)-\omega(0)|} \d t+ \eta\int_0^\infty \big(\frac\theta{\alpha^2}\big) \e^{-(\frac\theta{\alpha^2}) t}V(|\omega(t)-\omega(0)|) \d t  \bigg]  -H(\mathbb Q|\P)  \bigg] \label{Br-scaling}\\
&=
\sup_{\psi:\|\psi\|_2=1}
\bigg[ \iint\limits_{\R^3\times\R^3}\frac{\psi^2(x)\psi^2(y)}{|x-y|}\d x\d y
+ 
\eta \iint\limits_{\R^3\times\R^3}
\psi^2(x)\psi^2(y)V(|x-y|)\d x\d y     -\frac{1}{2}  \int\limits_{\R^3} |\nabla\psi(x)|^2 \d x 
\bigg]. 
\label{DV-varfor}
\end{align}
In \eqref{Br-scaling}, we used the scaling property of Brownian increments, and in \eqref{DV-varfor}, the strong coupling limit of the free energy (see Remark \ref{rem-scaling} below for details). 
Also, note that for $g_\eta$, we used the notation from \eqref{tilde_g_eps}. In the above identity, we now differentiate left and right hand sides
with respect to $\eta$ at $\eta=0$, and obtain for every $\theta>0$,
\begin{align}
&\frac {\d}{\d\eta} g_\eta\bigg|_{\eta=0} =\iint\psi_{0}^2(x)\psi_{0}^2(y)V(|x-y|)\d x\d y\label{eq1}, \qquad\mbox{while}
\\
&\bigg(\frac{\d}{\d\eta}\frac{1}{\alpha^2}g_\eta(\alpha,\theta)\bigg)\bigg|_{\eta=0}=\lim_{T\to\infty} \E^{\widehat\P_{\alpha,T}}\bigg[\frac{1}{2T}\theta \iint_{-T\le s\le t\le T} \e^{-\theta |t-s|}V(\alpha|\omega(t)-\omega(s)|) \d s\d t \bigg]\label{eq2} \\
&\qquad\qquad\qquad\quad\qquad=\E^{\widehat\P_\alpha}\bigg[     \theta \int_0^\infty   \e^{-\theta t} V(\alpha|\omega(t)-\omega(0)|) \d t \bigg] .      \label{eq3}
\end{align}
In \eqref{eq1}, we used Lemma \ref{lemma-appendix-Pekar}, while in \eqref{eq2} we used the definition of $g_\eta(\alpha;\theta)$ and that of the Polaron measure 
$\widehat\P_{\alpha,T}$. Furthermore, in \eqref{eq3} we used the aforementioned convergence $\widehat\P_\alpha=\lim_{T\to\infty}\widehat\P_{\alpha,T}$ in total variation on finite intervals and the fact that $\widehat\P_\alpha$
is stationary, recall \eqref{ergodic-P-alpha}. Therefore, equating the two derivatives \eqref{eq1} and \eqref{eq3} we obtain, for any $\theta>0$, 
\begin{equation}\label{Laplace0}
\lim_{\alpha\to\infty}\E^{\widehat\P_\alpha}\bigg[   \int_0^\infty\theta \e^{-\theta t}V(\alpha|\omega(t)-\omega(0)|) \d t \bigg]=   \iint\psi_{0}^2(x)\psi_{0}^2(y)V(|x-y|)\d x\d y.      
\end{equation}
This shows \eqref{main}.  We now prove \eqref{main2}. By a standard density argument,  for any {continuous} $h\in L^1([0,\infty))$,
\begin{equation}\label{Laplace}
\begin{aligned}
&\lim_{\alpha\to\infty}\E^{\widehat\P_\alpha} \bigg[ \int_0^\infty h(t)V(\alpha|\omega(t)-\omega(0)|) \d t \bigg] \\
&= \bigg(\int_0^\infty h(t)\d t\bigg) \iint_{\R^3\times\R^3}\psi_{0}^2(x)\psi_{0}^2(y)V(|x-y|)\d x\d y.
\end{aligned}
\end{equation}
Indeed, by \eqref{Laplace0}, \eqref{Laplace} holds for functions in $\mathcal{A}:=\{f(\cdot)=\sum_{i=1}^{n}c_if_{\theta_i}(\cdot),n\in \N,c_i\in \R,\theta_i>0 \}$, where $f_\theta(t):=\e^{-\theta t}$. Observe that $\mathcal{A}$ is an algebra of continuous functions that separate points and vanishes nowhere, i.e., for each $t\geq 0$, there is some $f\in \mathcal{A}$ such that $f(t)\neq 0$. By Stone-Weierstrass theorem, $\mathcal{A}$ is dense in the set $C_0([0,\infty),\R)$ of continuous functions that vanish at infinity, with the topology of uniform convergence. In particular, $\mathcal A$ is dense in the set of smooth functions with compact support, which is furthermore dense in $L^1(\R)$. Since $V$ is assumed to be bounded at this stage, a dominated convergence argument implies \eqref{Laplace} for any { continuous} $h\in L^1(\R)$.

For any symmetric, continuous and integrable function $g(\cdot,\cdot)\in L^1((0,\infty)^2)$, let $h(u):=\int_0^\infty g(s,u+s) \d s$ and note that $2\int_0^\infty h(u) \d u=\int_0^\infty \int_0^\infty g(s,t) \d s\d t$.  For any such function $h$ and every function $k(\cdot)$, we have
\begin{equation}\label{g-h-k}
\iint_{(0,\infty)^2} g(s,t) k(t-s) \d t\d s=2\int_0^\infty h(u) k(u) \d u.
\end{equation}
Choosing $k(t-s)=\E^{\widehat\P_\alpha}[V(\alpha(\omega(t)-\omega(s)))]$, we have
\begin{align*}
&\lim_{\alpha\to\infty}\E^{\widehat\P_\alpha}\bigg[ \iint_{(0,\infty)^2} g(s,t) V(\alpha|\omega(t)-\omega(s)|) \d s\d t \bigg] \\
&\stackrel{\eqref{g-h-k}} = 2  \lim_{\alpha\to\infty}\E^{\widehat\P_\alpha}\bigg[   \int_0^\infty h(t)V(\alpha|\omega(t)-\omega(0)|) \d t \bigg] \\
&\stackrel{\eqref{Laplace}}=2\int_0^\infty h(t)\d t \iint\psi_{0}^2(x)\psi_{0}^2(y)V(|x-y|)\d x\d y \\
&=\bigg[\iint_{(0,\infty)^2}  g(s,t) \d s\d t \bigg]\iint\psi_{0}^2(x)\psi_{0}^2(y)V(|x-y|)\d x\d y,   
\end{align*}
which proves Theorem \ref{thm-strong-coupling} when $V$ is a continuous and bounded function. The proof of this result for unbounded $V$ (i.e., when $V(|x|)=\frac 1 {|x|}$ in $d=3$ or when $V(x)=|x|$) follows an approximation procedure, and we refer the reader to Section \ref{sec unbounded V}.
\begin{remark}\label{rem-scaling}
We deduced \eqref{Br-scaling} using Brownian scaling, which requires a remark. Let 
$$
\begin{aligned}
Z_{\alpha,T}(\lambda,\eta,\theta) := &\E^\P\bigg[\exp\bigg(\alpha\iint\limits_{-T\le s\le t\le T}\frac{\lambda \e^{-\lambda |t-s|}}{|\omega(t)-\omega(s)|}\d s\d t \\
&\qquad\qquad\qquad\qquad+\eta \alpha^2\iint\limits_{-T\le s\le t\le T} \theta \e^{-\theta (t-s)} V(\alpha|\omega(t)-\omega(s)|) \d t \d s\bigg ) \bigg]\,. 
\end{aligned}
$$
Then by Brownian scaling, for any $\tau>0$, $Z_{\alpha,T}(\lambda,\eta,\theta)= Z_{\alpha\sqrt\tau,\frac T\tau}(\lambda\tau,\eta,\theta\tau)$. Hence, 
$$
g_\eta(\alpha;\lambda,\theta):=\lim_{T\to\infty}\frac 1{2T}\log Z_{\alpha,T}(\lambda,\eta,\theta)= \frac 1\tau g_\eta(\alpha\sqrt\tau;\lambda\tau,\theta \tau).
$$
 In particular, by choosing $\tau=\frac 1 {\alpha^2}$ and $\lambda=1$, we have $g_\eta(\alpha;1,\theta)= \alpha^2 g_\eta(1;\frac 1{\alpha^2},\frac\theta{\alpha^2})$. But since $g_\eta(\alpha;1,\theta)= g_\eta(\alpha;\theta)$, which is defined in \eqref{g-eps-alpha}, we have $\frac{g_\eta(\alpha;\theta)}{\alpha^2}= g_\eta(1;\frac 1{\alpha^2},\frac\theta{\alpha^2})$ and $g_\eta(1;\frac 1{\alpha^2},\frac\theta{\alpha^2})$ is the supremum appearing in \eqref{Br-scaling}. This proves \eqref{Br-scaling}.
 To deduce \eqref{DV-varfor}, we used that (see \cite[Eq. (4.1)]{DV83}) for any $\eta, \theta>0$, 
 $$
 \begin{aligned}
 &\lim_{\lambda\to 0}\sup_{\mathbb Q}\bigg[ \E^{\mathbb Q}\bigg[ \int_0^\infty\frac{\lambda\e^{-\lambda t}}{|\omega(t)-\omega(0)|} \d t+ \eta\int_0^\infty (\theta \lambda) \e^{-(\theta \lambda) t}V(|\omega(t)-\omega(0)|)\d t   \bigg]  -H(\mathbb Q|\P)  \bigg] 
 \\
 &=\sup_{\psi:|\psi|_2=1}\bigg[ \iint\frac{\psi^2(x)\psi^2(y)}{|x-y|}\d x\d y+ \eta \iint\psi^2(x)\psi^2(y)V(|x-y|)\d x\d y     -\frac{1}{2}  \int |\nabla\psi|^2 \d x \bigg]. 
 \qed
 \end{aligned}
 $$
 \end{remark}

\subsection{Proof of Corollary \ref{cor-tightness}.} Fix any continuous and bounded function $V:[0,\infty)\to \R$. Recalling the definition of $\mu_\alpha(s,t,\cdot)$ and that of $\widehat\mu_\alpha(\cdot)$, we have 
\begin{align*}
\lim_{\alpha\to \infty} \int_0^\infty  V(\tau)\widehat\mu_\alpha(\d\tau)&=\lim_{\alpha\to\infty} \frac{1}{Z} \int_{a_1}^{b_1}\int_{a_2}^{b_2} \e^{-|s-t|}\int_0^\infty V(\tau) \mu_\alpha(s,t,\d\tau) \d s\d t\\
&=\lim_{\alpha\to\infty} \frac{1}{Z} \E^{\widehat\P_\alpha}\bigg[\int_{a_1}^{b_1}\int_{a_2}^{b_2} \e^{-|s-t|} V(\alpha|\omega(t)-\omega(s)|) \d s\d t\bigg]\\
&=\iint_{\R^3\times \R^3} V(|x-y|)\psi^2_{0}(x)\psi^2_{0}(y) \d x\d y
=\int V(\tau)\widehat\mu(\psi_0,\d\tau).
\end{align*}
The second equality follows from Theorem \ref{thm-strong-coupling}, and the third from the definition of $\widehat\mu(\psi_0,\cdot)$. \qed

\section{Duality between the point process $\widehat{\Theta}_{\alpha}$ and the Polaron measure $\widehat{\P}_{\alpha}$}\label{sec duality main}

\subsection{Duality between $\widehat{\Theta}_{\alpha}$ and $\widehat{\P}_{\alpha}$, part 1.}\label{sec-duality}
We recall some facts about the Gaussian representations of the Polaron measure $\widehat\P_{\alpha,T}$ and that of $\widehat\P_\alpha$ established in \cite{MV18a}.
Recall that $\Omega  = C\big((-\infty,\infty);\R^3)$ denotes the space of continuous functions $\omega$ taking values in $\R^3$ and $\mathcal F$ is the $\sigma$-algebra generated by the {\it increments} $\{\omega(t)-\omega(s)\}$. Recall that, if $\P$ denotes the law of $3$-dimensional Brownian increments on $\mathcal{F}$, then we have 
\begin{equation}\label{eq-duality-wiener}
  \mathrm{Var}^{\P}\Big[\frac{\omega(T)-\omega(-T)}{\sqrt{2T}}\Big]=3\sup_{f\in H_T}\bigg[2~\frac{f(T)-f(-T)}{\sqrt{2T}}-\int_{-T}^T f'(t)^2\d t\bigg],
\end{equation}
where \begin{equation}\label{eq-abs-cont-fun-space}
  H_T:=\Big\{f:[-T,T]\to \R: f\text{ is absolutely continuous and } f'\in L^2([-T,T])\Big\} 
\end{equation} is the Hilbert space of absolutely continuous functions with square-integrable derivatives. Indeed, $\P$ is the unique Gaussian measure such that \eqref{eq-duality-wiener} holds (see \cite[eq. (3.2)]{MV18a}).
More generally, given $T>0$ and $n\in \N$, if $\hat\xi:=\{[s_i,t_i]\}_{i=1}^n$ is a collection of intervals contained in $[-T,T]$ and $\hat u:=(u_1,\dots, u_n) \in (0,\infty)^n$, then for any
\begin{equation}\label{hat-xi-u}
(\hat{\xi},\hat{u})\in \widehat{\mathscr{Y}}_{n,T}:=\big\{(s_i,t_i,u_i):-T\leq s_i<t_i\leq T,u_i>0\big\}_{i=1}^n,
\end{equation}
 there is a unique Gaussian measure, denoted by $\mathbf{P}_{\hat{\xi},\hat{u}}$, such that 
\begin{equation}\label{eq-duality-tilted-wiener}
  \mathrm{Var}^{\mathbf{P}_{\hat{\xi},\hat{u}}}\Big[\frac{\omega(T)-\omega(-T)}{\sqrt{2T}}\Big]=3\sup_{f\in H_T}\bigg[2~\frac{f(T)-f(-T)}{\sqrt{2T}}-\int_{-T}^T f'(t)^2\d t-\sum_{i=1}^n u_i^2|f(t_i)-f(s_i)|^2\bigg], 
\end{equation}
see \cite[Eq. (3.3) and Eq. (3.4)]{MV18a}. Hence, for any probability measure $\widehat\Theta$ on $\widehat{\mathscr{Y}}_{T}:=\bigcup_{n=0}^{\infty}\widehat{\mathscr{Y}}_{n,T}$ (with the corresponding Borel $\sigma$-algebra), it holds that 
\begin{equation}\label{eq2-duality}
\begin{aligned}
  &\E^{\widehat\Theta}\bigg[\mathrm{Var}^{\mathbf{P}_{\hat{\xi},\hat{u}}}\Big[\frac{\omega(T)-\omega(-T)}{\sqrt{2T}}\Big]\bigg] \\
  &=3 \E^{\widehat\Theta} \bigg[\sup_{f\in H_T}\Big[2~\frac{f(T)-f(-T)}{\sqrt{2T}}-\int_{-T}^T f'(t)^2\d t-\sum_{i=1}^n u_i^2|f(t_i)-f(s_i)|^2\Big]\bigg].
  \end{aligned}
\end{equation}
In \cite{MV18a}, by writing the Coulomb potential as $\frac 1 {|x|}=\sqrt{\frac 2\pi} \int_0^\infty \e^{-\frac{u^2 x^2}2} \d u$ and by expanding the exponential weight in \eqref{def-polaronmeas} in a power series
\begin{equation}
\label{power}
\begin{aligned}
&\sum_{n=0}^\infty \frac{\alpha^n}{n!} \bigg[\iint_{-T\leq s \leq t\leq T} \frac{\e^{-|t-s|}\,\d t \, \d s}{|\omega(t)-\omega(s)|}\bigg]^n \\
&= \sum_{n=0}^\infty \frac {1}{n!} \prod_{i=1}^n \bigg[\bigg(\iint_{-T\leq s_i < t_i\leq T} \big(\alpha\,\e^{-(t_i-s_i)}\,\d s_i\, \d t_i\big)\bigg)\,\,  \bigg(\sqrt{\frac 2 \pi} \int_0^\infty \, \d u_i \e^{-\frac 12 u_i^2 |\omega(t_i)-\omega(s_i)|^2}\bigg)\bigg], 
\end{aligned}
\end{equation}
for any $\alpha>0$ and $T>0$ the Polaron measure was represented in \cite[Theorem 3.1]{MV18a} as a mixture 
\begin{equation}\label{Gauss-rep}
\widehat\P_{\alpha,T}(\d\omega)= \int \mathbf P_{\hat\xi,\hat u}(\d\omega) \widehat\Theta_{\alpha,T}(\d\hat\xi\d\hat u), \quad \bP_{\hat\xi,\hat u}(\d\omega)
= 
\frac 1 {\mathbf\Phi(\hat\xi,\hat u)} 
e^{-\frac 12\sum_{i=1}^{n_T(\hat\xi)} u_i^2 |\omega(t_i)- \omega(s_i)|^2}
\P(\d\omega),
\end{equation}  
of centered Gaussian measures $\mathbf P_{\hat\xi,\hat u}$, where 
$\mathbf \Phi(\hat\xi,\hat u)= \E^{\P_T}\big[\exp\{-\frac 12 \sum_{i=1}^{n_T(\xi)} u_i^2 |\omega(t_i)-\omega(s_i)|^2\}\big]$ is
 the normalizing weight of the Gaussian measure $\mathbf{P}_{\hat{\xi},\hat{u}}$ (\cite[Eq. (3.5)]{MV18a}). Indeed, in the second display in \eqref{power}, the term 
 $$
 \gamma_{\alpha,T}(\d s, \d t)= \gamma_\alpha(\d s\,\d t):=\alpha \e^{-(t-s)} \1_{-T \leq s < t \leq T} \d s\d t
 $$
 represents the intensity of a Poisson point process 
with total weight
\begin{equation}\label{cdef2}
\begin{aligned}
\alpha c(T)= \iint \gamma_{\alpha,T}(\d s \d t)= \alpha \iint_{-T\leq s < t \leq T} \e^{-(t-s)} \d s \d t &=\alpha\int_{-T}^T  (1-\e^{-(T-s)}) \d s \\
&=2\alpha T+o(T) 
\end{aligned}
\end{equation}
as $T\to\infty$. Let $\Gamma_{\alpha,T}$ be the law of this Poisson process which takes values on the space of (possibly overlapping) intervals $\hat\xi=\{[s_1,t_1],\dots,[s_n,t_n]\}_{n\geq 0}$ contained in $[-T,T]$. 
Thus, if $\hat u=(u_1,\dots, u_n)\in (0,\infty)^n$ is a string of positive numbers 
(each $u_i$ being linked to the interval $[s_i,t_i]$ and being sampled according to Lebesgue measure), then for any collection $(\hat\xi,\hat u)$, $\mathbf P_{\hat\xi,\hat u}$ is the unique centered Gaussian measure with variance 
\eqref{eq2-duality} and  the mixing measure 
\begin{equation}\label{hatQ}
\widehat{\Theta}_{\alpha,T}(\d\hat\xi\d\hat u)= \frac{\e^{\alpha c(T)}}{Z_{\alpha,T}} \bigg(\sqrt{\frac2\pi}\bigg)^{n_T(\hat\xi)} \mathbf\Phi(\hat\xi,\hat u) \Gamma_{\alpha,T}(\d\hat\xi)\d\hat u
\end{equation}
is the tilted probability measure on the space of collections $(\hat\xi,\hat u)\in\widehat{\mathscr{Y}}_{T}$.

\begin{remark}\label{remark stu}
  In the sequel, we will often abuse notation by writing sequences of intervals $(s_i,t_i)$ instead of the full triple $(s_i,t_i,u_i)\in \widehat{\mathscr{Y}}_T$. Similarly, we will write only $\hat\xi\in \wh{\mathscr Y}_T$ instead of $(\hat\xi,\hat u)\in \wh{\mathscr Y}_T$.
  \end{remark} 
 
 Returning to \eqref{eq2-duality}, \eqref{Gauss-rep} implies then that for any $\alpha>0$ and $T>0$,  
\begin{equation}\label{varT}
\begin{aligned}
  &\mathrm{Var}^{\widehat{\P}_{\alpha,T}}\Big[\frac{\omega(T)-\omega(-T)}{\sqrt{2T}}\Big]\\
  &=3\E^{\widehat{\Theta}_{\alpha,T}}\bigg[\sup_{f\in H_T}\bigg(2~\frac{f(T)-f(-T)}{\sqrt{2 T}}-\int_{-T}^T f'(t)^2\d t-\sum_{i=1}^n u_i^2|f(t_i)-f(s_i)|^2\bigg)\bigg].
  \end{aligned}
\end{equation}
Now, the collections $(\hat\xi,\hat u)\in\widehat{\mathscr{Y}}_{T}$
form an alternating sequence of {\it clusters} or {\it active periods} (constituted by overlapping intervals) and {\it dormant periods} (formed by ``gaps" left between the consecutive clusters) in $[-T,T]$, leading to a renewal structure 
for $\widehat\Theta_{\alpha,T}$. As a consequence of the ergodic theorem, $\widehat{\Theta}_{\alpha}:=\lim_{T\to\infty}\widehat{\Theta}_{\alpha,T}$ exists, can be characterized explicitly as a renewal process and $\widehat\Theta_\alpha$ can be assumed to be stationary and ergodic \cite[Theorem 5.8]{MV18a}.\footnote{In \cite[Theorem 5.8]{MV18a}, $\widehat\Theta_\alpha$ is denoted to be the law of the renewal process on $[0,\infty)$ obtained by alternating the law 
$\widehat\mu_\alpha$ of the tilted exponential distribution (defined in \cite[Eq. (5.4)]{MV18a}) on a single dormant period and the law
$\widehat\Pi_\alpha$ (defined in \cite[Eq. (5.3)]{MV18a}) of the
tilted birth-death process on a single active period. In \cite[Theorem 5.8]{MV18a}, the stationary version of $\widehat\Theta_\alpha$ is denoted by $\widehat{\mathbb Q}_\alpha$
and it is shown that the total variation $|\widehat\Theta_{\alpha,T} - \widehat{\mathbb Q}_\alpha|\to 0$ 
on any interval $[T_1,T_2]$ as $T\to\infty$ 
(in the sense that for any interval $[T_1,T_2]\subset [0,T]$ with $T_1\to\infty$ and $T- T_2\to\infty$). Currently, we will deviate slightly from this notation and continue to write $\widehat\Theta_\alpha$ also for the stationary version of $\widehat\Theta_\alpha$.}
Moreover, by \cite[Theorem 5.1]{MV18a}, the infinite-volume measure $\widehat{\P}_{\alpha}:=\lim_{T\to\infty}\widehat{\P}_{\alpha,T}$ exists in the sense that for any $A>0$, the restriction of $\widehat{\P}_{\alpha,T}$ to  the sigma algebra $\mathcal{F}_A$ generated by $\{\omega(t)-\omega(s):-A\leq s<t\leq A\}$ converges in total variation to the restriction of $\widehat{\P}_{\alpha}$ to the same $\sigma$-algebra. 
Moreover, analogous to \eqref{Gauss-rep}, the measure $\widehat{\P}_\alpha$ has the Gaussian representation 
\begin{equation}\label{eq-polaron-mixture-rep}
  \widehat{\P}_{\alpha}(\cdot)=\int_{\widehat{\mathscr{Y}}_{\infty}}\mathbf{P}_{\hat{\xi},\hat{u} }(\cdot)\widehat{\Theta}_{\alpha}(\d \hat{\xi}\d \hat{u}).
\end{equation}
Since for any $\alpha>0$, $\widehat\Theta_\alpha$ is stationary and ergodic, the above representation implies that $\widehat\P_\alpha$ is also stationary and ergodic. Moreover, \eqref{eq-polaron-mixture-rep} also implies that 
\begin{equation}\label{eq-var-P-alpha}
\begin{aligned}
  &\mathrm{Var}^{\widehat{\P}_\alpha}\Big[\frac{\omega(T)-\omega(-T)}{\sqrt{2T}}\Big] \\
  &=3\E^{\widehat{\Theta}_{\alpha}}\bigg[\sup_{f\in H_T}\bigg(2~\frac{f(T)-f(-T)}{\sqrt{2T}}-\int_{-T}^T {f^\prime}(t)^2 \d t-\sum_{-T\leq s_i<t_i\leq T} u_i^2|f(t_i)-f(s_i)|^2\bigg)\bigg].
  \end{aligned}
\end{equation}
As a consequence of the Gaussian representations of $\widehat\P_{\alpha,T}$ and $\widehat\P_\alpha$, and the renewal theorem, the rescaled distributions of $\frac{\omega(T)-\omega(-T)}{\sqrt{2T}}$, both under $\widehat\P_{\alpha,T}$ and under $\widehat\P_{\alpha}$, converge as $T\to\infty$ to a centered Gaussian law with the 
same variance $\sigma^2(\alpha)$ (\cite[Theorem 5.2]{MV18a}):
\begin{equation}\label{eq-lim-var-formula1}
  \sigma^2(\alpha)=\lim_{T\to\infty}\mathrm{Var}^{\widehat{\P}_{\alpha,T}}\Big[\frac{\omega(T)-\omega(-T)}{\sqrt{2T}}\Big]=\lim_{T\to\infty}\mathrm{Var}^{\widehat{\P}_{\alpha}}\Big[\frac{\omega(T)-\omega(-T)}{\sqrt{2T}}\Big].
\end{equation}
By \eqref{varT}-\eqref{eq-lim-var-formula1}, we obtain the following representation of the limiting variance:

\begin{lemma}\label{lemma variance formula}
  For any $\alpha>0$, the limiting variance $\sigma^2(\alpha)$ can be represented as \begin{equation*}
    \sigma^2(\alpha)=\lim_{T\to\infty}3\E^{\widehat{\Theta}_{\alpha}}\bigg[\sup_{f\in H_T}\Big[2~\frac{f(T)-f(-T)}{\sqrt{2T}}-\int_{-T}^T f'(t)^2\d t-\sum_{-T\leq s_i<t_i\leq T} u_i^2|f(t_i)-f(s_i)|^2\Big]\bigg].
  \end{equation*}
Thus, $\sigma^2(\alpha)$ is the $L^1(\widehat{\Theta}_\alpha)$-limit (as $T\to\infty$) of 
\begin{equation}\label{eq-sigma-alpha-T}
 \sigma_{\alpha,T}^2(\hat{\xi},\hat u):=3\sup_{f\in H_T}\bigg[2~\frac{f(T)-f(-T)}{\sqrt{2T}}-\int_{-T}^T f'(t)^2\d t-\sum_{-T\leq s_i<t_i\leq T} u_i^2|f(t_i)-f(s_i)|^2\bigg].
 \end{equation}
Moreover, due to the ergodic theorem used in the proof of \cite[Theorem 5.2]{MV18a}, $\sigma^2(\alpha)$ is also the $\widehat{\Theta}_\alpha$-almost sure limit of $\sigma^2_{\alpha,T}(\hat\xi,\hat u)$ as $T\to\infty$. 
\end{lemma}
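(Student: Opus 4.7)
The first representation is immediate: equation \eqref{eq-var-P-alpha} writes $\mathrm{Var}^{\wh\P_\alpha}[(\omega(T)-\omega(-T))/\sqrt{2T}]$ as $3\,\E^{\wh\Theta_\alpha}[\sup_{f\in H_T}(\cdots)]$, and \eqref{eq-lim-var-formula1} says this variance tends to $\sigma^2(\alpha)$ as $T\to\infty$; combining the two yields the claimed formula and, in particular, $\E^{\wh\Theta_\alpha}[\sigma^2_{\alpha,T}]\to \sigma^2(\alpha)$.

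For the $\wh\Theta_\alpha$-almost sure convergence my plan is to rewrite $\sigma^2_{\alpha,T}$ in a form amenable to the ergodic theorem. The inner supremum is a quadratic optimization in $f$; maximizing out the scale factor $\lambda$ in $L(\lambda f)=2\lambda a(f)-\lambda^2 Q(f)$, with $a(f)=(f(T)-f(-T))/\sqrt{2T}$ and $Q(f)=\int_{-T}^T \dot f^2\,\d t+\sum_i u_i^2(f(t_i)-f(s_i))^2$, and then rescaling $h=f/\sqrt{2T}$, I obtain
\begin{align*}
\sigma^2_{\alpha,T}(\hat\xi,\hat u) &= \frac{3\,R_T(\hat\xi,\hat u)}{2T}, \qquad R_T := \frac{1}{C_T}, \\
C_T &:= \inf_{h(T)-h(-T)=1}\Big[\int_{-T}^T \dot h^2\,\d t+\sum_i u_i^2(h(t_i)-h(s_i))^2\Big],
\end{align*}
so $R_T$ is the effective resistance of the electrical network on $[-T,T]$ formed by a line of unit resistance per unit length plus shortcuts of conductance $u_i^2$ between $s_i$ and $t_i$. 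The linear test function $h(t)=(t+T)/(2T)$ shows $R_T\le 2T$, hence $\sigma^2_{\alpha,T}\in[0,3]$ is uniformly bounded.

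Under the stationary ergodic measure $\wh\Theta_\alpha$ the intervals form a renewal sequence of clusters (active periods) alternating with empty dormant periods, with no interval crossing a cycle boundary. Kirchhoff's laws then yield the additive decomposition $R_T=\sum_{j=1}^{M_T} R^{(j)}+o(T)$, where $R^{(j)}$ is the effective resistance of the $j$-th cycle (a stationary ergodic functional of $\wh\Theta_\alpha$) and $M_T$ counts full cycles inside $[-T,T]$. Birkhoff's ergodic theorem then gives $R_T/(2T)\to \rho$ $\wh\Theta_\alpha$-a.s.\ for some deterministic $\rho\in(0,1]$, so $\sigma^2_{\alpha,T}\to 3\rho$ a.s. Since $\sigma^2_{\alpha,T}$ is bounded, bounded convergence yields $L^1(\wh\Theta_\alpha)$-convergence and also $\E^{\wh\Theta_\alpha}[\sigma^2_{\alpha,T}]\to 3\rho$; comparing with the first paragraph forces $3\rho=\sigma^2(\alpha)$, as in the ergodic-theorem argument of \cite[Theorem 5.2]{MV18a}.

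The main obstacle will be justifying the additive decomposition $R_T=\sum_j R^{(j)}+o(T)$ and controlling the boundary cycles near $\pm T$. This relies on the renewal structure and on the integrability of the cycle length and cycle resistance under $\wh\Theta_\alpha$, both of which follow from the explicit description of $\wh\Theta_\alpha$ (alternation of tilted birth-death clusters with tilted exponential dormant periods) recalled in the excerpt.
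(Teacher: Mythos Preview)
Your proposal is correct and matches the paper's approach: the paper gives no separate proof but simply combines \eqref{eq-var-P-alpha} with \eqref{eq-lim-var-formula1} for the expectation formula and cites \cite[Theorem 5.2]{MV18a} for the almost sure convergence; your effective-resistance/renewal sketch is precisely the content of that cited theorem (series decomposition of $R_T$ across independent renewal cycles, then Birkhoff). One minor slip: the linear test function gives an \emph{upper} bound on $C_T$ (which is an infimum), hence a lower bound on $R_T$, not the upper bound you want; the inequality $R_T\le 2T$ follows instead from $Q(h)\ge \int_{-T}^T\dot h^2\ge (h(T)-h(-T))^2/(2T)=1/(2T)$ by Cauchy--Schwarz, or equivalently from Rayleigh monotonicity (adding the conductances $u_i^2$ can only decrease the bare-line resistance $2T$).
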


\subsection{Duality between $\widehat\Theta_{\alpha}$ and $\widehat\P_{\alpha}$, part 2, and consequences.}\label{sec-duality-2}

In this subsection we prove the following duality result and deduce some important consequences. 


\begin{theorem}\label{thm duality}
Fix an interval $[-A,A]$. Then conditional on the Brownian increments $\{\omega(t)- \omega(s)\}_{-A\leq s < t \leq A}$ sampled according to $\wh\P_\alpha$, the law of the 
$\{(s_i,t_i,u_i): -A \leq s_i < t_i \leq A\}$ under $\wh\Theta_\alpha$, is a stationary ergodic Poisson point process with random intensity 
\begin{equation}\label{def Lambda}
\Lambda(\alpha,\omega, \, \d s \d t\d u):= \alpha \sqrt{\frac 2\pi} \e^{-(t-s)} \e^{-\frac{u^2 |\omega(t) - \omega(s)|^2}{2}} \,\, \1_{-A \leq s < t \leq A} \, \d s \d t\, \d u.
\end{equation}
Consequently, for any $C, C^\prime>0$, the above conditional law of $\{(s_i,t_i,u_i): -A \leq s_i < t_i \leq A, C\alpha \leq u \leq C^\prime \alpha\}$ under $\wh\Theta_\alpha$ is Poissonian with random intensity 
\begin{equation}\label{def Lambda integrated}
\begin{aligned}
\Lambda(\alpha,\omega, \, \d s \d t)&:= \alpha \sqrt{\frac 2\pi} \e^{-(t-s)} \bigg(\int_{C\alpha}^{C^\prime\alpha} \d u\,\, \e^{-\frac{u^2 |\omega(t) - \omega(s)|^2}{2}}\bigg) \,\, \1_{-A \leq s < t \leq A} \, \d s \d t \\
&= \alpha^2 \sqrt{\frac 2\pi} \e^{-(t-s)} \bigg(\int_{C}^{C^\prime} \d u\,\, \e^{-\frac{u^2 \alpha^2 |\omega(t) - \omega(s)|^2}{2}}\bigg) \,\, \1_{-A \leq s < t \leq A} \, \d s \d t 
\end{aligned}
\end{equation}
We will denote the above conditional law of $\wh\Theta_\alpha$ given $\omega$ (sampled according to $\wh\P_\alpha$) by $\wh\Theta_{\alpha,\omega}$. 
\end{theorem}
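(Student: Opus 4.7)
The plan is a Bayesian disintegration of the joint law of $(\omega, \hat\xi, \hat u)$ built from the mixture representation \eqref{Gauss-rep}. I would first carry it out at finite $T > A$, and then transfer it to the infinite-volume window $[-A,A]$ via the total-variation convergences already established in \cite{MV18a}.

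At finite $T$, I would insert the explicit formula \eqref{Gauss-rep} for $\bP_{\hat\xi,\hat u}$ and \eqref{hatQ} for $\widehat{\Theta}_{\alpha,T}$ into the product $\bP_{\hat\xi,\hat u}(\d\omega)\,\widehat{\Theta}_{\alpha,T}(\d\hat\xi\,\d\hat u)$. The two copies of the normalizing weight $\mathbf{\Phi}(\hat\xi,\hat u)$ cancel, leaving the clean joint density
$$\frac{\e^{\alpha c(T)}}{Z_{\alpha,T}}\bigl(\sqrt{2/\pi}\bigr)^{n_T(\hat\xi)}\,\exp\Bigl\{-\tfrac12\sum_i u_i^2|\omega(t_i)-\omega(s_i)|^2\Bigr\}\,\P(\d\omega)\,\Gamma_{\alpha,T}(\d\hat\xi)\,\d\hat u.$$
I would then interpret $\Gamma_{\alpha,T}(\d\hat\xi)\,\d\hat u$ as a $\sigma$-finite Poisson reference on triples $(s,t,u)$ with intensity $\alpha \e^{-(t-s)}\1_{-T\le s<t\le T}\,\d s\,\d t\,\d u$. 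The per-point weight $\sqrt{2/\pi}\,\e^{-u^2|\omega(t)-\omega(s)|^2/2}$ is exactly the Radon--Nikodym tilt turning this reference into the Poisson law with intensity $\Lambda(\alpha,\omega,\cdot)$ of \eqref{def Lambda}, up to the multiplicative Poisson correction $\exp(-\alpha c(T) + |\Lambda|)$. The identity $1/|x| = \sqrt{2/\pi}\int_0^\infty \e^{-u^2|x|^2/2}\d u$ yields $|\Lambda| = \alpha\int\!\!\int_{-T \le s < t \le T}\e^{-(t-s)}/|\omega(t)-\omega(s)|\,\d s\,\d t$, so integrating $(\hat\xi,\hat u)$ out of the joint density precisely recovers $\widehat{\P}_{\alpha,T}(\d\omega)$. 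The quotient, namely the conditional law of $(\hat\xi,\hat u)$ given $\omega$, is then by inspection the Poisson point process with intensity $\Lambda(\alpha,\omega,\cdot)$.

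Passing $T \to \infty$ on the window $[-A,A]$ is harmless: the restriction of $\Lambda(\alpha,\omega,\cdot)$ to $[-A,A]^2\times(0,\infty)$ depends only on the increments of $\omega$ inside $[-A,A]$, so conditioning on $\omega|_{[-A,A]}$ gives the same Poisson intensity as conditioning on the whole path, and the total-variation convergences $\widehat{\P}_{\alpha,T} \to \widehat{\P}_\alpha$ and $\widehat{\Theta}_{\alpha,T}\to\widehat{\Theta}_\alpha$ on events localized in $[-A,A]$ transfer the finite-$T$ conditional statement to the limit. The integrated form \eqref{def Lambda integrated} for $u\in[C\alpha, C'\alpha]$ is the restriction of this Poisson process to a slab in $u$, which is again Poisson with the restricted intensity; the equivalent rescaled expression follows by the change of variable $u \mapsto \alpha u$. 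Stationarity and ergodicity are inherited from \eqref{ergodic-P-alpha} since $\omega\mapsto \Lambda(\alpha,\omega,\cdot)$ is translation-equivariant. The main (essentially the only) technical point is the improper $u$-measure in $\Gamma_{\alpha,T}(\d\hat\xi)\,\d\hat u$, which I would handle by expanding in Janossy densities, integrating each $u_i$ against the one-dimensional Gaussian separately, and resumming — this reproduces the Poisson Laplace functional of $\Lambda(\alpha,\omega,\cdot)$ directly and completes the identification.
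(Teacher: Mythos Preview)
Your proposal is correct and reaches the same conclusion, but the organization differs from the paper's. The paper proceeds via Laplace functionals: it first establishes (Lemma~\ref{thm-Theta-P}) the identity
\[
\E^{\widehat\Theta_{\alpha,T}}\Big[\e^{-\lambda\sum_i \mathrm f(s_i,t_i,u_i)}\Big]
=\E^{\widehat\P_{\alpha,T}}\Big[\exp\Big(\alpha\!\int\!\!\!\int \e^{-(t-s)}\widehat g_\lambda(s,t,|\omega(t)-\omega(s)|)\,\d s\,\d t\Big)\Big],
\]
with $\widehat g_\lambda(s,t,z)=\sqrt{2/\pi}\int_0^\infty(\e^{-\lambda \mathrm f(s,t,u)}-1)\e^{-u^2z^2/2}\d u$, by unwinding the definition of $\widehat\Theta_{\alpha,T}$, integrating out the $u_i$'s first (which converts each factor to a function of $|\omega(t_i)-\omega(s_i)|$), and then using the Poisson property of $\Gamma_{\alpha,T}$. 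This is immediately recognized as the Laplace functional \eqref{eq-Laplace-fun-cox} of a Cox process directed by $\Lambda(\alpha,\omega,\cdot)$, which identifies the conditional law. Your route instead cancels $\mathbf\Phi$ in the joint density and reads off the Poisson tilt directly. The two are equivalent rearrangements of the same computation; the paper's ordering has the minor advantage that the improper $u$-direction is integrated out at the very first step (so one never has to speak of a $\sigma$-finite Poisson reference on triples or invoke Janossy densities), whereas your ordering is arguably more transparent about why the conditional law is Poisson. The passage $T\to\infty$, the restriction to the $u$-slab, and the appeal to stationarity/ergodicity of $\widehat\P_\alpha$ are handled identically in both.
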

The proof of this theorem will follow from two lemmas stated below. 
\begin{lemma}\label{lemma-N}
For any $A, B \subset [-T,T]$, let
 \begin{equation}\label{def-N-Lambda}
N_{A,B}(\alpha,\hat\xi,\hat u)= \sum_{i=1}^{n_T(\hat\xi)} \1\big\{s_i\in A, \, t_i\in B, \, u_i \geq \alpha\big\}.
\end{equation} 
Then for any $\lambda>0$, 
\begin{equation}\label{eq:Laplace-transform-Theta-alpha-T}
  \E^{\widehat\Theta_{\alpha,T}}\big[\e^{-\lambda N_{A,B}(\alpha,\cdot,\cdot)}\big]= \E^{\widehat\P_{\alpha,T}}\big[\exp\big(  (\e^{-\lambda} -1) \Lambda_{A,B}(\alpha,\cdot)\big)\big], 
\end{equation}
where 
\begin{equation}\label{def-Lambda}
\begin{aligned}
&\Lambda_{A,B}(\alpha,\omega)= \alpha^2 \iint_{A\times B} \d s \d t\, \e^{-|t-s|} \frac{\Phi(\alpha|\omega(t)-\omega(s)|)}{\alpha|\omega(t)-\omega(s)|}, \quad\mbox{and}\\
&\Phi(z)=\sqrt{\frac2\pi}\int_z^\infty \e^{-\frac{u^2}2} \d u, \quad z>0.
\end{aligned}
\end{equation}
In other words, conditional on the realization of the Brownian increments $\{\omega(\cdot)- \omega(\cdot)\}$ sampled according to the Polaron measure $\widehat\P_{\alpha,T}$,
 the random variable $N_{A,B}(\alpha)$ under $\widehat\Theta_{\alpha,T}$ is Poisson-distributed with a (random) intensity 
$\alpha^2 \Lambda_{A,B}(\alpha,\cdot)$. 
Consequently, for any $\alpha>0$,  and bounded, measurable $A, B \subset \R$:
\begin{equation}\label{eq:laplace-transform-theta-alpha}
  \E^{\widehat\Theta_{\alpha}}\big[\e^{-\lambda N_{A,B}(\alpha)}\big]= \E^{\widehat\P_{\alpha}}\big[\exp\big( \alpha^2 (\e^{-\lambda}- 1) \Lambda_{A,B}(\alpha,\cdot)\big)\big]. 
\end{equation}
\end{lemma}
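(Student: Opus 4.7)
\textbf{Proof plan for Lemma \ref{lemma-N}.}

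The plan is to compute the Laplace transform $\E^{\widehat\Theta_{\alpha,T}}[\e^{-\lambda N_{A,B}}]$ directly from the explicit representation \eqref{hatQ} of $\widehat\Theta_{\alpha,T}$ and to reorganize the answer so that the Polaron exponent reappears inside the expectation. Writing $\e^{-\lambda N_{A,B}(\hat\xi,\hat u)}=\prod_i \e^{-\lambda\1\{s_i\in A,\,t_i\in B,\,u_i\ge\alpha\}}$, substituting $\mathbf\Phi(\hat\xi,\hat u)=\E^\P[\exp(-\tfrac12\sum_i u_i^2|\omega(t_i)-\omega(s_i)|^2)]$, and exchanging $\sum_n$, $\prod_i$ and $\E^\P$ by Fubini, I get
\[
\E^{\widehat\Theta_{\alpha,T}}[\e^{-\lambda N_{A,B}}] = \frac{1}{Z_{\alpha,T}}\,\E^{\P}\left[\exp\Big(\int \mu_T(\d s\,\d t\,\d u)\,\e^{-\tfrac12 u^2|\omega(t)-\omega(s)|^2}\,\e^{-\lambda\1\{s\in A,\,t\in B,\,u\ge\alpha\}}\Big)\right],
\]
with $\mu_T(\d s\,\d t\,\d u)=\alpha\sqrt{2/\pi}\,\e^{-(t-s)}\,\1_{-T\le s<t\le T}\,\d s\,\d t\,\d u$, after collapsing the $n$-sum via the elementary identity $\sum_n \frac{1}{n!}\big(\int h\,\d\mu\big)^n=\exp\big(\int h\,\d\mu\big)$.

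Next I would split the integration domain inside the exponential as $A\times B\times[\alpha,\infty)$ plus its complement. On the complement the indicator vanishes, and after adding and subtracting the piece on $A\times B\times[\alpha,\infty)$ to restore a full $u$-integral over $[0,\infty)$, the standard identity $\sqrt{2/\pi}\int_0^\infty\e^{-u^2 x^2/2}\,\d u=1/|x|$ reconstructs exactly the Polaron exponent $\alpha\int\int_{-T\le s<t\le T}\e^{-(t-s)}|\omega(t)-\omega(s)|^{-1}\,\d s\,\d t$. The contribution from $A\times B\times[\alpha,\infty)$, together with the factor $(\e^{-\lambda}-1)$ produced by the indicator, equals $(\e^{-\lambda}-1)\Lambda_{A,B}(\alpha,\omega)$, because the change of variable $v=u|\omega(t)-\omega(s)|$ turns $\sqrt{2/\pi}\int_\alpha^\infty\e^{-u^2|\omega(t)-\omega(s)|^2/2}\,\d u$ into $\Phi(\alpha|\omega(t)-\omega(s)|)/|\omega(t)-\omega(s)|$, matching the definition \eqref{def-Lambda}.

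Combining both parts, the Polaron exponent together with $\P(\d\omega)$ regenerates $Z_{\alpha,T}\widehat\P_{\alpha,T}(\d\omega)$, the normalization $1/Z_{\alpha,T}$ cancels, and \eqref{eq:Laplace-transform-Theta-alpha-T} follows. The quenched-Poisson statement is then immediate, because $\lambda\mapsto\exp((\e^{-\lambda}-1)\Lambda_{A,B}(\alpha,\omega))$ is the Laplace transform of a Poisson random variable of mean $\Lambda_{A,B}(\alpha,\omega)$; hence, conditionally on $\omega$ sampled from $\widehat\P_{\alpha,T}$, the count $N_{A,B}$ is Poisson with that random mean.

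For the infinite-volume version \eqref{eq:laplace-transform-theta-alpha}, note that $N_{A,B}$ and $\Lambda_{A,B}(\alpha,\omega)$ are both measurable with respect to finite-window data whenever $A,B\subset[-M,M]$. The total-variation convergences $\widehat\P_{\alpha,T}\to\widehat\P_\alpha$ on $\mathcal F_M$ and $\widehat\Theta_{\alpha,T}\to\widehat\Theta_\alpha$ on bounded-window cylinder events (both recalled in Section \ref{sec-duality}), together with the uniform boundedness of the integrands in $[0,1]$ (for $\lambda\ge 0$), pass \eqref{eq:Laplace-transform-Theta-alpha-T} to the limit by dominated convergence. I do not anticipate any substantive conceptual obstacle here; the main effort is simply the bookkeeping for the power-series collapse and the cancellation of $Z_{\alpha,T}$, together with a careful treatment of the subtract-and-add step that recovers the Coulomb exponent.
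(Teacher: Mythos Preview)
Your proposal is correct and follows essentially the same route as the paper: the paper first proves a general identity (Lemma~\ref{thm-Theta-P}) for $\E^{\widehat\Theta_{\alpha,T}}[\e^{-\lambda\sum_i \mathrm f(s_i,t_i,u_i)}]$ via the same power-series collapse and Fubini argument you describe, then specializes to $\mathrm f=\1_{\{s\in A,\,t\in B,\,u\ge\alpha\}}$ and carries out exactly your add-and-subtract computation to recover the Coulomb exponent and isolate $(\e^{-\lambda}-1)\Lambda_{A,B}$. The only structural difference is that the paper factors the argument through the general lemma before specializing, whereas you do the specific case in one pass; the infinite-volume passage is handled identically.
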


The above result will follow from the next.
Set $[-T,T]^2_{\leq}\equiv\{(s,t)\in [-T,T]^2:s\leq t\}$.

\begin{lemma}\label{thm-Theta-P}
Fix $\alpha,T>0$. Then for any measurable function $\mathrm f: [-T,T]^2_{\leq}\times(0,\infty)\to\R$,
\begin{equation}\label{eq1-thm-Theta-P}
\begin{aligned}
&\E^{\widehat\Theta_{\alpha,T}}\big[\e^{-\lambda \sum_{i=1}^{n_T(\hat\xi)} \mathrm f(s_i,t_i,u_i)}\big] 
\\
&= \frac{1}{Z_{\alpha,T}}\E^\P\bigg[ \exp\bigg(\alpha \iint_{-T\le s<t\le T} \e^{-|t-s|} g_\lambda(s,t,|\omega(t)-\omega(s)|) \d t\d s\bigg)\bigg],
\end{aligned}
\end{equation} 
where, for any $z>0$, we denote by 
\begin{equation}\label{def-g-lambda}
\begin{aligned}
g_\lambda(s,t,z) &: = \sqrt{\frac2\pi} \int_0^\infty \e^{-\lambda \mathrm f(s,t,u) - \frac{u^2 z^2} 2} \d u. 
\end{aligned}
\end{equation}
Moreover, for any $\lambda>0$ 
\begin{equation}\label{eq0-thm-Theta-P}
\E^{\widehat\Theta_{\alpha,T}}\big[\e^{-\lambda \sum_{i=1}^{n_T(\hat\xi)} \mathrm f(s_i,t_i,u_i)}\big]= \E^{\widehat\P_{\alpha,T}}\bigg[ \exp\bigg(\alpha \iint_{-T\le s<t\le T} \d s \d t \e^{-|t-s|} \,\, \widehat g_\lambda(s,t,|\omega(t) - \omega(s)|) \bigg)\bigg], 
\end{equation}
where, for any $z>0$, 
\begin{equation}\label{def-hat-g-lambda}
\begin{aligned}
\widehat g_\lambda(s,t,z)&= \sqrt{\frac 2\pi} \int_0^\infty \d u \, \big[\e^{-\lambda \mathrm f(s,t,u)}-1\big] \e^{- \frac{u^2 z^2}2} \\
&=  g_\lambda(s,t,z) - \sqrt{\frac 2\pi} \int_0^\infty \d u \e^{- \frac{u^2 z^2}2} = g_\lambda(s,t,z)- \frac 1z. 
\end{aligned}
\end{equation}
Finally, for any $\alpha>0$ and $A>0$, 
\begin{equation}\label{eq0.5-thm-Theta-P}
\begin{aligned}
&\E^{\widehat\Theta_{\alpha}}\big[\e^{-\lambda \sum_{i:[s_i,t_i]\subset [-A,A]} \mathrm f(s_i,t_i,u_i)}\big] \\
&= \E^{\widehat\P_{\alpha}}\bigg[ \exp\bigg(\alpha \iint\limits_{-A\le s<t\le A} \d s \d t \e^{-|t-s|} \,\, \widehat g_\lambda(s,t,|\omega(t) - \omega(s)|) \bigg)\bigg].
\end{aligned}
\end{equation}

\end{lemma}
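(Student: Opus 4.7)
The plan is to prove the three displays in sequence: first \eqref{eq1-thm-Theta-P} by a direct computation with the explicit form \eqref{hatQ} of $\widehat\Theta_{\alpha,T}$, then derive \eqref{eq0-thm-Theta-P} from \eqref{eq1-thm-Theta-P} by extracting the Coulomb factor which converts $\P$ into $\widehat\P_{\alpha,T}$, and finally obtain \eqref{eq0.5-thm-Theta-P} by restricting the test function to intervals in $[-A,A]$ and passing $T\to\infty$.

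For \eqref{eq1-thm-Theta-P}, the starting point is
\[
\widehat{\Theta}_{\alpha,T}(\d\hat\xi\d\hat u)= \frac{\e^{\alpha c(T)}}{Z_{\alpha,T}} \Big(\sqrt{\tfrac2\pi}\Big)^{n_T(\hat\xi)} \mathbf\Phi(\hat\xi,\hat u)\, \Gamma_{\alpha,T}(\d\hat\xi)\,\d\hat u,
\]
where $\Gamma_{\alpha,T}$ is Poisson with intensity $\alpha\e^{-(t-s)}\mathbf 1_{-T\le s<t\le T}\,\d s\,\d t$ and $\mathbf\Phi(\hat\xi,\hat u)=\E^\P[\exp(-\tfrac12\sum_i u_i^2|\omega(t_i)-\omega(s_i)|^2)]$. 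Substituting this into $\E^{\widehat\Theta_{\alpha,T}}[\e^{-\lambda\sum \mathrm f(s_i,t_i,u_i)}]$ and using Fubini to pull the $\P$-expectation outside, the integrand becomes a product over intervals of factors $\sqrt{2/\pi}\,\exp(-\lambda \mathrm f(s_i,t_i,u_i)-\tfrac12 u_i^2 |\omega(t_i)-\omega(s_i)|^2)$. Performing the Lebesgue integrals over each $u_i$ using \eqref{def-g-lambda} yields $\prod_i g_\lambda(s_i,t_i,|\omega(t_i)-\omega(s_i)|)$. I then apply Campbell's exponential formula for the Poisson process $\Gamma_{\alpha,T}$ to obtain
\[
\E^{\Gamma_{\alpha,T}}\Big[\prod_i g_\lambda(s_i,t_i,|\omega(t_i)-\omega(s_i)|)\Big]=\exp\bigg(\alpha\int\!\!\int_{-T\le s<t\le T} (g_\lambda-1)\e^{-|t-s|}\d s\d t\bigg),
\]
and the prefactor $\e^{\alpha c(T)}=\exp(\alpha\iint\e^{-|t-s|}\d s\d t)$ from \eqref{cdef2} exactly cancels the $-1$ contribution, producing \eqref{eq1-thm-Theta-P}.

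Identity \eqref{eq0-thm-Theta-P} then follows by the decomposition $g_\lambda(s,t,z)=\widehat g_\lambda(s,t,z)+\tfrac1z$, which is the content of \eqref{def-hat-g-lambda} together with the identity $\sqrt{2/\pi}\int_0^\infty \e^{-u^2z^2/2}\d u=1/z$. Writing $\exp(\alpha\iint g_\lambda\e^{-|t-s|})=\exp(\alpha\iint \widehat g_\lambda \e^{-|t-s|})\cdot\exp(\alpha\iint \e^{-|t-s|}/|\omega(t)-\omega(s)|\d s\d t)$ and recognising the last factor as $Z_{\alpha,T}\,\d\widehat\P_{\alpha,T}/\d\P$ (by \eqref{def-polaronmeas} after symmetrising the double integral), the factor $Z_{\alpha,T}$ cancels against the prefactor $1/Z_{\alpha,T}$ on the right-hand side of \eqref{eq1-thm-Theta-P}, and the $\P$-expectation becomes a $\widehat\P_{\alpha,T}$-expectation.

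Finally, for \eqref{eq0.5-thm-Theta-P} I take $\mathrm f$ supported on triples $(s,t,u)$ with $s,t\in[-A,A]$, so that both sides of \eqref{eq0-thm-Theta-P} depend only on the restriction of $\omega$ to $[-A,A]$ through $\mathcal F_A$, and then let $T\to\infty$. The left-hand side converges because $\widehat\Theta_{\alpha,T}\to\widehat\Theta_\alpha$ in the sense recalled after \eqref{varT} (the stationary renewal structure of \cite[Theorem 5.8]{MV18a}), while the right-hand side converges because $\widehat\P_{\alpha,T}\to\widehat\P_\alpha$ in total variation on $\mathcal F_A$. The main technical point — and the place where some care is needed — is to ensure uniform integrability as $T\to\infty$ for the exponential on the right-hand side, which I would handle by first proving the identity for bounded, nonnegative $\mathrm f$ so that $\widehat g_\lambda$ is bounded on $[-A,A]^2$, then extending by monotone/dominated convergence. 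I expect the bookkeeping in Step 1 (matching the $\e^{\alpha c(T)}$ factor with Campbell's formula) and the limit passage in Step 3 to be the only places where error is easy to make; neither is conceptually hard.
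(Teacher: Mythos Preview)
Your proposal is correct and follows essentially the same approach as the paper: the paper likewise unfolds the definition \eqref{hatQ} of $\widehat\Theta_{\alpha,T}$, applies Fubini and integrates out the $u_i$ to produce $\prod_i g_\lambda$, then uses the Poisson expectation formula (your Campbell's formula) together with the $\e^{\alpha c(T)}$ prefactor to obtain \eqref{eq1-thm-Theta-P}, derives \eqref{eq0-thm-Theta-P} via $g_\lambda=\widehat g_\lambda+1/z$ and the definition of $\widehat\P_{\alpha,T}$, and obtains \eqref{eq0.5-thm-Theta-P} by passing $T\to\infty$ using the known convergence of $\widehat\Theta_{\alpha,T}$ and $\widehat\P_{\alpha,T}$. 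Your additional remarks on uniform integrability in the limit step are more cautious than the paper, which simply invokes the existence and stationarity of the infinite-volume limits without further comment.
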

\begin{proof}
Let us fix any $\lambda>0$. We will show first \eqref{eq1-thm-Theta-P}. Indeed, 
\begin{align*}
&\E^{\widehat{\Theta}_{\alpha,T}}\big[         \exp\big[-\lambda \sum_{i=1}^{n_T(\hat\xi)} \mathrm f(s_i,t_i,u_i )\big]   \big]\\
&=\frac{\e^{\alpha c(T)}}{Z_{\alpha,T}} \E^{\Gamma_{\alpha,T}}\bigg[ \E^\P\bigg( \int_{(0,\infty)^{n_T(\hat\xi)}} \e^{-\lambda \sum_i \mathrm f(s_i,t_i,u_i )} \bigg(\frac2\pi\bigg)^{{n_T(\hat\xi)\over 2}} \\
&\qquad\qquad\qquad\qquad\qquad  \times \e^{-\frac{1}{2}\sum_i  u_i^2 |\omega(t_i)-\omega(s_i))|^2} \d u_1 \d u_2 \cdots \d u_{n_T(\hat\xi)} \bigg)\bigg] \\
&=\frac{\e^{\alpha c(T)}}{Z_{\alpha,T}} \E^{\Gamma_{\alpha,T}} \bigg[\E^\P\bigg(\prod_{i=1}^{n_T(\hat\xi)} g_\lambda(s_i,t_i, |\omega(t_i)-\omega(s_i)|)  \bigg)\bigg]\\
&=\frac{1}{Z_{\alpha,T}} \E^\P\bigg[ \e^{\alpha c(T)}\E^{\Gamma_{\alpha,T}}\bigg(\prod_{i=1}^{n_T(\hat\xi)} g_\lambda(s_i,t_i, |\omega(t_i)-\omega(s_i)|)  \bigg)\bigg]\\
&=\frac{1}{Z_{\alpha,T}}\E^\P\bigg[ \exp\bigg(\alpha \iint_{-T\le s<t\le T} \e^{-|t-s|} g_\lambda(s,t,|\omega(t)-\omega(s)|) \d t\d s\bigg)\bigg].
\end{align*}
In the first identity above, we used the definition of $\widehat\Theta_{\alpha,T}$ from \eqref{hatQ}, in the second identity we plugged in the definition of $g_\lambda$ from \eqref{def-hat-g-lambda},
in the third identity we used Fubini's theorem and in the fourth identity we used the definition of the Poisson point process $\Gamma_{\alpha,T}$ with intensity $\alpha \e^{-(t-s)}\1_{-T\leq s < t \leq T} \d s \d t$ from 
\eqref{cdef2}. The above identity proves \eqref{eq1-thm-Theta-P}. Then by using the definition of $\widehat\P_{\alpha,T}$ and by plugging in the identity \eqref{def-hat-g-lambda}, we also obtain \eqref{eq0-thm-Theta-P}. 
The identity \eqref{eq0.5-thm-Theta-P} follows from \eqref{eq0-thm-Theta-P} if we let $T\to\infty$ on both sides, since the limits $\widehat\Theta_\alpha=\lim_{T\to\infty}\widehat\Theta_{\alpha,T}$ and $\widehat\P_\alpha=\lim_{T\to\infty}\widehat\P_{\alpha,T}$ exist and are stationary (recall \eqref{eq-polaron-mixture-rep}). 
\end{proof}

\noindent{\bf Proof of Lemma \ref{lemma-N} and Theorem \ref{thm duality}.}
By Lemma \ref{thm-Theta-P},  for any $\lambda>0$, 
\begin{align*}
&\E^{\widehat{\Theta}_{\alpha,T}}
\big[         \exp\big[-\lambda \sum_{i=1}^{n_T(\hat\xi)} \mathrm f(s_i,t_i,u_i )\big]   \big] \\
&=
\frac{1}{Z_{\alpha,T}}\E^\P\bigg[ \exp\bigg(\alpha \iint_{-T\le s<t\le T} \e^{-|t-s|} g_\lambda(s,t,|\omega(t)-\omega(s)|) \d t\d s\bigg)\bigg].
\end{align*}
Let $E=\{s\in A, \, t \in B, \, u\geq \alpha\}$ and $\mathrm f(s,t,u)=\1_E(s,t,u)$. Then 
$$
\begin{aligned}
g_\lambda(s,t,z) &= \sqrt{\frac2\pi} \int_0^\infty \e^{-\lambda \mathrm f(s,t,u) - \frac{u^2 |z|^2} 2} \d u \\
&= \sqrt{\frac2\pi} \bigg(\int_0^\infty [\e^{-\lambda} \e^{-\frac{u^2 |z|^2}2}]\1_E+ \int_0^\infty \e^{-\frac{u^2 |z|^2}2}[1-\1_E]\bigg) \\
&= \sqrt{\frac2\pi} \int_0^\infty \e^{-\frac{u^2 |z|^2}2}\d u +    (\e^{-\lambda} -1) \1\{s\in A, t \in B\} \sqrt{\frac2\pi} \int_\alpha^\infty \e^{-\frac{u^2 |z|^2}2}\d u \\
&= \frac 1 {|z|}+ \frac 1 {|z|} (\e^{-\lambda} -1) \1\{s\in A,t\in B\} \sqrt{\frac 2\pi}\int_{\alpha|z|}^\infty \e^{-\frac{u^2}2} \d u \\
&=\frac{1}{|z|} \bigg[1+ (\e^{-\lambda} -1) \1\{s\in A,t\in B\} \Phi(\alpha|z|)\bigg],
\end{aligned}
$$
with $\Phi$ defined in \eqref{def-Lambda}. Combining the previous two displays implies that
\begin{align*}
&\E^{\widehat{\Theta}_{\alpha,T}}\bigg[\exp\big(-\lambda \sum_{i=1}^{n_T(\hat\xi)} \mathrm f(s_i,t_i,u_i )\big)\bigg] \\
&= \frac{1}{Z_{\alpha,T}}\E^\P\bigg[ \exp\bigg(\alpha {} \iint_{-T\le s<t\le T} \frac{\e^{-|t-s|}}{|\omega(t)- \omega(s)|} \d s \d t 
\\
&\qquad\qquad\qquad\qquad + \alpha^2 (\e^{-\lambda}-1) \iint_{A\times B} \d s \d t \e^{-|t-s|} \frac{\Phi(\alpha |\omega(t)- \omega(s)|)}{\alpha|\omega(t)-\omega(s)|}\bigg)\bigg] \\
&= \E^{\widehat\P_{\alpha,T}}\bigg[\exp\bigg(\alpha^2 {(\e^{-\lambda}-1)}  \iint_{A\times B} \d s \d t \e^{-|t-s|} \frac{\Phi(\alpha |\omega(t)- \omega(s)|)}{\alpha|\omega(t)-\omega(s)|}\bigg)\bigg],
\end{align*}
as required. The proof of \eqref{eq:laplace-transform-theta-alpha} follows by taking the limit $T\to\infty$ from the previous part, concluding the proof of Lemma \ref{lemma-N}. Then Theorem \ref{thm duality} also follows from \eqref{eq-Laplace-fun-cox} together with \eqref{eq:laplace-transform-theta-alpha} which implies that, conditional on $\omega$ sampled according to $\wh\P_\alpha$, the law of  
$\{(s_i,t_i,u_i): -A \leq s_i < t_i \leq A\}$ is that of a Poisson point process with random intensity measure $\Lambda(\alpha,\omega,\d s \d t \d u)$. Since $\widehat{\P}_{\alpha}$ is stationary and ergodic, then the same properties are inherited by the point process (cf. Lemma \ref{lemma:stat-erg-pp}). \qed

\subsubsection{Consequences of Theorem \ref{thm duality}}
Following the proof from Theorem \ref{thm duality}, 
a number of interesting cases are made explicit in the following two corollaries:



\begin{cor}\label{lemma-xi-prime}
  The point process $\xi'=\{(s_i,t_i): 0<t_i-s_i<A\}$ is a stationary and ergodic Poisson point process with random intensity (under $\widehat{\P}_\alpha$) \begin{equation}\label{eq-Lambda-intensity-def}
 \Lambda(\alpha,\omega, \d s \d t)= \alpha^2 \e^{-(t-s)}\frac{1}{\alpha|\omega(t)-\omega(s)|} \1_{-A\leq s < t \leq A} \d s \d t.
  \end{equation}
    
  Moreover, the projections $\xi'_1:=\{s_i: (s_i,t_i)\in \xi'\}$, $\xi'_2:=\{t_i: (s_i,t_i)\in \xi'\}$ are stationary and ergodic Poisson point processes with random intensities \begin{equation}\label{eq-beta-def}
    \beta_1(\alpha,\omega,s) \d s :=\bigg(\alpha^2\int_{s-A}^{s+A}\e^{-(t-s)}\frac{1}{\alpha|\omega(t)-\omega(s)|}\d t\bigg)\d s 
  \end{equation}
  and \begin{equation}\label{eq-beta'-def}
    \beta_2(\alpha,\omega, t) \d t :=\bigg(\alpha^2\int_{t-A}^{t+A}\e^{-(t-s)}\frac{1}{\alpha|\omega(t)-\omega(s)|}\d s \bigg)\d t
  \end{equation}respectively. Finally, by Lemma \ref{lemma:ergodic-averages-pp}, it holds $\widehat{\Theta}_\alpha$-a.s.
    \begin{equation}
    \lim_{T\to\infty}\frac{\xi_1'((-T,T])}{2T}=\lim_{T\to\infty}\frac{\xi_2'((-T,T])}{2T}  
     =\alpha^2 \E^{\widehat{\P}_\alpha}\bigg[\int_1^2 \e^{-u}\frac{\d u}{\alpha|\omega(u)-\omega(0)|}\bigg],\label{eq-xi1-prime-number-of-points}
    \end{equation}
    \begin{equation} 
    \lim_{T\to\infty}\frac{1}{\xi_1'((-T,T])}
    \sum_{i=-\xi_1'((-T,0])}^{\xi_1'((0,T])-1} 
    \big(s_i-s_{i-1}\big) =
    \bigg(\alpha^2 
    \E^{\widehat{\P}_\alpha}\Big[\int_1^2 \e^{-u}\frac{\d u}{\alpha|\omega(u)-\omega(0)|}\Big]\bigg)^{-1}\label{eq-xi1-prime-sum-of-si},\\
    \end{equation}
    and for any $c>0$, 
    \begin{equation}
    \begin{aligned}
         \lim_{T\to\infty}\frac{1}{\xi_1'((-T,T])}
         \sum_{i=-\xi_1'((-T,0])}^{\xi_1'((0,T])-1}\big(s_i-s_{i-1}\big)\1\big\{s_i-s_{i-1}> c\big \} 
    =
    \frac{\widehat{\Theta}_{\alpha}(s_1-s_0>c)}{\alpha^2 \E^{\widehat{\P}_\alpha}\left[\int_1^2 \e^{-u}\frac{\d u}{\alpha|\omega(u)-\omega(0)|}\right]}\label{eq-xi1-prime-sum-of-si-conditioned}.
         \end{aligned}
    \end{equation}

\end{cor}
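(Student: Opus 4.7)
My plan is to reduce everything to the conditional Poisson structure established in Theorem \ref{thm duality} and then invoke standard marginalization and ergodic-theoretic results for stationary ergodic point processes on $\R$. Indeed, Theorem \ref{thm duality} asserts that, conditional on $\omega$ sampled from $\wh\P_\alpha$, the marked point process $\{(s_i,t_i,u_i)\}$ under $\wh\Theta_{\alpha,\omega}$ is Poisson with random intensity $\Lambda(\alpha,\omega,\d s\,\d t\,\d u)$ from \eqref{def Lambda}. To obtain $\xi'$, I would integrate out the mark $u$ via the mapping theorem for Poisson processes, using the identity $\sqrt{2/\pi}\int_0^\infty e^{-u^2 z^2/2}\,\d u = 1/|z|$ to recover exactly the intensity \eqref{eq-Lambda-intensity-def}. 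Applying the mapping theorem again and projecting onto the first (respectively second) time coordinate yields $\xi'_1$ (respectively $\xi'_2$) as a (quenched) Poisson point process on $\R$ whose intensity $\beta_1$ (respectively $\beta_2$) is obtained by integrating out the complementary coordinate, producing \eqref{eq-beta-def} and \eqref{eq-beta'-def}.

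Stationarity and ergodicity of $\xi'$, $\xi'_1$, $\xi'_2$ propagate from the corresponding properties of $\wh\P_\alpha$ (see \eqref{ergodic-P-alpha}), since the intensity field $\Lambda(\alpha,\omega,\cdot)$ is a measurable, translation-covariant functional of the Brownian increments. The general statement that a doubly stochastic Poisson process driven by a stationary ergodic intensity inherits both properties is precisely Lemma \ref{lemma:stat-erg-pp}, which is developed in the appendix on Palm measures and Cox processes.

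For the almost-sure limits, I invoke Lemma \ref{lemma:ergodic-averages-pp}, the ergodic theorem for stationary ergodic point processes on $\R$. Identity \eqref{eq-xi1-prime-number-of-points} follows immediately: the a.s. limit of $\xi'_1((-T,T])/(2T)$ equals the mean intensity $\E^{\wh\P_\alpha}[\beta_1(\alpha,\omega,0)]$, which reduces to the stated integral after using translation-invariance of the increments and the specified range of interval lengths. For \eqref{eq-xi1-prime-sum-of-si}, the ergodic theorem applied under the Palm measure of $\xi'_1$ gives that the empirical mean of the consecutive spacings converges to the reciprocal of the intensity, which is the classical Palm--Khinchin identity. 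Finally, \eqref{eq-xi1-prime-sum-of-si-conditioned} follows by applying the same ergodic theorem to the truncated functional $(s_i - s_{i-1})\1\{s_i - s_{i-1}>c\}$ and then identifying the Palm expectation via Campbell--Mecke. The main technical obstacle is this last Palm-calculus step: one must carefully set up the Palm measure for a Cox (rather than homogeneous Poisson) process and exchange sum with integral in order to express the limit as $\wh\Theta_\alpha(s_1 - s_0 > c)$ divided by the intensity, which is exactly the role of the appendix on point processes with random intensities.
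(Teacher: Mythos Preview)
Your proposal is correct and matches the paper's approach exactly: the paper presents Corollary~\ref{lemma-xi-prime} as an immediate consequence of Theorem~\ref{thm duality} (integrating out $u$ via the Gaussian identity and projecting), with stationarity and ergodicity inherited through Lemma~\ref{lemma:stat-erg-pp} and the almost-sure limits read off directly from Lemma~\ref{lemma:ergodic-averages-pp} in the appendix. The only minor remark is that what you call the ``Campbell--Mecke'' identification in the last step is, in the paper's appendix, handled instead by the Palm inversion formula \eqref{eq-inversion-formula} applied to $f=\1\{r_1>c\}$ (see the proof of part~(iv) of Lemma~\ref{lemma:ergodic-averages-pp}); this is the same content under a different name.
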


\begin{cor}\label{cor:consequences-number-intervals} 
  \begin{enumerate}
    \item {\normalfont (\textbf{Number of intervals grows like $\alpha^2$})} Let $n_T(\hat\xi)$ denote the number of all the intervals $\{[s_i,t_i]\}$ present in the time horizon $[-T,T]$. Then 
\begin{equation}\label{eq0-lemma-n}
\lim_{\alpha\to\infty}\frac{1}{\alpha^2}  \lim_{T\to\infty} \E^{\widehat\Theta_{\alpha, T}}\bigg[\frac{n_T(\hat\xi)}{2T}\bigg]= 2g_0=\int_{\R^3} |\nabla\psi_0(x)|^2\d x  >0.
\end{equation}
Also, for any $\eps>0$, 
\begin{equation}\label{eq0.5-lemma-n}
\lim_{\alpha\to\infty}\lim_{T\to\infty}\widehat\Theta_\alpha\bigg[(\hat\xi,\hat u)\in\widehat{\mathscr{Y}}_{T}\colon \bigg|\frac{n_T(\hat\xi)}{2T\alpha^2} - 2g_0\bigg| >\eps\bigg]=0.
\end{equation}
\item {\normalfont (\textbf{Lengths of intervals remain exponentially distributed})}  For any $a>0$, let $n_T^{\ssup a}(\hat\xi)=\#\{(\hat\xi,\hat u)\in\widehat{\mathscr{Y}}_{T}: (t_i-s_i) \leq a \}$. Then we have 
\begin{equation}\label{eq0-lemma-length}
\lim_{\alpha\to\infty}\frac1{\alpha^2}\lim_{T\to\infty}{1\over 2T}\E^{{\widehat \Theta}^{\alpha, T}}[n_T^{\ssup a}(\hat\xi)]= [1-e^{-a}](2g_0)=[1- \e^{-a}]\int_{\R^3} |\nabla \psi_0(x)|^2 \d x. 
\end{equation}
Also, for any $\eps>0$,
\begin{equation}\label{eq0.5-lemma-length}
\lim_{\alpha\to\infty}\lim_{T\to\infty}\widehat\Theta_\alpha\bigg[(\hat\xi,\hat u)\in\widehat{\mathscr{Y}}_{T}\colon \bigg|\frac{n_T^{\ssup a}(\hat\xi)}{2T\alpha^2} - 2g_0[1-\e^{-a}]\bigg| >\eps\bigg]=0.
\end{equation}
\item {\normalfont (\textbf{Size of $u$ grows like $\alpha$})} For any $A, B >0$, let $n_T^{\ssup{A,B}}(\hat\xi,\hat u)=\#\{(\hat\xi,\hat u)\in\widehat{\mathscr{Y}}_{T}:  A\alpha \leq u_i \leq B \alpha\}$. Then we have 
\begin{equation}\label{eq0-lemma-u}
\begin{aligned}
\lim_{\alpha\to\infty} \frac 1 {\alpha^2} \lim_{T\to\infty} \frac 1 {2T} \E^{\widehat{\Theta}_{\alpha,T}}[n_T^{\ssup{A,B}}]=
  \sqrt{\frac 2 \pi} \int_A^B \d z \iint\limits_{\R^3\times\R^3} \d x \d y \psi_0^2(x)\psi_0^2(y)   \e^{-\frac{z^2|x-y|^2}2}=:\widetilde g_0(A,B)
\end{aligned}
\end{equation}
Moreover, for any $\eps>0$,
\begin{equation}\label{eq0.5-lemma-u}
\lim_{\alpha\to\infty}\lim_{T\to\infty}\widehat\Theta_\alpha\bigg[(\hat\xi,\hat u)\in\widehat{\mathscr{Y}}_{T}\colon \bigg|\frac{n_T^{\ssup{A,B}}(\hat\xi,\hat u)}{2T\alpha^2} - \widetilde g_0(A,B)\bigg| >\eps\bigg]=0.
\end{equation}

  \end{enumerate}
\end{cor}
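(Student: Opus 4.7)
\textbf{Proof plan for Corollary \ref{cor:consequences-number-intervals}.} The strategy is to combine the conditional Poisson description of $\widehat\Theta_\alpha$ given by Theorem~\ref{thm duality} with the strong-coupling computation of Theorem~\ref{thm-strong-coupling}. Each of the three counting variables is of the form $N_T=\sum_{i}\1\{(s_i,t_i,u_i)\in E_T\}$ for an explicit set $E_T$, so by Theorem~\ref{thm duality} we have, conditionally on $\omega$ sampled under $\widehat\P_\alpha$,
\[
\E^{\widehat\Theta_{\alpha,\omega}}[N_T]=\int\!\!\int\!\!\int_{E_T}\Lambda(\alpha,\omega,\d s\,\d t\,\d u).
\]
Integrating out $u$ as in \eqref{def Lambda integrated} gives the closed-form integrands: for part (i) one gets $\alpha e^{-(t-s)}/|\omega(t)-\omega(s)|$ on $[-T,T]^2_<$; for part (ii) the same integrand restricted to $t-s\le a$; for part (iii) the integrand $\alpha^2\sqrt{2/\pi}\,e^{-(t-s)}\int_A^B e^{-z^2\alpha^2|\omega(t)-\omega(s)|^2/2}\d z$.

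Next I would use the stationarity of increments under $\widehat\P_\alpha$ (recall \eqref{ergodic-P-alpha}) to reduce the two-dimensional $(s,t)$-integral to a single integral in $u=t-s$. After dividing by $2T$ and passing to $T\to\infty$ (via a Cesàro/Fubini argument that only loses boundary contributions of size $O(1)$), one obtains
\[
\lim_{T\to\infty}\frac{1}{2T\alpha^2}\E^{\widehat\Theta_\alpha}[n_T(\hat\xi)]
=\E^{\widehat\P_\alpha}\!\!\left[\int_0^\infty\!\frac{e^{-u}}{\alpha|\omega(u)-\omega(0)|}\,\d u\right],
\]
and analogously for parts (ii), (iii), with either the indicator $\1\{u\le a\}$ inserted or the $u$-integrand $\sqrt{2/\pi}\int_A^B e^{-z^2\alpha^2|\omega(u)-\omega(0)|^2/2}\d z$ in place of $1/(\alpha|\omega(u)-\omega(0)|)$.

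At this point I invoke Theorem~\ref{thm-strong-coupling}: for part (i) apply \eqref{main} with $V(x)=1/x$ and $\theta=1$, combined with Lemma~\ref{lemma-Pekar-energy}, to get the limit $\int\!\!\int\psi_0^2(x)\psi_0^2(y)/|x-y|\,\d x\,\d y=2g_0=\int|\nabla\psi_0|^2$. Part (ii) uses the extension of \eqref{main} to the weight $h(u)=e^{-u}\1\{u\le a\}\in L^1$, giving the prefactor $1-e^{-a}$. For part (iii), $V_z(x)=e^{-z^2x^2/2}$ is continuous and bounded, so \eqref{main} applies for each fixed $z\in[A,B]$; Fubini combined with dominated convergence (the $z$-integral is over the bounded set $[A,B]$) yields $\widetilde g_0(A,B)$. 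This establishes the mean convergences \eqref{eq0-lemma-n}, \eqref{eq0-lemma-length}, \eqref{eq0-lemma-u}.

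Finally, for the convergence-in-probability statements \eqref{eq0.5-lemma-n}, \eqref{eq0.5-lemma-length}, \eqref{eq0.5-lemma-u}, I would exploit that $\widehat\Theta_\alpha$ is stationary and ergodic and that $n_T, n_T^{\ssup a}, n_T^{\ssup{A,B}}$ are all additive over the renewal structure. Birkhoff's ergodic theorem yields, for each fixed $\alpha$, the $\widehat\Theta_\alpha$-almost sure limit $n_T/(2T)\to D_\alpha$ with $D_\alpha=\lim_T\E^{\widehat\Theta_\alpha}[n_T/(2T)]$, and similarly for the other two counts; hence $\widehat\Theta_\alpha[|n_T/(2T\alpha^2)-D_\alpha/\alpha^2|>\eps/2]\to 0$ as $T\to\infty$. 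Combining with the already-proved convergence $D_\alpha/\alpha^2\to 2g_0$ (and the analogous limits in (ii), (iii)) via the triangle inequality, and then letting $\alpha\to\infty$, gives the claim. The only real subtlety is the use of the unbounded-potential case $V(x)=1/x$ of Theorem~\ref{thm-strong-coupling} for parts (i)–(ii); but this is exactly the content of the statement available in the excerpt (with technical details relegated to Appendix~\ref{sec unbounded V}), so it may be used as a black box here.
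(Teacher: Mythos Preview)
Your proposal is correct and matches the paper's primary approach: the paper states that the corollary ``follows directly from Corollary~\ref{lemma-xi-prime} and Theorem~\ref{thm-strong-coupling}'', which is exactly your route of using the conditional Poisson structure (Theorem~\ref{thm duality}) to express the expected counts as $\widehat\P_\alpha$-averages of the random intensity, then applying Theorem~\ref{thm-strong-coupling} and the ergodic theorem. The paper additionally offers a slightly different alternative for part~(i): choosing $\mathrm f\equiv -1$ in Lemma~\ref{thm-Theta-P} yields the exact identity $\E^{\widehat\Theta_{\alpha,T}}[\e^{\lambda n_T}]=Z_{\alpha\e^\lambda,T}/Z_{\alpha,T}$, so differentiating at $\lambda=0$ gives $\lim_T(2T)^{-1}\E^{\widehat\Theta_{\alpha,T}}[n_T]=\alpha g'(\alpha)$ and one concludes via $g(\alpha)/\alpha^2\to g_0$ directly, bypassing the unbounded-$V$ case of Theorem~\ref{thm-strong-coupling}.
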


\begin{remark}\label{remark2-lemma-n}
We note that, under the base Poisson process $\Gamma_{\alpha,T}$, we have 
$\E^{\Gamma_{\alpha, T}} \big[\frac{n_T(\hat\xi)}{2T}\big]\simeq \alpha$, while under the tilted measure $\widehat\Theta_{\alpha,T}$, $\E^{\widehat\Theta_{\alpha, T}} \big[\frac{n_T(\hat\xi)}{2T}\big]\simeq 2 g_0 \alpha^2$ -- 
in other words, the tilt in $\widehat\Theta_{\alpha,T}$ increases the Poisson intensity from $\alpha$ to $\alpha^2$. In contrast, tilting in $\widehat\Theta_{\alpha,T}$ does not change the distribution of the length of the intervals, which, as under the base measure $\Gamma_{\alpha,T}$, still remains exponential with mean $1$. Moreover, the expectations in \eqref{eq0-lemma-n}, \eqref{eq0-lemma-length} and \eqref{eq0-lemma-u} could also be deduced directly from the Laplace transform in \eqref{eq:Laplace-transform-Theta-alpha-T} by taking the derivative at $\lambda=0$.
\end{remark}

\begin{proof}[{\bf Proof of Corollary \ref{cor:consequences-number-intervals}}]
The corollary follows directly from Corollary \ref{lemma-xi-prime} and Theorem \ref{thm-strong-coupling}. Alternatively, a direct proof can also be given using Lemma \ref{thm-Theta-P} and Theorem \ref{thm-strong-coupling}. Indeed, for Part (i), we can choose $f(s,t,u)\equiv -1$, so that according to \eqref{def-g-lambda}, $g_\lambda(s,t,z)=\sqrt{\frac 2\pi}\int_0^\infty \e^{-\lambda f(s,t,u) - \frac{u^2 z^2}2} \d u= \frac{\e^\lambda}{|z|}$. Then \eqref{eq1-thm-Theta-P} implies
\begin{equation}\label{eq1-lemma-n}
\E^{\widehat{\Theta}_{\alpha, T}}\big[ \e^{\lambda n_T(\hat\xi)}\big]=\frac{Z_{\alpha e^{\lambda},T}}{Z_{\alpha, T}} \quad\mbox{so that}\quad \frac {\d}{\d\lambda} \big[\e^{\lambda n_T(\hat\xi)}\big]\bigg|_{\lambda=0}= \big(\e^{\lambda n_T(\hat\xi)} n_T(\hat\xi)\big)\bigg|_{\lambda=0} = n_T(\hat\xi)
\end{equation}
Here $Z_{\beta,T}$ is the Polaron partition function with coupling parameter $\beta>0$. 
Therefore, 
$$
\begin{aligned}
\E^{\widehat\Theta_{\alpha,T}}\big[ n_T(\hat\xi)\big]= \frac{\d}{\d\lambda} \bigg(\E^{\widehat\Theta_{\alpha,T}}\big[\e^{\lambda n_T(\hat\xi)}\big]\bigg)\bigg|_{\lambda=0}
=\frac 1 {Z_{\alpha,T}} \frac {\d}{\d\lambda} \big(Z_{\alpha\e^\lambda,T}\big) \bigg|_{\lambda=0} 
= \frac{\d}{\d\lambda} \big(\log Z_{\alpha\e^\lambda,T}\big) \bigg|_{\lambda=0},
\end{aligned}
$$
We divide both sides by $2T$, pass to the limit $T\to\infty$ and use that, for any $\alpha>0$ and $\lambda>0$, $\lim_{T\to\infty} \frac 1 {2T} \log Z_{\alpha\e^\lambda,T}= g(\alpha \e^\lambda)$, to obtain
$$
\lim_{T\to\infty}\frac 1{2T}\E^{\widehat\Theta_{\alpha,T}}\big[ n_T(\hat\xi)\big]= \big(g^\prime(\alpha \e^\lambda) \alpha\e^\lambda\big)\big|_{\lambda=0}= g^\prime(\alpha) \alpha.
$$
Since, $\lim_{\alpha\to\infty} \frac{g(\alpha)}{\alpha^2}= g_0$, 
we arrive at \eqref{eq0-lemma-n}. {The second part of assertion (i) then also follows from the ergodicity of $\widehat\Theta_\alpha$.}

Similarly for part (ii), we can choose $f(s,t,u)={\1}_{(t- s)\le a}$ in Lemma \ref{thm-Theta-P} so that $g_\lambda(s, t, z)= \frac 1 {z} [\e^{-\lambda{\1}_{(t-s) \le a}}] = \frac 1 {z}[ {\e^{-\lambda}{\1}_{(t-s) \le a}+{\1}_{(t-s) \ge a}}]$. 
Proceeding exactly as above and using Theorem \ref{thm-strong-coupling}, we deduce \eqref{eq0-lemma-length}:
$$
\begin{aligned}
\lim_{T\to\infty}{1\over 2T}\E^{{\widehat \Theta}^{\alpha, T}}\big[\sum_i{\1}_{(t_i-s_i)\le a}\big] 
&=\E^{\widehat\P_\alpha }\bigg[ \alpha^2\int_0^a \frac{\e^{-t} \d t}{\alpha |\omega(t)-\omega(0)|}\d t\bigg] \\
&\stackrel{\alpha\to\infty}\simeq \alpha^2 \big(\int_0^a\e^{-t} \d t) \iint_{\R^3\times\R^3} {\psi_0^2(x)\psi_0^2(y)\over  |x-y|} \d x \d y= 2g_0(1- \e^{-a}).
\end{aligned}
$$
This shows \eqref{eq0-lemma-length} in (ii). {The second assertion of (ii) then follows again from the ergodicity of $\widehat\Theta_\alpha$.} For part (iii), we can proceed analogously by choosing $f(s,t,u)= \1_{A\alpha \leq u \leq B\alpha}$. 
\end{proof}

\subsection{FKG inequality on point processes}\label{subsec FKG}

The purpose of this subsection is to develop an FKG correlation inequality for the space $\wh{\mathscr Y}_T$, equipped with its Poisson point process measure.
This inequality and its application in Section \ref{subsec FKG comparison pot} will be used in Section \ref{sec high density} for the purpose of completing the proof of Theorem \ref{thm}. Additionally, it will also yield monotonicity and sub-additivity properties of independent interest (see Corollary \ref{cor:m-eff-monotone} and Corollary \ref{cor:subadditivity}).


First recall the notation from \eqref{hat-xi-u} and that 
$\wh{\mathscr Y}_T=\cup_{n=0}^\infty \wh{\mathscr Y}_{n,T}$ denotes the space of finite sets of weighted intervals $(\hat\xi,\hat u)$ contained within $[-T,T]$. Also recall the measure 
$\Gamma_{\alpha,T}(\d\hat\xi) \d\hat u$ appearing in \eqref{hatQ}, where $\Gamma_{\alpha,T}$ denotes the law of the Poisson point process with intensity $\gamma_\alpha(\d s \d t)=\e^{-(t-s)} \1_{-T\leq s < t \leq T} \d s \d t$. The realizations sampled by $\Gamma_{\alpha,T}(\d\hat\xi) \d\hat u$ are denoted by $(\hat\xi,\hat u)\in \wh{\mathscr Y}_T$. For simplicity and with a slight abuse of notation, we will write $\Theta_0:=\Gamma_{\alpha,T}$.

In this section, we will need the following definitions. 

\begin{definition}
\begin{enumerate}
   \item We say a function $f:\wh{\mathscr Y}_T\to\bbR_+$ is {\it coordinate-wise increasing} if $f(\hat\xi\cup \{[s,t]\})\geq f(\hat\xi)$ for all $\hat\xi\in\wh{\mathscr Y}_T$ and $s,t\in\bbR$ (recall Remark \ref{remark stu}).

\item We say that a probability measure $\Theta_1$ on $\wh{\mathscr Y}_T$ \textit{stochastically dominates} another probability measure $\Theta_2$ and write $\Theta_1\succeq\Theta_2$ if there exists a coupling $(\hat\xi_1,\hat\xi_2)\in \Pi(\Theta_1,\Theta_2)$ such that $\hat\xi_1$ contains $\hat\xi_2$ almost surely. 

\item We say $f:\wh{\mathscr Y}_T\to \bbR_+$ is \textit{log-super-modular} if 
\[
f(\hat\xi_1\vee \hat\xi_2)
f(\hat\xi_1\wedge \hat\xi_2)
\geq 
f(\hat\xi_1)f(\hat\xi_2),
\quad
\forall \hat\xi_1,\hat\xi_2\in \wh{\mathscr Y}_T.
\]
Here $\vee,\wedge$ respectively denote union and intersection.
\end{enumerate}
\end{definition}

First, we recall a characterization of stochastic domination, which is well-known in the discrete case.

\begin{lemma}[{\cite[Corollary 24.7]{bhattacharya2022special}}]
\label{lem:monotone-coupling-basic}
    Let $\Theta_1,\Theta_2$ be probability measures on $\wh{\mathscr Y}_T$.
    Then $\Theta_1\succeq \Theta_2$ if and only if  for all coordinate-wise increasing functions $f:\wh{\mathscr Y}_T\to\bbR_+$,
    \[
    \bbE^{\Theta_1}[f(\cdot)]\geq \bbE^{\Theta_2}[f(\cdot)].
    \]
\end{lemma}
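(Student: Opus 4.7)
The ``only if'' direction is immediate: given any coupling $(\hat\xi_1,\hat\xi_2)\in\Pi(\Theta_1,\Theta_2)$ with $\hat\xi_1\supseteq\hat\xi_2$ almost surely and any coordinate-wise increasing $f:\wh{\mathscr Y}_T\to\bbR_+$, one iterates the defining inequality $f(\hat\xi\cup\{[s,t]\})\geq f(\hat\xi)$ once for each interval in the (a.s.\ finite) symmetric difference $\hat\xi_1\setminus\hat\xi_2$ to obtain $f(\hat\xi_1)\geq f(\hat\xi_2)$ pointwise, and then integrates.

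For the converse I would invoke the classical Strassen theorem for partially ordered Polish spaces (in the Kamae--Krengel--O'Brien form, as packaged in \cite[Cor.~24.7]{bhattacharya2022special}). The plan is: (i) identify $\wh{\mathscr Y}_T=\bigcup_n\wh{\mathscr Y}_{n,T}$ with the space of finite integer-valued counting measures on the open set $\{(s,t,u):-T\leq s<t\leq T,\ u>0\}$, equipped with the topology of vague convergence, which is Polish; (ii) verify that the set-inclusion partial order $\preceq$ is closed in the product topology; and (iii) observe that bounded coordinate-wise increasing functions are a determining class for this order. The abstract Strassen-type theorem then produces the desired monotone coupling.

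The main technical step would be (ii), since distinct atoms of a configuration may approach one another in a vague limit. One handles this by passing from sets to counting measures and using that if $\hat\xi_1^{(k)}\succeq\hat\xi_2^{(k)}$ and both sequences converge vaguely, then for every relatively compact open set $U$ one has $\hat\xi_1(U)\geq \liminf_k \hat\xi_1^{(k)}(U) - C \geq \hat\xi_2(U)$ after a routine $\varepsilon$-fattening argument, so the containment is preserved in the limit. Once (i)--(iii) are in place the conclusion is a direct invocation of the abstract theorem, which is why the paper simply cites it.
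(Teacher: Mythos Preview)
The paper does not supply its own proof of this lemma; it is stated purely as a citation to \cite[Cor.~24.7]{bhattacharya2022special}. Your proposal is therefore not to be compared against a proof in the paper but assessed on its own, and it is essentially correct: the ``only if'' direction is exactly as you say, and the ``if'' direction is the standard invocation of Strassen's theorem in the Kamae--Krengel--O'Brien form, once one has placed $\wh{\mathscr Y}_T$ inside a Polish space with a closed partial order.

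One simplification: your step (ii) is more laborious than needed. Once you identify configurations with finite counting measures, the order $\hat\xi_2\preceq\hat\xi_1$ becomes $\hat\xi_1-\hat\xi_2\geq 0$ as measures, which is equivalent to $\int f\,\d\hat\xi_2\leq\int f\,\d\hat\xi_1$ for every nonnegative $f\in C_c$. This characterization is manifestly closed under vague convergence, with no $\eps$-fattening or ``$-C$'' correction required. The concern about atoms colliding is a red herring at the level of measures; it would only matter if you insisted on working with \emph{simple} point configurations as sets rather than as (possibly non-simple) counting measures, which is unnecessary here. Also, ``symmetric difference'' in your first paragraph should just be ``set difference'' $\hat\xi_1\setminus\hat\xi_2$, since $\hat\xi_2\subseteq\hat\xi_1$ almost surely.
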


We can now state the version of the FKG inequality we will use. 

\begin{prop}[{\cite[Corollary 1.2]{GK97}}]
\label{prop:FKG}
    Let $F:\wh{\mathscr Y}_T\to\bbR_{>0}$ be log-super-modular and $\Theta_0$-integrable (recall $\Theta_0:=\Gamma_{\alpha,T}$).
    Define a probability measure $\Theta_F$ on $\wh{\mathscr Y}_T$ by
    \[
    \d\Theta_F:= \frac 1 {Z_F}
    F(\cdot)
    \d\Theta_0, \qquad Z_F:=\E^{\Theta_0}[F].
    \]
    Then $\Theta_F$ is positively associated in the sense that for any coordinate-wise increasing $f,g:\wh{\mathscr Y}_T\to\bbR_{>0}$,
    \begin{equation}
    \label{eq:FKG-main}
    \bbE^{\Theta_F}[fg]\geq \bbE^{\Theta_F}[f]\bbE^{\Theta_F}[g].
    \end{equation}
\end{prop}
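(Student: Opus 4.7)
The plan is to reduce \eqref{eq:FKG-main} to the classical Holley--FKG inequality on the finite Boolean lattice $\{0,1\}^N$ via a Bernoulli discretization of the Poisson point process, exploiting the independence of $\Gamma_{\alpha,T}$ on disjoint cells of the state space.

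First I would truncate the $u$-coordinate to $u \le U$, replacing the intensity $\gamma_\alpha\otimes \d u$ by $\gamma_\alpha\otimes \1_{[0,U]}\,\d u$ so that the base process has finite mean mass; the general statement will then follow by letting $U\to\infty$ via monotone or dominated convergence, using the coordinate-wise-increasing property of $f,g$ together with the $\Theta_0$-integrability of $F$. Next, partition the truncated state space $[-T,T]^2_{\le}\times[0,U]$ into small measurable cells $C_1,\dots,C_N$ of diameter $\delta$ and consider the cell-occupation indicators $\xi_i:=\1\{\hat\xi\cap C_i\neq\emptyset\}$. Under $\Theta_0$ these are asymptotically independent Bernoulli variables with parameters $1-\mathe^{-\Theta_0(C_i)}$, since the probability that any cell contains $\ge 2$ points is $O(\delta^2)$ uniformly and can be excluded in the limit. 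Mapping a Bernoulli vector $\xi\in\{0,1\}^N$ to the configuration $\psi(\xi)$ obtained by placing a single point at the centre of each occupied cell, union and intersection of configurations correspond to coordinate-wise max and min of $(\xi_i)$; hence log-supermodularity of $F$ transfers directly to the induced $\tilde F:=F\circ\psi$ on $\{0,1\}^N$, and any coordinate-wise-increasing $f,g$ descend to functions that are monotone in the product order on $\{0,1\}^N$.

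At this point the classical discrete FKG inequality applies: a product Bernoulli measure on $\{0,1\}^N$ trivially satisfies the FKG lattice condition $\mu(x\vee y)\mu(x\wedge y)\ge \mu(x)\mu(y)$ (with equality, as can be checked coordinatewise), and multiplication by a log-supermodular $\tilde F$ preserves this condition; Holley's theorem then yields positive association for the discrete analogue of $\Theta_F$. Letting $\delta\downarrow 0$, the exceptional multi-point events become $\Theta_0$-null and the discretized functionals $\tilde F,\tilde f,\tilde g$ converge to $F(\hat\xi),f(\hat\xi),g(\hat\xi)$, propagating the inequality to the truncated continuous setting; the final limit $U\to\infty$ then delivers \eqref{eq:FKG-main}.

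The principal technical obstacle is the limiting procedure, since $F,f,g$ are not assumed bounded. I would address this by a preliminary truncation of $F\wedge M$, $f\wedge M$, $g\wedge M$ before invoking Holley and then recover the general case via monotone convergence as $M\to\infty$, using the $\Theta_0$-integrability of $F$ to control the normalizing constant $Z_F$ and the coordinate-wise-increasing property of $f,g$ to justify the interchange of limits. A secondary subtlety is that the canonical representative $\psi(\xi)$ depends on the partition, so along a refining sequence of partitions one must verify that the induced $\tilde F,\tilde f,\tilde g$ converge $\Theta_0$-a.s.\ to $F,f,g$; this follows from standard measurability arguments on the configuration space $\wh{\mathscr Y}_T$ combined with the assumed integrability of $F$.
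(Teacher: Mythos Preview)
The paper does not give its own proof of this proposition; it is quoted verbatim from Georgii--K\"uneth \cite[Corollary 1.2]{GK97}. Your discretize-to-$\{0,1\}^N$ and invoke Holley strategy is in fact the standard route to such continuum FKG statements (and is essentially how \cite{GK97} proceeds), so the overall architecture is correct.

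There is, however, one concrete gap in your final paragraph: truncating $F$ to $F\wedge M$ does \emph{not} preserve log-super-modularity. On $\{0,1\}^2$ take $F(0,0)=1$, $F(1,0)=F(0,1)=2$, $F(1,1)=5$; this is log-super-modular since $5\cdot 1\ge 2\cdot 2$, but with $M=3$ one gets $(F\wedge 3)(1,1)\cdot(F\wedge 3)(0,0)=3<4$. The fix is simply not to truncate $F$: on the finite lattice $\{0,1\}^N$ the induced $\tilde F=F\circ\psi$ takes only finitely many values and is automatically bounded, so Holley applies as stated. Truncate only $f$ and $g$ (coordinate-wise monotonicity \emph{is} stable under $\wedge M$), obtain $\bbE^{\Theta_F}[(f\wedge M)(g\wedge M)]\ge \bbE^{\Theta_F}[f\wedge M]\,\bbE^{\Theta_F}[g\wedge M]$ after the $\delta\downarrow 0$, $U\uparrow\infty$ limits, and then send $M\uparrow\infty$ by monotone convergence under $\Theta_F$. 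For the $\delta\downarrow 0$ limit on $F$ itself, rather than pointwise continuity you should invoke a measurable-approximation argument (e.g.\ martingale convergence with respect to the filtration generated by successively refined cell partitions), using the $\Theta_0$-integrability of $F$ to control $Z_F$ along the way.
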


\begin{lemma}
\label{lem:monotone-coupling-useful}
    For $\Theta_0$ defined previously, let
$\de\Theta_1 \propto f_1(\cdot)\de\Theta_0$ and
    $\de\Theta_2\propto f_1(\cdot) f_2(\cdot)\de\Theta_0
    $
    where $f_1$ is log-super-modular and $f_2$ is increasing.
    Then $\Theta_2\succeq \Theta_1$.
\end{lemma}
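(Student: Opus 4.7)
The plan is to reduce the desired stochastic domination to an FKG-type correlation inequality on the intermediate tilted measure $\nu_{f_1}$ with density proportional to $f_1\,d\Theta_0$. By Lemma \ref{lem:monotone-coupling-basic}, establishing $\Theta_2 \succeq \Theta_1$ is equivalent to showing that $\bbE^{\Theta_2}[g] \geq \bbE^{\Theta_1}[g]$ for every coordinate-wise increasing $g:\wh{\mathscr Y}_T\to\bbR_+$. Writing out both expectations against $\Theta_0$, this amounts to the inequality
\[
\frac{\bbE^{\Theta_0}[f_1 f_2 g]}{\bbE^{\Theta_0}[f_1 f_2]} \;\geq\; \frac{\bbE^{\Theta_0}[f_1 g]}{\bbE^{\Theta_0}[f_1]},
\]
which, after clearing denominators and dividing by $\bbE^{\Theta_0}[f_1]^2$, is precisely the positive-association statement
\[
\bbE^{\nu_{f_1}}[f_2 g] \;\geq\; \bbE^{\nu_{f_1}}[f_2]\,\bbE^{\nu_{f_1}}[g].
\]

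To obtain this, I would apply Proposition \ref{prop:FKG} with the choice $F := f_1$. The hypothesis there requires $f_1$ to be log-super-modular and $\Theta_0$-integrable, both of which are given. The conclusion is that $\nu_{f_1} = \Theta_{f_1}$ is positively associated, i.e.\ satisfies \eqref{eq:FKG-main} for all pairs of coordinate-wise increasing nonnegative functions. Taking this pair to be $(f_2, g)$, which are both coordinate-wise increasing by hypothesis on $f_2$ and by our choice of test function $g$, yields exactly the displayed inequality above. Running the chain of equivalences backward gives $\bbE^{\Theta_2}[g] \geq \bbE^{\Theta_1}[g]$ for every coordinate-wise increasing $g$, and invoking Lemma \ref{lem:monotone-coupling-basic} once more concludes $\Theta_2 \succeq \Theta_1$.

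There is no genuine obstacle here; the lemma is essentially a bookkeeping consequence of the FKG inequality already quoted. The only small point worth checking is that the tilting by $f_2$ preserves the integrability needed to justify dividing by $\bbE^{\Theta_0}[f_1 f_2]$, which may require assuming $f_1 f_2$ is $\Theta_0$-integrable (implicit in the statement that $\Theta_2$ is a probability measure). One may also wish to apply the inequality to $g \wedge M$ and let $M \to \infty$ by monotone convergence, in case $g$ is unbounded, in order to remain within the finiteness hypotheses of Proposition \ref{prop:FKG}.
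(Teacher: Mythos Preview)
Your proof is correct and follows essentially the same approach as the paper: rewrite $\bbE^{\Theta_2}[g]$ as $\bbE^{\Theta_1}[f_2 g]/\bbE^{\Theta_1}[f_2]$, apply Proposition~\ref{prop:FKG} (with $F=f_1$) to get positive association of $f_2$ and $g$ under $\Theta_1$, and conclude via Lemma~\ref{lem:monotone-coupling-basic}. The paper's proof is just a more compressed version of what you wrote; your remarks on integrability and truncation by $g\wedge M$ are reasonable additional care.
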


\begin{proof}
    For any bounded coordinate-wise increasing function $g:\wh{\mathscr Y}_T\to\bbR_+$, Proposition~\ref{prop:FKG} implies
    \[
    \bbE^{\Theta_2}[g]
    =
    \frac{\bbE^{\Theta_1}[f_2 g]}{\bbE^{\Theta_1}[f_2]}
    \geq 
    \bbE^{\Theta_1}[g].
    \]
    Then Lemma \ref{lem:monotone-coupling-basic} implies the result.
\end{proof}

\begin{lemma}
\label{lem:GCI}
    Let $Q_1,Q_2$ be positive semi-definite finite rank quadratic forms on a separable Banach space equipped with a centered Gaussian measure $\mu$. 
    Then 
    \[
    \bbE^{\mu}[\e^{-Q_1-Q_2}]
    \geq 
    \bbE^{\mu}[\e^{-Q_1}]
    \bbE^{\mu}[\e^{-Q_2}].
    \]
\end{lemma}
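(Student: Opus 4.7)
The plan is to deduce this from Royen's \emph{Gaussian Correlation Inequality} (GCI), which asserts that for any centered Gaussian measure $\nu$ on $\R^n$ and any two symmetric convex sets $K_1,K_2\subset\R^n$, one has $\nu(K_1\cap K_2)\ge \nu(K_1)\nu(K_2)$. Since $Q_1, Q_2$ are finite-rank, I would first write $Q_i(x)=\sum_{j=1}^{n_i} \ell_{i,j}(x)^2$ for some continuous linear functionals $\ell_{i,j}$ and push $\mu$ forward under the map $x\mapsto (\ell_{i,j}(x))_{i,j}$ to a centered Gaussian measure $\nu$ on $\R^{n_1+n_2}$. This reduces the inequality to the finite-dimensional setting in which Royen's theorem applies directly.

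Next I would invoke the layer-cake identity
$$e^{-Q_i(x)}=\int_0^1 \mathbf{1}\{Q_i(x)\le -\log s\}\,ds,$$
and observe that for every $r\ge 0$ the sublevel set $K_i(r):=\{x : Q_i(x)\le r\}$ is symmetric (since $Q_i(-x)=Q_i(x)$) and convex (since $Q_i$ is a non-negative quadratic form, hence convex). By Fubini,
$$\E^\mu[e^{-Q_1-Q_2}]=\int_0^1\int_0^1 \mu\bigl(K_1(-\log s)\cap K_2(-\log t)\bigr)\,ds\,dt,$$
and the analogous single-integral decomposition gives $\E^\mu[e^{-Q_i}]=\int_0^1\mu(K_i(-\log s))\,ds$ for $i=1,2$.

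Applying Royen's GCI pointwise in $(s,t)$ to the integrand in the double integral above bounds it below by $\mu(K_1(-\log s))\,\mu(K_2(-\log t))$, after which the resulting double integral factorises as $\E^\mu[e^{-Q_1}]\,\E^\mu[e^{-Q_2}]$, yielding the claim. There is no substantive obstacle once Royen's theorem is invoked; the finite-rank hypothesis is used solely to reduce the problem to finite dimensions where the standard form of the GCI is directly applicable, and the convexity and symmetry of the sublevel sets of a positive semi-definite quadratic form are immediate.
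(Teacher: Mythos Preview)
Your proof is correct and matches one of the two approaches the paper explicitly offers: reducing to finite dimensions via the finite-rank hypothesis and then invoking the (functional form of the) Gaussian correlation inequality, which is precisely what your layer-cake plus Royen argument accomplishes. The paper also notes, as an alternative you did not pursue, that the inequality can be checked by a direct covariance/determinant computation, since both sides are explicit Gaussian integrals of the form $\det(I+2\Sigma A)^{-1/2}$.
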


\begin{proof}
    This follows by direct covariance computation or by the functional Gaussian correlation inequality. (The finite rank condition immediately reduces us to the finite-dimensional setting.)
\end{proof}

\begin{lemma}
\label{lem:W-LSM}
    The function (recall Section \ref{sec-duality})
    \begin{equation}\label{eq:weight-factor} 
    W(\hat\xi,\hat u):= \E^\P[\e^{-Q_{\hat\xi,\hat u}(\omega)}], 
\qquad Q_{\hat\xi,\hat u}(\omega):= \frac 12 \sum_{\{[s,t],u\}\in (\hat\xi,\hat u)} u^2 |\omega(t) - \omega(s)|^2,
\end{equation}
is log-super-modular. 
\end{lemma}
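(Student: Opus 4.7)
The strategy is to reduce log-super-modularity of $W$ directly to the functional Gaussian correlation inequality (Lemma \ref{lem:GCI}) via an appropriate Gaussian tilt. The key structural observation is that the map $(\hat\xi,\hat u)\mapsto Q_{\hat\xi,\hat u}(\omega)$ is \emph{additive} under disjoint union of labeled intervals. Given two configurations $(\hat\xi_1,\hat u_1)$ and $(\hat\xi_2,\hat u_2)$ in $\wh{\mathscr Y}_T$, write $I:=\hat\xi_1\wedge\hat\xi_2$, $D_1:=\hat\xi_1\setminus\hat\xi_2$, $D_2:=\hat\xi_2\setminus\hat\xi_1$, and let $A,B,C$ denote the positive semi-definite quadratic forms $Q_I$, $Q_{D_1}$, $Q_{D_2}$ respectively. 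By additivity,
\begin{equation*}
Q_{\hat\xi_1} = A+B,\quad Q_{\hat\xi_2}=A+C,\quad Q_{\hat\xi_1\vee\hat\xi_2}=A+B+C,\quad Q_{\hat\xi_1\wedge\hat\xi_2}=A,
\end{equation*}
so log-super-modularity of $W$ is equivalent to the scalar inequality
\begin{equation*}
\E^{\P}[\e^{-A-B-C}]\cdot\E^{\P}[\e^{-A}]\;\geq\;\E^{\P}[\e^{-A-B}]\cdot\E^{\P}[\e^{-A-C}].
\end{equation*}

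Next, I would introduce the tilted probability measure $\d\mu_A\propto \e^{-A}\,\d\P$ on path space. Since $\P$ is Gaussian on the space of increments and $A$ is a symmetric positive semi-definite quadratic form depending on only finitely many increments, $\mu_A$ is again a centered Gaussian measure on a separable Banach space (centering follows from the symmetry $\omega\mapsto -\omega$ which preserves both $\P$ and $A$). Dividing both sides of the displayed inequality by $\E^\P[\e^{-A}]^2$ and using the definition of $\mu_A$, the inequality becomes
\begin{equation*}
\E^{\mu_A}[\e^{-B-C}]\;\geq\;\E^{\mu_A}[\e^{-B}]\cdot\E^{\mu_A}[\e^{-C}].
\end{equation*}

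Finally, since $B$ and $C$ are positive semi-definite finite-rank quadratic forms on the separable Banach space carrying the centered Gaussian measure $\mu_A$, Lemma \ref{lem:GCI} applies directly and yields the required bound, completing the proof.

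\textbf{Anticipated obstacle.} The bookkeeping is trivial and the algebraic reduction is immediate; the only subtlety lies in justifying that Lemma \ref{lem:GCI} applies verbatim to $\mu_A$. Concretely, one must check that $\mu_A$ is indeed centered Gaussian (by the $\omega\mapsto-\omega$ symmetry, which preserves the increment law $\P$ and leaves the quadratic form $A$ invariant) and that the finite-rank hypothesis is satisfied (which holds since each $Q_{\hat\xi_i,\hat u_i}$ depends on only finitely many increment evaluations). No approximation or limit argument is needed, so this step is essentially a verification rather than a genuine obstacle.
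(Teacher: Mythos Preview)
Your proof is correct and follows essentially the same approach as the paper: decompose $Q_{\hat\xi_1},Q_{\hat\xi_2}$ additively over $\hat\xi_1\wedge\hat\xi_2$, $\hat\xi_1\setminus\hat\xi_2$, $\hat\xi_2\setminus\hat\xi_1$, tilt $\P$ by $\e^{-Q_{\hat\xi_1\wedge\hat\xi_2}}$ to obtain a centered Gaussian measure, and apply Lemma~\ref{lem:GCI}. Your write-up is in fact slightly more explicit than the paper's in justifying that the tilted measure is centered Gaussian and that the finite-rank hypothesis is met.
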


\begin{proof}
    For convenience, we write $\hat\xi$ for $(\hat\xi,\hat u)$. Then, given $\hat\xi,\hat\xi'\in \wh{\mathscr Y}_T$ we let
    \[
    Q_0=Q_{\hat\xi\cap \hat\xi'},
    \quad
    Q_1=Q_{\hat\xi}-Q_0 = Q_{\hat\xi\backslash \hat\xi'},
    \quad
    Q_2=Q_{\hat\xi'}-Q_0 = Q_{\hat\xi'\backslash \hat\xi}.
    \]
    Then it suffices to show that $\e^{-Q_1},\e^{-Q_2}$ are positively correlated under $\e^{-Q_0(\omega)} \P(\d\omega)$.
    This is a consequence of Lemma~\ref{lem:GCI} since the latter measure is Gaussian.
\end{proof}

\subsection{Duality and stochastic domination of $\wh\Theta_\alpha$ by FKG comparison of potentials}\label{subsec FKG comparison pot}

We also recall from Section \ref{sec-duality} that the key idea in \cite{MV18a} leading to the interval spaces introduced before was to expand $V(r)=1/|r|$ as a positive combination of Gaussian densities:
\begin{equation}
\label{eq:MV-identity}
    V(x)
    =
    \frac{1}{|x|}
    =
    \sqrt{\frac{2}{\pi}}
    \int_0^{\infty}
    \exp\lt(-\frac{u^2 x^2}{2}\rt)~\de u.
\end{equation}
This led to a coupling of the Polaron path measure to a tilted Poisson point process on intervals. 
We will use an extension of this coupling which allows general positive combinations of Gaussian densities, and argue stochastic monotonicity by applying FKG to the interval process.

Let $\gamma(\cdot)$ on $\wh{\mathscr Y}_T$ be a positive Borel measure, and $\{\gamma_{[s,t]}(\cdot)\}_{0\leq s<t\leq T}$ a choice of regular conditional distributions on $(0,\infty)$.
We will assume that the latter family is uniformly locally finite on $\bbR$, i.e.
\begin{equation}
\label{eq:locally-uniformly-finite}
    \sup_{0\leq s<t\leq T} \gamma_{[s,t]}([-A,A])<\infty,\quad\forall A\in (0,\infty).
\end{equation}
Then let
\begin{equation}
\label{eq:def-interaction-general}
    V_{[s,t],\gamma}
    :=
    \int
    \exp\bigg(-\frac{u^2 x^2}{2}\bigg)~ \gamma_{[s,t]}(\d u).
\end{equation}
We write simply $V_{\gamma}$ if all the $\gamma_{[s,t]}$ are identical. 
We assume throughout that $\gamma$ is such that the corresponding partition function 
\begin{equation}
\label{eq:partition-function}
    Z_{\gamma(\cdot),T}
    =
    \bbE^{\bbP}\bigg[
    \exp\bigg(
    \int_0^T
    \int_0^T 
    \e^{-|s-t|}
    V_{[s,t],\gamma}(|\omega_t-\omega_s|)
    ~\de s\, \de t
    \bigg)\bigg]
\end{equation}
is finite.
Hence for finiteness of $Z_{\gamma(\cdot),T}$, it suffices for the measure $\gamma(\cdot)$ to have uniformly bounded density.
Assuming this, we denote by $\wh\bbP_{\gamma(\cdot),T}$ the associated Polaron path measure.
For example, the constant density $\gamma(\cdot) \equiv\alpha \sqrt{2/\pi}$ recovers the Polaron measure $\wh\P_{\alpha,T}$.

Also, given the measure $\gamma(\cdot)$, there is a point process of weighted intervals $([s,t],u)$ for $0\leq a<b\leq T$ and $u\geq 0$. Let $\Theta_{\gamma(\cdot)}$ be law the Poisson process on $\wh{\mathscr Y}_T$ with intensity $\gamma(\cdot) \e^{-(t-s)}$.
To a collection $(\hat\xi,\hat u)=\{([s_i,t_i],u_i)\}_{i=1}^n$, using the same weight $W(\hat\xi,\hat u)$ from \eqref{eq:weight-factor}, we have, similar to \eqref{hatQ}:
\begin{equation}
\label{eq:interval-process-general}
    \wh\Theta_{\gamma(\cdot)}(\d\hat\xi\d\hat u)
    \propto
    W(\hat\xi,\hat u) \, \, \Theta_{\gamma(\cdot)}(\d\hat\xi \d \hat u).
\end{equation}

\begin{lemma}
\label{lem:MV-general-correspondence}  
    There exists a coupling of the laws of the Brownian increments $\omega$ and pairs $(\hat\xi,\hat u)$ with:
    \begin{enumerate}[label=(\alph*)]
\item \label{it:general-path-conditional-law}
        Conditional on $(\hat\xi,\hat u)$ sampled according to $\wh\Theta_{\gamma(\cdot),T}$,        the conditional distribution of the Brownian increments $\omega$ is the corresponding Gaussian measure 
        $\bP_{\hat\xi,\hat u}$.

        \item \label{it:path-interval-conditional-law}
        Conditional on $\omega$  sampled according to
    $\wh\bbP_{\gamma(\cdot),T}$,        
        the law of $(\hat\xi,\hat u)$ under
         $\wh\Theta_{\gamma(\cdot),T}$        is Poissonian with intensity $\sqrt{\frac{2}{\pi}}\exp\lt(-\frac{u^2 |\omega_t-\omega_s|^2}{2}-|t-s|\rt) \gamma(\cdot)$. 
    \end{enumerate}
\end{lemma}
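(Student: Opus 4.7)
The natural approach is to construct the claimed coupling as a single explicit joint law on $\Omega \times \wh{\mathscr Y}_T$ and then read off both conditional laws directly. Concretely, I would define
\[
    \d\mu(\omega, \hat\xi, \hat u) \;:=\; \frac{1}{Z'_{\gamma(\cdot),T}}\, \d\P(\omega)\, \d\Theta_{\gamma(\cdot)}(\hat\xi,\hat u)\, \exp\bigl(-Q_{\hat\xi,\hat u}(\omega)\bigr),
\]
with an $\omega$-independent normalization constant $Z'_{\gamma(\cdot),T}$ to be computed from the Poisson total-mass formula below. The plan is then to compute the two marginals of $\mu$ in the two different orders, which mirrors the original expansion argument of \cite{MV18a} recalled in Section \ref{sec-duality}, of which this is a direct extension to arbitrary positive intensity measures $\gamma(\cdot)$ on $\wh{\mathscr Y}_T$.

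For part (a), I would marginalize over $\omega$ first. By Fubini and the definition \eqref{eq:weight-factor} of $W(\hat\xi,\hat u)$, the $(\hat\xi,\hat u)$-marginal of $\mu$ is proportional to $W(\hat\xi,\hat u)\, \d\Theta_{\gamma(\cdot)}(\hat\xi,\hat u)$, which is precisely $\wh\Theta_{\gamma(\cdot),T}$ by \eqref{eq:interval-process-general}. The conditional law of $\omega$ given $(\hat\xi,\hat u)$ is then proportional to $\exp(-Q_{\hat\xi,\hat u}(\omega))\,\d\P(\omega)$, which coincides with the Gaussian measure $\bP_{\hat\xi,\hat u}$ by inspection (compare \eqref{Gauss-rep}).

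For part (b), I would instead marginalize over $(\hat\xi,\hat u)$ first. Since $Q_{\hat\xi,\hat u}(\omega)$ is a sum of the values $\tfrac12 u_i^2 |\omega(t_i)-\omega(s_i)|^2$ over the points of the configuration, the exponential Campbell formula for Poisson point processes yields
\[
\int \exp\bigl(-Q_{\hat\xi,\hat u}(\omega)\bigr)\, \d\Theta_{\gamma(\cdot)}(\hat\xi,\hat u)
\;=\; \exp\!\Bigl(\,\iiint \bigl(\e^{-u^2|\omega(t)-\omega(s)|^2/2}-1\bigr)\,\e^{-(t-s)}\, \gamma(\d s\,\d t\,\d u)\Bigr),
\]
with the integral running over $\{0 \le s<t\le T,\, u>0\}$. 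Marginalizing $u$ against $\gamma_{[s,t]}$ via \eqref{eq:def-interaction-general} and symmetrizing $\{s<t\}$ to $[0,T]^2$, the $\omega$-dependent part of this exponent equals the exponent in \eqref{eq:partition-function}, while the ``$-1$'' piece contributes an $\omega$-independent constant absorbed into $Z'_{\gamma(\cdot),T}$. Hence the $\omega$-marginal of $\mu$ is $\wh\bbP_{\gamma(\cdot),T}$. Conditionally on $\omega$, the density of $(\hat\xi,\hat u)$ under $\mu$ with respect to $\Theta_{\gamma(\cdot)}$ is $\exp(-Q_{\hat\xi,\hat u}(\omega)) = \prod_i \e^{-u_i^2|\omega(t_i)-\omega(s_i)|^2/2}$, which is a product over points of the configuration; reweighting a Poisson point process by such a multiplicative-in-points functional again yields a Poisson point process whose intensity is multiplied pointwise by the same factor, giving exactly the intensity stated in part (b).

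The only real obstacle is bookkeeping of $\omega$-independent constants: namely, pinning down $Z'_{\gamma(\cdot),T}$ so that $\mu$ is a genuine probability measure and both asserted marginals match on the nose, and carefully reconciling the integration over the cell $\{s<t\}$ (where $\Theta_{\gamma(\cdot)}$ lives) with the symmetric double integral over $[0,T]^2$ used in \eqref{eq:partition-function}. The local finiteness assumption \eqref{eq:locally-uniformly-finite} together with the standing finiteness of $Z_{\gamma(\cdot),T}$ ensures that all integrals involved converge and justifies the use of Fubini and the Campbell formula throughout.
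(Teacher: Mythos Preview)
Your proposal is correct and is essentially the approach the paper has in mind: the paper's own proof simply refers back to \cite{MV18a} (recalled in Section~\ref{sec-duality}) and to Theorem~\ref{thm duality}, whose proofs amount to exactly the joint-measure construction and two-way marginalization you spell out, with the exponential Campbell formula appearing there as the expansion/recollection of the Poisson weight. One small point of bookkeeping: the $\sqrt{2/\pi}$ in the stated intensity of part~(b) does not arise from the definitions \eqref{eq:weight-factor}--\eqref{eq:interval-process-general} in the generalized setting (it is a remnant of the original Polaron normalization in \eqref{hatQ}), so your computed intensity $\e^{-u^2|\omega_t-\omega_s|^2/2-(t-s)}\gamma(\cdot)$ is the one consistent with those definitions; this is harmless and falls under the constant-tracking caveat you already flagged.
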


\begin{proof}
    The proof is identical to \cite{MV18a} (recall Section \ref{sec-duality} and Theorem \ref{thm duality}).
\end{proof}

\begin{lemma}
\label{lem:MV-FKG}  
    If $\gamma_1(A)\leq \gamma_2(A)$ for all Borel $A\subseteq \wh{\mathscr Y}_T$, then $\wh\Theta_{\gamma_1(\cdot)}\preceq \wh\Theta_{\gamma_2(\cdot)}$ as measures on $\wh{\mathscr Y}_T$. 
\end{lemma}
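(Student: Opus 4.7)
The plan is to combine the FKG positive association of $\wh\Theta_{\gamma_2(\cdot)}$, provided by Proposition~\ref{prop:FKG}, with an explicit formula for the Radon--Nikodym derivative $R := d\wh\Theta_{\gamma_1(\cdot)}/d\wh\Theta_{\gamma_2(\cdot)}$. The key observation will be that $R$ is coordinate-wise \emph{decreasing}, so the claim will follow from the standard reverse correlation inequality for positively associated measures, together with Lemma~\ref{lem:monotone-coupling-basic}.

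First I would observe that $\gamma_1\leq \gamma_2$ on Borel sets forces $\gamma_1 \ll \gamma_2$, with Radon--Nikodym density $h := d\gamma_1/d\gamma_2 \in [0,1]$. By the standard Radon--Nikodym formula between two Poisson point processes (the common factor $e^{-(t-s)}$ drops out of the ratio of their intensities), for any configuration $\hat\xi$,
\[
\frac{d\Theta_{\gamma_1(\cdot)}}{d\Theta_{\gamma_2(\cdot)}}(\hat\xi)
= \exp\bigl((\gamma_2-\gamma_1)(\wh{\mathscr Y}_T)\bigr)\prod_{I\in\hat\xi} h(I).
\]
Since every factor $h(I)\in [0,1]$, this is a bounded coordinate-wise decreasing function of $\hat\xi$, and the same is therefore true of
\[
R(\hat\xi) \;=\; \frac{Z_{\gamma_2(\cdot),T}}{Z_{\gamma_1(\cdot),T}}\,\frac{d\Theta_{\gamma_1(\cdot)}}{d\Theta_{\gamma_2(\cdot)}}(\hat\xi),
\]
because the tilting factor $W(\hat\xi)$ cancels when one forms the ratio of the two tilted probability measures in \eqref{eq:interval-process-general}.

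Second, I would apply Proposition~\ref{prop:FKG} with reference measure $\Theta_0=\Theta_{\gamma_2(\cdot)}$ and tilting function $F=W$, which is log-super-modular by Lemma~\ref{lem:W-LSM}, to conclude that $\wh\Theta_{\gamma_2(\cdot)}$ is positively associated. Expressing any bounded coordinate-wise decreasing $R'\geq 0$ as $C-\widetilde{R}'$ with $\widetilde{R}' := C - R'$ bounded and coordinate-wise increasing, and then applying \eqref{eq:FKG-main} to the pair $(\widetilde{R}',g)$, yields the reverse correlation bound
\[
\bbE^{\wh\Theta_{\gamma_2(\cdot)}}[R'\,g]
\;\leq\;
\bbE^{\wh\Theta_{\gamma_2(\cdot)}}[R']\,\bbE^{\wh\Theta_{\gamma_2(\cdot)}}[g]
\]
for every bounded coordinate-wise increasing $g\geq 0$. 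Applied with $R'=R$, and using $\bbE^{\wh\Theta_{\gamma_2(\cdot)}}[R]=1$, this delivers
\[
\bbE^{\wh\Theta_{\gamma_1(\cdot)}}[g] \;=\; \bbE^{\wh\Theta_{\gamma_2(\cdot)}}[R\,g] \;\leq\; \bbE^{\wh\Theta_{\gamma_2(\cdot)}}[g],
\]
which by Lemma~\ref{lem:monotone-coupling-basic} is precisely the desired domination $\wh\Theta_{\gamma_1(\cdot)}\preceq \wh\Theta_{\gamma_2(\cdot)}$.

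There is no serious obstacle: the only substantive input not already recorded in the text is the Radon--Nikodym formula for Poisson point processes, which is a standard Laplace-functional computation and remains valid when $\gamma_1$ has strictly smaller support than $\gamma_2$ (set $h\equiv 0$ off $\mathrm{supp}(\gamma_1)$). Unbounded coordinate-wise increasing $g$ cause no trouble since they can be handled by truncation and monotone convergence before invoking Lemma~\ref{lem:monotone-coupling-basic}.
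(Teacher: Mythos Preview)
Your proof is correct and follows essentially the same route as the paper: compute the Radon--Nikodym derivative between the two underlying Poisson laws, note its monotonicity, and combine with the log-super-modularity of $W$ (Lemma~\ref{lem:W-LSM}) via the FKG inequality (Proposition~\ref{prop:FKG}) and the coupling characterization (Lemma~\ref{lem:monotone-coupling-basic}). The only cosmetic difference is packaging: the paper invokes Lemma~\ref{lem:monotone-coupling-useful} directly with $f_1=W$ and $f_2=d\Theta_{\gamma_2}/d\Theta_{\gamma_1}$ increasing, whereas you unwrap that lemma inline, working instead with the reciprocal $R=d\wh\Theta_{\gamma_1}/d\wh\Theta_{\gamma_2}$ (decreasing) and the standard complementation trick. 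Your direction has the minor advantage that $\gamma_1\ll\gamma_2$ is automatic from $\gamma_1\leq\gamma_2$, so the absolute continuity is never in doubt.
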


\begin{proof}
    The assumption implies the Radon--Nikodym derivative $\frac{\de \Theta_{\gamma_2}}{\de \Theta_{\gamma_1}}$
    is an increasing function on $\wh{\mathscr Y}_T$. 
    Combining Lemma~\ref{lem:W-LSM} and Lemma~\ref{lem:monotone-coupling-useful} yields the result.
\end{proof}

\begin{cor}
\label{cor:compare}
    Suppose $\gamma_1(A)\leq \gamma_2(A)$ for all Borel $A\subseteq (0,\infty)$, and let $F:C([0,T];\bbR^3)\to \bbR$ be a symmetric quasi-convex function, such as $F(\omega_{[0,T]})=|\omega_T - \omega_0|^2$. Then 
    \[
    \bbE^{\wh\bbP_{\gamma_1(\cdot),T}}[F]
    \geq 
    \bbE^{\wh\bbP_{\gamma_2(\cdot),T}}[F].
    \]
\end{cor}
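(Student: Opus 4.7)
My plan is to combine three ingredients: stochastic monotonicity of the interval processes from Lemma~\ref{lem:MV-FKG}, the path/interval mixture representation of Lemma~\ref{lem:MV-general-correspondence}\ref{it:general-path-conditional-law}, and Anderson's comparison inequality for centered Gaussian measures. To begin, Lemma~\ref{lem:MV-FKG} together with Lemma~\ref{lem:monotone-coupling-basic} furnishes a coupling $\Pi$ of $\wh\Theta_{\gamma_1(\cdot)}$ and $\wh\Theta_{\gamma_2(\cdot)}$ on $\wh{\mathscr Y}_T\times \wh{\mathscr Y}_T$ under which $(\hat\xi^{(1)},\hat u^{(1)})\subseteq(\hat\xi^{(2)},\hat u^{(2)})$ holds $\Pi$-almost surely.

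By Lemma~\ref{lem:MV-general-correspondence}\ref{it:general-path-conditional-law}, both path measures admit the mixture representation $\wh\bbP_{\gamma_i(\cdot),T}(\cdot)=\int \bP_{\hat\xi,\hat u}(\cdot)\,\wh\Theta_{\gamma_i(\cdot)}(\de\hat\xi\,\de\hat u)$ for $i=1,2$, so it suffices to establish the pointwise comparison
\begin{equation}\label{eq:plan-gaussian-comp}
\E^{\bP_{\hat\xi^{(1)},\hat u^{(1)}}}[F]\;\ge\;\E^{\bP_{\hat\xi^{(2)},\hat u^{(2)}}}[F]
\end{equation}
for every coupled realization and then to integrate against $\Pi$. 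For \eqref{eq:plan-gaussian-comp}, the explicit formula~\eqref{Gauss-rep} gives
\[
\frac{\de\bP_{\hat\xi^{(2)},\hat u^{(2)}}}{\de\bP_{\hat\xi^{(1)},\hat u^{(1)}}}\;\propto\;\e^{-Q(\omega)},\qquad Q(\omega):=\tfrac12\sum_{i\in\hat\xi^{(2)}\setminus\hat\xi^{(1)}}u_i^2|\omega(t_i)-\omega(s_i)|^2,
\]
which is a non-negative, finite-rank quadratic form in the Brownian increments. Equivalently, applying the variational formula~\eqref{eq-duality-tilted-wiener} to arbitrary linear functionals of $\omega$ shows that inserting the extra quadratic penalties of $Q$ strictly decreases each such supremum, so the covariance operators satisfy $\Sigma_{\hat\xi^{(2)},\hat u^{(2)}}\preceq \Sigma_{\hat\xi^{(1)},\hat u^{(1)}}$ in the Loewner order.

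Since $F$ is symmetric and quasi-convex, each sub-level set $\{F\le c\}$ is a symmetric convex subset of $C([-T,T];\R^3)$. Anderson's inequality for centered Gaussian measures on separable Banach spaces (smaller covariance in the Loewner order gives larger mass on symmetric convex sets) then yields $\bP_{\hat\xi^{(1)},\hat u^{(1)}}(\{F\le c\})\le \bP_{\hat\xi^{(2)},\hat u^{(2)}}(\{F\le c\})$ for every $c\ge 0$; passing to complements and integrating in $c$ via the layer-cake formula gives~\eqref{eq:plan-gaussian-comp}, and integrating against $\Pi$ concludes the proof.

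The main obstacle I anticipate is the precise infinite-dimensional formulation of the covariance comparison together with the invocation of Anderson's inequality on path space. A self-contained alternative avoids Anderson entirely: inequality~\eqref{eq:plan-gaussian-comp} is equivalent to the negative correlation
\[
\E^{\bP_{\hat\xi^{(1)},\hat u^{(1)}}}[F\,\e^{-Q}]\;\le\;\E^{\bP_{\hat\xi^{(1)},\hat u^{(1)}}}[F]\cdot\E^{\bP_{\hat\xi^{(1)},\hat u^{(1)}}}[\e^{-Q}],
\]
which in turn follows from the Gaussian correlation inequality (of which Lemma~\ref{lem:GCI} is the elementary case) applied to the symmetric convex sub-level sets of $F$ and $Q$, combined with a layer-cake decomposition of $\e^{-Q}$.
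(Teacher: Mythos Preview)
Your proposal is correct and follows essentially the same approach as the paper: reduce via stochastic domination of the interval processes (Lemma~\ref{lem:MV-FKG}) to the pointwise Gaussian comparison $\bbE^{\bP_{\hat\xi_1}}[F]\geq \bbE^{\bP_{\hat\xi_2}}[F]$ for $\hat\xi_1\subseteq\hat\xi_2$, and then invoke the Gaussian correlation inequality. The paper's proof is the terse two-sentence version of your ``alternative'' route; your primary route via Anderson's inequality (Loewner comparison of covariances) is a closely related and equally valid variant, and you correctly flag the infinite-dimensional subtlety that the paper glosses over here (and only addresses carefully in the appendix via finite-dimensional approximation).
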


\begin{proof}
    By the definition of stochastic domination, it suffices to show that if $\hat\xi_1\subseteq \hat\xi_2$, then 
    \[
    \bbE^{\bP_{\hat\xi_1}}[F]
    \geq 
    \bbE^{\bP_{\hat\xi_2}}[F].
    \]
    This follows by the Gaussian correlation inequality.
\end{proof}

Recall from Lemma \ref{lemma variance formula} the formula $\sigma^2(\alpha)=\lim_{T\to\infty} \sigma^2_{\alpha,T}(\hat\xi,\hat u)$ for the CLT variance from \cite{MV18a} and that the effective mass satisfies $m(\alpha)=\frac 1 {\sigma^2(\alpha)}$. The above ideas then immediately give the monotonicity and sub-additivity we mentioned previously.

\begin{cor}
\label{cor:m-eff-monotone}
    The effective mass $m(\alpha)$ is strictly increasing in $\alpha$.
\end{cor}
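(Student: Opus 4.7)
Since $m(\alpha) = 1/\sigma^2(\alpha)$, strict monotonicity of $m$ is equivalent to strict decrease of $\sigma^2$. The non-strict direction $\sigma^2(\alpha_1) \geq \sigma^2(\alpha_2)$ for $0<\alpha_1<\alpha_2$ is immediate from Corollary~\ref{cor:compare}: take the constant intensities $\gamma_i(\cdot) \equiv \alpha_i$ (so that $\gamma_1 \leq \gamma_2$) and the symmetric quasi-convex functional $F(\omega) = |\omega(T)-\omega(-T)|^2$; dividing by $2T$ and letting $T\to\infty$ via \eqref{eq-lim-var-formula1} yields the inequality.

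For the strict inequality, the plan is to unwind the FKG argument underlying Corollary~\ref{cor:compare}. By Lemma~\ref{lem:MV-FKG}, fix a monotone coupling $(\hat\xi^{(1)},\hat u^{(1)}) \preceq (\hat\xi^{(2)},\hat u^{(2)})$ of $\widehat\Theta_{\alpha_1,T}$ and $\widehat\Theta_{\alpha_2,T}$ on $\wh{\mathscr Y}_T$, so that $\hat\xi^{(1)} \subseteq \hat\xi^{(2)}$ almost surely. Inserting the $\alpha_2$-maximizer $f^*_2$ of the dual formula \eqref{eq-sigma-alpha-T} as a sub-optimal test function in the $\alpha_1$-supremum gives the pointwise bound
\[
\sigma^2_{\alpha_1,T}(\hat\xi^{(1)},\hat u^{(1)}) - \sigma^2_{\alpha_2,T}(\hat\xi^{(2)},\hat u^{(2)}) \;\geq\; 3 \sum_{([s,t],u) \in \hat\xi^{(2)}\setminus \hat\xi^{(1)}} u^2\,|f^*_2(t)-f^*_2(s)|^2 \;\geq\; 0,
\]
which is strictly positive whenever even one excess interval has $f^*_2$ non-constant at its endpoints. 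After taking expectations under the coupling and invoking Lemma~\ref{lemma variance formula}, the strict monotonicity of $\sigma^2$ reduces to showing that the right-hand side, averaged and scaled by $1/(2T)$, stays bounded below by a positive constant as $T\to\infty$.

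Three ingredients enter the lower bound. First, by differentiating $\log Z_{\alpha,T}$ (as in the proof of Corollary~\ref{cor:consequences-number-intervals}(i)), $\E^{\widehat\Theta_{\alpha,T}}[n_T(\hat\xi)]/(2T) \to \alpha\,g'(\alpha)$; since $g$ is convex and $g'>0$, the map $\alpha \mapsto \alpha g'(\alpha)$ is strictly increasing, so under the monotone coupling the expected excess $\E[n_T(\hat\xi^{(2)})-n_T(\hat\xi^{(1)})]$ grows linearly in $T$. Second, stationarity and ergodicity of $\widehat\Theta_\alpha$ (consequences of \eqref{ergodic-P-alpha} and Theorem~\ref{thm duality}) guarantee that these excess intervals are roughly uniformly distributed across $[-T,T]$. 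Third, one needs a typical lower bound on the per-interval contribution $u^2|f^*_2(t)-f^*_2(s)|^2$, using that the weights $u$ of such intervals concentrate near scale $\alpha$ (Corollary~\ref{cor:consequences-number-intervals}(iii)) and that $f^*_2$ has non-trivial increments at short time scales.

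The main obstacle is the third ingredient: one must rule out the degenerate scenario in which the optimizer $f^*_2$ is accidentally flat at the endpoints of most excess intervals. A natural remedy is to use sub-additivity of the variance (Corollary~\ref{cor:subadditivity}) to localize the question to a mesoscopic window, analyze $f^*_2$ there via its Euler--Lagrange equations to obtain a deterministic lower bound on its variation scale, and then average against the excess intervals to produce the required $\Omega(T)$ expected gap.
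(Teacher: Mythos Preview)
Your non-strict argument via Corollary~\ref{cor:compare} with $F(\omega)=|\omega(T)-\omega(-T)|^2$ is exactly the paper's route: its entire proof is two sentences citing Lemma~\ref{lem:MV-FKG} and Corollary~\ref{cor:compare}, and those cited results only yield $\geq$, so the paper's treatment of the \emph{strict} inequality is no more detailed than yours. Your dual-formula attempt to make strictness explicit therefore goes beyond what the paper actually writes down.

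That said, your strictness argument has two problems. First, a normalization slip: Lemma~\ref{lemma variance formula} already gives $\sigma^2_{\alpha,T}\to\sigma^2(\alpha)$ in $L^1(\widehat\Theta_\alpha)$ with no extra $1/(2T)$, so the correct target is $\liminf_T \E[\text{RHS}]>0$, not $\liminf_T \E[\text{RHS}]/(2T)>0$. This is not cosmetic: the scaling first-order condition at the optimum forces $f^*_2(T)-f^*_2(-T)\asymp\sigma^2_{\alpha_2,T}\sqrt{T}$, so the typical slope of $f^*_2$ is $O(T^{-1/2})$ and each summand $u^2|f^*_2(t)-f^*_2(s)|^2$ is only $O(1/T)$; with $\sim cT$ excess intervals the sum is $O(1)$, not $O(T)$, so the ``$\Omega(T)$ expected gap'' you aim for is unattainable (though the corrected $\Omega(1)$ target is consistent with this heuristic). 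Second, your ingredient~3 is indeed the crux and remains open: $f^*_2$ depends on all of $\hat\xi^{(2)}$, which already contains the excess intervals, so there is adversarial correlation between the optimizer and the summation index; sub-additivity (Corollary~\ref{cor:subadditivity}) upper-bounds $\sigma^2_T$ rather than lower-bounding the gap, so it is not clear how your proposed localization helps. Noting that $f^*_2(t)-f^*_2(s)$ equals, up to a constant, $\mathrm{Cov}^{\mathbf P_{\hat\xi^{(2)}}}(\omega_t-\omega_s,\,\omega_T-\omega_{-T})$ reformulates the question as whether the Gaussian correlation inequality behind Corollary~\ref{cor:compare} is quantitatively strict uniformly in $T$---a point that neither your proposal nor the paper's two-line proof establishes.
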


\begin{proof}
    The uniform intensity $\gamma(\cdot)\equiv \alpha$ on $\wh{\mathscr Y}_T$ is monotone in $\alpha$, so $\wh\Theta_{\alpha}$ is strictly stochastically increasing by Lemma~\ref{lem:MV-FKG}.
    Corollary~\ref{cor:compare} implies the claim.
\end{proof}

\begin{cor}
\label{cor:subadditivity}
    {For any $t, \alpha>0$, let $\tilde\sigma^2_{\alpha,t}(\hat\xi,\hat u): =t \sigma^2_{\alpha,t}(\hat\xi, \hat u)$ be the variance corresponding to the non-rescaled increment.} Then, if $T=T_1+T_2$, then for any $\alpha>0$, 
    \[
    \bbE^{\wh\Theta_\alpha}[\tilde\sigma_{\alpha,T_1}^2]
    +
    \bbE^{\wh\Theta_\alpha}[\tilde\sigma_{\alpha,T_2}^2]
    \geq 
    \bbE^{\wh\Theta_\alpha}[\tilde\sigma_{\alpha,T}^2].
    \]
\end{cor}

\begin{proof}
    Let $T=T_1+T_2$.
    Since $\alpha\geq \alpha\cdot (\1_{s,t\leq T_1}+\1_{s,t\geq T_1})$ for all $s,t$, the result follows similarly to Corollary~\ref{cor:m-eff-monotone}.
\end{proof}

\section{Estimating the CLT variance $\sigma^2(\alpha)$: Proof of Theorem \ref{thm}}\label{sec est variance} 
Recall from \eqref{eq-sigma-alpha-T} that 
\begin{equation}\label{variance a.s. ergodic}
\begin{aligned}
 &\sigma_{\alpha,T}^2(\hat{\xi},\hat u)=3\sup_{f\in H_T}\bigg[2~\frac{f(T)-f(-T)}{\sqrt{2T}}-\int_{-T}^T f'(t)^2\d t-\sum_{-T\leq s_i<t_i\leq T} u_i^2|f(t_i)-f(s_i)|^2\bigg], \\
 & \mbox{and}\qquad\qquad \sigma^2(\alpha)= \lim_{T\to\infty} \sigma^2_{\alpha,T}(\cdot,\cdot) \qquad \mbox{a.s. and $L^1(\wh\Theta_\alpha)$.}
 \end{aligned}
 \end{equation}

Our goal is to show the following result, which will imply Theorem \ref{thm}: 
\begin{theorem}\label{thm-main-estimate}
There is a constant $\overline C\in (0,\infty)$ 
such that for any $\alpha>1$, 
\begin{equation}\label{eq-alpha4-bound-pos-prob}
    \limsup_{T\to\infty}\alpha^4 \sigma_{\alpha,T}^2(\hat{\xi},\hat u)\leq \frac 1 {\overline C} \qquad\qquad \widehat{\Theta}_\alpha\text{-}a.s.
  \end{equation}
  Consequently, for any $\alpha>1$, 
    \begin{equation}\label{eq-limiting-var-up-bound}
     \sigma^2(\alpha) \leq \frac 1 {\overline C\alpha^4 }, \qquad\mbox{and equivalently,}\quad
   \frac{m(\alpha)}{\alpha^4}\geq \overline C.
  \end{equation}   
  \end{theorem}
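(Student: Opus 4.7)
The strategy follows Steps 4 and 5 of Section \ref{sec-proof-outline}. By Lemma \ref{lemma-square-root} (the reduction announced there), the desired bound \eqref{eq-alpha4-bound-pos-prob} follows as soon as we establish the pointwise ``linear-versus-quadratic'' inequality \eqref{lin quad}, namely
\begin{equation*}
2\,\frac{f(T)-f(-T)}{\sqrt{2T}}\;\leq\; \frac{1}{\sqrt{\overline C}\,\alpha^{2}}\,\sqrt{\int_{-T}^{T}\dot f^{2}(t)\,dt+\sum_{i=1}^{n_T(\hat\xi)}u_i^{2}\,|f(t_i)-f(s_i)|^{2}}
\end{equation*}
for every $f\in H_T$, with $\widehat\Theta_\alpha$-probability tending to one as $\alpha\to\infty$. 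Squaring this and inserting into the variational representation of $\sigma^2_{\alpha,T}$ immediately gives $\sigma^2_{\alpha,T}\leq 1/(\overline C\alpha^{4})$ and hence Theorem \ref{thm-main-estimate} (and therefore Theorem \ref{thm}).

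To exploit the sub-additivity of Corollary \ref{cor:subadditivity}, I would fix a mesoscopic window size $T_0=T_0(\alpha)$ with $\alpha\ll T_0\ll T$, partition $[-T,T]$ into consecutive windows of length $T_0$, and reduce the global estimate to the corresponding linear-quadratic inequality on a single window, modulo a crude control of the exceptional bad event. Thus it suffices to prove \eqref{lin quad} on $[0,T_0]$ on a set of realizations of large $\widehat\Theta_\alpha$-probability.

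The heart of the argument is the construction of Proposition \ref{prop-second-mom}, delivering, on a set $E_\alpha^c$ of $\widehat\P_\alpha$-probability at least $1-\e^{-\alpha}$, a collection of $\delta\alpha^{2}$ \emph{steps} $\eta_L=\{[s_k^{\ssup L},t_k^{\ssup L}]\}_{k=1}^{T_0}$, $L=1,\dots,\delta\alpha^{2}$, of pairwise disjoint standard intervals inside $[0,T_0]$, with $t_k^{\ssup L}-s_k^{\ssup L}\approx 1$ and $u_k^{\ssup L}\approx\alpha$, the steps themselves being mutually disjoint. Given such a configuration, I would use the telescoping identity
\begin{equation*}
f(T_0)-f(0)\;=\;\sum_{k=1}^{T_0}\big(f(t_k^{\ssup L})-f(s_k^{\ssup L})\big)\;+\;\sum_{k=0}^{T_0}\big(f(s_{k+1}^{\ssup L})-f(t_k^{\ssup L})\big),
\end{equation*}
average it over $L=1,\dots,\delta\alpha^{2}$, and apply Cauchy--Schwarz to each piece. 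The first sum is controlled by $C\sqrt{T_0}/\alpha$ times $\sqrt{\sum_i u_i^{2}|f(t_i)-f(s_i)|^{2}}$, using $u_k^{\ssup L}\approx\alpha$ and the fact that each interval is used in at most one step. The second (``gap'') sum can be rewritten, via the counting function $U_{\delta\alpha^{2}}(x)=\sum_{L,k}\1_{[t_k^{\ssup L},s_{k+1}^{\ssup L}]}(x)$, as an inner product with $\dot f$, giving after Cauchy--Schwarz the bound $(\delta\alpha^{2})^{-1}\sqrt{\int_0^{T_0} U_{\delta\alpha^{2}}^{2}\,dx}\,\sqrt{\int_0^{T_0}\dot f^{2}\,dt}$. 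Property (iii) of Proposition \ref{prop-second-mom} yields $\int_0^{T_0} U_{\delta\alpha^{2}}^{2}\,dx\lesssim \overline A\,T_0$ in the relevant averaged sense, which turns the gap contribution into $\lesssim \sqrt{T_0}/\alpha^{2}\cdot\sqrt{\int\dot f^{2}\,dt}$. Combining the two pieces and dividing by $\sqrt{T_0/2}$ produces \eqref{lin quad} on the window with the desired constant; this is precisely what is packaged in Corollary \ref{cor-second-mom}.

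The main obstacle is, unsurprisingly, the construction of the steps and the verification of (i)--(iii) in Proposition \ref{prop-second-mom}, especially the uniform second-moment bound (iii) on the vacant-time counting function. This rests on three independent ingredients developed earlier in the paper: first, the quantified tightness of $\alpha|\omega(t)-\omega(s)|$ under $\widehat\P_\alpha$ with failure probability $\e^{-\widetilde C(\eps)\alpha^{2}}$ (Proposition \ref{cor:polaron-many-intervals}), obtained by FKG comparison (Section \ref{subsec FKG comparison pot}) with a truncated interval process together with the duality of Theorem \ref{thm duality}; second, the passage from ``good'' to ``very good'' points of $[0,T_0]$ via the Hardy--Littlewood maximal inequality (Section \ref{sec-good-verygood}), guaranteeing a large density of points from which one can launch new chains; and third, the iterative chain-construction of Sections \ref{section-long-paths}--\ref{sec proof prop second moment}, in which gaps produced by previously selected steps are progressively absorbed into the ``bad'' set while the ``very good'' set remains of density close to one, so that exponential waiting times for opening fresh intervals are uniformly controlled. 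Once these three components are in place, the averaging argument above converts them into the claimed $\alpha^{4}$ lower bound on $m(\alpha)$.
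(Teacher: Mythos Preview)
Your proposal is correct and follows essentially the same route as the paper: reduce via Lemma~\ref{lemma-square-root} to the linear--quadratic inequality \eqref{lin quad}, use sub-additivity (Corollary~\ref{cor:subadditivity}) to localize to a window $[0,T_0]$ with $T_0=\alpha^{10}$, invoke Proposition~\ref{prop-second-mom} for the construction of the $\delta\alpha^2$ disjoint steps of standard intervals, and then run the telescoping/Cauchy--Schwarz argument of Corollary~\ref{cor-second-mom}. There is one arithmetic slip: your bound on the ``first sum'' should be $C\sqrt{T_0}/\alpha^{2}$, not $C\sqrt{T_0}/\alpha$ (the extra factor of $\alpha$ in the denominator comes from the averaging over $\delta\alpha^{2}$ steps combined with the Cauchy--Schwarz count of $\delta\alpha^{2}T_0$ terms); with this correction both pieces scale as $\sqrt{T_0}/\alpha^{2}$, which is exactly what is needed after dividing by $\sqrt{T_0}$.
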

Our first step is the following observation, stated as
\begin{lemma}\label{lemma-square-root}
  If for every $f\in H_T$ it holds that 
  \begin{equation}\label{eq-eq1}
    \frac{f(T)-f(-T)}{\sqrt{2T}}\leq \frac{1}{\alpha^2 \sqrt{\overline C}}\sqrt{\int_{-T}^Tf'^2(t)\d t+\sum_{-T\leq s_i<t_i\leq T}u_i^2|f(t_i)-f(s_i)|^2} \qquad\mbox{for some $\overline C>0$,}
  \end{equation}
   then 
   $$
   \sigma_{\alpha,T}^2(\hat{\xi},\hat u)\leq \frac{1}{\overline C\alpha^4}.
   $$
\end{lemma}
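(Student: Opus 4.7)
The plan is to reduce Lemma~\ref{lemma-square-root} to a one-dimensional quadratic optimization by substituting the bound \eqref{eq-eq1} into the variational formula \eqref{eq-sigma-alpha-T}. The key structural feature is that $\sigma^2_{\alpha,T}$ is the supremum of a ``linear minus quadratic'' functional of $f$, so once a Cauchy--Schwarz-type control on the linear part by the square root of the quadratic part is in hand, the estimate follows by completing the square.

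Concretely, for each fixed realization $(\hat\xi,\hat u)$ I would abbreviate
\[
L(f) := \frac{f(T)-f(-T)}{\sqrt{2T}}, \qquad Q(f) := \int_{-T}^T \dot f(t)^2\, dt + \sum_{-T\leq s_i<t_i\leq T} u_i^2\, |f(t_i)-f(s_i)|^2,
\]
so that, by \eqref{eq-sigma-alpha-T}, $\sigma^2_{\alpha,T}(\hat\xi,\hat u) = 3\sup_{f\in H_T}[2L(f)-Q(f)]$, while hypothesis \eqref{eq-eq1} reads $L(f)\leq Q(f)^{1/2}/(\alpha^2\sqrt{\overline C})$. Substituting this into the variational functional, for every $f\in H_T$ with $x:=Q(f)^{1/2}\geq 0$,
\[
2L(f)-Q(f) \;\leq\; \frac{2x}{\alpha^2\sqrt{\overline C}} - x^2 \;\leq\; \sup_{y\geq 0}\Bigl[\frac{2y}{\alpha^2\sqrt{\overline C}} - y^2\Bigr] \;=\; \frac{1}{\alpha^4\,\overline C},
\]
where the last equality is completion of the square (maximum attained at $y=1/(\alpha^2\sqrt{\overline C})$). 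Taking the supremum over $f\in H_T$ gives $\sigma^2_{\alpha,T}(\hat\xi,\hat u)\leq 3/(\alpha^4\overline C)$; the harmless factor $3$ is absorbed by relabeling $\overline C$.

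Since this is purely algebraic and works pointwise in $(\hat\xi,\hat u)$, Lemma~\ref{lemma-square-root} presents no genuine obstacle in its own right. The real work is verifying the hypothesis \eqref{eq-eq1} on a set of large $\widehat\Theta_\alpha$-probability---this is exactly the content of Steps 3--5 of the outline and relies on the strong-coupling description of $\widehat\Theta_\alpha$ in terms of the Pekar process together with the construction of disjoint ``steps'' of standard intervals that certify \eqref{eq-eq1}. After that is established, the present lemma converts it into the desired bound $\sigma^2(\alpha)\leq 1/(\overline C\alpha^4)$, and hence Theorem~\ref{thm}, by combining with the sub-additivity of Corollary~\ref{cor:subadditivity} on long time intervals.
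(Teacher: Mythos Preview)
Your argument is correct and is essentially identical to the paper's own proof: both substitute the hypothesis into the variational formula to get $2L(f)-Q(f)\leq \tfrac{2}{\alpha^2\sqrt{\overline C}}\sqrt{Q(f)}-Q(f)$ and then complete the square. Your explicit remark that the factor $3$ is absorbed into $\overline C$ is in fact slightly more careful than the paper, which silently drops it in the displayed proof.
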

\begin{proof}
Let \begin{equation*}
  Q_T(f):=Q_T(f, \hat\xi,\hat u)= \int_{-T}^Tf'^2(t)\d t+\sum_{-T\leq s_i<t_i\leq T}u_i^2|f(t_i)-f(s_i)|^2.
\end{equation*}
If \eqref{eq-eq1} holds,   
then \begin{align*}
  &\sigma_T^2(\hat{\xi},\hat u)=\sup_{f\in H_T}\bigg[2~\frac{f(T)-f(-T)}{\sqrt{2T}}-\int_{-T}^T f'(t)^2\d t-\sum_{-T\leq s_i<t_i\leq T} u_i^2|f(t_i)-f(s_i)|^2\bigg]\\
  &\leq \sup_{f\in H_T}\Big[\frac{2}{\alpha^2\sqrt{\overline C}}\sqrt{Q_T(f)}-Q_T(f)\Big]
=
    \sup_{f\in H_T}\bigg[\frac{1}{\overline C\alpha^4}-\bigg(\sqrt{Q_T(f)}-\frac{1}{\alpha^2\sqrt{\overline C}}\bigg)^2\bigg]
  \leq \frac{1}{\overline C\alpha^4}.
 \qedhere
\end{align*}
\end{proof}

\subsection{A sufficient condition for \eqref{eq-eq1} to hold.}\label{sec sufficient}
We now provide a criterion for \eqref{eq-eq1} to hold. 
Let us first describe the main idea. For technical reasons, instead of working on $[-T,T]$, we will work on the time horizon $[0,T_0]$ for some $T_0=T_0(\alpha)\gg \alpha$ sufficiently large (see Remark \ref{remark T0} below).

First, recall from Theorem \ref{thm duality} that given a realization of Brownian increments $\omega$, the conditional law $\widehat\Theta_{\alpha,\omega}$ of the point process $(\hat\xi,\hat u)$ is Poissonian with random intensity $\Lambda(\alpha,\omega)$. That is, if we fix a realization $\omega$, this determines a collection of intervals $(\hat\xi,\hat u)$. 

Next, we recall from Corollary \ref{cor:consequences-number-intervals} that in the interval $[0,T_0]$, under $\widehat\Theta_\alpha$, on average there are $(\int_{\R^3}|\nabla\psi_0(x)|^2 \d x) \alpha^2 T_0$-many intervals $[s,t]$, the lengths of all these intervals are exponentially distributed (i.e., the lengths of all these intervals are of order one, i.e., these remain uniformly bounded in $\alpha$) and the corresponding $u$ (attached to each interval $[s,t]$) satisfies $A\alpha \leq u \leq B\alpha$. We will therefore refer to these intervals as {\it standard} and restrict only to these standard intervals from our point process $(\hat\xi,\hat u)$ realization. 

Using these standard intervals, we will construct $\delta \alpha^2$ {\it runs}, or {\it paths} of intervals (for some absolute constant $\delta\in (0,1)$ to be chosen later), where each run will consist of $T_0$ many standard disjoint intervals. More concretely, the $L$-th round, or path, will be defined by a collection 
\begin{align}
\notag 
&\eta_L:= \bigg\{\big[s_{1}^{\ssup L}, t_{1}^{\ssup L}\big], \big[s_{2}^{\ssup L}, t_{2}^{\ssup L}\big], \dots, \big[s_{T_0}^{\ssup L}, t_{T_0}^{\ssup L}\big]:    \\
\label{def rounds}
&\qquad\qquad\forall k=0,\dots, T_0 \,\,\,\,  2(k-1) + \frac 13 \leq s^{\ssup L}_k \leq 2(k-1)+1\,\,\mbox{and}\\
&\qquad\qquad 2k + \frac 13 \leq t^{\ssup L}_k \leq 2k + \frac 2 3, \quad A \alpha \leq u^{\ssup L}_k\leq B\alpha\bigg\} 
\qquad\mbox{for}\qquad  L=1,\dots, \delta \alpha^2.
\end{align}
 of {\it disjoint} intervals in $[0,T_0]$, and at each step $L$, we will work with a {\it new} set of intervals from $(\hat\xi,\hat u)$, i.e., $\eta_\ell \cap \eta_{\ell^\prime}=\emptyset$ for all $\ell \ne \ell^\prime$ -- we refer to Section \ref{section-long-paths} for details



The {\it vacant period} $V_L$ of the $L$-th round is denoted by the union of the ``gaps":
\begin{equation}\label{def Vj}
V_L= \big[0,s^{\ssup L}_{1}\big) \,\bigcup \, \big(t^{\ssup L}_{1}, s^{\ssup L}_{2}\big) \, \bigcup \, \big(t^{\ssup L}_{2}, s^{\ssup L}_{3}\big) \,\bigcup \dots \bigcup \big(t^{\ssup L}_{T_0}, T_0\big], \qquad L=1,\dots, \delta \alpha^2. 
\end{equation}

We choose a point $x \in [-T, T]$ and define the random variable 
\begin{equation}\label{def-U}
U_N(x)=U_N(x,\hat\xi,\alpha)= \sum_{L=1}^N \1_{x \in V_L}
\end{equation}
That is, $U_N(x)$ is the number of times the point $x\in [0, T_0]$ is caught in a vacant period $V_L$ if we consider $N$ rounds. Note that, for any $J=1,\dots, \delta \alpha^2$, conditional on the event $\{U_N(x)= J\}$, the count $U_{N+1}(x)$ at the immediate next step either goes up by one and becomes $J+1$, or the count remains the same at $J$. In the first part of Corollary \ref{cor-second-mom} below, we will show that if 
\[
\limsup_{\alpha\to\infty} 
\frac 1{T_0}\E^{\wh\Theta_{\alpha,\omega}}
\Big[\int_0^{T_0} U_{\delta \alpha^2}(x)^2 \d x\Big] < \infty
\]
then the criterion \eqref{eq-eq1} holds. In fact, in Lemma \ref{lem:alg-succeeds}, we will prove the following.

\begin{prop}\label{prop-second-mom}
Let $T_0=T_0(\alpha)=\alpha^{10}$. Then there is an event $E_\alpha$ of Brownian increments on $[0,T_0]$ with $\wh\P_\alpha[E_\alpha] \leq \e^{-\alpha}$, such that for any $\omega \in E_\alpha^c$ and some $\delta \in (0,1)$ not depending on $\alpha$, it is possible to construct the rounds of intervals according to \eqref{def rounds} so that, 
\begin{equation}\label{eq-prop-main}
\limsup_{\alpha\to\infty} \frac 1 {T_0} \E^{\wh\Theta_{\alpha,\omega}}\bigg[\int_{0}^{T_0} U_{\delta \alpha^2}(x)^2 \d x\bigg] =:\overline A<\infty.
\end{equation}
\end{prop}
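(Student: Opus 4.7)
The plan is to exploit the conditional Poissonian structure of $\wh\Theta_{\alpha,\omega}$ established in Theorem~\ref{thm duality}, together with the quantified tightness of $\alpha|\omega(t)-\omega(s)|$ under $\wh\P_\alpha$ from Proposition~\ref{cor:polaron-many-intervals} referenced in Step~5. First I would define $E_\alpha$ to contain the failure of that tightness statement on each unit subinterval of $[0,T_0]$; since $T_0=\alpha^{10}$ and the failure probability per unit subinterval is $e^{-\tilde C(\eps)\alpha^2}$, a union bound gives $\wh\P_\alpha[E_\alpha]\le e^{-\alpha}$. On $E_\alpha^c$, formula~\eqref{def Lambda integrated} shows that the conditional Poisson intensity of \emph{standard} intervals (lengths in $[1/3,2/3]$, weights $u\in[A\alpha,B\alpha]$) is at least $c_0\alpha^2$ outside a set of pairs of density $\le\eps$. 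Call $x\in[0,T_0]$ \emph{good} if, within each grid window of~\eqref{def rounds} straddling $x$, both the opening and closing standard-interval intensities exceed $c_0\alpha^2$; the complementary bad set $B$ then has Lebesgue measure at most $\eps T_0$. Applying the Hardy--Littlewood maximal inequality to $\one_B$ yields a \emph{very good} subset $\mathrm{VG}\subset[0,T_0]$ with $|[0,T_0]\setminus\mathrm{VG}|\le C\sqrt{\eps}\, T_0$ on which, uniformly in the scale $r>0$, the local bad density in $[x-r,x+r]$ is at most $\sqrt{\eps}$.

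I would then construct the rounds $\eta_1,\dots,\eta_{\delta\alpha^2}$ by a greedy matching: in round $L$ and for each grid slot $k=1,\dots,T_0$, pick (say) the leftmost unused standard interval with endpoints in the prescribed windows of~\eqref{def rounds}. By Corollary~\ref{cor:consequences-number-intervals} the total Poisson count of standard intervals in $[0,T_0]$ is $\asymp c\alpha^2 T_0$, so cumulative usage of at most $\delta\alpha^2 T_0$ intervals leaves a fraction $1-\delta/c$ of the candidates available at any later stage; consequently the waiting time to find a matching interval anchored near any very-good point has exponential tail of rate $\Theta(\alpha^2)$, and a union bound over the $\delta\alpha^2 T_0$ matching steps shows the construction succeeds conditionally on $\omega\in E_\alpha^c$ with probability $1-e^{-c\alpha}$, which I would absorb into $E_\alpha$. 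As rounds accumulate, the endpoints of used intervals contribute to a ``burned'' local density that I would track round-by-round: a second application of Hardy--Littlewood confirms that at most an $O(\delta)$ additional fraction is added to the bad set across all rounds, so the useful VG structure persists throughout.

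For the second moment, write $\E^{\wh\Theta_{\alpha,\omega}}\bigl[\int_0^{T_0}U_{\delta\alpha^2}(x)^2\,\d x\bigr]=\sum_{L,L'}\int_0^{T_0}\wh\Theta_{\alpha,\omega}[x\in V_L\cap V_{L'}]\,\d x$. The key claim is that the miss probability $\wh\Theta_{\alpha,\omega}[x\in V_L]$ is $O(1/\alpha^2)$ \emph{uniformly} in $x\in[0,T_0]$ and $L\le\delta\alpha^2$: in round $L$, the greedy rule leaves $x$ vacant only if every one of the $\Theta(\alpha^2)$ Poisson candidate intervals covering $x$ at the two relevant grid slots has either been used in an earlier round or anchors at a burned point, and both obstructions have globally bounded density. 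Summing then yields $\E\int U_{\delta\alpha^2}^2 \le N^2 T_0\cdot C^2/\alpha^4 + N T_0\cdot C/\alpha^2 = O(T_0)$ with $N=\delta\alpha^2$, as required. The \textbf{main obstacle} is precisely this uniform-in-$x$ miss bound: establishing it requires tracking the residual Poisson process after $L-1$ rounds of intervals have been peeled off, and showing its intensity remains $\Theta(\alpha^2)$ at essentially every $x$. The FKG stochastic domination from Section~\ref{subsec FKG} should provide a lower bound by a Poisson process with intensity $(1-\delta/c_0)\Lambda(\alpha,\omega,\cdot)$, but the interplay between global usage, the VG geometry under burning, and possibly forced suboptimality of the greedy rule on later slots is the delicate technical crux.
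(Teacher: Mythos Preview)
Your overall architecture matches the paper's: define $E_\alpha$ via a union bound of Proposition~\ref{cor:polaron-many-intervals} over the $\alpha^{10}$ unit blocks, introduce good/very-good points through the Hardy--Littlewood maximal function, run a greedy left-to-right construction, and track the ``burning'' of endpoints across rounds by augmenting the bad set. Two points, however, are not yet under control.

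\medskip
\textbf{The second moment needs a \emph{conditional} miss bound.} Your displayed estimate $\sum_{L,L'}\int\wh\Theta_{\alpha,\omega}[x\in V_L\cap V_{L'}]\,\d x\le N^2 T_0\cdot C^2/\alpha^4$ implicitly uses $\wh\Theta_{\alpha,\omega}[x\in V_L\cap V_{L'}]\le C^2/\alpha^4$, which does not follow from the marginal bound $\wh\Theta_{\alpha,\omega}[x\in V_L]\le C/\alpha^2$ alone; the events are correlated through the sequential construction. What the paper actually proves is the \emph{filtration-conditional} bound $\wh\Theta_{\alpha,\omega}\big[x\in[t_j^{(L)},s_{j+1}^{(L)}]\,\big|\,\cF_{s,j,L}\big]\le \overline A_{C_1}/\alpha^2$, from which $U_{\delta\alpha^2}(x)$ is stochastically dominated by $\mathrm{Bin}(\delta\alpha^2,C/\alpha^2)$ and hence $\E[U_{\delta\alpha^2}(x)^2]=O(1)$. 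You should state and prove the conditional version.

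\medskip
\textbf{The miss bound itself requires anti-concentration of $t_j^{(L)}$, which your heuristic does not supply.} Your justification ``$x$ is vacant only if every one of the $\Theta(\alpha^2)$ Poisson candidate intervals covering $x$ has been used or anchors at a burned point'' is not how the greedy rule works. The point $x$ lies in the gap precisely when $t_j^{(L)}<x<s_{j+1}^{(L)}$; conditionally on $t_j^{(L)}=t$, the exponential tail of $s_{j+1}^{(L)}-t$ (your residual-intensity argument) gives $\wh\Theta_{\alpha,\omega}[s_{j+1}^{(L)}>x\mid t_j^{(L)}=t]\le e^{-c\alpha^2(x-t)_+}$, but integrating this against the law of $t_j^{(L)}$ produces $O(1/\alpha^2)$ \emph{only if the conditional density of $t_j^{(L)}$ is uniformly bounded}. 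If $t_j^{(L)}$ were allowed to concentrate near $x$, the gap probability would be $\Theta(1)$ regardless of intensity. The paper closes this gap by computing that, conditionally on $\cF_{s,j,L}$, the density of $t_j^{(L)}$ on $V_{j,L}$ is $\Lambda_{\sstd}([s_j^{(L)},\cdot])\big/\int_{V_{j,L}}\Lambda_{\sstd}([s_j^{(L)},y])\,\d y$, then uses the pointwise upper bound $\Lambda_{\sstd}\le \alpha^2/C_1^4$ together with the lower bound $\int_{V_{j,L}}\Lambda_{\sstd}\ge \alpha^2/(10C_1^5)$ (from Proposition~\ref{prop:good-points-connected} and the fact that $s_j^{(L)}\in\good$) to conclude the density is at most $10C_1$. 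Your proposed FKG route targets residual intensity, which is the wrong obstruction; the missing ingredient is this anti-concentration step.

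\medskip
A smaller point: the construction's failure event is $\wh\Theta_{\alpha,\omega}$-random for fixed $\omega$, so it cannot be ``absorbed into $E_\alpha$'' (an $\omega$-event). The paper instead sets $U_{\delta\alpha^2}\equiv\delta\alpha^2$ on failure and checks that the contribution $(\delta\alpha^2)^2\cdot T_0\cdot e^{-\delta\alpha/3}=o(1)$, which is the clean fix.
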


Proposition \eqref{prop-second-mom} is proved in Section \ref{sec proof prop second moment} ($\delta$ is introduced in \eqref{eq:delta-def}, with $1/\delta\ll\alpha$). Assuming this fact, we can now conclude the proof of Theorem \ref{thm}: 

\begin{cor}\label{cor-second-mom}
With $T_0=\alpha^{10}$ as above and $\alpha$ sufficiently large, we have the bound 
\[
\mathbb E^{\widehat\Theta_\alpha}[\alpha^4 \sigma_{\alpha,T_0}^2(\hat\xi,\hat u)]
\leq 
2\tilde A
+
\e^{-\alpha/2}
\]
for $\tilde A=\frac{2 \overline A}{\delta^2}+\frac{2}{A^2 \delta}$.

\end{cor}
\begin{proof}
Fix $f \in H_{T_0}$ and a collection of intervals 
as in \eqref{def rounds}.
On the event $\omega\in E_{\alpha}$ from Proposition \eqref{prop-second-mom}, we have the trivial bound $\sigma_{\alpha,T}^2(\hat\xi,\hat u)\leq 1$, so this event contributes at most $\e^{-\alpha/2}\leq \alpha^4 \e^{-\alpha}$.

Below we focus on the case $\omega\notin E_{\alpha}$.
For any step $l=1,\dots, \delta \alpha^2$, we write (recalling $u^{\ssup L}_k \geq A \alpha$)
  \begin{equation*}
  \begin{aligned}
    f(T_0)-f(0) &=\int_{V_L}f'(t)\d t + \sum_{k=1}^{T_0}(f(t^{\ssup L}_{k})-f(s^{\ssup L}_{k})) \\
    &\leq \int_{V_L}f'(t)\d t+\frac{1}{A\alpha}\sum_{k=1}^{T_0}u_{k}^{\ssup L} |f(t^{\ssup L}_{k})-f(s^{\ssup L}_{k})|.
    \end{aligned}
  \end{equation*}
The above estimate holds for every collection $\eta_L$ of intervals, for $L=1,\dots, \delta \alpha^2$, and the collections 
$\{\eta_L\}_{L=1}^{\delta\alpha^2}$ are disjoint by definition. 
Let 
\begin{equation}\label{def Astar}
A^*
=
\frac{1}{T_0}
\int_0^{T_0}
U_{\delta\alpha^2}(x)^2 \de x
\end{equation}
so that $\overline{A}=\limsup_{\alpha\to\infty} \mathbb E[A^*]$ in Proposition \eqref{prop-second-mom}.

Thus, adding the above estimates for all $L=1,\dots, \delta\alpha^2$ we obtain

    \begin{align}
        &\delta \alpha^2 (f(T_0)-f(0))
        \leq \sum_{L=1}^{\delta\alpha^2 }\int_{V_L}f'(t)\d t + \frac{1}{A\alpha}\sum_{j=1}^{\delta \alpha^2 }\sum_{k=1}^{T_0}u_{k}^{\ssup L}|f(t^{\ssup L}_{k})-f(s^{\ssup L}_{k})| \nonumber\\
    &\leq \int_{0}^{T_0}|f'(t)|\Big(\sum_{L=1}^{\delta \alpha^2 }\mathbbm{1}_{V_L}(t)\Big)\d t+\frac{1}{A\alpha}\sum_{(s_i,t_i)\in \bigcup_{L=1}^{\delta\alpha^2 } \eta_L}u_i|f(t_i)-f(s_i)| \nonumber \\ 
    &\leq \bigg(\int_{0}^{T_0}|f'(t)|^2\d t\bigg)^{\frac{1}{2}}\bigg(\int_{0}^{T_0} \Big(\sum_{L=1}^{\delta \alpha^2 }\mathbbm{1}_{V_L}(t)\Big)^2\d t \bigg)^{\frac{1}{2}}
\\
    &\quad\quad
    +\frac{\sum_{L=1}^{\delta \alpha^2}T_0}{A\alpha}\frac{1}{\sum_{j=1}^{\delta \alpha^2 }T_0}\sum_{(s_i,t_i)\in \bigcup_{L=1}^{\delta \alpha^2 } \eta_L}u_i|f(t_i)-f(s_i)| \nonumber
    \\
    &\leq \bigg(\int_{0}^{T_0}|f'(t)|^2\d t\bigg)^{\frac{1}{2}}\bigg(\int_{0}^{T_0} \Big(\sum_{L=1}^{\delta \alpha^2 }\mathbbm{1}_{V_L}(t)\Big)^2\d t \bigg)^{\frac{1}{2}}
    \\
    &\quad\quad+\frac{\sqrt{\sum_{L=1}^{\delta\alpha^2 }T_0}}{A\alpha}\sqrt{\sum_{(s_i,t_i)\in \bigcup_{L=1}^{\delta\alpha^2 } \eta_L}u_i^2|f(t_i)-f(s_i)|^2} \nonumber
    \\
&\leq \sqrt{A^* T_0} \bigg(\int_{0}^{T_0}|f'(t)|^2\d t\bigg)^{\frac{1}{2}} +
\frac{\sqrt{ T_0 \delta \alpha^2}}{A\alpha}\sqrt{\sum_{ 0 < s_i < t_i < T }u_i^2|f(t_i)-f(s_i)|^2}  \label{fifth}
 \\
 &= \sqrt{A^* T_0} \bigg(\int_{0}^{T_0}|f'(t)|^2\d t\bigg)^{\frac{1}{2}} +\frac{\sqrt{ T_0 \delta }}{A}\sqrt{\sum_{ 0 < s_i < t_i < T }u_i^2|f(t_i)-f(s_i)|^2}\,. \label{sixth}
 \end{align}
  In the third upper bound, we applied the Cauchy-Schwarz inequality to the first term, while in the fourth bound, we applied Jensen's inequality to the second term. In \eqref{fifth}, we used the definition $A^\star$ from \eqref{def Astar} for the first summand.
    Continuing with \eqref{sixth}, we now apply Lemma~\ref{lemma-square-root} with 
    \[
    (\overline C)^{-1/2}:= 
    \frac{\sqrt{A^*}}{\delta}
    +\frac 1 {A\sqrt\delta}
    \]
    to obtain (via $(x+y)^2\leq 2(x^2+y^2)$):
    \[
    \alpha^4\sigma_{\alpha,T}^2(\hat\xi,\hat u) 
    \leq 
    1/\overline{C}
    \leq 
    \frac{2 A^*}{\delta^2}
    +
    \frac{2}{A^2 \delta}.
    \]
    Since $\overline{A}=\limsup_{\alpha\to\infty} \mathbb E[A^*]$ in Proposition \eqref{prop-second-mom}, taking expectation over $(\hat\xi,\hat u)$  concludes the proof of \eqref{eq-eq1} for $T=T_0$.
\end{proof}

We can now deduce the main result of this paper.

\begin{proof}[Proof of Theorem \ref{thm}]
 By subadditivity of variance (Corollary \ref{cor:subadditivity}), it suffices to prove a variance upper bound for $T_0:=\alpha^{10}$.
 This is what we have just achieved in the preceding proof, so we are done.
\end{proof}

\begin{remark}
\label{remark T0}
 It is clear that the above argument will work for any $T_0=\alpha^n$ for $n\geq 10$ (say), because the error coming from the $\wh\P_\alpha$-probability of ``failure" to construct the $\delta\alpha^2$ rounds obeying \eqref{def rounds} (i.e., the probability of $E_\alpha$ in Proposition \ref{prop-second-mom}) is exponentially small in $\alpha$, while by definition, $U_{\delta\alpha^2}$ is at most $\delta\alpha^2$ (recall \eqref{def-U}).
 As we will be explained in Remark \ref{remark T0 T}, it might also be possible to work directly on the entire time interval $[0,T]$ instead of iterating the argument for $[0,T_0]$ (with polynomially growing $T_0=T_0(\alpha)$) $T/T_0$-many times.  
 \end{remark}

\section{High interval density with very high probability}\label{sec high density}

The main goal of this section is to prove Proposition \ref{cor:polaron-many-intervals},
which quantifies the tightness statement from Corollary \ref{cor-tightness} of the rescaled increments $\alpha|\omega(t)- \omega(s)|$ under $\wh\P_\alpha$ in an averaged sense on order one time scale.  From this result, we will deduce Corollary \ref{cor:super-standard-intensity}, which will be subsequently used in Section \ref{sec long paths} to complete the proof of Proposition \ref{prop-second-mom}. 

To derive Proposition \ref{cor:polaron-many-intervals}, we will first derive the quantitative tightness in Lemma \ref{lem:confinement-after-truncation} for a modified interaction, then use Lemma \ref{lem:MV-FKG} and duality between $\wh\Theta_\alpha$ and
$\wh\P_\alpha$ again to transfer the result to the Polaron interaction. 




\subsection{Quantified tightness under a modified interaction.}We will apply the comparison results from Section \ref{subsec FKG comparison pot} with 
\begin{equation}
\label{eq:wh-gamma-def}
\wh\gamma_{\alpha}(\d s \d t \d u) := C_1\alpha\cdot \1_{u\leq \alpha} \1_{0\leq s<t\leq 3} \,\, \d s \d t \, \d u.
\end{equation}
Here $1\ll C_1\ll\alpha$, i.e., $C_1$ will be chosen to be a sufficiently large absolute constant. 
Since the density of $\wh\gamma_{\alpha}$ does not depend on $0\leq s<t\leq 3$, we write $V_{\wh\gamma_{\alpha}}$ instead of $V_{[s,t],\wh\gamma_{\alpha}}$, and implicitly restrict attention to $[0,3]$. 
By Lemma~\ref{lem:MV-FKG}, this measure is dominated by the original point process $\wh\Theta_\alpha$ (corresponding to Polaron) up to a constant factor (the $\sqrt{\frac{2}{\pi}}$ does not play a role here since the Polaron corresponds to $\gamma(\cdot) \equiv \alpha$): \begin{equation}
\label{eq:polaron-domination-concrete-basic}
    \wh\Theta_{\wh\gamma_{\alpha}}
    \preceq 
    \wh\Theta_{C_1\alpha}.
\end{equation}

It will be useful to define the smooth function
\[
    g(y)
    =
    \int_0^1
    \e^{-v^2 y^2/2}~\de v,\quad y\geq 0, \qquad\mbox{so that} \quad g(\cdot) \leq g(0)
\]
and the explicit constant
\begin{equation}
\label{eq:A*}
    A_*
    =
    \frac{1}{2}\int_0^3 
    \int_0^3 
    \e^{-|t-s|}~\de t~\de s = \int_0^3 
    \int_0^3 
    \1_{s<t}\e^{-(t-s)}~\de t~\de s \in [4,4.1].
\end{equation}

\begin{lemma}
\label{lem:truncated-coulomb}
    For $\wh\gamma_{\alpha}$ as in \eqref{eq:wh-gamma-def}, the associated $V_{\wh\gamma_{\alpha}}(x)=V_{\wh\gamma_{\alpha}}(|x|)$ is given by
    \[
    V_{\wh\gamma_{\alpha}}(x)
    =
    C_1\alpha^2 g(\alpha x).
    \]
\end{lemma}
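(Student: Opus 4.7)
The plan is to simply unwind the definitions and perform a single change of variables. By the definition of the conditional densities arising from $\wh\gamma_\alpha$ in \eqref{eq:wh-gamma-def}, for each fixed $0\leq s<t\leq 3$ the regular conditional distribution $(\wh\gamma_\alpha)_{[s,t]}$ is the measure on $(0,\infty)$ with density $C_1\alpha\cdot \1_{u\leq \alpha}$ with respect to Lebesgue. (The $(s,t)$-dependence is trivial since $\wh\gamma_\alpha$ has a product form; this is why the notation $V_{\wh\gamma_\alpha}$ in place of $V_{[s,t],\wh\gamma_\alpha}$ is unambiguous.)

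Plugging this into the definition \eqref{eq:def-interaction-general} of $V_{[s,t],\gamma}$ gives
\[
V_{\wh\gamma_\alpha}(x)=\int_0^{\infty}\exp\!\bigg(-\frac{u^2 x^2}{2}\bigg)\, C_1\alpha\,\1_{u\leq \alpha}\,\d u
= C_1\alpha \int_0^{\alpha}\exp\!\bigg(-\frac{u^2 x^2}{2}\bigg)\,\d u.
\]
Next, substitute $v=u/\alpha$, so that $\d u=\alpha\, \d v$ and the integration range $u\in[0,\alpha]$ becomes $v\in[0,1]$:
\[
V_{\wh\gamma_\alpha}(x)=C_1\alpha^2\int_0^1 \exp\!\bigg(-\frac{v^2(\alpha x)^2}{2}\bigg)\,\d v
=C_1\alpha^2\, g(\alpha x),
\]
where the last equality is exactly the definition of $g$ given just above \eqref{eq:A*}. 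Since $V_{\wh\gamma_\alpha}$ depends on $x$ only through $x^2$, it is a function of $|x|$, justifying the notation $V_{\wh\gamma_\alpha}(x)=V_{\wh\gamma_\alpha}(|x|)$.

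This is a routine calculation and I do not anticipate any obstacle; the only content is noticing that the truncation $u\leq\alpha$ in the $u$-marginal of $\wh\gamma_\alpha$ corresponds, after rescaling $u=\alpha v$, to the truncation $v\leq 1$ that defines $g$, and that the prefactor $C_1\alpha$ combines with the Jacobian $\alpha$ of the substitution to produce $C_1\alpha^2$.
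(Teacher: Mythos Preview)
Your proof is correct and follows essentially the same approach as the paper: write out the integral from the definition of $V_{\wh\gamma_\alpha}$, substitute $v=u/\alpha$, and recognize the result as $C_1\alpha^2 g(\alpha x)$. The paper's version is just the one-line computation without the surrounding commentary.
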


\begin{proof}
    Substituting $v=u/\alpha$, we have 
    \[
    V_{\wh\gamma_{\alpha}}(x)
    =
    C_1\alpha\int_0^{\alpha}
    \e^{-u^2 x^2/2}~\de u
    =
    C_1\alpha^2
    \int_0^1
    \e^{-v^2 (\alpha x)^2/2}~\de v
    \equiv 
    C_1\alpha^2 
    g(\alpha x).
    \qedhere
    \]
\end{proof}

We also let $\eps,C_0$ be such that $1\ll 1/\eps\ll C_0\ll C_1\ll \alpha$, i.e. each value is sufficiently large depending on the previous ones (we think of everything except $\alpha$ as an absolute constant, so in particular, the other values are not sent to infinity at any point). 

\begin{lemma}
\label{lem:confined-BM-LDP}
    For $\bbP$ Wiener measure and any $\eta>0$,
    let $E_{\eta}$ be the event that $\sup_{t\in [0,3]}|\omega_t|\leq \eta$. Then
    \[
    \bbP[E_{\eta}]
    \geq 
    \e^{-C_0\eta^{-2}}
    .
    \]
\end{lemma}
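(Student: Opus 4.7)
The plan is to reduce the small-ball estimate to a standard Dirichlet survival probability for three-dimensional Brownian motion via scaling, and then use spectral theory on the unit ball to control the small-$\eta$ regime, combined with a Gaussian tail bound for the large-$\eta$ regime.

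By Brownian scaling, $(\omega_t)_{t\in[0,3]}$ under $\bbP$ is distributed as $(\eta\,\widetilde\omega_{t/\eta^2})_{t\in[0,3]}$, where $\widetilde\omega$ is another three-dimensional Brownian motion starting at the origin. Hence
\[
\bbP[E_{\eta}] \;=\; p(3/\eta^2), \qquad p(T)\;:=\; \bbP\Big[\sup_{s\in[0,T]}|\widetilde\omega_s|\leq 1\Big].
\]
For the regime $\eta\leq\sqrt{3}$, equivalently $T\geq 1$, let $\lambda_1=\pi^2/2$ denote the first Dirichlet eigenvalue of $-\tfrac12\Delta$ on the unit ball $B\subset\bbR^3$, with positive eigenfunction $\varphi_1$. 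The Feynman--Kac / spectral representation of the Dirichlet heat semigroup on $B$, together with extraction of the leading eigenmode, yields an absolute $c_1>0$ with
\[
p(T) \;\geq\; c_1\,\e^{-\lambda_1 T}\qquad\text{for all } T\geq 1.
\]
Plugging in $T=3/\eta^2$ gives $\bbP[E_\eta]\geq c_1\,\e^{-3\lambda_1/\eta^2}$, which exceeds $\e^{-C_0/\eta^2}$ as soon as the absolute constant $C_0$ dominates $3\lambda_1+3\log(1/c_1)$ (using $\eta^2\leq 3$).

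For the complementary regime $\eta>\sqrt{3}$, I would invoke the Gaussian maximal inequality $\bbP[\sup_{t\in[0,3]}|\omega_t|>\eta]\leq C\,\e^{-c\eta^2}$ (with $c=1/18$, say, obtained by controlling each coordinate separately) to conclude $\bbP[E_\eta]\geq 1-C\,\e^{-c\eta^2}$. The requirement $\bbP[E_\eta]\geq \e^{-C_0/\eta^2}$ then reduces to $C_0\geq \eta^2\log\bigl(1/(1-C\e^{-c\eta^2})\bigr)$; the right-hand side is $O(\eta^2\e^{-c\eta^2})\to 0$ as $\eta\to\infty$, and is bounded on the compact range $\eta\in[\sqrt 3,\eta_0]$ by continuity of $p$. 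All cases are therefore handled by taking $C_0$ a sufficiently large absolute constant, which is consistent with the scaling $1/\eps\ll C_0\ll C_1\ll\alpha$ assumed in the paper.

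The only mildly non-routine input is the uniform exponential lower bound $p(T)\geq c_1\,\e^{-\lambda_1 T}$ coming from Dirichlet spectral theory on the ball; the remaining ingredients (Brownian scaling, Gaussian tail, continuity of $p$) are elementary, so no serious obstacle is anticipated.
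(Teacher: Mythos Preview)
Your argument is correct, but takes a different route from the paper. The paper gives a completely elementary proof: it discretizes time at the points $j\eta^2$ for $0\leq j\leq O(\eta^{-2})$, strengthens the event to also require $|\omega_{j\eta^2}|\leq \eta/2$ at each such point, and then uses the Markov property together with Brownian scaling to observe that the conditional probability of staying inside the ball of radius $\eta$ on $[j\eta^2,(j+1)\eta^2]$ and landing in the half-radius ball at time $(j+1)\eta^2$ is an absolute constant independent of $\eta$. Iterating over $O(\eta^{-2})$ blocks gives the bound $(\text{const})^{O(\eta^{-2})}=\e^{-C_0\eta^{-2}}$ directly, with no case analysis and no spectral input.

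Your approach instead scales once to the unit ball, invokes the Dirichlet spectral lower bound $p(T)\geq c_1\e^{-\lambda_1 T}$ for $T\geq 1$, and handles large $\eta$ separately via a Gaussian tail bound. This is perfectly valid and in fact identifies the sharp exponential rate $\lambda_1=\pi^2/2$, which the paper's block argument does not. The tradeoff is that the paper's proof is self-contained in a few lines and needs nothing beyond the Markov property, whereas yours imports a (standard but nontrivial) spectral fact and requires splitting into regimes. For the purposes of this paper only the qualitative bound matters, so either proof suffices; the paper's version is simply the more elementary of the two.
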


\begin{proof}
    The probability lower bound holds even if we additionally require that $|\omega_{j\eta^2}|\leq \eta/2$ for all integers $0\leq j\leq \eta^{-2}$.
    Indeed, splitting up time at each time in $\eta^2\bbZ$, we just need a constant probability event to happen $\eta^{-2}$ consecutive times.
    Namely if one conditions on some $\omega_{[0,j\eta^2]}$-dependent event satisfying $|\omega_{j\eta^2}|\leq \eta/2$, then with constant $\bbP$-conditional probability:
    \begin{align*}
    |\omega_{(j+1)\eta^2}|&\leq \eta/2,\qquad
    \sup_{j\eta^2 \leq t\leq (j+1)\eta^2}
    |\omega_t|
    \leq \eta.
    \end{align*}
    (In particular, this probability does not depend on $\eta$ by Brownian scaling.) This gives the bound. 
\end{proof}

Since $C_1\gg C_0$ and $g:\bbR\to\bbR$ is just a fixed function, we may assume that 
\begin{align}
\label{eq:C-gg-C0}
g(C_1^{-1/3})&\geq g(0)\lt(1-\frac{1}{2C_0}\rt),
\\
\label{eq:g(C)-half}
g(C_1)&\leq g(0)/2
.
\end{align}

Using free energy estimates, we now obtain a key confinement result.

\begin{lemma}
\label{lem:confinement-after-truncation}
    For any $\eps>0$, $0 \leq j \leq T$ and $C_1>0$, we write 
    \begin{equation}
    E_\star(\alpha)
    :=
    \lt\{\omega: 
    \frac 1 {A_\star}\int_{j \leq s < t \leq j+3} \e^{-(t-s)} \1_{\alpha |\omega_t-\omega_s|\geq C_1 }
    ~\de t~\de s
    \geq \eps 
    \rt\}.
    \end{equation}
     Then for any $\eps>0$, and $C_0>\frac{4}{\eps} $ and $C_1\geq \frac{C_0}{g(0)A_{*}}$ satisfying \eqref{eq:C-gg-C0}-\eqref{eq:g(C)-half} and for any $\alpha>0$ it holds that 
     \[
    p_\star(\alpha)
    :=
    \widehat\bbP_{\wh\gamma_{\alpha}}[E_\star(\alpha)]
    \leq 
    \e^{-\alpha^2}.
    \]
\end{lemma}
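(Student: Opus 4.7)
The plan is a direct free-energy comparison, writing
\[
p_\star(\alpha) \;=\; \frac{\bbE^\P[\mathbbm{1}_{E_\star(\alpha)}\, e^{H_\alpha(\omega)}]}{Z_{\wh\gamma_\alpha}}, \qquad H_\alpha(\omega) \;:=\; \int_0^3\!\int_0^3 e^{-|t-s|}\,V_{\wh\gamma_\alpha}(|\omega_t-\omega_s|)\,\d s\,\d t,
\]
and bounding numerator and denominator separately. I will treat the case $j=0$; the general $j$ follows by working with a translated copy of $\wh\gamma_\alpha$ supported on $[j,j+3]$, which obeys the same estimates verbatim.

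\textbf{Lower bound on $Z_{\wh\gamma_\alpha}$.} I would take $\eta := (2\alpha C_1^{1/3})^{-1}$ in Lemma~\ref{lem:confined-BM-LDP}, so that on the confinement event $E_\eta$ every increment satisfies $\alpha|\omega_t-\omega_s|\leq 2\alpha\eta = C_1^{-1/3}$. By \eqref{eq:C-gg-C0} and Lemma~\ref{lem:truncated-coulomb} this gives $V_{\wh\gamma_\alpha}(|\omega_t-\omega_s|)\geq C_1\alpha^2 g(0)(1-\tfrac{1}{2C_0})$ pointwise on $E_\eta$, hence $H_\alpha(\omega)\geq 2C_1\alpha^2 A_\star g(0)(1-\tfrac{1}{2C_0})$ there. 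Combined with $\bbP[E_\eta]\geq e^{-4C_0 C_1^{2/3}\alpha^2}$, this yields
\[
Z_{\wh\gamma_\alpha}\;\geq\;\exp\Big(2C_1\alpha^2 A_\star g(0)(1-\tfrac{1}{2C_0}) \;-\; 4C_0 C_1^{2/3}\alpha^2\Big).
\]

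\textbf{Upper bound on the numerator.} For $\omega\in E_\star(\alpha)$, monotonicity of $g$ gives $V_{\wh\gamma_\alpha}\leq C_1\alpha^2 g(C_1)\leq C_1\alpha^2 g(0)/2$ on the set $\{\alpha|\omega_t-\omega_s|\geq C_1\}$ (using \eqref{eq:g(C)-half}), while elsewhere $V_{\wh\gamma_\alpha}\leq C_1\alpha^2 g(0)$. The weighted mass of the first set (against $e^{-(t-s)}\,\d s\,\d t$ on the upper triangle in $[0,3]^2$) is at least $\eps A_\star$ by definition of $E_\star(\alpha)$, so doubling by symmetry
\[
H_\alpha(\omega)\;\leq\; 2C_1\alpha^2 g(0)A_\star(1-\eps/2),\qquad \omega\in E_\star(\alpha),
\]
and hence $\bbE^\P[\mathbbm{1}_{E_\star(\alpha)}e^{H_\alpha}]\leq \exp\bigl(2C_1\alpha^2 g(0)A_\star(1-\eps/2)\bigr)$.

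\textbf{Combining.} Dividing yields
\[
\log p_\star(\alpha)\;\leq\; C_1\alpha^2 A_\star g(0)\Big(\frac{1}{C_0}-\eps\Big) \;+\; 4C_0 C_1^{2/3}\alpha^2.
\]
Since $C_0>4/\eps$ forces $\tfrac{1}{C_0}-\eps\leq -\eps/2$, the first term is at most $-\tfrac{\eps}{2}C_1\alpha^2 A_\star g(0)$. The hierarchy $1/\eps\ll C_0\ll C_1$ (concretely, $C_1^{1/3}\gg C_0/\eps$) ensures $4C_0 C_1^{2/3}\alpha^2$ is dominated by the negative first term by a factor $\gg 1$, in particular giving $\log p_\star(\alpha)\leq -\alpha^2$. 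The main obstacle is the \emph{calibration} of $\eta$: it must be small enough ($\alpha\eta\leq C_1^{-1/3}$) for $g(\alpha|\omega_t-\omega_s|)\approx g(0)$ on $E_\eta$, yet large enough that the Gaussian entropic cost $C_0/\eta^2\sim C_0 C_1^{2/3}\alpha^2$ is strictly smaller in $C_1$-exponent than the energy gap $\eps C_1\alpha^2$. This is exactly why assumptions \eqref{eq:C-gg-C0}--\eqref{eq:g(C)-half} resolve $g$ at the \emph{two distinct} scales $C_1^{-1/3}$ and $C_1$ and why the hierarchy on $(\eps,C_0,C_1,\alpha)$ is phrased as stated.
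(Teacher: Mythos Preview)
Your proof is correct and follows essentially the same free-energy comparison as the paper: the same confinement event $E_\eta$ with $\eta=(2\alpha C_1^{1/3})^{-1}$ for the partition-function lower bound, and the same pointwise bounds $V_{\wh\gamma_\alpha}\le C_1\alpha^2 g(0)/2$ versus $C_1\alpha^2 g(0)$ to create the energy gap on $E_\star(\alpha)$. The only organizational difference is that the paper first bounds $\wh\P_{\wh\gamma_\alpha}[H_\alpha \le A_\star C_1 g(0)\alpha^2(1-2/C_0)]$ via \eqref{eq:Markov-confinement} and then shows $E_\star(\alpha)$ is contained in that sublevel set, whereas you bound $H_\alpha$ on $E_\star(\alpha)$ directly; the arithmetic and the role of the hierarchy $1/\eps\ll C_0\ll C_1$ are identical.
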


\begin{proof} 
We recall that 
$$
\widehat\P_{\widehat\gamma_\alpha}(\d\omega)= \frac 1 {Z_{\wh\gamma_\alpha}} \exp\bigg( \int_0^3 \int_0^3 \1_{s<t} \e^{-(t-s) } V_{\wh\gamma_\alpha} (|\omega_t - \omega_s|) \d s \d t\bigg) \P(\d\omega)
$$
We start with a lower bound on the partition function: 
\begin{align*}
    Z_{\wh\gamma_{\alpha}}
    &=
    \bbE^{\bbP}
    \exp\lt(
    \int_0^3
    \int_0^3 
    \1_{s<t}e^{-|s-t|}
    V_{\wh\gamma_{\alpha}}(|\omega_t-\omega_s|)
    ~\de s\, \de t
    \rt)
    \\
    &\geq 
    \bbE^{\bbP}
    \lt[
    \1_{E_{\frac{1}{2\alpha C_1^{1/3}}}}
    \exp\lt(
    \int_0^3
    \int_0^3 
    \1_{s<t} e^{-|s-t|}
    V_{\wh\gamma_{\alpha}}(|\omega_t-\omega_s|)
    ~\de s\, \de t
    \rt)
    \rt]
    \end{align*}
    Using the definition of event $E_\eta$ for 
    $\eta={\frac 1{2\alpha C_1^{1/3}}}$ (recall Lemma \ref{lem:confined-BM-LDP})
    we estimate that, on this event, 
    \[
    V_{\wh\gamma_\alpha}(|\omega_t - \omega_s|) = C_1 \alpha^2 \int_0^1 \e^{- \frac{v^2 (\alpha|\omega(t-s) - \omega(0)|)^2}{2}} \d v = C_1 \alpha^2 g(\alpha |\omega_t - \omega_s|) \geq C\alpha^2 g(0) \big(1 - \frac 1 {2C_0}\big).
    \]
    In the last lower bound, we have used \eqref{eq:C-gg-C0}. Thus, applying Lemma \ref{lem:confined-BM-LDP}, we have the lower bound
\[
    \begin{aligned}
Z_{\wh\gamma_{\alpha}}
&\geq \exp\bigg(A_\star C_1 \alpha^2 g(0) \big(1- \frac 1 {2C_0}\big)\bigg) \,\, \P(E_{\frac 1{2\alpha C_1^{1/3}}}) 
\\
&\geq \exp\bigg(A_\star C_1 \alpha^2 g(0) \big(1- \frac 1 {2C_0}\big)\bigg) \e^{-C_0 4\alpha^2 C_1^{2/3}} 
\\
&= \exp\bigg(A_\star C_1 \alpha^2 g(0) - \frac {A_\star C_1 \alpha^2 g(0)}{C_0} + \frac{A_\star C_1 \alpha^2 g(0)}{2C_0} - 4\alpha^2 C_0 C_1^{2/3}\bigg) 
\\
&\geq 
    \exp\lt(
    A_* C_1 g(0) \alpha^2\lt(1-\frac{1}{C_0}\rt)
    \rt).
\end{aligned},
\]
where in the last lower bound, we used that, since $C_1 \gg C_0$, we have $C_1^{\frac 13} \gg \frac{8C_0^2}{A_\star g(0)}$. 
    It then follows from the definition of $\wh\P_{\widehat\gamma_\alpha}$ and the above lower bound on $Z_{\wh\gamma_\alpha}$ that 
    \begin{equation}
    \label{eq:Markov-confinement}
    \begin{aligned}
    &\wh\bbP_{\wh\gamma_{\alpha}}
    \lt[
    \int_{0\leq s < t \leq 3}e^{-(t-s)}
    V_{\wh\gamma_{\alpha}}(|\omega_t-\omega_s|)
    ~\de s\, \de t
    \leq 
     A_* C_1 g(0) \alpha^2\lt(1-\frac{2}{C_0}\rt)
    \rt] \\
    &\leq  \frac 1 {Z_{\wh\gamma_\alpha}} \,\, \exp\big(A_\star C_1 \alpha^2 g(0)\big) \exp\big( - A_\star C_1 \alpha^2 g(0) \frac 2 {C_0}\big) \\
    &\leq \exp\big(A_\star C_1 \alpha^2 g(0)\big) \exp\bigg( - A_\star C_1 \alpha^2 g(0) \frac 2 {C_0}\bigg) \,\, \exp\big( - A_\star C_1 \alpha^2 g(0)\big) \exp \bigg(\frac{A_\star C_1 \alpha^2 g(0)}{C_0}\bigg)\\
&\leq \e^{-A_* C_1 g(0)\alpha^2/C_0}
\leq  \e^{-\alpha^2}
    \end{aligned}
    \end{equation}
    if $C_1\geq \frac{C_0}{g(0)A_{*}}$.
    On the other hand, from \eqref{eq:g(C)-half} and unimodality of $g$, we have the implication 
    \[
    |\omega_t-\omega_s|\geq C_1/\alpha\implies 
    V_{\wh\gamma_{\alpha}}(\omega_t-\omega_s)\leq 
    C
    g(C_1)\alpha^2\leq 
    C_1g(0)\alpha^2/2
    \]
    for all $0\leq s<t\leq 3$. Using this pointwise bound for all $0 \leq s < t \leq 3$, we have 
    \begin{align}
    \notag
    &\int_{0\leq s < t \leq 3} \e^{-(t-s)} \1_{\alpha|\omega_t - \omega_s|>C_1} \d s \d t 
    \\
    &\leq \frac {2}{C_1 \alpha^2 g(0)} \int_{0 \leq s < t \leq 3} \e^{-(t-s)} \big[ C_1 \alpha^2 g(0) - V_{\wh\gamma_\alpha}(\omega_t - \omega_s)\big] \1_{\alpha |\omega_t - \omega_s| >C_1} \d s \d t 
    \\
    \label{uni-g}
    &\leq \frac {2}{C_1 \alpha^2 g(0)} \int_{0 \leq s < t \leq 3} \e^{-(t-s)} \big[ C_1 \alpha^2 g(0) - V_{\wh\gamma_\alpha}(\omega_t - \omega_s)\big] \d s \d t
    \end{align}
    where in the last inequality, we again used the unimodality of $g(\cdot)$ (which enforces that $V_{\wh\gamma_\alpha}(x)= C_1 \alpha^2 g(\alpha x) \geq C_1 \alpha^2 g(0)$, implying the non-negativity of the term inside the bracket in the integral above). 
    Using \eqref{uni-g}, we thus have 
    \begin{equation}\label{uni-g2}
    \begin{aligned}
    \wh\P_{\wh\gamma_\alpha}(E_\star(\alpha))   
    &=\wh\P_{\wh\gamma_\alpha}\bigg[
    \frac 1 {A_\star}\int_{j \leq s < t \leq j+3} \e^{-(t-s)} \1_{\alpha |\omega_t-\omega_s|\geq C_1 }
    ~\de t~\de s
    \geq \eps \bigg] 
    \\
    &\leq \wh\P_{\wh\gamma_\alpha}\bigg[\frac 2 {A_\star C_1 \alpha^2 g(0)} \int_{0\leq s < t \leq 3} \e^{-(t-s)} \big[C_1 \alpha^2 g(0) - V_{\wh\gamma_\alpha}(\omega_t - \omega_s)\big] \d s \d t \geq \eps\bigg] \\
    &= \wh\P_{\wh\gamma_\alpha}\bigg[\int_{0\leq s < t \leq 3} \e^{-(t-s)}  V_{\wh\gamma_\alpha}(\omega_t - \omega_s) \d s \d t \leq (1- \frac\eps 2 ) A_\star C_1 \alpha^2 g(0)\bigg].    
    \end{aligned}
    \end{equation}
    Now, given $\eps>0$ we choose $C_0=C_0(\eps) > \frac 4 \eps$ so that $1- \frac \eps 2 < 1- \frac 2 {C_0}$ so that combining \eqref{uni-g2} and \eqref{eq:Markov-confinement} we have that $\wh\P_{\wh\gamma_\alpha}(E_\star(\alpha)) \leq \e^{-\alpha^2}$, as desired.   
\end{proof}

\subsection{Quantified tightness under $\wh\P_\alpha$.} 
Next, we transfer this result to the original Polaron measure. A bit of care is needed here since we cannot directly compare the integrands in, e.g., \eqref{eq:most-points-close}, but have to compare the associated interval processes instead.
The idea of the proof below is to define a set of \emph{standard} intervals in \eqref{eq:count-interval-to-compare} below, the number $n_{\std}$ of which serves as a proxy for the scale of fluctuations of $\omega$. Stochastic domination of point processes lets us directly compare the number of standard intervals, allowing us to transfer, using duality again, the preceding result to $\wh\bbP_{\alpha}$ at the cost of some constant factors.

\begin{prop}
\label{cor:polaron-many-intervals}
    For any $\eps>0$ and integer $0\leq j\leq T$ and $\tilde C>0$, let
    \begin{equation}
    \label{eq:most-points-close}
    \begin{aligned}
    &E_{\star\star}(\alpha):= \bigg\{\omega\colon \frac 1 {A_\star} \int_{j \leq s < t \leq j+3} \d s \d t
    \e^{-(t-s)} \1_{\alpha |\omega_t-\omega_s|\geq {\tilde{C}}} 
    \geq \eps \bigg\}.
   \end{aligned} 
    \end{equation}
    Then for any $0<\eps<1$, there are constants $\tilde{C}=\tilde{C}(\eps)$ and $\tilde{C}_1=\tilde{C}_1(\eps)$ such that  
    for $\alpha>0$ large enough, 
    \[
    p_{\star\star}(\alpha) :=
    \wh\bbP_{\alpha}[E_{\star\star}(\alpha)]    
    \leq \e^{-\tilde{C}_1(\eps) \alpha^2}, \quad\mbox{thus in particular}\quad p_{\star\star}(\alpha)\leq \e^{-2\alpha}.
    \]
\end{prop}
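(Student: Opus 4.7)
The plan is to leverage Lemma~\ref{lem:confinement-after-truncation}, which gives the analogous confinement bound under the truncated auxiliary measure $\wh\bbP_{\wh\gamma_\alpha}$, and to transfer it to the Polaron measure $\wh\bbP_\alpha$ via the duality of Theorem~\ref{thm duality} combined with the FKG stochastic domination of the associated interval processes (Lemma~\ref{lem:MV-FKG}). Using stationarity of $\wh\bbP_\alpha$ I may fix $j=0$, and choosing $\tilde C\geq C_1$ with $C_1$ as in Lemma~\ref{lem:confinement-after-truncation} yields $E_{\star\star}(\alpha)\subseteq E_\star(\alpha)$ and hence the starting bound $\wh\bbP_{\wh\gamma_\alpha}[E_{\star\star}]\leq \wh\bbP_{\wh\gamma_\alpha}[E_\star]\leq \e^{-\alpha^2}$.

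To move from $\wh\bbP_{\wh\gamma_\alpha}$ to $\wh\bbP_\alpha$, I would use the Poisson count of ``standard intervals''
\[
N:=\#\bigl\{i : s_i,t_i\in[0,3],\ u_i\in[A\alpha,B\alpha]\bigr\}
\]
for fixed $0<A<B$ to be chosen depending on $\eps$ and $\tilde C$. By Lemma~\ref{lem:MV-general-correspondence}, conditional on $\omega\sim\wh\bbP_\alpha$ the count $N$ is Poisson with mean $\alpha\mu(\omega)$, and conditional on $\omega\sim\wh\bbP_{\wh\gamma_\alpha}$ it is Poisson with mean $C_1\alpha\mu(\omega)$, where
\[
\mu(\omega):=\sqrt{\tfrac{2}{\pi}}\int_{0\leq s<t\leq 3}\e^{-(t-s)}\int_{A\alpha}^{B\alpha}\e^{-u^2|\omega_t-\omega_s|^2/2}\,\d u\,\d s\,\d t,
\]
the two conditional intensities differing only by the factor $C_1$ on the relevant region. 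After the substitution $v=u/\alpha$ the inner integrand becomes $\e^{-v^2 y^2/2}$ with $y=\alpha|\omega_t-\omega_s|$, which is uniformly bounded below on $\{y<\tilde C\}$ and super-exponentially small on $\{y\geq\tilde C\}$. Splitting the $(s,t)$-integral accordingly and using the definition of $E_{\star\star}$, one can tune $A,B,\tilde C$ as functions of $\eps$ so that $\mu(\omega)\leq c_-\alpha$ on $E_{\star\star}$ while $\mu(\omega)\geq c_+\alpha$ on $E_{\star\star}^c$, with $c_+>2c_-$.

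Combining the starting bound with standard Poisson concentration of $N\mid\omega$ (whose conditional mean on the overwhelmingly likely event $E_{\star\star}^c$ is $\geq C_1 c_+\alpha^2$ under $\wh\bbP_{\wh\gamma_\alpha}$) gives $\wh\Theta_{\wh\gamma_\alpha}[N\geq \tfrac{3}{4}C_1 c_+\alpha^2]\geq 1-\e^{-c\alpha^2}$. I would then transfer this ``high-$N$'' lower bound to the Polaron interval process $\wh\Theta_\alpha$ via Lemma~\ref{lem:MV-FKG}, routed through a common dominating intensity such as $\gamma_\star:=(C_1+1)\alpha\bbone_{[0,3]^2\times[0,\alpha]}+\alpha\bbone_{\text{rest}}$, to obtain $\wh\Theta_\alpha[N\geq c_+\alpha^2/2]\geq 1-\e^{-c'\alpha^2}$. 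Finally, under the $\wh\bbP_\alpha$-joint coupling, the bound $\alpha\mu(\omega)\leq c_-\alpha^2<c_+\alpha^2/2$ on $E_{\star\star}$ together with Poisson tail estimates gives
\[
\wh\bbP_\alpha[E_{\star\star}]\leq \wh\bbP_\alpha\bigl[E_{\star\star},\,N\geq c_+\alpha^2/2\bigr]+\wh\bbP_\alpha\bigl[N<c_+\alpha^2/2\bigr]\leq \e^{-\tilde C_1(\eps)\alpha^2}.
\]

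The main technical obstacle is precisely this FKG transfer step: since $\wh\gamma_\alpha$ and the Polaron intensity $\alpha$ are not pointwise comparable, Lemma~\ref{lem:MV-FKG} does not directly yield a stochastic-domination relation between $\wh\Theta_\alpha$ and $\wh\Theta_{\wh\gamma_\alpha}$. One must route the comparison through the common dominating intensity $\gamma_\star$ and carefully track which direction of monotonicity of $\{N\geq N_0\}$ is useful, verifying that the resulting one-sided inequality still gives an exponentially small bound at the target rate $\alpha^2$. The remaining ingredients (splitting the $\mu$-integral, tuning $A,B,\tilde C$ to enforce $c_+>2c_-$, and applying Poisson concentration) are more routine but must be chosen consistently with this transfer.
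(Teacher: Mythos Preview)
Your overall strategy---using a count $N$ of ``standard'' intervals as a proxy for confinement, bounding its conditional Poisson mean above on $E_{\star\star}$ and below on the complement, and linking the two regimes via FKG domination of interval processes---is exactly the paper's approach. But the FKG transfer step, which you yourself flag as the main obstacle, is a genuine gap, and your proposed fix does not work. Routing through a common \emph{dominating} intensity $\gamma_\star\geq\wh\gamma_\alpha$ and $\gamma_\star\geq\alpha$ gives $\wh\Theta_{\wh\gamma_\alpha}\preceq\wh\Theta_{\gamma_\star}$ and $\wh\Theta_\alpha\preceq\wh\Theta_{\gamma_\star}$; from the first you can push the high-$N$ bound up to $\wh\Theta_{\gamma_\star}$, but the second inequality then points the wrong way to bring it back down to $\wh\Theta_\alpha$. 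Routing through a common \emph{dominated} intensity fails symmetrically. There is simply no two-step FKG comparison between $\wh\Theta_{\wh\gamma_\alpha}$ and $\wh\Theta_\alpha$ that carries a lower bound on $\{N\geq N_0\}$ across.

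The paper resolves this by not comparing to $\wh\Theta_\alpha$ at all. Since $\wh\gamma_\alpha=C_1\alpha\cdot\1_{u\leq\alpha}\leq C_1\alpha$ pointwise, Lemma~\ref{lem:MV-FKG} gives the \emph{direct} one-sided comparison $\wh\Theta_{\wh\gamma_\alpha}\preceq\wh\Theta_{C_1\alpha}$. The argument is then run for the Polaron at coupling $C_1\alpha$: one shows that on $E_{\star\star}(C_1\alpha)$ the conditional mean of $n_{\std}$ (with $u\in[\alpha/C_1^2,2\alpha/C_1^2]$) is below a threshold, hence $\wh\Theta_{C_1\alpha}[n_{\std}\leq\text{threshold}]\geq p_{\star\star}(C_1\alpha)/2$, while under $\wh\Theta_{\wh\gamma_\alpha}$ the same event has probability $\leq\e^{-c\alpha^2}$ by Lemma~\ref{lem:confinement-after-truncation} and Poisson concentration. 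Since $\{n_{\std}\leq\text{threshold}\}$ is decreasing, the FKG comparison now points the right way and yields $p_{\star\star}(C_1\alpha)\leq 2\e^{-c\alpha^2}$; rescaling $\alpha\mapsto\alpha/C_1$ (with $C_1=C_1(\eps)$ fixed) gives the claim for $\wh\bbP_\alpha$.
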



\begin{proof}
We fix $\eps>0$. We choose $\tilde{C}=\frac{C_1^3}{\eps}$, where $C_1=C_1(\eps^2)$ is defined in Lemma \ref{lem:confinement-after-truncation}. By the comparison \eqref{eq:polaron-domination-concrete-basic}, we will analyse $E_{\star\star}(C_1\alpha)$ under the measure $\wh\bbP_{C_1\alpha}$.

Recall from Theorem \ref{thm duality} (and also Lemma~\ref{lem:MV-general-correspondence}\ref{it:path-interval-conditional-law}) that 
$\widehat\Theta_{\alpha, \omega}$ (resp. $\widehat\Theta_{\wh\gamma_\alpha,\omega}$) denotes the law of $(\hat\xi,\hat u)$ conditional on $\omega|_{[0,T]}$ sampled according to $\wh\P_\alpha$ (resp. $\wh\P_{\wh\gamma_\alpha}$) -- this law is Poissonian with $\omega$-dependent intensity. Now suppose that the event $E_{\star\star}(C_1\alpha)$ holds for $\omega|_{[0,T]}$. We recall Corollary \ref{cor:consequences-number-intervals} and set the \textit{standard} intervals $([s,t],u)\in (\hat\xi,\hat u)$ defined to satisfy:
    \begin{equation}
    \label{eq:count-interval-to-compare}
    j\leq s<t\leq j+3,\quad\quad
    u\in [\alpha/C_1^2, 2\alpha/C_1^2].
    \end{equation}
    We let $n_{\std}$ be the number of standard intervals, which is a $\bbZ_{\geq 0}$-valued random variable. If the event $E_{\star\star}(C_1\alpha)$ holds, then conditionally on $\omega_{[0,T]}\in E_{\star\star}(C_1\alpha)$, the law of $n_{\std}$ is Poisson with mean:

\begin{align}
    \notag
    \bbE^{\wh\Theta_{C_1\alpha,\omega}}[n_{\std}~|~\omega_{[0,T]}]
    &= C_1\alpha \int_{0\leq s\leq t<3}\int_{\alpha/C_1^2}^{2\alpha/C_1^2}\sqrt{\frac{2}{\pi}}
    \exp\lt(-\frac{u^2 |\omega_s-\omega_t|^2}{2}-(t-s)\rt)\de u ~\d s \d t 
  \\
    \notag
    &\leq \sqrt{\frac{2}{\pi}}\frac{\alpha^2}{C_1}\int\limits_{\substack{0\leq s\leq t\leq 3\\
    |\omega_s-\omega_t|\geq C_1^2/(\alpha\eps )}} \e^{-(t-s)}\e^{-1/2\eps^2} \d s \d t
    \\
    &\quad+ \sqrt{\frac{2}{\pi}}\frac{\alpha^2}{C_1^2}\int\limits_{\substack{0\leq s\leq t\leq 3\\
    |\omega_s-\omega_t|\leq C_1^2/(\alpha\eps )}}\e^{-(t-s)}\d s \d t 
    \\
    \notag
    &\leq \sqrt{\frac{2}{\pi}}\frac{\alpha^2}{C_1}\big(\e^{-1/2\eps^2}A_{*}+(1-\eps )A_{*}\big)
    \\
    \label{eq:poisson-mean}
    &\leq 
   \sqrt{\frac{2}{\pi}}\frac{\alpha^2}{C_1}A_{*}\big(1-\frac{\eps}{2}\big).
    \end{align}

    Since $\alpha\gg C_1\gg 1/\eps$, if a random variable $n$ is Poisson with mean $\bbE[n]\leq \sqrt{\frac{2}{\pi}}\frac{\alpha^2}{C_1}A_{*}\Big(1-\frac{\eps}{2}\Big)$, then
    \[
    \bbP\bigg[n\leq \sqrt{\frac{2}{\pi}}\frac{\alpha^2}{C_1}A_{*}\Big(1-\frac{\eps}{4}\Big)\bigg]\geq 1/2.
    \]
    In particular combining with \eqref{eq:poisson-mean} yields:
    \begin{equation}
    \label{eq:given-omega-few-intervals}
    \begin{aligned}
    \bbP^{\wh\Theta_{C_1\alpha,\omega}}\bigg[n_{\std}\leq \sqrt{\frac{2}{\pi}}\frac{\alpha^2}{C_1}A_{*}\Big(1-\frac{\eps}{4}\Big) \Big)~\bigg|~E_{\star\star}\bigg]
    &\geq 
    1/2
    \\
    \implies 
    \bbP^{\wh\Theta_{C_1\alpha}}\bigg[n_{\std}\leq \sqrt{\frac{2}{\pi}}\frac{\alpha^2}{C_1}A_{*}\Big(1-\frac{\eps}{4}\Big)\bigg]
    &\geq 
    p_{\star\star}(C_1\alpha)/2.
    \end{aligned}
    \end{equation}   

    Next we return to $\wh\gamma_{\alpha}$ and $\wh\P_{\wh\gamma_\alpha}$. We can use Lemma~\ref{lem:confinement-after-truncation} to similarly lower bound the number of intervals obeying \eqref{eq:count-interval-to-compare}.
    Recall that this lemma ensures $p_{\star}(\alpha)\leq \e^{-\alpha^2}$.
    Indeed if $|\omega_s-\omega_t|\leq C_1/\alpha$, then since $C_1\gg 1/\eps$ the intensity of standard intervals $([s,t],u)$ with $\alpha/C_1^2\leq u\leq 2\alpha/C_1^2$ is 
    \begin{equation}
    \label{eq:standard-interval-intensity-usual}
    \sqrt{\frac{2}{\pi}}
    \e^{-(t-s)}
    \e^{-\frac{u^2 |\omega_s-\omega_t|^2}{2}}
    C_1\alpha
    \geq 
    \sqrt{\frac{2}{\pi}}\,\e^{-(t-s)}\e^{-2/C_1^2}C_1\alpha.
    \end{equation}
    Conditioned on the complementary event $E_{\star}(\alpha)^c$ (defined in Lemma \ref{lem:confinement-after-truncation})  and integrating out $u$, an estimate similar to \eqref{eq:poisson-mean} shows the total intensity of standard intervals is at least 
    \begin{align*}
      \bbE^{\wh\P_{\wh\gamma_{\alpha},\omega}}[n_{\std}~|~E_{\star}(\alpha)^c]
      &\geq \e^{-4/C_1^2}\sqrt{\frac{2}{\pi}}\frac{\alpha^2}{C_1}\int\limits_{\substack{0\leq s\leq t\leq 3\\
    |\omega_t-\omega_s|\leq C_1/\alpha}}\e^{-(t-s)}\d s \d t 
    \\
    &\geq \sqrt{\frac{2}{\pi}}\e^{-4/C_1^2}\frac{\alpha^2}{C_1}A_{*}(1-\eps^2)
    \geq \sqrt{\frac{2}{\pi}}\frac{\alpha^2}{C_1}A_{*}(1-\eps/8)
    \end{align*}
    if $C_1$ is large enough.
    Since $\alpha\gg C_1$, a Poisson variable $n$ with $\bbE[n]\geq \sqrt{\frac{2}{\pi}}\frac{\alpha^2}{C_1}A_{*}(1-\eps/8)$ satisfies 
   \begin{align*}
    \bbP\Big[n\leq \sqrt{\frac{2}{\pi}}\frac{\alpha^2}{C_1}A_{*}(1-\eps/4)\Big]&\leq \exp\Bigg(\sqrt{\frac{2}{\pi}}\frac{\alpha^2}{C_1}A_{*}\Big((1-\eps/4)\log \bigg(\frac{1-\eps/8}{1-\eps/4}\bigg)-\eps/8\Big)\Bigg)
    \\
    &=\e^{-\alpha^2 C_2(\eps)}, 
   \end{align*}
   where $C_2(\eps):=-\sqrt{\frac{2}{\pi}}\frac{A_{*}}{C_1}\Big((1-\eps/4)\log \bigg(\frac{1-\eps/8}{1-\eps/4}\bigg)-\eps/8\Big)\in (0,1)$ for $0<\eps <1$.
    Combined with Lemma~\ref{lem:confinement-after-truncation} and taking $n=n_{\std}$ on the complementary event $E_{\star}(\alpha)^c$, we conclude that 
    \begin{align*}
    \bbP^{\wh\Theta_{\wh\gamma_{\alpha}}}\bigg[n_{\std}\geq \sqrt{\frac{2}{\pi}}\frac{\alpha^2}{C_1}A_{*}(1-\eps/4)\bigg]
    &\geq 
    (1-\e^{-\alpha^2})
    (1-\e^{-C_2(\eps)\alpha^{2}})
    \\
    &\geq 1-2\e^{-\alpha^2 C_2(\eps)}
    \\
    &\geq 1-\e^{-\alpha^2 C_2(\eps)/2}
    \end{align*}
    for $\alpha$ large enough, since $C_2(\eps)\in (0,1)$. By the comparison \eqref{eq:polaron-domination-concrete-basic}, we deduce
    \begin{align*}
    \e^{-\alpha^2 C_2(\eps)/2}
    &\geq 
    \bbP^{\wh\Theta_{\wh\gamma_{\alpha}}}\Bigg[n_{\std}\leq \sqrt{\frac{2}{\pi}}\frac{\alpha^2}{C_1}A_{*}(1-\eps/4)\Bigg]
    \\
    &\stackrel{\eqref{eq:polaron-domination-concrete-basic}}{\geq}
    \bbP^{\wh\Theta_{C_1\alpha}}\Bigg[ 
    n_{\std}\leq \sqrt{\frac{2}{\pi}}\frac{\alpha^2}{C_1}A_{*}(1-\eps/4)\Bigg]
    \stackrel{\eqref{eq:given-omega-few-intervals}}{\geq}
    p_{\star\star}(C_1\alpha)/2.
    \end{align*}
    The result of the previous line completes the proof by setting $\tilde{C}_1(\eps):=\frac{C_2(\eps)}{4C_1(\eps)^2}$,  as it gives:
    \[
    p_{\star\star}(\alpha)
    \leq 2\e^{-\frac{\alpha^2 C_2(\eps)}{2C_1(\eps)^2}}
    \leq \e^{-\alpha^2 \tilde{C}_1(\eps)}.
    \qedhere
    \]
    
\end{proof}

To apply Proposition~\ref{cor:polaron-many-intervals} in the next section, we need to reformulate it in terms of Poissonian intensities of intervals (conditionally on $\omega$) rather than empirical interval counts.
We say the interval $([s,t],u) \in (\hat\xi,\hat u)$ is \textit{super-standard} if $k\leq s<t\leq k+3$ for some $k\in\bbZ$, and $u\in [\alpha/C_1^4, 2\alpha/C_1^4]$. 
Let $\mu_{\bbR}$ and $\mu_{\bbR^2}$ denote Lebesgue measure on $\bbR$ and $\bbR^2$, and let $\de\Lambda_{\sstd}([a,b])$ be the $\omega$-dependent intensity of super-standard intervals (recall \eqref{def Lambda}), viewed as a density on $\bbR^2$, that is 
\begin{equation}\label{def-rho-sstd}
\begin{aligned}
\Lambda_{\sstd}(\d s \d t) &= \alpha \sqrt{\frac 2\pi} \e^{-(t-s)} 
\bigg(\int_{\frac\alpha{C_1^4}}^{\frac{2\alpha}{C_1^4}} \e^{-\frac{u^2|\omega(t)- \omega(s)|^2}{2}} \d u\bigg) \,\, \1_{k \leq s < t \leq k+3} \,\, \d s \d t \\
&= \alpha^2 \sqrt{\frac 2\pi} \e^{-(t-s)} 
\bigg(\int_{\frac 1 {C_1^4}}^{\frac{2}{C_1^4}} \e^{-\frac{u^2 \alpha^2 |\omega(t)- \omega(s)|^2}{2}} \d u\bigg) \,\, \1_{k \leq s < t \leq k+3} \,\, \d s \d t.
\end{aligned}
\end{equation}
For each $k\in \{0,1,\dots,\alpha^{10}\}$, let 
\begin{equation}
\label{eq:def-S-k}
S_k
:=
\bigg\{
(a,b)~:~
\Lambda_{\sstd}([a,b])\leq \alpha^2/C_1^5
\bigg\}
\subseteq [2k,2k+1]\times [2k+2,2k+3].
\end{equation}

\begin{cor}
\label{cor:super-standard-intensity}
    With parameters $1\ll 1/\eps\ll C_0\ll C_1\ll \alpha$ as in Lemma \ref{lem:confinement-after-truncation},
    \[
    \wh\bbP_{\alpha}\big[\mu_{\bbR^2}(S_k)\leq \eps~~~\forall ~0\leq k\leq \alpha^{10}\big]
    \geq 1-\e^{-\alpha}.
    \]
\end{cor}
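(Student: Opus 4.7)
The strategy is to use Proposition \ref{cor:polaron-many-intervals} after showing deterministically that membership $(a,b)\in S_k$ forces the rescaled increment $\alpha|\omega_b-\omega_a|$ to be large, then to take a union bound over $k$. First I would extract a pointwise lower bound on the super-standard density from \eqref{def-rho-sstd}. For $(a,b)\in[2k,2k+1]\times[2k+2,2k+3]$ one has $b-a\in[1,3]$, and whenever $\alpha|\omega_b-\omega_a|\leq M$ the Gaussian integrand obeys $\exp(-u^2\alpha^2|\omega_b-\omega_a|^2/2)\geq \exp(-2M^2/C_1^8)$ throughout $u\in[C_1^{-4},2C_1^{-4}]$, yielding
\[
\Lambda_{\sstd}([a,b])\;\geq\;\alpha^2\sqrt{\tfrac{2}{\pi}}\,e^{-3}\,C_1^{-4}\,\exp(-2M^2/C_1^8).
\]
Taking $M:=C_1^4$ makes the right-hand side at least $\sqrt{2/\pi}\,e^{-5}\alpha^2/C_1^4$, which strictly exceeds $\alpha^2/C_1^5$ once $C_1$ is large enough (as guaranteed by the hierarchy $1/\eps\ll C_0\ll C_1\ll\alpha$). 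Contrapositively,
\[
(a,b)\in S_k\;\Longrightarrow\;\alpha|\omega_b-\omega_a|\;>\;C_1^4.
\]

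\textbf{Reduction and single-$k$ estimate.} Setting $\tilde C:=C_1^4$, and using $S_k\subseteq[2k,2k+3]^2\cap\{s<t\}$ together with $t-s\leq 3$ on $S_k$, the preceding implication gives
\[
\int_{2k\leq s<t\leq 2k+3} e^{-(t-s)}\,\1_{\alpha|\omega_t-\omega_s|\geq \tilde C}\,\d s\,\d t\;\geq\;e^{-3}\,\mu_{\bbR^2}(S_k).
\]
Hence if $\mu_{\bbR^2}(S_k)>\eps$ then $\omega$ belongs to the event $E_{\star\star}(\alpha)$ of Proposition \ref{cor:polaron-many-intervals} taken with $j=2k$, threshold $\tilde C$, and tolerance $\eps':=e^{-3}\eps/A_*$. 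Since raising the threshold in the definition of $E_{\star\star}$ only shrinks the event, and $\tilde C=C_1^4\geq\tilde C(\eps')$ once $C_1$ is large compared to $1/\eps$, Proposition \ref{cor:polaron-many-intervals} yields $\wh\bbP_\alpha[\mu_{\bbR^2}(S_k)>\eps]\leq\e^{-\tilde C_1(\eps')\alpha^2}$. A union bound over $k\in\{0,1,\dots,\alpha^{10}\}$ then gives total probability at most $(\alpha^{10}+1)\,\e^{-\tilde C_1(\eps')\alpha^2}\leq\e^{-\alpha}$ for $\alpha$ sufficiently large, which is the claim.

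\textbf{Main difficulty.} The substantive probabilistic content has already been done in Proposition \ref{cor:polaron-many-intervals}; the work here is essentially bookkeeping of the constants. The one subtle point is that one must choose $M$ (equivalently $\tilde C$) so as to simultaneously (i) make the Gaussian integral lower bound strictly exceed the $S_k$-threshold $\alpha^2/C_1^5$, and (ii) dominate the quantitative threshold $\tilde C(\eps')$ supplied by Proposition \ref{cor:polaron-many-intervals}. Both requirements are precisely what the hierarchy $1\ll 1/\eps\ll C_0\ll C_1\ll\alpha$ is designed to accommodate.
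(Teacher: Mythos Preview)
Your proof is correct and follows essentially the same approach as the paper: show deterministically that $(a,b)\in S_k$ forces $\alpha|\omega_b-\omega_a|$ to exceed a threshold (you use $C_1^4$, the paper uses $C_1^3/\eps$, which are comparable under the hierarchy $C_1\gg 1/\eps$), then invoke Proposition~\ref{cor:polaron-many-intervals} and union-bound over $k$. You are actually more explicit than the paper about the $e^{-3}/A_*$ conversion factor between $\mu_{\bbR^2}(S_k)$ and the weighted integral in $E_{\star\star}(\alpha)$, which the paper's terse proof leaves implicit.
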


\begin{proof}
    By Proposition~\ref{cor:polaron-many-intervals} and the bound $(1+\alpha^{10})\e^{-2\alpha}\leq \e^{-\alpha}$, it suffices to show that if $|\omega_a-\omega_b|\leq C_1^3/(\eps\alpha)$, then $(a,b)\notin S_k$.
    This deterministic estimate is similar to those within Proposition~\ref{cor:polaron-many-intervals}. In particular, one mimics \eqref{eq:standard-interval-intensity-usual} to argue that since $u$ is small, the interval intensity on $([a,b],u)$ is near-maximal.
\end{proof}

\begin{remark}
\label{rem:general-p}
    For general $p\in (0,2)$, scale invariance implies the Gaussian mixture representation
    \[
    \frac{1}{x^p}
    =
    A_p
    \int_{0}^{\infty}
    u^{p-1}
    e^{-u^2 x^2/2}
    \de u.
    \]
    As a result, our methods apply to this more general class of potentials (we require $0<p<2$ so that $Z_{\alpha,T}<\infty$, see \cite[Lemma 3.7]{DV83}). 
    Namely, if one sets
    \[
    \wh\gamma_{\alpha}=C\alpha\cdot \1_{u\leq \alpha^{\frac{1}{2-p}}}
    \]
    and takes $p$-standard intervals to have $u\in [\alpha^{\frac{1}{2-p}}/C^{O(1)}, 2\alpha^{\frac{1}{2-p}}/C^{O(1)}]$, then the result and proofs above remain essentially unchanged.
    The arguments of the next section then imply 
    \[
    \bbE^{\wh\bbP_\alpha}[|\omega_T-\omega_0|^2]\leq O(T\alpha^{-\frac{4}{2-p}}).
    \]
    (Here, the implicit constant may depend on $0<p<2$.)
    This is expected to be optimal based on a natural generalization of the Pekar conjecture explained in \cite{MS22}.
    The generalization discussed in Section~\ref{subsec:universality} similarly applies to this setting.
\end{remark}

\section{Proof of Proposition \ref{prop-second-mom}.}\label{sec long paths}




In this section, we will conclude the proof of Proposition \ref{prop-second-mom} (see Section \ref{sec proof prop second moment}). For this purpose, Section \ref{sec-good-verygood} and Section \ref{section-long-paths} will provide the recipe for defining the rounds of intervals from \eqref{def rounds} with the desired properties. 

With this aim, we now fix $\omega$ such that Corollary~\ref{cor:super-standard-intensity} holds with $\eps = 10^{-10}$, so in particular the sets $S_k=S_k(\omega)$ are fixed and 
\[
    \max_{k=0,\dots,\alpha^{10}} \mu_{\bbR^2}(S_k)\leq \eps\leq 10^{-10}.
\]
We write $\vec S=\{S_k\}_{k=0}^{\alpha^{10}}$.
In order to give ourselves some slack room (as will be important later), we will argue below using only the weaker bound $\max_k \mu_{\bbR^2}(S_k)\leq 10^{-7}$.
For convenience, we will write

\begin{equation}
\label{eq:g-S-k}
f_{S_k}(a,b): =\1_{(a,b)\in S_k}, \qquad\mbox{and}\qquad
g_{S_k}(a):= \int_{2k+2}^{2k+3} f_{S_k}(a,b)\de b.
\end{equation}

\subsection{Good and very good points.}\label{sec-good-verygood}
For simplicity, let us begin with $k=0$. Given the collection of intervals 
$S_0=\big\{(a,b): \Lambda_{\sstd}([a,b])\leq \frac{\alpha^2}{C_1^5}\big\} \subset [0,1]\times [2,3]$, we say that a point $a \in [0,1]$ 
is {\it good} w.r.t. $S_0$, written $a\in \good(S_0)$, if the corresponding right endpoints $b\in [2,3]$ of the intervals $[a,b]$ belonging to $S_0$ have measure at most $10^{-1}$, i.e., we write $a\in \good(S_0)$ if 
$$
g_{S_0}(a)= \int_2^3 \1_{(a,b)\in S_0} \d b \leq 10^{-1},
$$
else, we say $a\in [0,1]$ is {\it bad} for $S_0$. Extending this definition, we say a point $a \in [\frac 13, \frac 23]$ (chosen from the middle third) is {\it very good} w.r.t. the collection $S_0$, written $a\in \verygood(S_0)$, if for all $x \in (0,\frac 13]$, the density of left end-points $y \in [a,a+x]$ for which the measure $g_{S_0}(y)$ of the corresponding right endpoints $b$ of the intervals $(y,b)$ belonging to $S_0$ is at most $10^{-3}$, i.e., we declare $[\frac 13, \frac 23]\ni a \in \verygood(S_0)$ if \begin{equation}
\label{eq:def-very-good}
\frac{1}{x}\int_a^{a+x} g_{S_0}(y) \d y= \frac 1 x \int_a^{a+x}
\int_{2}^{3}  \1_{(y,b)\in S_0}\de b ~\de y\leq 10^{-3},\quad\forall x\in (0,1/3].
\end{equation}

For any $k=0,\dots, \alpha^{10}$, the definitions $\good(S_0)$ and $\verygood(S_0)$ extend in an obvious manner to $\good(S_k)$ and $\verygood(S_k)$ for the collections of intervals $S_k \subset [2k, 2k+1] \times [2k+2, 2k + 3]$. Moreover, applying the Lebesgue density theorem to 
$g_{S_k}(a)$ shows that $\verygood(S_k)\subseteq\good(S_k)\subseteq [2k,2k+1]$, up to sets of measure zero. By removing a measure zero set from $\verygood(S_k)$, we can and do assume this containment always holds. (We emphasize that although $S_k$ ``involves'' the interval $[2k+2,2k+3]$, only points in $[2k,2k+1]$ can be in $\good(S_k)$.)

Our purpose in Section \ref{section-long-paths} will be to construct long paths 
$
\{[s_1^{\ssup L}, t_1^{\ssup L}], [s_2^{\ssup L}, t_2^{\ssup L}], \dots, [s_{\alpha^{10}}^{\ssup L}, t_{\alpha^{10}}^{\ssup L}]\}_{L=1}^{\delta\alpha^2} \in \hat\xi
$
of intervals with {\it good left endpoints} $s^{\ssup L}_j$ and {\it very good right endpoints} $t^{\ssup L}_j$. For this purpose, 
we will need the notions (of good and very good points) for larger $S_k^{\dagger}\supseteq S_k$, which still obey $S_k^{\dagger}\subseteq [2k,2k+1]\times [2k+2,2k+3]$.
Below we give some simple preparatory results, taking $k=0$ for convenience. Importantly, these results do not rely on the precise definition \eqref{eq:def-S-k}.

\begin{prop}
\label{prop:very-good}
If $a\in\verygood(S_0^{\dagger})$ then
\[
\frac{1}{x}
\mu_{\bbR}\big([a,a+x]\cap \good(S_0^{\dagger})\big)
\geq 1-10^{-1},
\quad\forall x\in (0,1/3].
\]
\end{prop}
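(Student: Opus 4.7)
The plan is to deduce this directly from Markov's inequality applied to the non-negative function $g_{S_0^{\dagger}}$ on the interval $[a,a+x]$. The ``very good'' condition precisely says that the $L^1$-average of $g_{S_0^{\dagger}}$ on $[a,a+x]$ is at most $10^{-3}$, while the ``good'' condition says that $g_{S_0^{\dagger}}$ itself is at most $10^{-1}$ at the point in question. Since $10^{-3} \ll 10^{-1}$, Markov's inequality gives a much stronger conclusion than the one asserted (indeed, with the $10^{-1}$ replaced by $10^{-2}$).

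More concretely, I would fix $x\in(0,1/3]$ and let
\[
B_x := \{y\in[a,a+x] : g_{S_0^{\dagger}}(y) > 10^{-1}\} = [a,a+x]\setminus\good(S_0^{\dagger}).
\]
Using non-negativity of $g_{S_0^{\dagger}}$, the very good hypothesis \eqref{eq:def-very-good} (applied to $S_0^{\dagger}$) yields
\[
10^{-3}\,x \;\geq\; \int_a^{a+x} g_{S_0^{\dagger}}(y)\,\d y \;\geq\; \int_{B_x} g_{S_0^{\dagger}}(y)\,\d y \;\geq\; 10^{-1}\,\mu_{\R}(B_x),
\]
so that $\mu_{\R}(B_x) \leq 10^{-2}\,x$. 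Consequently,
\[
\mu_{\R}\big([a,a+x]\cap\good(S_0^{\dagger})\big) \;=\; x - \mu_{\R}(B_x) \;\geq\; (1-10^{-2})\,x \;\geq\; (1-10^{-1})\,x,
\]
which is the desired conclusion.

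There is no real obstacle here: this is purely an $L^1\!\to\!L^\infty$ Markov estimate on the interval $[a,a+x]$, and the slack between $10^{-3}$ and $10^{-1}$ is what provides the margin. The only mild subtlety is that the definition of ``very good'' is restricted to $a\in[\tfrac13,\tfrac23]$ and $x\in(0,\tfrac13]$, so that $[a,a+x]\subseteq[0,1]$ and one stays in the domain where $\good(S_0^{\dagger})$ is defined; this is built into the hypotheses. The same argument is then applied verbatim, with $[2k,2k+1]$ and $[\tfrac13+2k,\tfrac23+2k]$ in place of $[0,1]$ and $[\tfrac13,\tfrac23]$, to obtain the corresponding statement for every $S_k^{\dagger}$, which is how this lemma will actually be used later when constructing the long paths of intervals.
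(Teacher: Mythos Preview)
Your proof is correct and takes essentially the same approach as the paper: both are the Markov/Chebyshev-type estimate comparing the $L^1$ average of $g_{S_0^\dagger}$ on $[a,a+x]$ (bounded by $10^{-3}$ from the very-good hypothesis) against the threshold $10^{-1}$ defining good points. The paper phrases it by contradiction rather than directly, but the content is identical, and you correctly observe that the argument actually yields the stronger bound $1-10^{-2}$.
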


\begin{proof}
    Given $x\in (0,1/3]$, the definition of $\good(S_0^{\dagger})$ yields
    \begin{align*}
    10
    \int_a^{a+x}
    \int_{2}^{3} 
    f_{S_0^{\dagger}}(y,b)\de b ~\de y
    \geq 
    \mu_{\bbR}([a,a+x]\backslash \good(S_0^{\dagger})).
    \end{align*}
    If the claim to be proved is false, then the right-hand side is at least $x/10$, and rearranging contradicts the definition \eqref{eq:def-very-good}.
\end{proof}

\begin{prop}
\label{prop:mostly-very-good}
    If $\mu_{\bbR^2}(S_0^{\dagger})\leq 10^{-7}$, then 
    $\mu_{\bbR}\big(\verygood(S_0^{\dagger})
    \big)\geq 0.33$. 
\end{prop}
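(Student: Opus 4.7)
\medskip

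\noindent\textbf{Plan for Proposition \ref{prop:mostly-very-good}.}
The key observation is that the defining inequality \eqref{eq:def-very-good} for $\verygood(S_0^{\dagger})$ is precisely the statement that a one-sided Hardy--Littlewood maximal function of $g_{S_0^{\dagger}}$ stays below the threshold $10^{-3}$. Thus the plan is to apply the weak-type $(1,1)$ inequality for the Hardy--Littlewood maximal operator, which was already alluded to in Step 5 of the outline of the proof of Theorem~\ref{thm}.

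More precisely, introduce (for any integrable $g\colon\bbR\to[0,\infty)$ supported on $[0,1]$) the forward maximal function
\[
Mg(a)\;:=\;\sup_{x\in(0,1/3]}\frac{1}{x}\int_{a}^{a+x}g(y)\,\de y.
\]
Then the complement of $\verygood(S_0^{\dagger})$ inside $[\tfrac13,\tfrac23]$ is, by definition, exactly the set
\[
\Bigl\{a\in[\tfrac13,\tfrac23]:Mg_{S_0^{\dagger}}(a)>10^{-3}\Bigr\}.
\]
By Fubini,
\[
\int_{0}^{1}g_{S_0^{\dagger}}(y)\,\de y\;=\;\mu_{\bbR^2}(S_0^{\dagger})\;\le\;10^{-7}.
\]
The standard weak-type $(1,1)$ bound (with a constant $C\le 3$; the one-sided version actually admits $C=1$, but even the crude constant suffices) gives
\[
\mu_{\bbR}\!\Bigl(\bigl\{a:Mg_{S_0^{\dagger}}(a)>10^{-3}\bigr\}\Bigr)\;\le\;\frac{C}{10^{-3}}\cdot 10^{-7}\;\le\;3\cdot 10^{-4}.
\]
Since $[\tfrac13,\tfrac23]$ has length $\tfrac13$, we conclude
\[
\mu_{\bbR}\bigl(\verygood(S_0^{\dagger})\bigr)\;\ge\;\tfrac13-3\cdot 10^{-4}\;>\;0.33,
\]
as required.

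I do not anticipate any serious obstacle here: the only step to be careful about is justifying the use of the weak-type bound for the one-sided maximal function, which is classical (e.g.\ via a Vitali-type covering argument applied to the intervals $[a,a+x_a]$ witnessing $Mg_{S_0^{\dagger}}(a)>10^{-3}$). The argument is entirely deterministic and local in $S_0^{\dagger}$, so it carries over verbatim to any $S_k^{\dagger}\subseteq[2k,2k+1]\times[2k+2,2k+3]$; this is exactly the flexibility needed in Section~\ref{section-long-paths}, where $S_k^{\dagger}$ will be enlarged by the intervals consumed during earlier steps.
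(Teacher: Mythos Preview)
Your proposal is correct and follows essentially the same approach as the paper: both apply the weak-type $(1,1)$ Hardy--Littlewood maximal inequality to $g_{S_0^{\dagger}}$, use $\|g_{S_0^{\dagger}}\|_{L^1}=\mu_{\bbR^2}(S_0^{\dagger})\leq 10^{-7}$, and subtract the resulting small exceptional set from the length $\tfrac13$ of $[\tfrac13,\tfrac23]$. The only cosmetic difference is that you work directly with the one-sided maximal function (which matches the one-sided definition \eqref{eq:def-very-good} exactly), whereas the paper uses the centered maximal function at threshold $10^{-3}/2$ and then observes $\{Mg\leq 10^{-3}/2\}\subseteq\verygood(S_0^{\dagger})$; the resulting numerical bounds ($3\cdot 10^{-4}$ versus $6\cdot 10^{-4}$) are both comfortably below $\tfrac13-0.33$.
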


\begin{proof}
    We apply the one-dimensional Hardy--Littlewood maximal inequality to $g=g_{S_0^{\dagger}}$ (recall \eqref{eq:g-S-k}).
    For concreteness, we define $g(a)=0$ for $a\in\bbR\backslash [0,1]$. Then by the hypothesis $|g|_{L^1}\leq 10^{-7}$, hence 
    \[
    Mg(a):= \sup_{x\geq 0} \frac{1}{2x}\int_{a-x}^{a+x} g(y)~\de y
    \]
    satisfies 
    \[
    \mu_{\bbR}\lt(\bigg\{a:Mg(a)\geq 10^{-3}/2\bigg\}\rt)\leq \frac{6\cdot 10^{-7}}{10^{-3}}\leq \frac{1}{1000}\leq \frac{1}{3} - 0.33.
    \]
    Since $\{a:Mg(a)\leq 10^{-3}/2\}\subseteq \verygood(S_0^{\dagger})$ the conclusion follows.
\end{proof}

\begin{prop}
\label{prop:good-points-connected}
    If $\mu_{\bbR^2}(S_0^{\dagger})\leq 10^{-7}$ and $\mu_{\bbR^2}(S_1^{\dagger})\leq 10^{-7}$, then for all $a\in \good(S^\dagger_0)$,
    \[
    \mu_{\bbR}\Big(
    \verygood(S_1^{\dagger})\cap
    \big\{
    b:
    (a,b)\notin S_0^{\dagger}
    \big\}\Big)
    \geq 0.2.
    \]
\end{prop}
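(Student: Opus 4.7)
The plan is to combine Proposition~\ref{prop:mostly-very-good} applied to $S_1^\dagger$ with the definition of $\good(S_0^\dagger)$ via a simple union-bound calculation. I do not expect any substantial obstacle: the statement is essentially a quantitative overlap between two large subsets of the middle-third interval $[7/3, 8/3]$.

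First, I would apply Proposition~\ref{prop:mostly-very-good} with $S_0^\dagger$ replaced by $S_1^\dagger$ (translated by $2$, which does not affect the argument since only the hypothesis $\mu_{\bbR^2}(S_1^\dagger) \leq 10^{-7}$ matters). This yields
\[
\mu_{\bbR}\big(\verygood(S_1^\dagger)\big) \geq 0.33,
\]
and by construction $\verygood(S_1^\dagger) \subseteq [2+\tfrac13, 2+\tfrac23]$.

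Next, I would use that $a \in \good(S_0^\dagger)$ means
\[
g_{S_0^\dagger}(a) = \int_{2}^{3} \1_{(a,b) \in S_0^\dagger}\,\de b \leq 10^{-1}.
\]
Hence the ``forbidden'' slice $B_a := \{b \in [2,3] : (a,b) \in S_0^\dagger\}$ has Lebesgue measure at most $10^{-1}$. Since $\verygood(S_1^\dagger) \subseteq [2+\tfrac13,2+\tfrac23] \subseteq [2,3]$, we may subtract:
\[
\mu_{\bbR}\Big(\verygood(S_1^\dagger) \cap \{b : (a,b) \notin S_0^\dagger\}\Big) \geq \mu_{\bbR}\big(\verygood(S_1^\dagger)\big) - \mu_{\bbR}(B_a) \geq 0.33 - 0.10 = 0.23,
\]
which is at least $0.2$, as required.

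The only sanity check worth noting is that the notions ``$\good$'' and ``$\verygood$'' are defined on the \emph{left} endpoint of the corresponding interval rectangle; i.e., for $S_1^\dagger \subseteq [2,3] \times [4,5]$ the set $\verygood(S_1^\dagger)$ lives in the middle third of $[2,3]$, which is exactly the range of $b$ appearing in the statement. Hence after verifying these index conventions, the calculation above gives the bound with room to spare (the $0.23$ leaves a gap of $0.03$), which is the reason it suffices to work with the weaker a priori bound $\max_k \mu_{\bbR^2}(S_k) \leq 10^{-7}$ rather than the sharper $10^{-10}$ coming from Corollary~\ref{cor:super-standard-intensity}.
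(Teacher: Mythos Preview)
Your proof is correct and follows exactly the same approach as the paper: apply Proposition~\ref{prop:mostly-very-good} to $S_1^\dagger$ to get $\mu_{\bbR}(\verygood(S_1^\dagger))\geq 0.33$, then use the definition of $a\in\good(S_0^\dagger)$ to bound the forbidden slice by $0.1$, and subtract. The paper's proof is the one-line remark ``since $0.33-0.1\geq 0.2$,'' which is precisely what you spelled out.
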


\begin{proof}
    This lemma follows by the definition of $\good(S_0^{\dagger})$ combined with Proposition~\ref{prop:mostly-very-good}, since $0.33-0.1\geq 0.2$.
\end{proof}

\subsection{Finding long paths of intervals.}\label{section-long-paths} We set a final parameter $\delta>0$ so that
\begin{equation}
\label{eq:delta-def}
C_1\ll 1/\delta\ll\alpha.
\end{equation}
We will find $\delta\alpha^2$ long paths of intervals connecting $[0,1]\to [2,3]\to\dots\to [2\alpha^{10},2\alpha^{10}+1]$, with small gaps of {\it typical} order $O_{C_1}(\alpha^{-2})$. We will continue to {\it treat $\omega$ fixed} and restrict to super-standard intervals (recall \eqref{eq:count-interval-to-compare}). Hence, the collection $\vec S=\{S_k\}_{k=0}^{\alpha^{10}}$ defined in \eqref{eq:def-S-k} will also be fixed. 


Our purpose will now be to prove an algorithm that constructs intervals iteratively using very good points (recall Section \ref{sec-good-verygood}). To start with, we will construct 
$$
\begin{aligned}
&\mbox{the first path:}\qquad
\eta_1:=\bigg\{(s_1^{\ssup 1},t_1^{\ssup 1}),(s_2^{\ssup 1},t_2^{\ssup 1}),\dots,(s_{\alpha^{10}}^{\ssup 1},t_{\alpha^{10}}^{\ssup 1})\bigg\}, \quad\mbox{and then}\\ 
&\mbox{the second path:}\qquad
\eta_2:=\bigg\{(s_1^{\ssup 2},t_1^{\ssup 2}),(s_2^{\ssup 2},t_2^{\ssup 2}),\dots,(s_{\alpha^{10}}^{\ssup 2},t_{\alpha^{10}}^{\ssup 2})\bigg\} 
\end{aligned}
$$
and so on to construct the paths for all steps $L=1,2,\dots,\delta\alpha^2$. 
Our construction will always ensure:
$$
2k+ \frac 13\leq t_k^{\ssup L}\leq 2k+ \frac 2 3\qquad\mbox{and}\qquad 2k+ \frac 13\leq s_{k+1}^{\ssup L}\leq 2k+1.
$$

At the beginning of the first step (i.e., for $L=1$), we set 
\begin{equation}\label{def-S-aug1}
S^{\dagger}_0(1):= S_0, \quad S^{\dagger}_1(1):= S_1, \quad S^{\dagger}_2(1)= S_2, \dots, S^{\dagger}_{\alpha^{10}}(1):=S_{\alpha^{10}}
\end{equation}
to be the set of super-standard intervals $S_0$ in $[0,1]\times[2,3]$, the set of super-standard intervals $S_1$ in $[2,3]\times[4,5]$ and so on, and 
\begin{equation}\label{def-UV1}
 U_{j,1}:= \good(S_j^{\dagger}(1))\subseteq [2j,2j+1],
    \quad\quad
    V_{j,1}:= \verygood(S_j^{\dagger}(1))\subseteq [2j,2j+1]  \quad \forall j=0,\dots,\alpha^{10}, 
\end{equation}
for the set of good and very good points in $[2j,2j+1]$, relative to $S^{\dagger}_j(1)$, for all $j=0,\dots,\alpha^{10}$. 

Assuming \textit{failure has not yet been declared} (see below for the definition of ``declaring failure"), step $L$ will construct intervals $[s_j^{\ssup L},t_j^{\ssup L}]$ with \textit{good left endpoint} $s_j^{\ssup L}\in U_{j-1,L}$ and \textit{very good right endpoint} $t_j^{\ssup L}\in V_{j,L}$. The first step of the algorithm (i.e., the case $L=1$) is defined as follows:
\begin{itemize}
    \item We choose some $t_0^{\ssup 1}\in V_{0,1} \subset [\frac 13, \frac 23]$.
    \item Let $s_1^{\ssup 1}$ be the smallest value satisfying the following:
    \begin{itemize}
        \item $s_1^{\ssup 1} \in U_{0,1} \cap [t_0^{\ssup 1}, 1]$ and 
\item The collection of point process intervals  $(\hat\xi,\hat u)\in \wh{\mathscr Y}_{\alpha^{10}}$ contains a super-standard interval $[s_1^{\ssup 1}, t_1^{\ssup 1}]$ with $t_1^{\ssup 1} \in V_{1,1} \subset [\frac 73, \frac 83]$.
\end{itemize}
\item Similarly, for $j=2,\dots,\alpha^{10}$, let $s_{j}^{\ssup 1}$ be the smallest value satisfying 
\begin{itemize}
\item $s_j^{\ssup 1} \in U_{j-1,1} \cap [t_{j-1}^{\ssup 1}, 2j-1]$ and 
\item The collection of intervals $\hat\xi$ contains a super-standard interval $[s_j^{\ssup 1}, t_j^{\ssup 1}]$ with $t_j^{\ssup 1} \in V_{j,1} \subset [ 2j+\frac{1}3,  2j+\frac 23]$.
\end{itemize}
\item If such $s_j^{\ssup 1}$ ever fails to exist for any $j$, we set $s_j^{\ssup 1}=\infty$, {\it declare failure} and stop the algorithm. 
\item If for any $j=0,\dots,\alpha^{10}$, 
$\mu_{\R^2}(S^{\dagger}_j(1))\geq 10^{-7}$, {\it declare failure} and stop the algorithm.
\end{itemize}
The steps above define the first step $L=1$. For $L=2,\dots, \delta \alpha^2$, we proceed very similarly, except that at the beginning of step $L\geq 2$, we augment the set of bad pairs (compare \eqref{def-S-aug1}) to 
    \begin{equation}
    \label{eq:augmented-bad-pairs}
    S^{\dagger}_j
    =
    S^{\dagger}_j(L)
    :=
    S_j
    \cup 
    \bigg(
    \bigg(
    \bigcup_{\ell=1}^{L-1}
    [t_j^{\ssup \ell},s_{j+1}^{\ssup \ell}]
    \bigg)
    \times 
    [2j+2,2j+3]\bigg).
    \end{equation}
Hence $S^{\dagger}_k(L)$ includes all intervals whose left-endpoint is within a ``gap'' from a previous step. Similar to \eqref{def-UV1}, we set 
    \begin{equation}
    \label{eq:U-V-j-L}
    U_{j,L}:=\good(S_j^{\dagger}(L))\subseteq [2j,2j+1],
    \quad\quad
    V_{j,L}:=\verygood(S_j^{\dagger}(L))\subseteq [2j,2j+1]
    \end{equation}
for the sets of good and very good points in $[2j,2j+1]$ relative to $S_j^{\dagger}(L)$. We also let 
\begin{equation}\label{def-xiL}
\xi_L := 
\bigcup_{\ell=1}^{L-1}
\Big\{[s_j^{\ssup\ell)},t_{j}^{\ssup\ell}]\Big\}_{j=0}^{\alpha^{10}}
\end{equation}
consist of all intervals used before step $L$; hence $\emptyset=\xi_1\subseteq \xi_2\subseteq \dots\subseteq \xi_{\delta\alpha^2}\subseteq \xi$. To define the intervals $\{[s_j^{\ssup L}, t_j^{\ssup L}]\}_{L=2}^{\delta\alpha^2}$ we now proceed as above:
\begin{enumerate}[label = {(\alph*)}]
\item 
We choose $t_0^{\ssup L}\in V_{0,L}\subseteq [1/3,2/3]$ and let $s_j^{\ssup L}\in U_{j-1,L}\cap [t_{j-1}^{\ssup L},2j-1]$ be the smallest value such that $\hat\xi\setminus \xi_L$ (recall \eqref{def-xiL}) contains a super-standard interval $[s_j^{\ssup L},t_j^{\ssup L}]$ with $t_j^{\ssup L}\in V_{j,L}\subseteq [2j+1/3,2j+2/3]$.
    \item 
    \label{it:failure-tail}
    If such $s_{j}^{\ssup L}$ ever fails to exist, set $s_{j}^{\ssup L}=\infty$, declare failure and stop the algorithm.
    \item  
    \label{it:augment-S-failure}
    If for some $1\leq \ell \leq L$ and $j=0,\dots, \alpha^{10}$, it holds that
    $\mu_{\bbR^2}(S^{\dagger}_j(\ell))\geq 10^{-7}$, then declare failure and stop the algorithm.
   \end{enumerate}
Given a fixed $\omega$, we write
\begin{equation}
\label{eq:algorithm-filtration-def}
\cF_{s,k,L}
=
\sigma\big(t_0^{\ssup 1},s_1^{\ssup 1},\dots, s_k^{\ssup L}
\big)
\end{equation}
to denote the $\sigma$-algebra generated by the information until $s_k^{(L)}$ defined in the above algorithm.(Here, we treat $\omega$ as a deterministic continuous function since it has been fixed; one could alternatively include $\omega$ as another random variable generating each $\sigma$-algebra above.)
The $\sigma$-algebra $\cF_{t,k,L}$ is defined similarly based on information up to $t_k^{(L)}$. 
For any $0\leq x\leq 2\alpha^{10}+1$, let (recall \eqref{def-U})
\[
U_{\delta\alpha^2}(x)
=
\begin{cases}
    
\sum_{L=1}^{\delta\alpha^2} 
\sum_{j=0}^{\alpha^{10}} \1_{x\in [t_j^{\ssup L},s_{j+1}^{\ssup L}]},\quad \text{if failure is \emph{never} declared},
\\
\delta\alpha^2,\quad\quad\quad\quad\quad\qquad\quad\qquad \text{if failure is \emph{ever} declared.}
\end{cases}
.
\]
In the first (main) case that failure does not occur, $U_{\delta\alpha^2}(x)$ is the number of gaps containing $x$.

We will now prove Proposition \ref{prop-second-mom}. Recall from Corollary \ref{cor:super-standard-intensity} that the event 
$$
E_\alpha^c:=\bigg\{\omega: \max_{k=0,\dots,\alpha^{10}}\mu_{\R^2}(S_k) \leq 10^{-9}\bigg\}
\qquad\mbox{satisfies}\qquad\wh\P_\alpha[E_\alpha^c]\geq 1- \e^{-\alpha}
\quad\mbox{for large $\alpha$.}
$$

\begin{lemma}
\label{lem:alg-succeeds}
    On the event $E_\alpha^c$ and with the prescribed constant $C_1$ defining $S_k$ in \eqref{eq:def-S-k}, there is an absolute constant $\overline A=\overline A_{C_1}\in (0,\infty)$ such that 
    \begin{equation}
    \label{eq:expected-QV-bounded}
    \limsup_{\alpha\to\infty}\frac{1}{\alpha^{10}}
    \bbE^{\widehat\Theta_{\alpha,\omega}}\Big[
    \int_0^{2\alpha^{10}+1}U_{\delta\alpha^2}(x)^2~\de x
    \Big]
    \leq \overline A.
    \end{equation}
\end{lemma}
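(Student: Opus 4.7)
The plan is to fix $\omega\in E_\alpha^c$ and study the algorithm under the quenched Poisson law $\widehat\Theta_{\alpha,\omega}$ on super-standard intervals. The core step is a uniform conditional exponential tail for the gap lengths $\Gamma_j^{\ssup{L}} := s_{j+1}^{\ssup{L}} - t_j^{\ssup{L}}$. Because $t_j^{\ssup{L}}\in V_{j,L}$ is very-good, Proposition~\ref{prop:very-good} (applied to $S_j^{\dagger}(L)$) yields $|U_{j,L}\cap [t_j^{\ssup{L}},t_j^{\ssup{L}}+x]|\geq 9x/10$ for $x\in(0,1/3]$. For any $s\in U_{j,L}$, Proposition~\ref{prop:good-points-connected} combined with the definition of $S_k$ in \eqref{eq:def-S-k} produces a subset of $V_{j+1,L}$ of measure $\geq 0.2$ on which $\Lambda_{\sstd}(s,t)\geq \alpha^2/C_1^5$, so the marginal intensity satisfies
\[
\lambda(s) := \int_{V_{j+1,L}} \Lambda_{\sstd}(s,t)\, dt \;\geq\; 0.2\,\alpha^2/C_1^5.
\]
Since revealing the finitely many intervals of $\xi_L$ leaves $\hat\xi\setminus \xi_L$ Poisson with intensity $\Lambda_{\sstd}$ (spatial Markov property), one concludes on the event that the algorithm has not yet failed,
\[
\P^{\widehat\Theta_{\alpha,\omega}}\big[\Gamma_j^{\ssup{L}} > x \mid \mathcal{F}_{t,j,L}\big]\;\leq\; e^{-c\alpha^2 x},\qquad x\in[0,1/3],\qquad c:=0.18/C_1^5.
\]

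Next, I would rule out both failure modes at exponentially small cost. Mode (b) (no valid $s_j^{\ssup{L}}$ in $[t_{j-1}^{\ssup{L}}, 2(j-1)+1]$) has conditional probability $\leq e^{-c\alpha^2/3}$ per attempt, and a union bound over the at most $\delta\alpha^{12}$ attempts is still exponentially small. Mode (c) requires $\sum_{\ell=1}^{L-1}\Gamma_k^{\ssup{\ell}}\geq 10^{-7}-10^{-10}$ for some $k\leq \alpha^{10}$, $L\leq \delta\alpha^2$. Using the conditional MGF bound $\E[e^{\mu\Gamma_k^{\ssup{\ell}}}\mid \mathcal{F}_{t,k,\ell}]\leq c\alpha^2/(c\alpha^2-\mu)$, iterating in $\ell$, and optimizing $\mu = c\alpha^2/2$, a Chernoff estimate combined with the union bound gives probability at most $e^{-c'\alpha^2}$ provided $\delta$ is chosen small (depending only on $C_1$) so that $\delta\log 2 < c\cdot 10^{-7}/4$. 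Since $U_{\delta\alpha^2}(x)\leq \delta\alpha^2$ pointwise, the contribution of the failure event to $\alpha^{-10}\E^{\widehat\Theta_{\alpha,\omega}}\int U_{\delta\alpha^2}^2$ is at most $2\delta^2\alpha^4 e^{-c'\alpha^2}\to 0$.

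On the success event I aim to show a uniform pointwise bound $\E^{\widehat\Theta_{\alpha,\omega}}[U_{\delta\alpha^2}(x)^2]\leq \bar A$. For $x\in [2j+1/3,2j+1]$, the event $x\in G_L$ is equivalent to $t_j^{\ssup{L}}\leq x\leq s_{j+1}^{\ssup{L}}$. Conditional on $s_j^{\ssup{L}}$ and prior history, $t_j^{\ssup{L}}$ has density proportional to $\Lambda_{\sstd}(s_j^{\ssup{L}},t)\,\1_{t\in V_{j,L}}$; the numerator is $\leq C_2\alpha^2/C_1^4$ by \eqref{def-rho-sstd} while the normalization is $\geq 0.2\alpha^2/C_1^5$, so this density is bounded by some $M=M(C_1)$ independent of $L$. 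Combined with the exponential tail of $\Gamma_j^{\ssup{L}}$,
\[
\P^{\widehat\Theta_{\alpha,\omega}}[x\in G_L]\;\leq\; \int_{-\infty}^{x} e^{-c\alpha^2(x-t)} M\, dt\;\leq\; M/(c\alpha^2).
\]
For $L<L'$, conditioning on $\mathcal{F}_L$ and repeating the same estimate at step $L'$ (whose good/very-good sets still satisfy these bounds because $\mu_{\R^2}(S_k^{\dagger}(L'))\leq 10^{-7}$ under success) yields $\P[x\in G_L\cap G_{L'}]\leq M^2/(c\alpha^2)^2$. Summing the diagonal and off-diagonal contributions,
\[
\E^{\widehat\Theta_{\alpha,\omega}}[U_{\delta\alpha^2}(x)^2]\;\leq\; \delta M/c + \delta^2 M^2/c^2 \;=:\; \bar A,
\]
uniformly in $x$; integrating over $x\in[0,2\alpha^{10}+1]$ and dividing by $\alpha^{10}$ produces \eqref{eq:expected-QV-bounded}.

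The main obstacle is the bookkeeping in the first step: verifying that the spatial Markov property indeed applies under the stopping-time--based filtration generated by the algorithm's iterative reveals, and that the uniform intensity lower bound $\lambda(s)\geq 0.18\alpha^2/C_1^5$ propagates through all $\delta\alpha^2$ steps despite the shrinking of $U_{j,L}$ and $V_{j,L}$ caused by the augmentation $S_k\mapsto S_k^{\dagger}(L)$ (which is precisely what the control of mode (c) ensures). Once this is in hand, the Chernoff estimates of Step~2 and the second-moment expansion of Step~3 are essentially routine.
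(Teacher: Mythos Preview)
Your proposal is correct and follows essentially the same three-step architecture as the paper's proof: (1) conditional $O_{C_1}(\alpha^{-2})$-subexponential tails for the gap lengths $\Gamma_j^{\ssup L}=s_{j+1}^{\ssup L}-t_j^{\ssup L}$ given $\cF_{t,j,L}$, obtained by combining the very-good property of $t_j^{\ssup L}$ (Proposition~\ref{prop:very-good}) with the intensity lower bound on $U_{j,L}\times V_{j+1,L}$ (Proposition~\ref{prop:good-points-connected} and \eqref{eq:def-S-k}); (2) exponential smallness of both failure modes via a Chernoff/Bernstein argument on $\sum_\ell \Gamma_k^{\ssup\ell}$; (3) the pointwise second-moment bound coming from anti-concentration of $t_j^{\ssup L}$ (bounded conditional density) paired with the exponential gap tail. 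You also correctly flag the one genuine subtlety---that the augmentation $S_k\mapsto S_k^\dagger(L)$ is precisely what absorbs the negative information from previous rounds, so that $U_{j,L}$ is disjoint from all prior gaps and the quenched Poisson structure persists on $U_{j,L}\times V_{j+1,L}$; this is exactly the paper's mechanism (see the sentence following \eqref{eq:gap-tail-abstract}).

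The differences are purely cosmetic. The paper routes Step~1 through the auxiliary quantity $q_j^{\ssup L}=\mu_\R([t_j^{\ssup L},s_{j+1}^{\ssup L}]\cap U_{j,L})$ and the almost-sure bound $\oq_j^{\ssup L}\le 10q_j^{\ssup L}$, whereas you integrate the intensity over $U_{j,L}\cap[t_j^{\ssup L},t_j^{\ssup L}+x]$ directly; both give $O_{C_1}(\alpha^{-2})$-subexponential tails. In Step~2 the paper packages the Chernoff estimate as Lemma~\ref{lem:subexponential-sum} and invokes a Markovian coupling to independent exponentials, while you iterate the conditional MGF bound; again equivalent. In Step~3 the paper observes that the conditional bound $\P[x\in G_L\mid \cF_{s,j,L}]\le \overline A_{C_1}/\alpha^2$ yields stochastic domination of $U_{\delta\alpha^2}(x)$ by a Poisson($\delta\overline A_{C_1}$) variable (hence a bounded second moment), whereas you expand $\E[U^2]$ into diagonal and off-diagonal terms and bound each via the same conditional estimate. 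Both executions are valid and yield the same conclusion.
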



\begin{remark}\label{remark T0 T}
 As mentioned in Remark \ref{remark T0}, it might also be possible to construct the algorithm (a) - (c) (defined below \eqref{def-xiL}) on the entire interval $[0,T]$ by declaring ``restart" instead of ``failure." Then applying Lemma \ref{lemma-square-root} and Corollary \ref{cor-second-mom} on $[0,T]$ directly would imply the proof of Theorem \ref{thm} (without requiring sub-additivity from Corollary \ref{cor:subadditivity}).

\end{remark}

\subsection{Proof of Lemma~\ref{lem:alg-succeeds}.}\label{sec proof prop second moment}

We will use the following terminology: we say that a $\bbR_{\geq 0}$-valued random variable $X$ is $\oC$-subexponential if $\bbP[X\geq x]\leq \e^{-x/\oC}$ for all $x\geq 0$. For the proof of Lemma~\ref{lem:alg-succeeds}, we will need the following estimate for sums of independent exponential random variables, which follows by Bernstein's inequality after centering: 

\begin{lemma}
\label{lem:subexponential-sum}
    Given $\oC>0$, there exists $\delta>0$ such that the following holds for large enough $\alpha$.
    Let $X_1,\dots,X_{\delta\alpha^2}$ be independent almost surely positive $\oC$-subexponential random variables.
    Then 
    \[
    \bbP\Big[
    \frac{1}{\alpha^2}
    \sum_{i=1}^{\delta\alpha^2} X_i 
    \geq 
    10^{-10}
    \Big]
    \leq 
    \e^{-\delta\alpha}.
    \]
\end{lemma}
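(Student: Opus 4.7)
The plan is to follow the hint and apply Bernstein's inequality after centering. The key observation is that a non-negative $\oC$-subexponential random variable $X$ satisfies $\bbE[X] = \int_0^\infty \bbP[X \geq x]\,\de x \leq \int_0^\infty \e^{-x/\oC}\,\de x = \oC$, and more generally $\bbE[X^k] \leq k!\, \oC^k$, so $X$ has sub-exponential norm controlled by a universal multiple of $\oC$. In particular, the centered variables $Y_i := X_i - \bbE[X_i]$ are independent, mean zero, and sub-exponential with parameter $K = O(\oC)$.

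The first step is to choose the constant $\delta$. Fix $\delta = \delta(\oC) > 0$ small enough that $\delta \oC \leq \tfrac{1}{2} \cdot 10^{-10}$. Then the total mean satisfies
\[
\bbE\Big[\sum_{i=1}^{\delta\alpha^2} X_i\Big] \leq \delta\alpha^2 \cdot \oC \leq \tfrac{1}{2}\cdot 10^{-10}\,\alpha^2,
\]
so the event $\{\sum_i X_i \geq 10^{-10}\alpha^2\}$ is contained in the deviation event $\{\sum_i Y_i \geq \tfrac{1}{2}\cdot 10^{-10}\alpha^2\}$.

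The second step is to apply the standard one-sided Bernstein inequality for independent sub-exponential random variables (see e.g.\ Vershynin, Theorem 2.8.1): for some absolute constant $c>0$,
\[
\bbP\Big[\sum_{i=1}^{N} Y_i \geq t\Big] \leq \exp\Big(-c\min\Big\{\tfrac{t^2}{NK^2},\,\tfrac{t}{K}\Big\}\Big).
\]
With $N = \delta\alpha^2$, $K = O(\oC)$, and $t = \tfrac{1}{2}\cdot 10^{-10}\alpha^2$, both arguments inside the minimum are of order $\alpha^2$ (the variance term gives $t^2/(NK^2) \asymp \alpha^2/(\delta \oC^2)$, and the linear term gives $t/K \asymp \alpha^2/\oC$). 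Hence the right-hand side is bounded by $\e^{-c'\alpha^2}$ for some $c' = c'(\oC,\delta) > 0$, which is certainly smaller than $\e^{-\delta\alpha}$ once $\alpha$ is large enough, say $\alpha \geq \delta/c'$.

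There is no real obstacle here — the statement is a routine concentration inequality and the proof is almost entirely bookkeeping; the slight subtlety is simply to notice that $\delta$ must be chosen \emph{after} $\oC$ (as already stated in the lemma) so that the mean of the normalized sum is well below the threshold $10^{-10}$, after which Bernstein delivers a Gaussian-scale bound that dwarfs the required linear-in-$\alpha$ rate.
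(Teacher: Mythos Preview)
Your proof is correct and follows precisely the route the paper indicates: the paper does not give a detailed argument but simply says the lemma ``follows by Bernstein's inequality after centering,'' and you have filled in exactly those details. The key choices---taking $\delta$ small enough that $\delta\oC$ is below half the threshold, then invoking Bernstein for the centered sub-exponential variables to get an $\e^{-c'\alpha^2}$ bound---match what the paper intends.
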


    Let us turn to the proof of Lemma \ref{lem:alg-succeeds} for which we recall the notation $U_{j,L},V_{j,L}$ from \eqref{eq:U-V-j-L} and $\cF_{s,j,L},\cF_{t,j,L}$ from \eqref{eq:algorithm-filtration-def}.
    Below we will apply Propositions~\ref{prop:very-good}, \ref{prop:mostly-very-good}, \ref{prop:good-points-connected} for $S^{\dagger}$ in \eqref{eq:augmented-bad-pairs} \textit{on the event that failure has not yet been declared}; this is possible due to part \ref{it:augment-S-failure} of the algorithm.  
    We decompose the proof into three steps.
  
    
    \paragraph{\textit {Step $1$ (Gap lengths are $O_{C_1}(\alpha^{-2})$-subexponential):}}
    For a gap $[t_j^{\ssup L},s_{j+1}^{\ssup L}]$, define the truncated waiting time $\oq_j^{\ssup L}=\min(1,s_{j+1}^{\ssup L}-t_j^{\ssup L})$.
    (Thus $\oq_j^{\ssup L}=1$ if and only if $s_{j+1}^{\ssup L}=\infty$.)
    Also, let
    \begin{equation}
    \label{eq:q-j-L}
    q_j^{\ssup L}
    :=
    \mu_{\bbR}\big([t_j^{\ssup L},s_{j+1}^{\ssup L}]\cap U_{j,L}\big)
    \leq \oq_j^{\ssup L}
    \end{equation}
    be the Lebesgue measure of good points inside $[t_j^{\ssup L},s_{j+1}^{\ssup L}]$. Assuming no failure has occurred so far, we claim that $\oq_j^{\ssup L}$ is $\frac{C_1^7}{\alpha^2}$-subexponential, even conditionally on $\cF_{t,j,L}$.

    To prove the claim, the first important step is to observe that since $t_j^{\ssup L}\in V_{j,L}$ is very good, we have by Proposition~\ref{prop:very-good} the almost sure bound:
    \[
    \oq_j^{\ssup L}
    \leq 
    10q_j^{\ssup L}.
    \]
    Indeed even if $\oq_j^{\ssup L}=1$, this follows by taking $x=1/3$ in Proposition~\ref{prop:very-good} since $t_j^{\ssup L}\leq 2j+2/3$ by definition of $V_{j,L}$.
    Hence it suffices to show $q_j^{\ssup L}$ is $C_1^6/\alpha^2$-subexponential under the same conditioning. 
    For this, we observe that for any $X\geq t_j^{\ssup L}$, the conditional probability to have $s_{j+1}^{\ssup L}\geq X$ is exactly
    \begin{align}
    \label{eq:gap-tail-abstract}
    &\bbP^{\wh\Theta_{\alpha,\omega}}\bigg[s_{j+1}^{\ssup L}\geq X~|~\cF_{t,j,L}\bigg]
    =
    \exp\big(-f_{j,L}(X)\big)
    \end{align}
    where
    \[
    f_{j,L}(X)
    :=
    \int_{t_j^{\ssup L}}^{X} \d x 
    \1_{x\in U_{j,L}}
    \int_{V_{j+1,L}}
    \Lambda_{\sstd}([x,y])
    \de y.
    \]
    Indeed, the above identity follows from the fact that conditional on $\cF_{t,j,L}$, the (super-standard) intervals used in our algorithm are Poissonian and $f_{j,L}$ is the integrated intensity for the left endpoint of an interval. Here the augmentation \eqref{eq:augmented-bad-pairs} importantly accounts for the negative information accrued from gaps in previous iterations $\ell<L$. Next, we lower bound $f_{j,L}(X)$. Note that, if $(x,y)\notin S^\dagger_j$ (with $x \in [2j,2j+1]$ and $y\in [2j+2,2j+3]$), then $(x,y)\notin S_j$, and then by definition of the set $S_j$ (recall \eqref{eq:def-S-k}), for such $(x,y)$,
    \[
    \Lambda_{\sstd}([x,y])\geq \alpha^2/C_1^5,
    \]
    Hence, for $X\in [t_j^{\ssup L},2j+1]$:
    \begin{align*}
    f_{j,L}(X)
    \geq 
    \frac{\alpha^2}{C_1^5}
    \int\limits_{[t_j^{\ssup L},X]\cap U_{j,L}}
    \int_{2j+7/3}^{2j+8/3}
    \1_{y\in V_{j+1,L}}
    \1_{(x,y)\notin S_j^{\dagger}}
    \de y
    ~\de x
    \geq 
    \frac{\alpha^2}{C_1^6}
    \mu_{\bbR}\big([t_j^{(L)},X]\cap U_{j,L}\big),
    \end{align*}
    where we used 
    Proposition~\ref{prop:good-points-connected} in the second lower bound above. Now by definition, $q^{\ssup L}_j= \mu_\R([t^{\ssup L}_j, s^{\ssup L}_{j+1}]\cap U_{j,L})$, and therefore \eqref{eq:gap-tail-abstract} shows that conditionally on $\cF_{t,j,L}$, $q_j^{\ssup L}$ is $C_1^6/\alpha^2$-subexponential, and therefore, $\overline{q}_j^{\ssup L}$ is $C_1^7/\alpha^2$-subexponential, as claimed.



    \noindent{\textit {Step $2$ (Failure is exponentially unlikely):}} Again conditioning on $\cF_{t,j,L}$ with the assumption that failure has not yet occurred, the probability of failure from criterion~\ref{it:failure-tail} is 
    \[
    \bbP^{\wh\Theta_{\alpha,\omega}}\big[\oq_j^{\ssup L}=1~|~\cF_{t,j,L}\big]
    \leq
    \e^{-\alpha^2/C_1^7}
    \]
    by the previous step. Let $E_{j,L}$ denote the event that failure has not occurred by the time $t_j^{\ssup L}$ is chosen. We claim that Lemma~\ref{lem:subexponential-sum} and the previous step imply that for each $(j,L)$, without any conditioning,
    \begin{equation}
    \label{eq:subexponential-bound-within-algorithm}
    \bbP^{\wh\Theta_{\alpha,\omega}}\big[
    \sum_{\ell=1}^L
    \1_{E_{j,\ell}} \,\, 
    \oq_j^{\ssup \ell}
    \geq 10^{-10}
    \big]
    \leq \e^{-\delta\alpha}.
    \end{equation}
    Indeed, in Step 1 above we have just shown $\1_{E_{j,\ell}}\oq_j^{\ssup\ell}$ is $\cF_{t,j,\ell}$-conditionally $O_{C_1}(\alpha^{-2})$-subexponential for each $1\leq\ell\leq \delta\alpha^2$.
    By definition, such conditional distributions are stochastically dominated by exponential random variables with mean $O_{C_1}(\alpha^{-2})$. 
    Hence for each fixed $j$, we find by direct construction a Markovian coupling with an independent sequence $X_1,\dots,X_{\delta\alpha^2}$ as in Lemma~\ref{lem:subexponential-sum} such that almost surely for all $1\leq\ell\leq \delta\alpha^2$:
    \[
    \1_{E_{j,L}}
    \oq_j^{\ssup \ell}
    \leq 
    X_{\ell}/\alpha^2.
    \]
    Using this coupling,\footnote{Here we use the following fact: Let $\lambda$ and $\mu$ be two probability measures on $\R$ and $\big\{\{P^{\ssup \ell}(x,\cdot)\}_{x\in \R}\big\}_{\ell=1,\dots,L}$ and 
    $\{Q(x,\cdot)\}_{x\in \R}$ are transition probability kernels obeying the stochastic domination relations
    $\lambda \preceq\mu$ and $P^{\ssup\ell}(x,\cdot) \preceq Q(y,\cdot)$ for all $\ell=1,\dots,L$ and all $x\leq y$. Then $\lambda \star P^{\ssup 1} \star \dots \star P^{\ssup L} \preceq \mu \star Q^{\star L}$, for all $L\in\N$ (see \cite[p.133, Lemma 5.3]{Lindvall}). In the present context, we set $Z_{j,L}= \sum_{\ell=1}^L \1_{E_{j,\ell}} \overline{q}_j^{\ssup\ell}$ and for any independent sequence of random variables $X_1,\dots,X_{\delta\alpha^2}$ which are all uniformly sub-exponential, we define $Z_{L}^\prime:= \frac 1 {\alpha^2} \sum_{\ell=1}^L X_\ell$. Using that  $\1_{E_{j,\ell}}\oq_j^{\ssup\ell}$ is $\cF_{t,j,\ell}$-conditionally $O_{C_1}(\alpha^{-2})$-subexponential for each $1\leq\ell\leq \delta\alpha^2$, then the previous fact implies that the probability on the LHS of \eqref{eq:subexponential-bound-within-algorithm} is dominated by the tail probability $\P[Z_{L}^\prime \geq 10^{-10}]$. We then apply Lemma \ref{lem:subexponential-sum} to the latter probability.}
    \eqref{eq:subexponential-bound-within-algorithm} is a direct consequence of Lemma~\ref{lem:subexponential-sum}, which applies because of how $\delta$ is chosen.
Union bounding over 
    $1\leq {j}\leq\alpha^{10}$, it follows that with probability $\e^{-\delta\alpha/2}$, none of these events occurs, hence failure criterion~\ref{it:augment-S-failure} never holds.
    Combined with the previous paragraph, we conclude that the total probability of failure is at most $\e^{-\delta\alpha/3}$.

    \noindent{\textit{Step $3$ (Proof of the $L^2$ bound):}} We will show \eqref{eq:expected-QV-bounded}. The previous step shows the contribution from failure events is $O(1)$, so we restrict below to the event that failure never occurs.
    Similarly the contribution from $x\in [0,1]\cup [2\alpha^{10},2\alpha^{10}+1]$ is clearly $O(1)$. Fixing $x\in [2j,2j+1]$, we claim that if $s_{j}^{\ssup L}$ is chosen without failure having occurred yet, then 
    \[
    \bbP^{\wh\Theta_{\alpha,\omega}}\bigg[ x\in [t_{j}^{\ssup L},s_{j+1}^{\ssup L}] ~\big |~ \cF_{s,j,L}\bigg]
    \leq \frac{\overline A_{C_1}}{\alpha^2}.
    \]
    This claim implies $U_{\delta\alpha^2}(x)$ is stochastically dominated by a Poisson variable with mean $O_{C_1}(1)$ for all $x\notin [0,1]\cup [2\alpha^{10},2\alpha^{10}+1]$ (restricted to the event of no failure), thus implying the desired result. It now remains to prove the claim.

    The idea is that $t_j^{\ssup L}$ is \textit{anti-concentrated}, even conditionally on $\cF_{s,j,L}$.
    Indeed by definition of the algorithm, the law  of $t_j^{\ssup L}$ under $\wh\Theta_{\alpha,\omega}$ conditional on $\cF_{s,j,L}$ is supported in $V_{j,L}\subseteq [2j,2j+1]$ with density
    \[
    \Lambda_{\cL_{j,L}}(b)
    =
    \frac{\Lambda_{\sstd}([s_j^{\ssup L},b])}
    {\int_{V_{j,L}} \Lambda_{\sstd}([s_j^{\ssup L},y])~\de y}
    \,.
    \]
    Recalling \eqref{def-rho-sstd}, note that for all $[a,b]$:
   $$
    \Lambda_{\sstd}([a,b])\leq \int_{\alpha/C_1^4}^{2\alpha/C_1^4}
    \alpha
    ~\d u
    =
    \alpha^2/C_1^4.
    $$
    Meanwhile again using \eqref{eq:def-S-k}, we have $\Lambda_{\sstd}([s_j^{\ssup L},y]) \geq \Lambda_{\sstd}([s_j^{\ssup L},y]) \1_{y\notin S^\dagger_{j-1}} \geq \frac{\alpha^2}{C_1^5}\1_{y\notin S^\dagger_{j-1}}$. Hence, since $s_j^{\ssup L}\in U_{j-1,L}$, Proposition~\ref{prop:good-points-connected} implies 
    \[
    \int_{V_{j,L}} \Lambda_{\sstd}([s_j^{\ssup L},y])~\d y
    \geq 
    \frac{\alpha^2}{C_1^5}
    \mu_{\bbR}\big(V_{j,L}\cap \{b:(s_j^{\ssup L},b)\notin S_{j-1}^{\dagger}\}\big)
    \geq 
    \frac{\alpha^2}{10C_1^5}.
    \]
    We conclude that law  of $t_j^{\ssup L}$ under $\wh\Theta_{\alpha,\omega}$, conditional on $\cF_{s,j,L}$, has uniformly bounded density in $[0,10C_1]$. 
    Applying the first step of this proof (subexponential gap lengths) conditionally on $\cF_{t,j,L}$, we conclude via:
   \begin{align*}
    \bbP^{\wh\Theta_{\alpha,\omega}}\big[ x\in [t_{j}^{\ssup L},s_{j+1}^{\ssup L}] ~|~ \cF_{s,j,L}\big]
    &=
    \bbE^{\wh\Theta_{\alpha,\omega}}\bigg[ \bbP^{\wh\Theta_{\alpha,\omega}}\big[x\in [t_{j}^{\ssup L},s_{j+1}^{\ssup L}] ~\big |~ \cF_{t,j,L}\big]~\bigg |~\cF_{s,j,L}\bigg]
    \\
    &\leq 
    10C_1
    \int_{2j}^{2j+1}
    \exp\bigg(-\frac{\alpha^2 |t-x|}{10 C_1^5}\bigg)
    ~\d t
    \leq \frac{\overline A_{C_1}}{\alpha^2}.
    \quad\quad\qed
    \end{align*}

\appendix

\section{Proof of Theorem \ref{thm-strong-coupling} for unbounded $V$.}\label{sec unbounded V}
We need to show \eqref{main} and \eqref{main2} when $V(|x|)= \frac 1 {|x|}$ or $V(|x|)=|x|$. Note that, for this purpose, we only need to verify that \eqref{g-eps-alpha} holds for such $V$. 
It suffices to show this for $\theta=1$. 
Let us write 
$$
\widetilde V(x)= \frac 1 {|x|}, \qquad \widetilde V_M= \frac 1 {\sqrt{{M^{-2}+ |x|}}}, \qquad\mbox{and}\quad \widetilde Y_M= \widetilde V - \widetilde V_M.
$$
Likewise, we write   
$$
|x|=\widehat{V}(|x|)= \widehat{V}_M(x)+ \widehat{Y}_{M}(x),\qquad \mbox{with }\widehat{V}_M(x)= |x| \wedge M.
$$

By  H\"older's inequality (with $\frac 1 p+ \frac 1 q=1$), the expectation in \eqref{g-eps-alpha} with $V=\widehat{V}$ (resp. $V=\widetilde{V}$) is bounded by $\widehat{A}_\eta(\alpha,T,p)\times \widehat{B}_\eta(\alpha,T,q)$ (resp. $\widetilde{A}_\eta(\alpha,T,p)\times \widetilde{B}_\eta(\alpha,T,q)$), where 
$$
\begin{aligned}
&\widehat{A}_\eta(\alpha,T,p)=
\E^\P\bigg[\exp\bigg(p\alpha\iint_{-T\le s\le t\le T} {\e^{-|t-s|}} \widetilde{V}_M(|\omega(t)-\omega(s)|) \d t\d s\\
&\qquad\qquad\qquad\qquad\qquad\qquad\qquad+p\eta \alpha^2\iint_{-T\le s\le t\le T} \e^{-(t-s)} \widehat{V}_M(\alpha|\omega(t)-\omega(s))|)\d t\d s \bigg)\bigg]^{\frac 1p}, \\
& \widehat{B}_\eta(\alpha,T,q)= \E^\P\bigg[\exp\bigg(q\alpha\iint_{-T\le s\le t\le T} {\e^{-|t-s|}} \widetilde{Y}_M(\omega_t-\omega_s) \d t\d s \\
&\qquad\qquad\qquad\qquad\qquad\qquad\qquad +q\eta \alpha^2\iint_{-T\le s\le t\le T} \e^{-(t-s)} \widehat{Y}_{M}(\alpha|\omega(t)-\omega(s)|)\d t \d s\bigg)\bigg]^{\frac 1q},
\end{aligned}
$$
and $\widetilde{A}_\eta(\alpha,T,p), \widetilde{B}_\eta(\alpha,T,q)$ are defined similarly by replacing $\widehat{V}_M$, $\widehat{Y}_M$ with $\widetilde{V}_M$ and $\widetilde{Y}_M$.
For any fixed $M$, the modified potentials $\widetilde V_{M}(x)=\frac1 {\sqrt{M^{-2}+ |x|^2}}$  and $\widehat{V}_M= |x| \wedge M$ are continuous and bounded. {Now recall that the statement \eqref{g-alpha} follows from a strong LDP for the empirical process of 3d-Brownian increments (\cite[Lemma 5.3]{MV18b}) and Varadhan's lemma. Applying the same LDP and Varadhan's lemma (now used for the continuous and bounded functions $\widetilde V_M$, resp. $\widehat V_M$),  we have, for any fixed $M>0$ and $\alpha>0$,}  
$$
\begin{aligned}
&\limsup_{T\to\infty}\frac 1 {2T}\log A_\eta(\alpha,T,p)
=g_\eta(p,\alpha,M) \\
&=\sup_{\mathbb Q}\bigg[  p  \alpha\int_0^\infty \e^{-t} \widetilde V_M(|\omega(t)-\omega(0)|) \d t+ p\eta \alpha^2\int_0^\infty  \e^{- t} V_M(|\alpha(\omega(t)-\omega(0)|)\d t  -H(\mathbb Q|\P) \bigg]
\end{aligned}
$$
for $A_\eta\in \{\widehat{A}_\eta,\widetilde{A}_\eta\}$ and $V_M\in \{\widehat{V}_M,\widetilde{V}_M\}$. Again, similar to \eqref{g-alpha}, the supremum above is taken over the space of stationary increments.  
 On the other hand, by Proposition \ref{lemma-Coulomb} (see below), for any $q>1$ and $B_\eta\in \{\widehat{B}_\eta,\widetilde{B}_\eta\}$, 
 $$
 \limsup_{M\uparrow\infty}\limsup_{T\to\infty}\frac 1 {2T}\log B_\eta(\alpha,T,q)=0.
 $$
 But
 $$
\lim_{p\downarrow 1}  \lim_{M\uparrow \infty} g_\eta(p,\alpha,M) =g_\eta(\alpha),
$$
where $g_\eta(\alpha)$ is $g_\eta(\alpha,\theta)$ for $\theta=1$ defined in \eqref{g-eps-alpha}. 
This proves Theorem \ref{thm-strong-coupling}. \qed

  \begin{prop}\label{lemma-Coulomb}
Let $V(x)=|x|$, $V_{M}(x)= V(x) \wedge M$, and $Y_{M}(x)= V(x)- V_{M}(x)$. Then for any $\lambda>0$ and $\alpha>0$,
\begin{equation}\label{est-lin}
\limsup_{M\to\infty}\limsup_{T\to\infty} \frac1{2T}\log\E^\P\bigg[\exp\bigg(\alpha\lambda\iint_{-T\leq s < t\leq T} \e^{-(t-s)} Y_{M}(\omega(s)-\omega(t)) \d s \d t\bigg)\bigg]=0.
\end{equation}
For $V(|x|)= \frac 1 {|x|}$ we have a similar statement for $V_M(x)= \frac 1 {\sqrt{|x|^2 + \frac 1 {M^2}}}$ and $Y_M= V- V_M$. 
\end{prop}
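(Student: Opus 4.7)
The plan is to apply the Donsker--Varadhan strong LDP for the empirical process of Brownian increments (as in \cite[Lemma 5.3]{MV18b} used to derive \eqref{g-alpha}) to rewrite the free energy on the LHS of \eqref{est-lin} as
$$\phi_M := \sup_{\mathbb Q}\bigg[\alpha\lambda\,\E^{\mathbb Q}\Big(\int_0^\infty e^{-u} Y_M(\omega(u)-\omega(0))\,du\Big) - H(\mathbb Q|\P)\bigg],$$
the supremum being over processes $\mathbb Q$ on $\R^3$ with stationary increments. The task then reduces to showing $\phi_M\to 0$ as $M\to\infty$ in both the linear and the Coulomb cases.

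For the linear case $V(x)=|x|$, the elementary pointwise estimate $Y_M(x)=(|x|-M)_+\leq |x|^2/M$ reduces the claim to showing that $\Psi(c):=\sup_{\mathbb Q}[c\,\E^{\mathbb Q}(Q)-H(\mathbb Q|\P)]$ is continuous at $c=0$, where $Q(\omega)=\int_0^\infty e^{-u}|\omega(u)-\omega(0)|^2\,du$. Since $\Psi$ is convex with $\Psi(0)=0$, it suffices to check $\Psi(c)<\infty$ for some $c>0$, which is a standard Feynman--Kac bound for a quadratic Brownian functional, obtainable either by Mercer diagonalization of the covariance kernel on $[0,\infty)$ weighted by $e^{-u}du$, or by a direct It\^o/Ornstein--Uhlenbeck calculation. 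Applying this with $c=\alpha\lambda/M\to 0$ then yields $\phi_M\to 0$. For the Coulomb case $V(x)=1/|x|$ with $V_M(x)=1/\sqrt{|x|^2+1/M^2}$, one has $0\leq Y_M\leq V$, so Pekar's variational formula \eqref{Pekar-conj} together with \eqref{g-alpha} gives $\phi_M\leq g(\alpha\lambda)<\infty$ uniformly in $M$. Choose a near-maximizer $\mathbb Q_M$ achieving $\phi_M-1/M$; its specific relative entropy is uniformly bounded, so $\{\mathbb Q_M\}$ is precompact among stationary-increment processes of finite specific entropy. Along a subsequential limit $\mathbb Q_\star$, the point is that $Y_M(\omega(u)-\omega(0))\to 0$ for every $u>0$ on the $\mathbb Q_\star$-full-measure set $\{\omega(u)\neq\omega(0)\}$ (since $3$d Brownian motion, and hence any process of finite specific relative entropy with respect to it, does not revisit its starting point at positive times), so dominated convergence should give $\phi_M\to 0$.

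The main obstacle is the simultaneous limit $\mathbb Q_M\to\mathbb Q_\star$ \emph{and} $M\to\infty$ in the Coulomb case, which has to navigate the integrable singularity of $Y_M$ near the origin. A clean route is a two-scale truncation: first cut $Y_M$ to $Y_M\wedge L$ for a fixed large $L$ --- here bounded convergence applies directly along $\mathbb Q_M$, after which one sends $L\to\infty$ --- then control the residual tail $(Y_M-L)_+$ uniformly in $M$ via the pointwise domination $(Y_M-L)_+\leq V\cdot\mathbf 1_{V\geq L}$, together with the fact, which follows from the finiteness of Pekar's free energy and the entropy bound on $\mathbb Q_M$, that $\E^{\mathbb Q}\big[\int_0^\infty e^{-u}V\,\mathbf 1_{V\geq L}\,du\big]\to 0$ as $L\to\infty$ uniformly over stationary-increment $\mathbb Q$ with $H(\mathbb Q|\P)\leq g(\alpha\lambda)+1$.
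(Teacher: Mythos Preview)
Your linear-case route via $Y_M(x)=(|x|-M)_+\leq |x|^2/M$ is genuinely different from the paper's and arguably cleaner. The paper never passes through the LDP here: it shows directly (Lemma~\ref{lemma-F}) that the full free energy with the linear potential $|x|$ is finite, using a Markov-property bootstrap (Lemma~\ref{exp.bound}) that turns the double time integral into a shift average, followed by a H\"older decomposition over unit time blocks and elementary Gaussian tail bounds on $\E[\e^{c|\omega(u)|}]$. Rerunning the same chain of inequalities with $Y_M$ in place of $|x|$, and sending $M\to\infty$ by dominated convergence inside the resulting summable series, gives $C_M\to 0$. Your quadratic majorant instead reduces everything to the smallness, as $c\downarrow 0$, of the free energy for $c|x|^2$ --- a single Gaussian estimate.

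That said, drop the LDP framing. To invoke Varadhan's lemma and identify the free energy with the variational quantity $\phi_M$ for the \emph{unbounded-above} functionals $Y_M$ or $|x|^2$, you need an a~priori exponential-moment bound --- which is either the very estimate you are after, or Lemma~\ref{lemma-F} taken as input (making the detour pointless). Work instead at the level of free energies throughout: define $\Psi(c):=\limsup_T\frac{1}{2T}\log\E^\P\big[\exp\big(c\iint e^{-(t-s)}|\omega(t)-\omega(s)|^2\,\d s\,\d t\big)\big]$. H\"older makes $\Psi$ convex, $\Psi(0)=0$, and your Feynman--Kac/Mercer remark gives $\Psi(c_0)<\infty$ for small $c_0>0$; convexity then forces $\Psi(c)\to 0$ as $c\downarrow 0$, and the pointwise domination finishes.

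For the Coulomb case the paper simply cites \cite[Lemma~4.3]{MV18b} and proves nothing here. Your compactness sketch is reasonable in spirit but does not avoid the hard step: by entropy--free-energy duality, the uniform tail control $\sup_{\{H(\mathbb Q|\P)\leq K\}}\E^{\mathbb Q}\big[\int_0^\infty e^{-u}V\mathbf 1_{V\geq L}\,\d u\big]\to 0$ amounts to showing that the free energy with potential $V\mathbf 1_{V\geq L}$ tends to $0$ as $L\to\infty$, which is essentially the same near-origin Coulomb estimate as \eqref{est-lin} itself. There is also a genuine simultaneous-limit gap in the ``bounded'' piece: in $\E^{\mathbb Q_M}\big[\int e^{-u}(Y_M\wedge L)\,\d u\big]$ both $\mathbb Q_M$ and the integrand vary with $M$, so weak convergence of $\mathbb Q_M$ alone does not give the limit; you would still need, e.g., that $Y_M\to 0$ uniformly on $\{|x|\geq\delta\}$ together with a uniform-in-$M$ bound on $\mathbb Q_M(|\omega(u)-\omega(0)|<\delta)$.
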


\subsection{Proof of Proposition \ref{lemma-Coulomb}.}
For the estimate relevant for $V(|x|)=\frac 1 {|x|}$, we refer to \cite[Lemma 4.3]{MV18b}. It remains to prove \eqref{est-lin} for $V(x)=|x|$.
In the following, we will write $\P_x$ for the law of a three-dimensional Brownian motion starting at $x\in \R^3$; while $\E_x$ will stand for the corresponding expectation, while $\P$ denotes the law of three-dimensional Brownian increments $(\omega(t)-\omega(s))_{s<t}$. For $T>0$, set   $\mathcal{F}_T:=\sigma(\{\omega(t)-\omega(s): - T\leq s < t \leq T\})$.

\begin{lemma}\label{exp.bound}
Let $G(\omega)$ be a $\mathcal{F}{_T}$-measurable function such that  $\sup_{x\in\R^3} \E^{\P_x}[\exp[G(\omega)]]\le \e^{\rho}$ for some $\rho>0$. Then for any $t>0$ and $x\in\R^3$,
$$
\E^{\P_x}\bigg[\exp\bigg(\frac{1}{T} \int_0^t G(\theta_s\omega) \d s\bigg)\bigg] \le \exp\bigg[\frac{\rho t}{T}\bigg].
$$
\end{lemma}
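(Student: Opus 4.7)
My plan is to combine Jensen's inequality with the Markov property. The starting observation, obtained via the Markov property and stationarity of Brownian increments, is that the hypothesis on $G$ transfers to every shift: for each $s\ge 0$ and $x\in\R^3$,
$$\E^{\P_x}[\exp G(\theta_s\omega)] \;=\; \E^{\P_x}\bigl[\E^{\P_{\omega(s)}}[\exp G(\omega)]\bigr] \;\le\; e^\rho.$$

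The heart of the matter is the case $t\le T$. Setting $\lambda:=t/T\in(0,1]$, I would rewrite $\tfrac{1}{T}\int_0^t G(\theta_s\omega)\,ds=\lambda\int_0^t G(\theta_s\omega)\,\tfrac{ds}{t}$ and apply Jensen twice. First, convexity of $\exp$ against the probability measure $ds/t$ on $[0,t]$, together with the identity $\exp(\lambda y)=(\exp y)^\lambda$, gives the pointwise bound
$$\exp\!\Bigl(\tfrac{1}{T}\!\int_0^t G(\theta_s\omega)\,ds\Bigr) \;\le\; \Bigl(\int_0^t \exp G(\theta_s\omega)\,\tfrac{ds}{t}\Bigr)^{\!\lambda}.$$
Second, concavity of $y\mapsto y^\lambda$ for $\lambda\le 1$ lets me pull $\E^{\P_x}$ inside, producing
$$\E^{\P_x}\!\Bigl[\exp\!\bigl(\tfrac{1}{T}\!\int_0^t G(\theta_s\omega)\,ds\bigr)\Bigr] \;\le\; \Bigl(\int_0^t \E^{\P_x}[\exp G(\theta_s\omega)]\,\tfrac{ds}{t}\Bigr)^{\!\lambda} \;\le\; (e^\rho)^\lambda \;=\; e^{\rho t/T}.$$

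For arbitrary $t>T$, I would iterate via the Markov property. Writing $t=nT+r$ with $0\le r<T$ and conditioning successively at times $T,2T,\dots,nT$, each length-$T$ block contributes a conditional factor of at most $e^\rho$ (by applying the case just proved, with $t=T$, to the post-time-$kT$ process under $\P_{\omega(kT)}$ and the uniform-in-$x$ hypothesis), and the residual block of length $r$ contributes at most $e^{\rho r/T}$ by the case $t<T$. Multiplying these bounds telescopes to $e^{n\rho+\rho r/T}=e^{\rho t/T}$, which is the desired estimate.

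The main obstacle is making the iteration step clean: the factor $\exp\!\bigl(\tfrac{1}{T}\!\int_{kT}^{(k+1)T}G(\theta_s\omega)\,ds\bigr)$ must be measurable with respect to Brownian increments after time $kT$ so that the Markov property applies. This is automatic if $G$ is realized on forward increments over $[0,T]$ (so that $G(\theta_u\omega)\in\sigma(\omega(v)-\omega(u):u\le v\le u+T)$), which is the natural reading for the application to Proposition~\ref{lemma-Coulomb}; if instead one reads $\mathcal{F}_T$ as two-sided increments on $[-T,T]$, consecutive blocks overlap by $T$ and one must either interpose a Cauchy--Schwarz step or partition into blocks of length $2T$ (treating even- and odd-indexed blocks separately) before iterating, which costs an inessential constant absorbable into $\rho$.
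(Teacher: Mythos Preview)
Your double-Jensen argument for $t\le T$ is correct. The gap is in the iteration for $t>T$. You check that each block $A_k:=\tfrac{1}{T}\int_{kT}^{(k+1)T}G(\theta_s\omega)\,\d s$ is measurable with respect to Brownian increments \emph{after} time $kT$, but Markov peeling also requires the complementary factor $\exp\bigl(\sum_{j<k}A_j\bigr)$ to be measurable with respect to the path up to time $kT$ (equivalently, $A_{k-1}$ to depend only on increments before $kT$). Even in the forward-increment reading, $G(\theta_s\omega)$ for $s$ close to $kT$ looks at increments up to time $s+T$, so $A_{k-1}$ depends on the full window $[(k-1)T,(k+1)T]$; consecutive blocks therefore overlap on a segment of length $T$, and there is no conditioning time at which the product splits into an adapted past and a future depending only on later increments. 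Your suggested fixes (Cauchy--Schwarz combined with an even/odd splitting) then force you to control $\sup_x\E^{\P_x}[\exp(2G)]$ rather than $\sup_x\E^{\P_x}[\exp G]$, which the hypothesis does not supply, so they do not recover the lemma as stated (only a weakened version under a stronger assumption).

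The paper avoids the overlap by regrouping the integral \emph{before} applying Jensen rather than after. Writing
\[
\int_0^t G(\theta_s\omega)\,\d s=\int_0^T \widehat G(s,\omega)\,\d s,\qquad \widehat G(s,\omega):=\sum_{j\ge 0:\,s+jT\le t}G(\theta_{s+jT}\omega),
\]
one applies Jensen to the probability measure $\d s/T$ on $[0,T]$ and is left with bounding $\E^{\P_x}[\exp\widehat G(s,\omega)]$ for each \emph{fixed} $s$. Now the summands $G(\theta_{s+jT}\omega)$ live on the \emph{disjoint} windows $[s+jT,s+(j+1)T]$, so successive conditioning at the times $s+jT$ peels them off cleanly via the Markov property and the uniform-in-$x$ hypothesis (after the harmless normalisation $G\mapsto G-\rho$). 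This ``thread by residue $s\bmod T$'' decomposition is the idea your block-by-time approach is missing.
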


\begin{proof}
Since we can replace $G$ by $G-\rho  $, we can assume that $\rho=0$. For $s\le T$, let $k(s)=\sup\{k\in \N: s+kT\le t\}$, with $\theta$ being the canonical shift (i.e., $(\theta_s\omega)(\cdot)=\omega(s+\cdot)$) and 
$$\widehat{G} (s,\omega):=G(\theta_s \omega)+G(\theta_{s+T}  \omega)+\cdots+ G(\theta_{s+k(s)T}\omega).$$
Then 
$$\int_0^t G(\theta_s\omega) \d s= \int_0^T \widehat{G}(s,\omega) \d s.$$
By the assumption of the lemma, and by successive conditioning together with the Markov property, for every $s\le T$ we have 
$\E^{\P_x}[ \exp [\widehat{G}(s,\omega)]]\le 1$. 
Therefore
\begin{align*}
&\E^{\P_x}\bigg[ \exp \bigg[\frac{1}{T} \int_0^t G(\theta_s\omega) \d s\bigg]\bigg]= \E^{\P_x}\bigg[ \exp \bigg[\frac{1}{T} \int_0^T \widehat{G}(s,\omega)\d s\bigg]\bigg]\\
&\le \frac{1}{T}\int_0^T \E^{\P_x} \big[\exp\big[ \widehat{G}(s,\omega)\big]\big] \d s\le 1,
\end{align*}
which proves the lemma.
\end{proof}

\medskip
We recall that $\P$ denotes the law of three dimensional Brownian increments $\omega=(\omega(t)-\omega(s))_{s<t}$. If we set 
\begin{equation}\label{def F}
F(T, \omega)= \iint_{-T\le s <t\le T} \e^{-(t-s)} |\omega(t)-\omega(s)| \d s\d t, 
\end{equation}
our goal is to estimate, in the lemma below, 
$\frac {1}{2T}\log \E^\P[\exp[\alpha F(T,\omega)]]$:


\begin{lemma}\label{lemma-F}
We have for any $\alpha>0$
\[
\limsup_{T\to\infty}\frac{1}{2T}\log \E^\P\big[\exp[ \alpha F(T,\omega)\big]\big]\le C(\alpha)<\infty.
\]
\end{lemma}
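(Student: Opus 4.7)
The strategy is to apply Lemma~\ref{exp.bound} after a dyadic decomposition of $F(T,\omega)$ in the lag $u=t-s$, and then recombine the pieces via H\"older's inequality. After a time-shift that places the integration on $[0,2T]$ (stationarity of Brownian increments) and extending $\omega$ past $2T$ by an independent Brownian motion (which does not affect the law on $[0,2T]$), decompose $F(T,\omega)=\sum_{k\ge 0}F_k$ with $F_k$ collecting the pairs $(s,t)$ with $k\le t-s\le k+1$. Using $|\omega(t)-\omega(s)|\le M_{k+1}(\theta_s\omega)$ where $M_{k+1}(\omega):=\sup_{0\le u\le k+1}|\omega(u)-\omega(0)|$, together with $e^{-(t-s)}\le e^{-k}$ and integration in $t$ over an interval of length $1$, gives the pointwise bound
\[
F_k\ \le\ e^{-k}\int_0^{2T}M_{k+1}(\theta_s\omega)\,ds.
\]
Each $M_{k+1}$ is $\mathcal F_{k+1}$-measurable, and the reflection principle applied coordinate-wise yields the uniform sub-Gaussian estimate $\sup_{x\in\R^3}\E^{\P_x}[e^{\beta M_{k+1}}]\le C\, e^{c\beta^2(k+1)}$ with absolute constants $C,c>0$, independent of the starting point.

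Applying Lemma~\ref{exp.bound} to $G=\beta M_{k+1}$ with horizon $T_0=k+1$ and $t=2T$, and setting $\mu=\beta/(k+1)$, produces
\[
\E^\P\!\left[\exp\!\left(\mu\!\int_0^{2T}\!M_{k+1}(\theta_s\omega)\,ds\right)\right]\ \le\ \exp\!\left(\frac{2T}{k+1}\bigl(\log C+c\mu^2(k+1)^3\bigr)\right).
\]
To combine the contributions across $k$, choose H\"older weights $\lambda_k:=(1-e^{-1/2})e^{-k/2}$ so that $\sum_k\lambda_k=1$, set $\mu_k:=\alpha e^{-k}/\lambda_k=\alpha e^{-k/2}/(1-e^{-1/2})$, and apply H\"older's inequality with exponents $p_k=1/\lambda_k$ to the pointwise bound $\alpha F\le\sum_k\alpha e^{-k}\!\int_0^{2T}\!M_{k+1}(\theta_s\omega)\,ds$; this yields
\[
\E^\P\!\left[e^{\alpha F(T,\omega)}\right]\ \le\ \prod_{k\ge 0}\E^\P\!\left[\exp\!\left(\mu_k\!\int_0^{2T}\!M_{k+1}(\theta_s\omega)\,ds\right)\right]^{\lambda_k}.
\]
Inserting the previous display and summing, the resulting exponent is linear in $T$ with coefficient $\sum_k\lambda_k\bigl(\log C/(k+1)+c\mu_k^2(k+1)^2\bigr)$; both series converge---the second reduces, up to constants, to $\sum_k e^{-3k/2}(k+1)^2<\infty$---giving $\tfrac{1}{2T}\log\E^\P[e^{\alpha F(T,\omega)}]\le C_1+C_2\alpha^2=:C(\alpha)<\infty$.

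The delicate point is the $(k+1)^3$ factor inside the Lemma~\ref{exp.bound}-bound, which arises from the sub-Gaussian parameter $\sqrt{k+1}$ of $M_{k+1}$ combined with the $1/(k+1)$-rescaling intrinsic to the lemma; it grows faster than the natural pointwise decay $e^{-k}$ in the decomposition of $F$. The H\"older weights $\lambda_k$ must therefore be chosen to decay strictly more slowly than $e^{-k}$, so that the effective coefficient $\mu_k\asymp\alpha e^{-k/2}$ decays fast enough to keep $\sum_k\lambda_k\mu_k^2(k+1)^2$ finite; the choice $\lambda_k\propto e^{-k/2}$ realises precisely this balance (a H\"older allocation $\lambda_k\propto e^{-k}$ matching the pointwise weights would fail).
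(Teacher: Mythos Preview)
Your proof is correct and follows essentially the same strategy as the paper's: decompose $F$ dyadically in the lag $u=t-s$, apply H\"older across the pieces, invoke Lemma~\ref{exp.bound} on each piece, and close with a Gaussian moment-generating-function estimate. The paper uses $G_n(\omega)=\int_n^{n+1}|\omega(u)-\omega(0)|\,du$ and H\"older weights $2^{-(n+1)}$ where you use the running supremum $M_{k+1}$ and weights $e^{-k/2}$, but these are cosmetic differences. One small inaccuracy in your closing commentary: the allocation $\lambda_k\propto e^{-k}$ would \emph{not} fail---with that choice $\mu_k$ is constant in $k$ and $\sum_k\lambda_k\mu_k^2(k+1)^2\propto\sum_k e^{-k}(k+1)^2<\infty$---so the balance is less delicate than you suggest (any $\lambda_k\propto e^{-\gamma k}$ with $0<\gamma<2$ works); this does not affect the validity of your argument.
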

\begin{proof}
Let \begin{equation}\label{eq-Gn-def}
  G_n(\omega):=\int_{n}^{n+1}|\omega(u)-\omega(0)|\d u.
\end{equation}
Then observe that \begin{align*}
  F(T,\omega)&=\int_{-T}^T \int_{0}^{T-s}\e^{-u}|\omega(s+u)-\omega(s)|\d u \d s \\
  &\leq \int_{-T}^T \int_{0}^{\infty}\e^{-u}|\omega(s+u)-\omega(s)|\d u \d s\\
  &=\int_{-T}^T\sum_{n=0}^{\infty}\int_{n}^{n+1}\e^{-u}|\omega(s+u)-\omega(s)|\d u \d s \\
  &\leq \sum_{n=0}^{\infty}\int_{-T}^T \int_{n}^{n+1} \e^{-n}|\omega(s+u)-\omega(s)|\d u \d s\\
  &=\sum_{n=0}^{\infty}\int_{-T}^T \e^{-n}G_n(\theta_s \omega)\d s.
  \end{align*}
By H\"{o}lder's inequality, we deduce that \begin{equation*}
  \log \E^\P\big[\exp[ \alpha F(T,\omega)\big]\big]\le \sum_{n=0}^{\infty}\frac{1}{2^{n+1}}\log \E^{\P}\bigg[\exp\bigg[ \alpha 2^{n+1}\e^{-n}\int_{-T}^T G_{n}(\theta_s \omega)\d s\bigg]\bigg].
\end{equation*}
Let $c_n:=(n+1)2^{n+1}\e^{-n}$, so that the last expectation can be written as \begin{equation*}
  \E^{\P}\bigg[\exp\bigg[ \frac{1}{n+1}\alpha\int_{-T}^T c_n G_{n}(\theta_s \omega)\bigg]\bigg]\leq \exp\left(\frac{\rho_n(\alpha) T}{n+1}\right)\leq \exp\left(\rho_n(\alpha) T\right), 
\end{equation*}
where we write $\rho_n(\alpha):=\sup_{x}\log \E^{\P_x}\left[\exp\left(\alpha c_n G_n(\omega)\right)\right]$. 
Therefore, \begin{equation*}
  \limsup_{T\to\infty}\frac{1}{2T}\log \E^\P\big[\exp[ \alpha F(T,\omega)\big]\big]\le \sum_{n=0}^{\infty}\frac{1}{2^{n+1}}\sup_{x}\log \E^{\P_x}\left[\exp\left(\alpha c_n G_n(\omega)\right)\right].
\end{equation*}
It remains to show that the right hand side is finite. Indeed, recalling the definition of $G_n$ and using Jensen's inequality, \begin{equation*}
  \E^{\P_x}\left[\exp\left(\alpha c_n G_n(\omega)\right)\right]\leq \int_{n}^{n+1}\E^{\P_0}\left[\e^{\alpha c_n |\omega(u)|}\right]\d u,
\end{equation*}
where we used that, for a fixed $u$, $\omega(u)-\omega(0)$ under $\P_x$ has the same distribution as $\omega(u)$ under $\P_0$. Noting that $|\omega(u)|\leq \sum_{i=1}^3 |\omega^i(u)|$ and the independence of the coordinates, we deduce that \begin{equation*}
  \int_{n}^{n+1}\E^{\P_0}\left[\e^{\alpha c_n |\omega(u)|}\right]\d u\leq \int_{n}^{n+1}\E\left[\e^{3\alpha c_n |X(u)|}\right]\d u,
\end{equation*}
where $X(u)\sim N(0,u)$. In particular, $\E\left[\e^{3\alpha c_n |X(u)|}\right] \leq 2 \e^{9\alpha^2 c_n^2 u}$. A crude bound gives us 
\begin{equation*}
  \int_{n}^{n+1}\E\left[\e^{3\alpha c_n |X(u)|}\right]\d u\leq 2\e^{9\alpha^2 c_n^2 (n+1)},
  \end{equation*}
so that \begin{equation*}
  \sum_{n=0}^{\infty}\frac{1}{2^{n+1}}\sup_{x}\log \E^{\P_x}\left[\exp\left(\alpha c_n G_n(\omega)\right)\right]\leq \sum_{n=0}^{\infty}\left(\frac{\log(2)}{2^{n+1}}+9\frac{\alpha^2 c_n^2 (n+1)}{2^{n+1}}\right),
\end{equation*}
which is clearly summable since $c_n:=(n+1)2^{n+1}\e^{-n}$.
\end{proof}

\noindent{\bf Completing the proof of Proposition \ref{lemma-Coulomb}:}

From Lemma \ref{lemma-F} it follows that, for any $\alpha,\lambda>0$, 
$$
\limsup_{T\to\infty}\frac{1}{2T}\log \E^\P\bigg[  \exp\bigg[  \alpha \lambda  \iint_{-T\le s<t\le T}  \e^{-(t-s)}  |(\omega(t)-\omega(s))| \d s\d t \bigg]\bigg]\leq C(\alpha,\lambda)<\infty.
$$ 
Thus, with $V(x)=|x|$, $V_M= V \wedge M$ and $Y_M= V-V_M$, we have for any $M>0$, 

$$
\limsup_{T\to\infty}\frac{1}{2T}\log \E^\P[  \exp[  \alpha \lambda  \iint_{-T\le s<t\le T}  \e^{-(t-s)} Y_M(|(\omega(t)-\omega(s))|) \d s\d t ] ]\leq C_M(\alpha,\lambda),
$$
 so that for any $\alpha,\lambda>0$, 
$$
\lim_{M\uparrow \infty} C_M(\alpha,\lambda)=0,
$$
 which proves Proposition \ref{lemma-Coulomb}. 
\qed

\section{Universality of Confinement Estimates}
\label{subsec:universality}

Using duality and FKG inequality on point processes, in Section \ref{subsec FKG comparison pot}, we established a form of stochastic monotonicity for path measures associated with potentials of the ``Gaussian mixture'' form \eqref{eq:def-interaction-general}. 
In fact, only the ``dominating'' measure needs to be a Gaussian mixture.
This allows our confinement results to extend to general potentials, which are ``at least as spatially unimodal'' as the Polaron interaction $1/|x|$.
The arguments below are, in some sense more general than that section. 
However, they require a slightly tedious finite-dimensional approximation scheme for $C([0,T];\bbR^3)$, and these arguments are not necessary to prove our main results for the Polaron itself. Similar ideas are also presented in the concurrent work \cite{S23} by one of us, where they play a more central role.

Suppose the even functions $V_{[a,b]}=V_{[a,b],\gamma}$ take the form \eqref{eq:def-interaction-general} for $0\leq a\leq b\leq T$, and let $\big(\wt V_{[a,b]}\big)_{0\leq a\leq b\leq T}$ be another family of even functions $\bbR\to\bbR$ (\textit{without} any associated parametrization $\gamma$) such that 
\[
    \wt V_{[a,b]}(x)-V_{[a,b],\gamma}(x)
\]
is decreasing on $\bbR_+$ and the partition function associated to $\wt V_{[a,b]}(x)$ (analogously to \eqref{eq:partition-function}) is finite.
We let $\wh\bbP_{V},\wh\bbP_{\wt V}$ be the associated probability measures on path in $C([0,T];\bbR^3)$.
Then combining the FKG and Gaussian correlation inequalities shows domination of $\wh\bbP_{\wt V}$ by $\wh\bbP_{V}$.
Recall that the (functional) Gaussian correlation inequality states the following.

\begin{prop}
\label{prop:GCI}
    Let $\mu$ be a centered Gaussian measure on $\bbR^d$. For any symmetric quasi-concave $f_1,f_2,\dots,f_n$:
    \begin{equation}
    \label{eq:GCI}
    \bbE^{\mu}\lt[\prod_{j=1}^m f_j(x)\rt]\cdot
    \bbE^{ \mu}\lt[\prod_{k=m+1}^{n} f_k(x)\rt]
    \leq
    \bbE^{ \mu}\lt[\prod_{i=1}^n f_i(x)\rt].
    \end{equation}
\end{prop}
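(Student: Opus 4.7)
The plan is to reduce this functional Gaussian correlation inequality to its set-theoretic version and then invoke Royen's theorem (2014).

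First I would exploit the hypotheses on each $f_i$: because $f_i$ is symmetric and quasi-concave, every super-level set $K_i(t) := \{x \in \bbR^d : f_i(x) \geq t\}$ is a centrally symmetric convex set in $\bbR^d$. Using the layer-cake representation $f_i(x) = \int_0^\infty \mathbf{1}_{K_i(t)}(x)\,dt$ (after shifting $f_i$ if necessary to make it nonnegative, or truncating at height $M$ and taking $M\to\infty$ by monotone convergence), both sides of \eqref{eq:GCI} can be rewritten as $n$-fold integrals over thresholds $(t_1,\dots,t_n) \in (0,\infty)^n$ of Gaussian measures of intersections of the sets $K_i(t_i)$. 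A Fubini argument then reduces the functional statement to showing, for all centrally symmetric convex sets $A_1,\dots,A_n \subset \bbR^d$,
\[
\mu\Bigl(\bigcap_{j=1}^m A_j\Bigr)\cdot\mu\Bigl(\bigcap_{k=m+1}^n A_k\Bigr) \leq \mu\Bigl(\bigcap_{i=1}^n A_i\Bigr).
\]
Setting $A := \bigcap_{j \leq m} A_j$ and $B := \bigcap_{k > m} A_k$, which are again centrally symmetric and convex, this is exactly the two-set Gaussian correlation inequality $\mu(A)\,\mu(B) \leq \mu(A \cap B)$.

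At this point my plan is to invoke Royen's theorem as a black box rather than reprove it. Royen's argument interpolates between a block-diagonal covariance matrix (under which the coordinates responsible for $A$ and those responsible for $B$ are independent) and the full covariance of $\mu$, and establishes non-negativity of the derivative of $\mu_s(A\cap B)$ along this interpolation using a representation of Gaussian measures of symmetric convex sets in terms of Laplace transforms of certain Gamma-type integrals; this monotonicity computation is the technical heart of the proof.

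The main obstacle in my proposal is thus Royen's theorem itself: the reduction from the functional to the set-theoretic form is a routine application of layer-cake plus Fubini, whereas Royen's inequality is a deep result whose full proof I would not attempt to reproduce. The only subtlety in the reduction is integrability of unbounded $f_i$, which can be handled by truncating each $f_i$ at height $M$, applying the set-theoretic inequality, and passing $M \to \infty$ via monotone convergence, provided the right-hand side of \eqref{eq:GCI} is finite.
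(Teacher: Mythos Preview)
Your proposal is correct and follows essentially the same route as the paper: reduce via layer-cake and Fubini to indicators of symmetric convex sets, observe that intersections of such sets are again symmetric convex so it suffices to handle two sets, and invoke Royen's theorem as a black box. The paper's proof is terser (three lines), but the logical content is identical; your added remarks on truncation for unbounded $f_i$ are a reasonable elaboration that the paper omits.
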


\begin{proof}
    By Fubini's theorem, we are immediately reduced to the case that each $f_i=\1_{K_i}$ is the indicator of a symmetric convex set. 
    As intersections of such sets take the same form, it remains to handle to the case $(m,n)=(1,2)$, which is the usual statement of the Gaussian correlation inequality \cite{royen2014simple}.
\end{proof}

The main result of this subsection is as follows.

\begin{prop}
\label{prop:FKG-GCI-polaron}
    In the setting described above, for any $t,s\in [0,T]$ and continuous increasing $f:\bbR_{\geq 0}\to\bbR_{\geq 0}$ (in particular $f(x)=x^2$), we have
    \[
    \bbE^{\wh\bbP_{\wt V}}[f(|\omega_t-\omega_s|)]
    \leq 
    \bbE^{\wh\bbP_{V}}[f(|\omega_t-\omega_s|)].
    \]
\end{prop}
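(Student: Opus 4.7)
The plan is to express $\widehat{\mathbb{P}}_{\tilde V}$ as a reweighting of $\widehat{\mathbb{P}}_V$ and reduce the claim to a negative-correlation statement under $\widehat{\mathbb{P}}_V$, which can then be established in two stages using the Gaussian-mixture coupling that is available for $V$ but not for $\tilde V$. Setting
$$\phi(\omega) \;:=\; \exp\bigg(\int_0^T\int_0^T \mathbf{1}_{s<t}\, e^{-(t-s)}\bigl(\tilde V_{[s,t]} - V_{[s,t]}\bigr)(|\omega_t-\omega_s|)\,ds\,dt\bigg),$$
we have $\widehat{\mathbb{P}}_{\tilde V}(d\omega) \propto \phi(\omega)\,\widehat{\mathbb{P}}_V(d\omega)$, so the inequality is equivalent to
$$\mathbb{E}^{\widehat{\mathbb{P}}_V}\bigl[\phi\cdot f(|\omega_t-\omega_s|)\bigr] \;\leq\; \mathbb{E}^{\widehat{\mathbb{P}}_V}[\phi]\cdot\mathbb{E}^{\widehat{\mathbb{P}}_V}\bigl[f(|\omega_t-\omega_s|)\bigr].$$
The key structural point is that since $\tilde V_{[s,t]}-V_{[s,t]}$ is decreasing on $\mathbb{R}_+$ by hypothesis, approximating the integral in $\log\phi$ by a Riemann sum represents $\phi$ as a limit of finite products $\prod_{i,j}g_{i,j}(|\omega_{t_j}-\omega_{s_i}|)$ in which each factor is a bounded symmetric decreasing function of its scalar argument -- and therefore a symmetric quasi-concave function of $\omega$. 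Truncating $f$ at a large level $C$, the map $\omega\mapsto C-f(|\omega_t-\omega_s|)$ is likewise symmetric quasi-concave.

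The negative correlation is then split in two via the joint coupling $\widehat{\mathbb{P}}_V(\cdot)=\int \mathbf{P}_{\hat\xi,\hat u}(\cdot)\,d\widehat{\Theta}_V$ supplied by Lemma~\ref{lem:MV-general-correspondence}. Conditional on $(\hat\xi,\hat u)$, the measure $\mathbf{P}_{\hat\xi,\hat u}$ is centered Gaussian, so applying the Gaussian correlation inequality (Proposition~\ref{prop:GCI}) to the quasi-concave family consisting of the factors of $\phi$ together with $C-f$ gives
$$\mathbb{E}^{\mathbf{P}_{\hat\xi,\hat u}}[\phi\,f]\;\leq\;\mathbb{E}^{\mathbf{P}_{\hat\xi,\hat u}}[\phi]\cdot\mathbb{E}^{\mathbf{P}_{\hat\xi,\hat u}}[f].$$
Integrating against $\widehat{\Theta}_V$, it remains to show that $(\hat\xi,\hat u)\mapsto\mathbb{E}^{\mathbf{P}_{\hat\xi,\hat u}}[\phi]$ is coordinate-wise increasing and $(\hat\xi,\hat u)\mapsto\mathbb{E}^{\mathbf{P}_{\hat\xi,\hat u}}[f]$ is coordinate-wise decreasing in the interval data. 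Adding an extra interval $([s',t'],u')$ multiplies the unnormalized density of $\mathbf{P}_{\hat\xi,\hat u}$ by the symmetric quasi-concave factor $\exp(-(u')^2|\omega_{t'}-\omega_{s'}|^2/2)$, which is positively correlated with $\phi$ and with $C-f$ by another application of GCI, so the corresponding normalized expectations move monotonically in the expected directions. Since the weight $W$ is log-super-modular (Lemma~\ref{lem:W-LSM}), $\widehat{\Theta}_V$ falls under the hypothesis of Proposition~\ref{prop:FKG}; applying FKG to the increasing pair $\bigl(\mathbb{E}^{\mathbf{P}}[\phi],\,C'-\mathbb{E}^{\mathbf{P}}[f]\bigr)$ produces
$$\int\mathbb{E}^{\mathbf{P}_{\hat\xi,\hat u}}[\phi]\cdot\mathbb{E}^{\mathbf{P}_{\hat\xi,\hat u}}[f]\,d\widehat{\Theta}_V\;\leq\;\mathbb{E}^{\widehat{\mathbb{P}}_V}[\phi]\cdot\mathbb{E}^{\widehat{\mathbb{P}}_V}[f],$$
and chaining this with the conditional estimate closes the argument.

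The main technical hurdle is that Proposition~\ref{prop:GCI} is stated for finite-dimensional Gaussian measures, whereas $\phi$ is genuinely the exponential of a double integral (not a finite product) and $\mathbf{P}_{\hat\xi,\hat u}$ lives on the Banach space $C([0,T];\mathbb{R}^3)$. I would handle this by a discretization scheme: fix a finite time mesh, replace the integral in $\log\phi$ by the associated Riemann sum in the grid variables $\{\omega_{\tau_k}\}$, restrict $\mathbf{P}_{\hat\xi,\hat u}$ to its finite-dimensional marginal at the mesh points, apply the finite-dimensional GCI, and pass to the limit via dominated convergence. The integrability required for this passage is supplied by the standing assumption that the $\tilde V$-partition function in \eqref{eq:partition-function} is finite, which in particular ensures $\phi\in L^1(\widehat{\mathbb{P}}_V)$. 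This approximation step is tedious but routine and is, as the authors remark, the reason the finite-dimensional caveat in Proposition~\ref{prop:GCI} is worth highlighting.
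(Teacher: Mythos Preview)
Your proposal is correct and follows essentially the same two-stage strategy as the paper: use the Gaussian-mixture representation of $\wh\bbP_V$, apply GCI conditionally on $(\hat\xi,\hat u)$, and then apply FKG (via the log-super-modularity of $W$) on the mixing measure $\wh\Theta_V$, all under a finite-dimensional approximation scheme. The paper's presentation differs only cosmetically, introducing the reweighted measures $\wt\Theta^{(A,\eta)}$ and $\wt\bP_\xi^{(A,\eta)}$ explicitly and carrying out the approximation in two steps (truncating $u$ via $A$, then passing to piecewise-linear paths via $\eta$) through Proposition~\ref{prop:polaron-FD-approx}, but the underlying inequalities and their justifications are the same as yours.
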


This result implies that our confinement estimates are universal, in the sense that any $\wt V$ which is ``more unimodal'' than the Polaron where $V(x)=1/|x|$ will be more confined.
One can similarly take $V(x)=1/|x|^p$ as in Remark~\ref{rem:general-p}.

Proposition~\ref{prop:FKG-GCI-polaron} will be shown by combining the FKG and Gaussian correlation inequalities. 
Since the Gaussian correlation inequality Proposition~\ref{prop:GCI} is stated for finite-dimensional subspaces of $C([0,T];\bbR^3)$, we begin with an approximation scheme, following \cite[Section 2.3]{S22}. We note that \cite{BSS23} showed how to avoid such approximations when the potential $V$ is bounded.
Given $V$ as in \eqref{eq:def-interaction-general} and $A>0$, we define
\[
    V^{(A)}_{[a,b],\gamma}
    =
    \int
    \1_{u\in [-A,A]}
    \exp\bigg(-\frac{u^2 x^2}{2}\bigg)~\de \gamma_{[a,b]}(u).
\]
It is easy to see that such $V^{(A)}_{[a,b],\gamma}$ are uniformly bounded on $\bbR$ (for each fixed $A$) thanks to the assumption \eqref{eq:locally-uniformly-finite}.
We write $\wh \bbP^{(A)}_V$ for $\wh\bbP_{V^{(A)}}$.
Next, for $\eta>0$, let $\bbP^{(\eta)}_{[0,T]}$ be the law of the piecewise linear process which agrees with $3$-dimensional Brownian motion when restricted to times in $\eta\bbZ$. 
Define the corresponding measure on piecewise-linear paths:
\[
    \de\wh\bbP^{(A,\eta)}_{V}(\omega)
    =
    \frac{
    \exp\lt(
    \int_0^T
    \int_0^T 
    e^{-|b-a|}
    V_{[a,b],\gamma}(|\omega_b-\omega_a|)
    ~\de b\, \de a
    \rt)
    \,
    \de \bbP^{(\eta)}(\omega)}{
    \int 
    \exp\lt(
    \int_0^T
    \int_0^T 
    e^{-|b-a|}
    V_{[a,b],\gamma}(|\omega_b-\omega_a|)
    ~\de b\, \de a
    \rt)
    \de \bbP^{(\eta)}(\omega)
    }
    \,.
\]
We then have the following approximation result.

\begin{prop}
\label{prop:polaron-FD-approx}
    For uniformly bounded functions $f$ as in Proposition~\ref{prop:FKG-GCI-polaron}, and $t,s\in [0,T]$, we have:
    \begin{align*}
    \lim_{A\to\infty}
    \int f(|\omega_t-\omega_s|) \de \wh\bbP^{(A)}_V
    &=
    \int f(|\omega_t-\omega_s|)\de \wh\bbP_V
    ,
    \\
    \lim_{\eta\to 0}
    \int f(|\omega_t-\omega_s|) \de \wh\bbP^{(A,\eta)}_V
    &=
    \int f(|\omega_t-\omega_s|) \de \wh\bbP^{(A)}_V,\quad\quad\forall A\in (0,\infty).
    \end{align*}    
\end{prop}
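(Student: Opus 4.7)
\emph{Proof proposal for Proposition \ref{prop:polaron-FD-approx}.}

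The plan is to treat the two limits separately, using monotone convergence for the first and weak convergence for the second. Throughout, I will work with the numerator (integral against $f(|\omega_t-\omega_s|)$) and the denominator (partition function) separately and then take the ratio.

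For the limit $A\to\infty$ with $\eta=0$, the key observation is that, by definition,
\[
V^{(A)}_{[a,b],\gamma}(x)
=\int \1_{u\in[-A,A]}\exp\bigl(-u^2x^2/2\bigr)\,\gamma_{[a,b]}(\d u)
\]
is monotonically increasing in $A$ and converges pointwise (in $x$) to $V_{[a,b],\gamma}(x)$ as $A\to\infty$, by monotone convergence applied to $\gamma_{[a,b]}$. Consequently the double integral $\int_0^T\!\!\int_0^T e^{-|b-a|} V^{(A)}_{[a,b],\gamma}(|\omega_b-\omega_a|)\,\d a\,\d b$ is monotonically increasing in $A$ and converges pointwise in $\omega$ to the corresponding quantity with $V$. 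Two applications of the monotone convergence theorem under $\bbP$ give
\[
Z^{(A)}_V:=\E^\bbP\!\left[\exp\!\Big(\textstyle\int\!\!\int e^{-|b-a|}V^{(A)}_{[a,b]}\,\d a\,\d b\Big)\right]\;\uparrow\;Z_V.
\]
Since $f$ is bounded, $f(|\omega_t-\omega_s|)\le\|f\|_\infty$, and so the unnormalized numerator is dominated (pointwise in $\omega$) by $\|f\|_\infty$ times the full Radon--Nikodym weight with $V$, which has finite $\bbP$-expectation by assumption. Dominated convergence then yields convergence of the unnormalized numerators, and dividing by $Z^{(A)}_V\to Z_V>0$ completes the argument.

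For the limit $\eta\to 0$ with $A$ fixed, since $V^{(A)}_{[a,b]}$ is now uniformly bounded on $\bbR$ (as already observed in the paper), the map
\[
\omega\;\longmapsto\;\Psi_A(\omega):=\exp\!\left(\int_0^T\!\!\int_0^T e^{-|b-a|}V^{(A)}_{[a,b]}(|\omega_b-\omega_a|)\,\d a\,\d b\right)
\]
is bounded on $C([0,T];\R^3)$. It is also continuous in the uniform topology: each $V^{(A)}_{[a,b]}$ is continuous in $x$ (being a bounded Gaussian mixture), the increments $|\omega_b-\omega_a|$ are continuous in $\omega$ uniformly in $(a,b)$, and the outer integral is over a finite measure, so dominated convergence inside the $(a,b)$ integral gives continuity of $\Psi_A$. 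Similarly $\omega\mapsto f(|\omega_t-\omega_s|)\,\Psi_A(\omega)$ is bounded and continuous (continuity of $f$ at $0$ is not an issue because $f$ is just assumed increasing and continuous on $\R_{\ge 0}$, and both $\omega_t-\omega_s$ and $\omega$ vary continuously). Since $\bbP^{(\eta)}\Rightarrow\bbP$ weakly on $C([0,T];\R^3)$ as $\eta\to 0$ (the piecewise linear interpolation converges uniformly in probability to the underlying Brownian motion on compact time intervals), both $\E^{\bbP^{(\eta)}}[\Psi_A]$ and $\E^{\bbP^{(\eta)}}[f(|\omega_t-\omega_s|)\Psi_A]$ converge to their limits under $\bbP$, and the ratio gives the claim.

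The main technical point, though not a serious obstacle, is the verification that $\Psi_A$ is genuinely continuous on $C([0,T];\R^3)$; one could alternatively invoke the Skorokhod representation to replace weak convergence by almost sure uniform convergence and then apply bounded convergence directly, which I would do if a referee finds the continuity argument unconvincing. No new ideas beyond standard measure-theoretic limits are required, and the output of the proposition is then usable to reduce the Gaussian correlation inequality step in Proposition~\ref{prop:FKG-GCI-polaron} to its already-available finite-dimensional form.
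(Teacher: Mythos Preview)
Your proposal is correct and follows essentially the same approach as the paper: monotone convergence of $V^{(A)}\uparrow V$ for the first limit, and the continuous mapping theorem (via weak convergence $\bbP^{(\eta)}\Rightarrow\bbP$ and boundedness/continuity of $\Psi_A$) for the second. The paper's own proof is a two-line sketch citing \cite[Propositions~2.5--2.6]{S22} for precisely these two steps, so you have simply filled in the details that those references contain.
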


\begin{proof}
    Given the finiteness of partition functions and monotone convergence of $V^{(A)}_{[a,b],\gamma}$ up to $V_{[a,b],\gamma}$ as $A\to\infty$, the first result follows by the monotone convergence theorem as in \cite[Proposition 2.5]{S22}.
    Similarly to \cite[Proposition 2.6]{S22}, the second follows by the continuous mapping theorem if $f$ is uniformly bounded.
\end{proof}

\begin{proof}[Proof of Proposition~\ref{prop:FKG-GCI-polaron}]
    We will show the result when $f$ is uniformly bounded, which suffices by monotone convergence in the original statement.
    Given Proposition~\ref{prop:polaron-FD-approx}, it suffices to establish the finite-dimensional analog:
    \[
    \bbE^{\wh\bbP_{\wt V^{(A,\eta)}}}[f(|\omega_t-\omega_s|)]
    \leq 
    \bbE^{\wh\bbP_{V^{(A,\eta)}}}[f(|\omega_t-\omega_s|)].
    \]
    We will use two crucial properties.
    Firstly, by approximating the integral over $[0,T]^2$ by Riemann sums, the Radon--Nikodym derivative $\de \wh\bbP_{\wt V^{(A,\eta)}}/\de \wh\bbP_{V^{(A,\eta)}}$ is the pointwise limit of uniformly bounded products of finitely many symmetric quasi-concave functions on the finite-dimensional space $C^{(\eta)}([0,T];\bbR^3)$.\footnote{One could also use a space of piece-wise \emph{constant} functions, in which case the Riemann sum approximations would not be needed. We chose not to, to avoid considering discontinuous functions of time.}
    Second, $\wh\bbP_{V}^{(A,\eta)}$ has a Gaussian mixture representation analogous to \eqref{Gauss-rep} and Lemma~\ref{lem:MV-general-correspondence}\ref{it:general-path-conditional-law}, with exactly the same proof.
    We define $\bP_{\xi}^{(A,\eta)}$ analogously to \eqref{Gauss-rep} but with base measure $\bbP^{(\eta)}_{[0,T]}$ rather than continuous-time Wiener measure, and similarly $W^{(A,\eta)}(\xi)$ from \eqref{eq:weight-factor}.
    Then
    \begin{equation}
    \label{eq:piecewise-linear-mixture-representation}
    \begin{aligned}
    \wh\bbP_{V}^{(A,\eta)}
    &=
    \int \bP_{\xi}^{(A,\eta)}\de \wh\Theta^{(A,\eta)}_{\gamma},
    \\
    \de\wh\Theta^{(A,\eta)}_{\gamma}
    &=
    W^{(A,\eta)}(\xi)\,\de\Theta_{\gamma|_{[-A,A]}}(\xi).
    \end{aligned}
    \end{equation}
    (Note we have replaced $\gamma$ by its restriction $\gamma|_{[-A,A]}$ to $[-A,A]$.)
    The resulting mixing measure $\wh\Theta_{\gamma}^{(A,\eta)}$ is hence another tilted Poisson point process; the log super modularity result of Lemma~\ref{lem:W-LSM} still applies with exactly the same proof.

    Next define $\wt\Theta^{(A,\eta)}$ to be the reweighting
    \[
    \frac{\de\wt\Theta^{(A,\eta)}(\xi)
    }{
    \de\wh\Theta_{\gamma}^{(A,\eta)}(\xi)
    }
    \propto 
    \int \frac{\de\wh\bbP_{\wt V}^{(A,\eta)}(\omega)}{\de \wh\bbP_V^{(A,\eta)}(\omega)} \de\bP^{(A,\eta)}_{\xi}(\omega).
    \]
    These are the weights for the modification of \eqref{eq:piecewise-linear-mixture-representation} associated with $\wt V$. Indeed one sees directly that 
    \begin{align*}
    \wh\bbP^{\wt V^{(A,\eta)}}(\omega)
    &=
    \int \wt\bP^{(A,\eta)}_{\xi}(\omega)
    \de \wt\Theta^{(A,\eta)}_{\gamma}(\xi),
    \\
    \de\wt\bP^{(A,\eta)}_{\xi}(\omega)
    &\propto
    \frac{\de\wh \bbP^{(A,\eta)}_{\wt V}(\omega)}{\de\wh \bbP^{(A,\eta)}_{V}(\omega)}
    \de\bP^{(A,\eta)}_{\xi}(\omega)
    \\
    &=
    \exp\lt(
    \int_0^T
    \int_0^T 
    e^{-|b-a|}
    \lt(
    \wt V_{[a,b],\gamma}(|\omega_b-\omega_a|)
    -
    V_{[a,b],\gamma}(|\omega_b-\omega_a|)
    \rt)
    ~\de b\, \de a
    \rt)
    \de\bP^{(A,\eta)}_{\xi}(\omega).
    \end{align*}
    (We emphasize that the notation $\wt \bP$ here hides implicit dependence on both $V$ and $\wt V$.)
The desired inequality now follows from:
    \begin{align*}
    \int f(|\omega_t-\omega_s|)\de  \wh\bbP^{\wt V^{(A,\eta)}}(\omega)
    &=
    \iint 
    f(|\omega_t-\omega_s|)
    \de\wt\bP^{(A,\eta)}_{\xi}(\omega)
    \de \wt\Theta^{(A,\eta)}_{\gamma}(\xi)
    \\
    &\leq 
    \iint 
    f(|\omega_t-\omega_s|)
    \de\bP^{(A,\eta)}_{\xi}(\omega)
    \de \wt\Theta^{(A,\eta)}(\xi)
    \\
    &\leq 
    \iint 
    f(|\omega_t-\omega_s|)
    \de\bP^{(A,\eta)}_{\xi}(\omega)
    \de \Theta^{(A,\eta)}_{\gamma}(\xi)
    \\
    &=\int f(|\omega_t-\omega_s|)\de  \wh\bbP^{(A,\eta)}_V(\omega).
    \end{align*}
    The first inequality above follows by Proposition~\ref{prop:GCI}. Indeed the Radon--Nikodym derivative
    \[
    \frac{\de\wt\bP^{(A,\eta)}_{\xi}(\omega)}{\de\bP^{(A,\eta)}_{\xi}(\omega)}
    \propto 
    \frac{\de\wh \bbP^{(A,\eta)}_{\wt V}(\omega)}{\de\wh \bbP^{(A,\eta)}_{V}(\omega)}
    \]
    is the pointwise limit of a uniformly bounded sequence of functions that are finite products of symmetric quasi-concave functions, while $\omega\mapsto f(|\omega_t-\omega_s|)$ is symmetric and \textbf{quasi-convex}.
    The second inequality holds because $\wt\Theta^{(A,\eta)}\succeq \wh\Theta^{(A,\eta)}$ by exactly the same proof as in Lemma~\ref{lem:MV-FKG}, so the argument from Corollary~\ref{cor:compare} applies. This concludes the proof.
\end{proof}

\section{Stationary point processes, Palm measures, and random intensities.}\label{sec-pointprocess}

In this section, we consider a generic simple point process $N$ in $\R$, i.e., random measures supported on atoms,  living in a probability space $(\Omega,\mathcal{F},P)$ such that $N(\{x\})\in \{0,1\}. $ We will usually refer to it as a quadruple $(\Omega,\mathcal{F},P,N)$. The point process can be characterized by its support, namely, $N(\cdot)=\sum_{i\in \Z}\delta_{r_i}(\cdot)$ -- more precisely, if $(r_i)_{i\in \Z}$ is a sequence of ordered (random) real numbers with the convention that 
\begin{equation}\label{eq-ordered-ints}
  \cdots<r_{-2}<r_{-1}<r_0\leq 0<r_1<r_2<\cdots, 
\end{equation}
then for every $\omega\in \Omega$ and Borel set $C\subset \R$, 
$N(\omega,C)= \#\{i \in \Z: r_i(\omega)\in C\}$ denotes the number of indices $i\in \Z$ such that 
$r_i=r_i(\omega) \in C$. On $\Omega$, the shifts $(\theta_t)_{t\in \R}$ act via $N(\theta_t \omega,C):=N(\omega,C+t)$ for a Borel set $C\subset \R$. We say that the point process is \textit{stationary} if $P\circ\theta_t=P$ for each $t\in \R$. We define a measure on $\R$  by 
the expected number of points 
$\lambda(C):=E^P[N(C)]$ on Borel sets $C\subset \R$. If the point process is stationary, then $\lambda$ is a multiple of the Lebesgue measure so that there is a constant $m>0$ such that $\lambda(C)=m|C|$ for each Borel set $C\subset \R$. We call $m$ the \textit{intensity} of $N$. 

\begin{definition}[Palm measure]
  Let $(\Omega,\mathcal{F},P,N)$ be a stationary point process with positive intensity $m>0$. Let $C$ be any Borel set of positive and finite Lebesgue measure $|C|$. Then we define the (normalized) Palm measure $P_0$ on $(\Omega,\mathcal{F})$ as \begin{equation}\label{eq-palm-meas-def}
    P_0(A):=\frac{1}{m|C|}E\Big[\sum_{n\in \Z}\mathbbm{1}_A(\theta_{r_n})\mathbbm{1}_C(r_n)\Big],\qquad A\in \mathcal{F}.
  \end{equation}
\end{definition}
We take note of the following consequences of the above definition: First, since the point process $(\Omega,\mathcal{F},P,N)$ is stationary, the above definition is independent of the set $C$. Moreover, from the definition we can see that $P_0$ is concentrated on $\Omega_0:=\{\omega: r_0(\omega)=0\}$ -- indeed, for each $n\in \Z$ and $\omega\in \Omega$, 
$\1_{\{r_0=0\}}(\theta_{r_n}\omega)=1$ if and only if $N(\theta_{r_n}\omega,\{0\})=N(\omega,\{r_n\})=1$, which is true by definition. Thus, under $P_0$, $N$ is concentrated on the set of the point processes with an atom at the origin. The following lemma justifies our interest in the Palm measure since it allows us to see the point process from the ``point of view of the particle":
\begin{lemma}\cite[Statement 1.2.16]{BB03}
  Let $\theta:\Omega_0\mapsto \Omega_0$ defined as $\theta:=\theta_{r_1}$ with inverse $\theta^{-1}:=\theta_{r_{-1}}$. Then $P_0$ is invariant under $\theta$.  In particular, $(r_n- r_{n-1})_{n\in \Z}$ is stationary under $P_0$. 
\end{lemma}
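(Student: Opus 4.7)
\medskip

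\noindent\textbf{Proof plan.} The plan is to first establish the $\theta$-invariance of $P_0$, and then derive the stationarity of the interval sequence $(r_n - r_{n-1})_{n\in\mathbb{Z}}$ as a corollary by tracking how $\theta$ acts on the atoms of a configuration $\omega \in \Omega_0$.

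\smallskip

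\noindent\emph{Step 1: Unfold $P_0(\theta^{-1}A)$.} For $A \in \mathcal{F}$ (supported on $\Omega_0$), apply the definition \eqref{eq-palm-meas-def} with any Borel set $C$ of positive finite Lebesgue measure. Since $\1_{\theta^{-1}A}(\theta_{r_n}\omega) = \1_A(\theta(\theta_{r_n}\omega))$ and a direct computation gives $\theta(\theta_{r_n}\omega) = \theta_{r_{n+1}(\omega)}\omega$ (the atom indexed $1$ inside the shifted configuration $\theta_{r_n}\omega$ is at position $r_{n+1}(\omega) - r_n(\omega)$), reindexing $k = n+1$ yields
\begin{equation*}
    m|C|\, P_0(\theta^{-1} A) = E^P\Big[\sum_{k\in\mathbb{Z}} \1_A(\theta_{r_k}\omega)\, \1_C(r_{k-1}(\omega))\Big].
\end{equation*}
So the invariance $P_0 \circ \theta^{-1} = P_0$ reduces to showing this equals $E^P[\sum_k \1_A(\theta_{r_k}\omega)\1_C(r_k(\omega))]$, i.e.\ shifting the indicator from $r_k$ to $r_{k-1}$ inside the expectation leaves its value unchanged.

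\smallskip

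\noindent\emph{Step 2: Use stationarity of $P$ via a refined Campbell identity.} The cleanest way is to apply the refined Campbell/Mecke formula $E^P[\sum_n f(\theta_{r_n}\omega, r_n)] = m \int_\mathbb{R} E^{P_0}[f(\omega,t)]\,\de t$, which is itself just a restatement of \eqref{eq-palm-meas-def} for indicator $f$'s and extended by monotone class. For $f(\omega',t) := \1_A(\omega')\,\1_C(t + r_{-1}(\omega'))$, the substitution $r_{-1}(\theta_{r_n}\omega) = r_{n-1}(\omega) - r_n(\omega)$ gives $f(\theta_{r_n}\omega, r_n) = \1_A(\theta_{r_n}\omega)\1_C(r_{n-1}(\omega))$. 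Thus, using the change of variables $s = t + r_{-1}(\omega)$ (with $\omega$ fixed) in the $t$-integral,
\begin{equation*}
    E^P\Big[\sum_n \1_A(\theta_{r_n}\omega)\1_C(r_{n-1}(\omega))\Big] = m\int_\mathbb{R} E^{P_0}[\1_A(\omega)\1_C(t + r_{-1}(\omega))]\,\de t = m|C|\, P_0(A).
\end{equation*}
Combined with Step 1, this gives $P_0(\theta^{-1} A) = P_0(A)$.

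\smallskip

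\noindent\emph{Step 3: Derive stationarity of the interval sequence.} Setting $\tau_n(\omega) := r_n(\omega) - r_{n-1}(\omega)$, I compute how the indexing shifts under $\theta$: for $\omega \in \Omega_0$, applying $\theta_{r_1(\omega)}$ moves the atom formerly at $r_k(\omega)$ to $r_k(\omega) - r_1(\omega)$, so $r_k(\theta\omega) = r_{k+1}(\omega) - r_1(\omega)$ for all $k \in \mathbb{Z}$. This yields $\tau_n(\theta\omega) = \tau_{n+1}(\omega)$, so $\theta$ acts on the sequence $(\tau_n)_{n\in\mathbb{Z}}$ as the unit left-shift. The $\theta$-invariance from Step 2 immediately implies that $(\tau_n)_{n\in\mathbb{Z}}$ is shift-stationary under $P_0$.

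\smallskip

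\noindent The main technical obstacle is the reshuffling of the indexing under the continuous shifts $\theta_t$, which makes a direct stationarity argument awkward. Bypassing this via Mecke's identity sidesteps the issue entirely: one exchanges an integral over the real line with a sum over atoms, so any ``shift within atom labels'' ($r_k \leftrightarrow r_{k-1}$) becomes a translation of the continuous integration variable, which is handled by translation invariance of Lebesgue measure.
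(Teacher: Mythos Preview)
The paper does not supply its own proof of this lemma; it is stated with a citation to \cite[Statement 1.2.16]{BB03} and left unproved. Your argument is correct and is essentially the standard route taken in Baccelli--Br\'emaud: reduce $P_0\circ\theta^{-1}=P_0$ to an identity between two Campbell-type sums and discharge it via the refined Campbell--Mecke formula, where the shift $r_k\leftrightarrow r_{k-1}$ inside the sum becomes a harmless translation of the Lebesgue integration variable. The derivation of stationarity of $(r_n-r_{n-1})_{n\in\Z}$ from the action $\tau_n\circ\theta=\tau_{n+1}$ is also correct. One small point worth making explicit: the Mecke identity $E^P[\sum_n f(\theta_{r_n}\omega,r_n)]=m\int E^{P_0}[f(\omega,t)]\,\d t$ you invoke in Step~2 is slightly stronger than the raw definition \eqref{eq-palm-meas-def}; its validity for non-product $f$ rests on the independence of $P_0$ from the choice of $C$ (which the paper states just above the lemma) together with a monotone class argument, exactly as you indicate.
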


The following result allows us to express $P$ in terms of $P_0$, so that we can go back and forth between the two measures:
\begin{lemma}[Inversion formula]\cite[Eq.1.2.25]{BB03}
  For a stationary point process $(\Omega,\mathcal{F},P,N)$ with Palm measure $P_0$, the following holds for any nonnegative measurable function $f$:
  \begin{equation}\label{eq-inversion-formula}
  E[f]=mE_0\left[\int_{0}^{r_1}(f\circ\theta_t)\d t\right]  .
  \end{equation}
\end{lemma}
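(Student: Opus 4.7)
The plan is to derive the identity via the standard Palm calculus computation, using the defining relation \eqref{eq-palm-meas-def} with a test set of the form $C=[0,T]$ and then sending $T\to\infty$. First, by a monotone-class argument, the definition \eqref{eq-palm-meas-def} extends to nonnegative measurable test functions $g\colon\Omega\to[0,\infty]$ as
\begin{equation*}
E_0[g] \;=\; \frac{1}{mT}\, E\Bigl[\sum_{n\in\mathbb Z} g(\theta_{r_n}\omega)\, \mathbf 1_{[0,T]}(r_n)\Bigr].
\end{equation*}
I would apply this with the specific choice $g(\omega) := \int_0^{r_1(\omega)} f(\theta_t\omega)\, dt$.

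The main calculation is then the identification of $g(\theta_{r_n}\omega)$ with the integral of $f\circ\theta_s$ over the $n$-th inter-atom interval. Indeed, by the definition of the shift, applying $\theta_{r_n}$ shifts the origin to the $n$-th atom, so $r_1(\theta_{r_n}\omega) = r_{n+1}(\omega)-r_n(\omega)$ and $\theta_t\theta_{r_n}=\theta_{t+r_n}$. The substitution $s=t+r_n$ therefore gives
\begin{equation*}
g(\theta_{r_n}\omega) \;=\; \int_0^{r_{n+1}-r_n} f(\theta_{t+r_n}\omega)\, dt \;=\; \int_{r_n}^{r_{n+1}} f(\theta_s\omega)\, ds.
\end{equation*}
Summing over all indices $n$ with $r_n\in[0,T]$, the abutting intervals $[r_n,r_{n+1}]$ concatenate into $[r_{N_0(T)},\,r_{N_1(T)+1}]$, where $N_0(T)$ and $N_1(T)$ denote respectively the smallest and largest indices of atoms in $[0,T]$. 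Consequently,
\begin{equation*}
mT\, E_0[g] \;=\; E\Bigl[\int_{r_{N_0(T)}}^{r_{N_1(T)+1}} f(\theta_s\omega)\, ds\Bigr].
\end{equation*}

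The remaining step is to identify the right-hand side with $TE[f]$ up to lower-order boundary terms. Writing $\int_{r_{N_0(T)}}^{r_{N_1(T)+1}} f(\theta_s\omega)\, ds = \int_0^T f(\theta_s\omega)\, ds - \int_0^{r_{N_0(T)}} f(\theta_s\omega)\, ds + \int_T^{r_{N_1(T)+1}} f(\theta_s\omega)\, ds$, Fubini's theorem together with the stationarity of $P$ yields $\frac1T E\bigl[\int_0^T f(\theta_s\omega)\, ds\bigr]=E[f]$ exactly for each $T>0$. The two boundary contributions involve intervals of expected length at most $E_0[r_1]=1/m$ (a single mean inter-arrival gap), so their expectations are uniformly bounded in $T$. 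Dividing by $T$ and letting $T\to\infty$ therefore gives $mE_0[g]=E[f]$, which is the claimed inversion formula \eqref{eq-inversion-formula}.

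The principal technical obstacle is making the boundary-term cleanup rigorous in full generality: for unbounded $f$, the integrals $\int_0^{r_{N_0(T)}} f(\theta_s\omega)\,ds$ and $\int_T^{r_{N_1(T)+1}} f(\theta_s\omega)\,ds$ need not individually have bounded expectation without some care. The standard remedy is to first establish the identity for bounded $f\wedge M$, where the boundary contributions are trivially $O(M/m)$ uniformly in $T$ and hence negligible after division by $T$, and then pass to the limit $M\to\infty$ by monotone convergence on both sides. Alternatively, one can bypass the limiting argument altogether by recognizing the identity as an instance of the refined Campbell formula applied to the stationary random field $s\mapsto f(\theta_s\omega)\,\mathbf 1_{[r_{N^\star(s)},r_{N^\star(s)+1})}(s)$, where $N^\star(s)$ is the index of the atom immediately to the left of $s$; but the truncation-and-limit route described above is the most elementary and is the one I would follow.
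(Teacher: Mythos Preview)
The paper does not supply a proof of this lemma; it is quoted directly from \cite[Eq.~1.2.25]{BB03}. Your overall strategy---apply the functional form of \eqref{eq-palm-meas-def} to $g(\omega)=\int_0^{r_1}f(\theta_t\omega)\,\d t$, rewrite $g(\theta_{r_n}\omega)$ as the integral over $[r_n,r_{n+1}]$, and compare to $\int_0^T f\circ\theta_s\,\d s$---is the standard one and is correct in outline.

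There is, however, one genuine slip in your boundary-term estimate. You assert that the intervals $[0,r_{N_0(T)}]$ and $[T,r_{N_1(T)+1}]$ have expected length at most $E_0[r_1]=1/m$. This confuses the $P_0$-mean of the inter-arrival time with the $P$-mean of the \emph{forward recurrence time}. Under the stationary measure $P$, the forward recurrence time $r_1$ has density $m\,P_0(r_1>t)$ and hence $E^P[r_1]=\tfrac{m}{2}E_0[r_1^2]$, which may well be infinite (the inspection paradox). So the bound ``$O(M/m)$ uniformly in $T$'' is not justified in general, and your truncation fix for unbounded $f$ inherits the same problem.

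The argument is easily repaired. For bounded $f$, the left boundary term satisfies $\tfrac{1}{T}E\bigl[\int_0^{r_1\wedge T}f(\theta_s\omega)\,\d s\bigr]\le \tfrac{M}{T}\int_0^T P(r_1>t)\,\d t\to 0$ by Ces\`aro, since $P(r_1>t)\to 0$; the right boundary is handled identically by stationarity. Cleaner still is the direct route you allude to at the end: apply Fubini and stationarity to the exact identity $mT\,E_0[g]=E\bigl[\int_0^\infty f(\theta_s\omega)\,\1_{r_{N^\star(s)}\in(0,T]}\,\d s\bigr]$, shift by $s$ inside the expectation, and observe that $\int_0^\infty \1_{-s<r_0\le T-s}\,\d s=T$ pointwise, yielding $E[f]$ with no limit or boundary correction needed. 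That is the argument in \cite{BB03}, and it is what I would recommend you present.
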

Setting $f=1$ in the previous Lemma, we deduce that \begin{equation}\label{eq-P0-moment-s1}
  E_0[r_1]=\frac{1}{m}.
\end{equation}
The previous facts can be summarized as follows:
\begin{theorem}\cite[Theorem 13.3.I]{DJ08}
  There is a one-to-one correspondence between stationary point process with intensity $m\in(0,\infty)$ and stationary sequences of nonnegative random variables $(\tau_n)_{n\in \Z}\in \mathcal{T}^+$ with mean $\frac{1}{m}$.\footnote{Here $\mathcal T^+$ denotes the space of doubly-infinite sequences with non-negative entries, and $\mathcal B(\mathcal T^+)$ denotes the Borel $\sigma$-algebra.} More precisely, for a sequence $N_0:=(r_{n})_{n\in \Z}$ satisfying \eqref{eq-ordered-ints} and $r_0=0$, define the mapping $\Psi(N_0):=(\Psi(N_0))_{n\in \Z}$, where  $\Psi(N_0)_n:=r_n-r_{n-1}$. Then the correspondence is given by 
  \begin{equation*}
    \begin{aligned}
      &\Psi: (\Omega,\mathcal{F},P,N)\longrightarrow (\mathcal{T}^+,\mathcal{B}(\mathcal{T}^+),P_0\circ \Psi^{-1})\\
      &\Psi^{-1}: (\mathcal{T}^+,\mathcal{B}(\mathcal{T}^+),\Pi)\longrightarrow (\Omega, \mathcal{F},\tilde{P}),
    \end{aligned}
  \end{equation*}
  where in the second direction, $\tilde{P}$ is defined as in \eqref{eq-inversion-formula} with replacing $P$ by  $\tilde{P}$ and $P_0$ by $\Pi\circ \Psi$.
  \end{theorem}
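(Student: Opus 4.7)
The plan is to construct the two maps $\Psi$ and $\Psi^{-1}$ explicitly and verify they are mutually inverse and measure-preserving in the stated sense, with the Palm measure machinery and the inversion formula \eqref{eq-inversion-formula} doing the real work.

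For the forward direction $\Psi$, starting from a stationary $(\Omega,\mathcal{F},P,N)$ with intensity $m>0$, I form the Palm measure $P_0$ via \eqref{eq-palm-meas-def}, which is concentrated on configurations $N_0$ with $r_0(N_0)=0$. Define $\Psi(N_0)_n := r_n(N_0)-r_{n-1}(N_0)$ and push forward to set $\Pi := P_0\circ\Psi^{-1}$ on $(\mathcal{T}^+,\mathcal{B}(\mathcal{T}^+))$. Under the bijection $\theta=\theta_{r_1}$ on $\Omega_0$ the enumeration shifts by one, so $\Psi(N_0)$ is translated correspondingly on $\mathcal{T}^+$. Invariance of $P_0$ under $\theta$ (quoted from \cite{BB03}) then translates directly into stationarity of $(\tau_n)_{n\in\Z}$ under $\Pi$, and $\int\tau_1\,d\Pi = E_{P_0}[r_1]=1/m$ by \eqref{eq-P0-moment-s1}.

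For the backward direction $\Psi^{-1}$, given stationary $\Pi$ on $\mathcal{T}^+$ with mean $1/m$, I define partial sums $r_0=0$, $r_n=\sum_{k=1}^n\tau_k$ for $n\ge 1$, and $r_{-n}=-\sum_{k=0}^{n-1}\tau_{-k}$ for $n\ge 1$. This yields a measurable map $\mathcal{T}^+\to\Omega_0$, whose push-forward I denote $\tilde P_0$. To obtain a stationary measure on the full space $\Omega$, I promote \eqref{eq-inversion-formula} to a \emph{definition}:
\begin{equation*}
\tilde E[f]\ :=\ m\,E_{\tilde P_0}\!\left[\int_0^{r_1}(f\circ\theta_t)\,dt\right].
\end{equation*}
Taking $f\equiv 1$ shows $\tilde P$ is a probability measure since $mE_{\tilde P_0}[r_1]=m\cdot m^{-1}=1$. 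Stationarity $\tilde E[f\circ\theta_s]=\tilde E[f]$ for all $s\in\R$ is then checked by Fubini together with the telescoping that comes from concatenating the intervals $[r_{n-1},r_n)$: shifting by $s$ merely reparametrizes which interval contains the origin, and stationarity of $(\tau_n)$ under $\Pi$ absorbs the reshuffling.

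The identity $\Psi\circ\Psi^{-1}=\mathrm{id}$ on $(\mathcal{T}^+,\Pi)$ is tautological, since applying the gap map to the partial sums recovers the original sequence. The reverse composition requires showing that the Palm measure of the $\tilde P$ reconstructed from $\Pi:=P_0\circ\Psi^{-1}$ coincides with the original $P_0$, after which \eqref{eq-inversion-formula} forces $\tilde P=P$. This is a Campbell-type consistency check: substituting $f(\omega)=\mathbf 1_A(\omega)\mathbf 1_C(r_0(\omega))$ into the definition of $\tilde E$ and unwinding yields exactly formula \eqref{eq-palm-meas-def} with $\tilde P_0$ in place of $P_0$, so uniqueness of the Palm measure of a stationary point process gives $\tilde P_0=P_0$. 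The main obstacle will be the careful proof of stationarity of $\tilde P$ and of this Palm-identity, both of which reduce to Fubini computations exploiting the regenerative structure of the inter-arrival sequence together with the normalization $E_{P_0}[\tau_1]=1/m$; measurability of $\Psi$ and $\Psi^{-1}$ is routine but needs to be noted to make the push-forwards well defined.
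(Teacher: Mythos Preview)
The paper does not prove this theorem at all: it is quoted verbatim as \cite[Theorem 13.3.I]{DJ08} in the appendix on point processes, with no accompanying argument. So there is no ``paper's own proof'' to compare against; the statement is background material imported from Daley--Vere-Jones.

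Your sketch is the standard route to this classical result and is essentially sound as an outline: pass to the Palm measure $P_0$, read off the inter-arrival sequence, use $\theta_{r_1}$-invariance of $P_0$ to get stationarity of $(\tau_n)$, and go back via the inversion formula promoted to a definition. The points you flag as ``the main obstacle'' (stationarity of $\tilde P$ and the Campbell-type identification of its Palm measure with the original $P_0$) are exactly where the work lies; both are handled in full in \cite{DJ08} or \cite{BB03}, and your description of how they reduce to Fubini plus the shift-invariance of $\Pi$ is accurate. One small correction: in your Palm-consistency check you substitute $f(\omega)=\mathbf 1_A(\omega)\mathbf 1_C(r_0(\omega))$, but under the inversion formula $f$ is evaluated after shifting by $\theta_t$ with $t\in[0,r_1)$, so $r_0(\theta_t\omega)$ is not identically zero and the computation does not literally reproduce \eqref{eq-palm-meas-def} in one line; the cleanest way is to verify the Campbell--Mecke identity for $\tilde P$ directly and then invoke uniqueness of Palm. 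This is a bookkeeping issue rather than a gap in the strategy.
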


Next, we relate the notions of ergodicity under $P$ with the family $(\theta_t)_{t\in \R}$  and under $P_0$ with $\theta=\theta_{s_1}$. 
\begin{lemma}\cite[Properties 1.6.1-1.6.2]{BB03} The following holds:
  \begin{enumerate}
    \item Let $A\in \mathcal{F}$ be invariant under $(\theta_t)_{t\in \R}$. Then $P(A)=1$ if and only if $P_0(A)=1$.
    \item Let $A\in \mathcal{F}$ be invariant under $\theta$. Then $P(A)=1$ if and only if $P_0(A)=1$.
  \end{enumerate}
\end{lemma}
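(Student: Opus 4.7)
The strategy is to apply the inversion formula \eqref{eq-inversion-formula}, combined with the identity $E_0[r_1]=1/m$ from \eqref{eq-P0-moment-s1}, directly to prove part (i), and then to reduce part (ii) to part (i) via a saturation construction on the space of point processes.

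For part (i), let $A\in\mathcal F$ satisfy $\theta_t^{-1}A=A$ for all $t\in\R$, so that $\1_A\circ\theta_t\equiv\1_A$. Plugging $f=\1_A$ and then $f=\1_{A^c}$ into \eqref{eq-inversion-formula} collapses the inner integral and yields
\[
P(A) \;=\; m\, E_0\!\left[\int_0^{r_1}\1_A\circ\theta_t\,\d t\right] \;=\; m\, E_0[\1_A\, r_1], \qquad P(A^c) \;=\; m\, E_0[\1_{A^c}\,r_1].
\]
Since $r_1>0$ holds $P_0$-almost surely and $E_0[r_1]=1/m$, each of the conditions $P(A)=1$ and $P_0(A)=1$ is in turn equivalent to $E_0[(1-\1_A)\,r_1]=0$, giving the claimed equivalence.

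For part (ii), first extend $\theta$ from $\Omega_0$ to $\Omega$ by setting $\theta\omega:=\theta_{r_1(\omega)}\omega$. Since this shift brings the atom $r_1$ to the origin, one has $\theta\omega\in\Omega_0$ for every $\omega$ with at least one atom, and the assumed $\theta$-invariance of $A$ on $\Omega$ restricts to a $\theta$-invariance of $A^*:=A\cap\Omega_0$ inside $\Omega_0$. Now construct the saturation
\[
\widehat A \;:=\; \bigl\{\omega\in\Omega \;:\; \theta_{r_n(\omega)}\omega \in A^* \text{ for some } n\in\Z\bigr\},
\]
and apply part (i) to $\widehat A$. Two observations close the argument. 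First, $\widehat A$ is invariant under the full flow $(\theta_t)_{t\in\R}$: a direct calculation shows that the collection $\{\theta_{r_n(\omega)}\omega\}_{n\in\Z}\subset\Omega_0$ is unchanged under any shift $\omega\mapsto\theta_s\omega$ (the indices are merely relabelled, but the set of shifted configurations is the same), so membership of $\omega$ in $\widehat A$ depends only on the flow orbit. Second, $\widehat A=A$: since all points $\theta_{r_n(\omega)}\omega$ lie in a single $\theta$-orbit inside $\Omega_0$ and $A^*$ is $\theta$-invariant there, the existential quantifier over $n$ reduces to the single condition $\theta_{r_1(\omega)}\omega\in A^*$, which by $\theta$-invariance of $A$ on $\Omega$ is equivalent to $\omega\in A$. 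Hence $P(A)=P(\widehat A)$ and $P_0(A)=P_0(\widehat A)$, and part (i) applied to $\widehat A$ gives the desired equivalence.

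The main technical obstacle lies in the bookkeeping around the standard $P$- and $P_0$-null sets on which the sequence $(r_n)_{n\in\Z}$ fails to be well-defined, bi-infinite, and ordered as in \eqref{eq-ordered-ints}; these are absorbed into the standing hypothesis that $N$ is a stationary simple point process with finite positive intensity $m$. Measurability of $\widehat A$ is then routine, being a countable union over $n\in\Z$ of preimages of a measurable set under a measurable map, and the identity $\widehat A=A$ is checked separately on $\Omega_0$ (where it is immediate from $\theta$-invariance of $A^*$) and on $\Omega\setminus\Omega_0$ (where $\theta_{r_1}\omega\in\Omega_0$ affords the reduction above).
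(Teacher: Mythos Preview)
The paper does not supply its own proof of this lemma; it is quoted directly from \cite[Properties 1.6.1--1.6.2]{BB03} without argument, so there is no in-paper proof to compare against. Your proposal is a correct and standard derivation: part (i) via the inversion formula \eqref{eq-inversion-formula} together with $r_1>0$ $P_0$-a.s.\ is exactly the classical route, and your saturation argument for part (ii) is sound once one interprets the hypothesis as invariance of $A$ under the extended map $\theta=\theta_{r_1}:\Omega\to\Omega_0$ (equivalently, $\theta^{-1}A=A$ on all of $\Omega$), which is the natural reading given that $A\in\mathcal F$.

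One small point worth making explicit in part (ii): the equivalence $\omega\in A \Leftrightarrow \theta_{r_1(\omega)}\omega\in A^*$ that you use in the final reduction relies on $\theta\omega\in\Omega_0$ for \emph{every} $\omega$ (which holds since $r_1$ is always an atom), together with $\theta^{-1}A=A$; you state this but the logic is slightly compressed. Otherwise the argument is complete.
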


\begin{lemma}\cite[Property 1.6.3]{BB03}
  $(\Omega,\mathcal{F},P,(\theta_t)_{t\in \R})$ is ergodic if and only if $(\Omega,\mathcal{F},P_0,\theta)$ is ergodic. In that case, if \begin{equation}
    \begin{aligned}
      A&:=\Big\{\lim_{T\to\infty}\frac{1}{2T}\int_{-T}^T (f\circ\theta_t)\d t=E[f]\Big\},\quad f\in L^1(P)\\
      A'&:=\Big\{\lim_{n\to\infty}\frac{1}{2n}\sum_{i=-n}^n f\circ\theta_{r_i}=E_0[f]\Big\},\quad f\in L^1(P_0),
    \end{aligned}
  \end{equation}
  then \begin{equation*}
    P(A)=P_0(A)=P(A')=P_0(A')=1.
  \end{equation*}
\end{lemma}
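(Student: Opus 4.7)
The plan is to reduce both assertions to invocations of the Birkhoff ergodic theorem (in its continuous-time and discrete-time forms), combined with the transfer principle from the preceding lemma which says that a $(\theta_t)$- or $\theta$-invariant event has $P$-measure one if and only if it has $P_0$-measure one.

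First I would establish the ergodicity equivalence. For the direction $(\Omega,\cF,P,(\theta_t)) \Rightarrow (\Omega,\cF,P_0,\theta)$, take $A\in\cF$ with $\theta^{-1}A=A$, viewed as a $P_0$-a.s.\ property. I would form its $(\theta_t)$-saturation $\widetilde A:=\bigcup_{t\in\R}\theta_tA$ (or, more carefully, the set of $\omega$ such that $\theta_t\omega\in A$ for some $t\in\R$). Because the Palm measure is concentrated on configurations with an atom at $0$ and is $\theta$-invariant, $\widetilde A$ differs from a $(\theta_t)$-invariant set only on a $P$-null set, and one can use the inversion formula \eqref{eq-inversion-formula} to show $P_0(A)=P_0(\widetilde A\cap\Omega_0)$ and $P(\widetilde A)\in\{0,1\}$ by ergodicity of $P$. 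Applying the previous lemma then forces $P_0(A)\in\{0,1\}$. The reverse direction is analogous: if $B\in\cF$ is $(\theta_t)$-invariant, then in particular it is $\theta=\theta_{r_1}$-invariant, so by the already-established transfer principle $P(B)=1\iff P_0(B)=1$, and $\theta$-ergodicity of $P_0$ forces $P_0(B)\in\{0,1\}$.

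Next I would deduce the almost sure statements. For $f\in L^1(P)$, Birkhoff's continuous-time ergodic theorem applied to $(\Omega,\cF,P,(\theta_t))$ yields $P(A)=1$ directly. To obtain $P_0(A)=1$, note that $A$ is $(\theta_t)$-invariant (the limit is unchanged by a time shift since $\frac{1}{2T}\int_{-T}^T (f\circ\theta_{t+h})\,dt$ differs from $\frac{1}{2T}\int_{-T}^T (f\circ\theta_t)\,dt$ by $O(h/T)$), and invoke part (1) of the preceding lemma to transfer. For $f\in L^1(P_0)$, Birkhoff's discrete-time ergodic theorem applied to $(\Omega,\cF,P_0,\theta)$ yields $P_0(A')=1$. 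To obtain $P(A')=1$, observe that $A'$ is $(\theta_t)$-invariant in the following sense: shifting $\omega$ by $t$ merely relabels the indices $i\mapsto i\pm k$ for some finite $k$ depending on how many atoms are crossed, which does not affect the Ces\`aro limit; hence $A'$ is $(\theta_t)$-invariant up to $P$-null modification, and the transfer principle again applies.

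The main obstacle I anticipate is the careful handling of the ``up to a null set'' issues when relating $\theta$-invariance and $(\theta_t)$-invariance, since a set that is invariant under $\theta=\theta_{r_1}$ is not literally invariant under $(\theta_t)_{t\in\R}$ (the shift parameter $r_1=r_1(\omega)$ is itself random). The standard trick is to replace $A$ by the $P_0$-equivalent set $\{\omega:\theta_{r_1(\omega)}\omega\in A\}\cap\Omega_0$ and extend to all of $\Omega$ via the inversion formula, giving a canonical $(\theta_t)$-invariant representative. Once this measurable bookkeeping is in place, the remaining arguments are direct consequences of Birkhoff's theorem and the transfer principle already proved in the previous lemma.
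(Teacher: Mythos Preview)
The paper does not supply its own proof of this lemma; it is quoted verbatim from \cite[Property 1.6.3]{BB03} as a background fact about Palm theory, so there is nothing to compare against here. Your sketch follows the standard route one finds in that reference: derive the ergodicity equivalence from the transfer principle of the preceding lemma (applied to $A$ and $A^c$), and then read off the four almost-sure statements from Birkhoff's theorem in its continuous- and discrete-time forms together with another application of the transfer principle. The one genuinely delicate point, which you correctly flag, is that a $\theta$-invariant set (with $\theta=\theta_{r_1}$ a \emph{random} shift) is not literally $(\theta_t)$-invariant, so one must pass through a $(\theta_t)$-invariant modification; the key observation making this work is that if $\omega\in\Omega_0$ and $\theta_t\omega\in A\subset\Omega_0$, then $t$ must be an atom of $\omega$, hence $\theta_t=\theta^k$ for some $k$, which recovers $\omega\in A$ by $\theta$-invariance. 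With that bookkeeping in place your argument is sound.
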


Next, we will deduce some consequences from the previous results for stationary and ergodic point processes in $\R$. \begin{lemma}\label{lemma:ergodic-averages-pp}
  Let $(\Omega,\mathcal{F},P,N)$ be a stationary and ergodic point process on $\R$ with intensity $m$ and Palm measure $P_0$. Then the following holds $P$-a.s. (and hence also $P_0$-a.s.):\begin{enumerate}
    \item \begin{equation}\label{eq-ergodic-averages-pp-1}
      \lim_{T\to\infty}\frac{N([-T,T])}{2T}=m,
    \end{equation}
    \item \begin{equation}\label{eq-ergodic-averages-pp-2}
      \lim_{n\to\infty}\frac{1}{2n}\sum_{i=-n}^{n-1}(r_{i+1}-r_i)=\frac{1}{m},
    \end{equation}
    \item \begin{equation}\label{eq-ergodic-averages-pp-3}
      \lim_{T\to\infty}\frac{r_{N([0,T])}}{T}=1.
    \end{equation}
    \item \begin{equation}\label{eq-ergodic-averages-pp-4}
      \lim_{n\to\infty}\frac{1}{2n}\sum_{i=-n}^{n-1}(r_{i+1}-r_i)\mathbbm{1}\{r_{i+1}-r_i> c\}=\frac{1}{m}P(r_1-r_0>c),\quad c>0.
    \end{equation}
  \end{enumerate}
\end{lemma}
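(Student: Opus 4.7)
The plan is to derive all four statements from the Birkhoff ergodic theorem, applied either in continuous time under $(\Omega,\mathcal{F},P,(\theta_t)_{t\in\R})$ or in discrete time under $(\Omega_0,\mathcal{F},P_0,\theta)$ with $\theta=\theta_{r_1}$. The bridge between these two ergodic systems is provided by the last cited Property~1.6.3 of \cite{BB03} (convergence of discrete ergodic averages along the atoms holds simultaneously $P$-a.s.\ and $P_0$-a.s.) together with the inversion formula \eqref{eq-inversion-formula}.

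For \eqref{eq-ergodic-averages-pp-1}, I apply the discrete Birkhoff theorem to $f(\omega):=N(\omega,[0,1])$. By stationarity $\lambda=m\cdot\mathrm{Leb}$, so $E[f]=m$. Since $N([-T,T])=\sum_{k=-\lceil T\rceil}^{\lceil T\rceil-1}f(\theta_k\omega)$ up to at most two boundary terms, dividing by $2T$ and using the ergodicity assumption yields the limit $m$, $P$-a.s. For \eqref{eq-ergodic-averages-pp-2}, under $P_0$ the spacing sequence $\tau_i:=r_{i+1}-r_i$ is stationary and ergodic for $\theta$, and by \eqref{eq-P0-moment-s1} one has $E_0[\tau_0]=E_0[r_1]=1/m$. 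The discrete Birkhoff theorem gives the stated $\frac{1}{2n}\sum_{i=-n}^{n-1}\tau_i\to 1/m$, $P_0$-a.s., and the Property 1.6.3 of \cite{BB03} promotes this to a $P$-a.s.\ statement.

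For \eqref{eq-ergodic-averages-pp-3}, I combine the previous two parts. Let $k=k(T):=N([0,T])$; a one-sided analogue of \eqref{eq-ergodic-averages-pp-1} yields $k(T)/T\to m$ $P$-a.s., while \eqref{eq-ergodic-averages-pp-2} applied along the partial sums $r_k-r_0=\sum_{i=0}^{k-1}\tau_i$ gives $r_{k}/k\to 1/m$. Multiplying,
\[
\frac{r_{N([0,T])}}{T}=\frac{r_{k(T)}}{k(T)}\cdot\frac{k(T)}{T}\;\longrightarrow\;\frac{1}{m}\cdot m=1.
\]

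For \eqref{eq-ergodic-averages-pp-4}, I apply the discrete ergodic theorem under $P_0$ to $g_c(x):=x\,\mathbbm{1}\{x>c\}$. Since $0\le g_c(\tau_0)\le \tau_0\in L^1(P_0)$, the limit is $E_0[g_c(\tau_0)]=E_0[r_1\mathbbm{1}\{r_1>c\}]$ (using $r_0=0$, $P_0$-a.s.); Property 1.6.3 of \cite{BB03} again transfers this from $P_0$-a.s.\ to $P$-a.s. It remains to identify $E_0[r_1\mathbbm{1}\{r_1>c\}]=\tfrac{1}{m}\,P(r_1-r_0>c)$, which is where the only real bookkeeping occurs. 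I apply the inversion formula \eqref{eq-inversion-formula} to $h(\omega):=\mathbbm{1}\{r_1(\omega)-r_0(\omega)>c\}$: for $\omega\in\Omega_0$ and $t\in(0,r_1(\omega))$, the ordering after shifting by $t$ gives $r_0(\theta_t\omega)=-t$ and $r_1(\theta_t\omega)=r_1(\omega)-t$, hence $(r_1-r_0)\circ\theta_t(\omega)=r_1(\omega)$. Therefore $\int_0^{r_1(\omega)}h(\theta_t\omega)\,\mathrm{d}t=r_1(\omega)\,\mathbbm{1}\{r_1(\omega)>c\}$, and \eqref{eq-inversion-formula} yields $P(r_1-r_0>c)=m\,E_0[r_1\mathbbm{1}\{r_1>c\}]$, giving the claim.

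The only genuinely subtle point in this outline is the verification in the last paragraph that $(r_1-r_0)\circ\theta_t(\omega)=r_1(\omega)$ (and not $r_1(\omega)-t$) for $t\in(0,r_1(\omega))$ starting from $\omega\in\Omega_0$; once this is in hand, all four parts follow from standard ergodic-theoretic arguments together with the cited properties of Palm measures.
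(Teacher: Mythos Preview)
Your proposal is correct and follows essentially the same route as the paper: Birkhoff under $P$ on $N((0,1])$ for part~(i), Birkhoff under $P_0$ on the spacing sequence for (ii) and (iv), the product $\frac{r_{k(T)}}{k(T)}\cdot\frac{k(T)}{T}$ for (iii), and the inversion formula \eqref{eq-inversion-formula} to identify the Palm expectation in (iv). In fact your treatment of (iv) is slightly more careful than the paper's: you correctly apply the inversion formula to $h=\mathbbm{1}\{r_1-r_0>c\}$ and verify that $(r_1-r_0)\circ\theta_t\equiv r_1$ on $(0,r_1)$ for $\omega\in\Omega_0$, whereas the paper writes $f=\mathbbm{1}\{r_1>c\}$, which as stated would yield $E_0[(r_1-c)_+]$ rather than $E_0[r_1\mathbbm{1}\{r_1>c\}]$.
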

\begin{proof} We first prove Part (i). 
  By stationarity, it is enough to prove that $P$-a.s. \begin{equation*}
    \lim_{T\to\infty}\frac{N((0,T])}{T}=m.
  \end{equation*}
  To check it, note first that for $n\in \N$, $N((0,n])=\sum_{i=0}^{n-1}N((i,i+1])=\sum_{i=0}^{n-1}N((0,1])\circ \theta_{i}$, so that, by the ergodic theorem, \begin{equation*}
    \lim_{n\to\infty}\frac{N((0,n])}{n}=E[N(0,1]]=m.
  \end{equation*}
  Using that  \begin{equation*}
    \frac{N((0,n])}{n+1}\leq \frac{N((0,T])}{T}\leq \frac{N((0,n+1])}{n}
  \end{equation*}
  if $n<T\leq n+1$, we can extend the limit over $T\in \R$.

  We now prove Part (ii). By ergodicity with respect to $P_0$, and recalling that $r_1-r_0=r_1$ $P_0$-a.s., 
    \begin{equation*}
    \lim_{n\to\infty}\frac{1}{2n}\sum_{i=-n}^{n-1}(r_{i+1}-r_i)=\lim_{n\to\infty}\frac{1}{2n}\sum_{i=-n}^{n-1}(r_{1}-r_0)\circ\theta^i=E_0[r_1]=\frac{1}{m},
  \end{equation*}
  where in the last equality, we used \eqref{eq-P0-moment-s1}.
  
  Note that Part (iii) is a direct consequence of (i) and (ii), since $\frac{N((0,T])}{T}\to m$ and $\frac{r_n}{n}\to \frac{1}{m}$.
  We now prove Part (iv), for which we apply the ergodic theorem to conclude that $P_0$-a.s., \begin{align*}
    \lim_{n\to\infty}\frac{1}{2n}\sum_{i=-n}^{n-1}(r_{i+1}-r_i)\mathbbm{1}\{r_{i+1}-r_i> c\}&=
        \lim_{n\to\infty}\frac{1}{2n}\sum_{i=-n}^{n-1}(r_{1}-r_0)\mathbbm{1}\{r_1-r_0>c\}\circ\theta^i
        \\
        &=E_0[r_1,r_1> c].
  \end{align*}
  Finally, applying the inversion formula \eqref{eq-inversion-formula} to $f=\mathbbm{1}\{r_1>c\}$ leads to \begin{equation*}
    E_0[r_1,r_1\geq c]=\frac{1}{m}P(r_1>c).
  \qedhere
  \end{equation*}
\end{proof}
Let us also remark that the distribution of a point process can be identified uniquely by its \textit{Laplace functional}. More precisely, if $(\Omega,\mathcal{F},P,N)$ is a point process in $\R$, its Laplace functional $L_N$ is defined on nonnegative, measurable functions $u:\R\mapsto [0,\infty)$ by 
\begin{equation}\label{eq-Laplace-fun-def}
  L_N(u):=E\bigg[\exp\bigg(-\int u(x)N(dx)\bigg)\bigg]=E\Big[\exp\big(-\sum_{i\in \Z}u(r_i)\big)\Big].
\end{equation}

We will be interested in a particular class of stationary and ergodic point process on $\R$, the so-called {\it Poisson point process with random intensity}:
\begin{definition}
  Let $\mu$ be a random measure on $\R$, i.e., given a probability space $(\widehat\Omega, \widehat{\mathcal F}, \widehat P)$, $\mu:\widehat\Omega\to \Mcal_{\mathrm{loc}}(\R)$ is a random variable taking values on the space of locally finite measures on $\R$.  
  A point process N on $\R$ is called a Poisson process with random intensity $\mu$ (or a Poisson process directed by a random measure $\mu$), if, conditionally on the random measure $\mu$, $N$ is a Poisson point process with intensity measure $\mu$, that is, \begin{equation*}
    P\big(N(C|\mu(\widehat\omega,\cdot))=k\big)=\frac{\mu(\widehat\omega,C)^k \e ^{-\mu(\widehat\omega,C)}}{k!}, \qquad k\in \N\cup\{0\}, C\in \Bcal(\R). 
  \end{equation*}
\end{definition}
The Laplace functional of a Poisson process directed by random intensity $\mu$ defined on a probability space $(\widehat\Omega, \widehat{\mathcal F}, \widehat P)$ is given by 
\begin{equation}\label{eq-Laplace-fun-cox}
  L_N(u):=\E^{\widehat P}\bigg[\exp\Big(-\int_\R (1-\e^{u(x)})\mu(\cdot,\d x)\Big)\bigg]= \int_{\widehat\Omega} \widehat P(\d\widehat\omega) \exp\Big(-\int_\R (1-\e^{u(x)})\mu(\widehat\omega,\d x)\Big)
  \end{equation} 
and it also uniquely characterizes its distribution. 
Stationarity and ergodicity of this point process can be determined by its directing measure.

\begin{lemma} \label{lemma:stat-erg-pp}
Let $(\Omega,\mathcal{F},P,N)$ be a Poisson process directed by the random measure $\mu$ on the probability space $(\widehat\Omega,\widehat{\mathcal{F}},\widehat P)$. 
  \begin{enumerate}
    \item \cite[Proposition 6.1.I]{DJ03} $N$ is stationary if and only if its Laplace functional is stationary, i.e., $L_N(\theta_t u)=L_N(u)$ for each measurable $u:\R\mapsto [0,\infty)$.
    \item \cite[Proposition 12.3.VII]{DJ08}If $N$ is stationary, then it is also ergodic if and only if the distribution $\widehat P[\mu \in \cdot]$ of $\mu$ under $\widehat P$ is ergodic.
  \end{enumerate}
\end{lemma}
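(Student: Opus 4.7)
Both parts of this lemma are classical facts about Poisson processes with random intensity; the plan is to sketch the natural arguments. For part (i), the key input is the uniqueness theorem for Laplace functionals: $L_N$ determines the distribution of a simple point process on $\R$ (see \cite{DJ03}). The forward direction is then immediate from a change of variables,
\begin{equation*}
L_N(\theta_t u) = \E^P\Big[\exp\Big(-\int u(x-t)\, N(\d x)\Big)\Big] = \E^P\Big[\exp\Big(-\int u\, \d(N\circ\theta_t)\Big)\Big] = L_N(u),
\end{equation*}
using $N\circ\theta_t \stackrel{d}{=} N$. The reverse direction uses that the equality $L_{N\circ\theta_t} = L_N$ of Laplace functionals forces $N\circ\theta_t \stackrel{d}{=} N$.

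For part (ii), the key structural ingredient is the mixture decomposition
\begin{equation*}
P(\cdot) = \int \Psi(\mu)(\cdot)\,\widehat P(\d\mu),
\end{equation*}
where $\Psi(\mu)$ denotes the Poisson law with deterministic intensity $\mu$. My plan is to use this to identify the invariant $\sigma$-algebra of $N$ with that of the directing distribution. Concretely, for any shift-invariant Borel event $A$ in the point-measure space, the function $h(\mu) := \Psi(\mu)(A)$ is shift-invariant as a function of $\mu$: a Poisson process with intensity $\theta_t\mu$ is the $\theta_t$-shift of one with intensity $\mu$, and shift-invariance of $A$ yields $h(\theta_t\mu) = \Psi(\mu)(\theta_{-t}^{-1}A) = \Psi(\mu)(A) = h(\mu)$. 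If $\widehat P\circ \mu^{-1}$ is ergodic, $h$ is then $\widehat P$-a.s. constant; combined with the conditional ergodicity of the Cox process on $\widehat P$-a.e. realization of $\mu$ (a standard consequence of Poisson independence on disjoint Borel sets), this forces $h(\mu)\in\{0,1\}$ $\widehat P$-a.s., and integrating against $\widehat P$ gives $P[N\in A]\in\{0,1\}$.

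The converse is handled by the same decomposition run in reverse: if $B$ were a shift-invariant event for the law of $\mu$ with $\widehat P[\mu\in B]\in(0,1)$, then $\widehat P\circ\mu^{-1}$ splits into two distinct shift-invariant components, which lift under $\mu\mapsto\Psi(\mu)$ to two distinct shift-invariant components of $P$, contradicting ergodicity of $N$. The main technical obstacle, and the reason it is natural to defer to the cited references, is making rigorous the identification of invariant $\sigma$-algebras on both sides; the cleanest route is through a conditional Birkhoff ergodic theorem applied to the Cox construction together with Campbell-type identities for Poisson point processes, which is precisely the content of \cite[Proposition 12.3.VII]{DJ08}.
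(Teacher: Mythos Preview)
The paper does not prove this lemma at all: both parts are stated with direct citations to Daley--Vere-Jones \cite{DJ03,DJ08} and no argument is given. So there is no ``paper's own proof'' to compare against; your sketch already goes further than the paper does.

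Your treatment of part (i) is correct and standard. For part (ii), your outline via the mixture decomposition and the shift-invariance of $h(\mu)=\Psi(\mu)(A)$ is the right starting point, and ergodicity of $\widehat P\circ\mu^{-1}$ indeed forces $h$ to be $\widehat P$-a.s.\ constant. The weak link is the next step, where you invoke ``conditional ergodicity of the Cox process on $\widehat P$-a.e.\ realization of $\mu$'' to conclude $h(\mu)\in\{0,1\}$. For a \emph{fixed} realization $\mu$, the Poisson law $\Psi(\mu)$ is generally not stationary (individual realizations of a stationary random measure need not be translation-invariant), so ``ergodic'' has no direct meaning there; and shift-invariant events for point processes are not automatically tail events, so Poisson independence over disjoint sets does not immediately give a $0$--$1$ law for $\Psi(\mu)(A)$. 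The clean route (and the one in \cite[Proposition~12.3.VII]{DJ08}) avoids this fiberwise claim: for cylinder functionals $f$, conditional independence gives $\E_{\Psi(\mu)}[f(N)f(\theta_t N)]=F(\mu)F(\theta_t\mu)$ for large $t$ with $F(\mu)=\E_{\Psi(\mu)}[f(N)]$, and then ergodicity of $\mu$ yields the required Ces\`aro limit $\frac{1}{T}\int_0^T \E[f(N)f(\theta_t N)]\,\d t\to (\E[f(N)])^2$. You essentially acknowledge this gap in your final paragraph and defer to the reference, which matches what the paper itself does.
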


\appendix


\noindent{\bf Acknowledgement:} The second author would like to thank Herbert Spohn (Munich) for very helpful and inspiring comments on the previous version of the manuscript. The first two authors are supported by the Deutsche Forschungsgemeinschaft (DFG) under Germany's Excellence Strategy EXC 2044 - 390685587, Mathematics M\"unster: Dynamics - Geometry - Structure.

\small




\end{document}